\def\romannum{\begingroup
  \def\theenumi{\textup{(\roman{enumi})}}%
  \def\p@enumi{}%
  \def\labelenumi{\theenumi}%
  \enumerate}
\newenvironment{longromannum}[0]{
  \begin{enumerate}[(i)]
  }%
  {\end{enumerate}}
\newcommand{\str}[1]{{\mathcal{\uppercase{#1}}}}
\newcommand{\stabilo}{\bgroup\markoverwith
  {\textcolor{yellow}{\rule[-.5ex]{2pt}{2.5ex}}}\ULon}
\newcommand{\newstuff}{\bgroup\markoverwith{\textcolor{pink}{\rule[-.5ex]{2pt}{2.5ex}}}\ULon}
\newtheorem{theorem}{Theorem}
\newtheorem{definition}[theorem]{Definition}
\newtheorem{corollary}[theorem]{Corollary}
\newtheorem{application}[theorem]{Application}
\newtheorem{fact}[theorem]{Fact}
\newtheorem{lemma}[theorem]{Lemma}
\newtheorem{proposition}[theorem]{Proposition}
\theoremstyle{example}
\newtheorem{remark}[theorem]{Remark}
\newtheorem{example}[theorem]{Example}
\newtheorem{question}{Question}
\newcommand{\notmodels}{\ \makebox[0.1cm][l]{\ensuremath{\models}}/ \ }
\renewcommand{\phi}{\varphi}
\newcommand{\ignore}[1]{}
\newcommand{\algP}{\mathcal{P}}
\newcommand{\integerset}[0]{\ensuremath{\mathbb{N}}}
\newcommand{\QCSP}[0]{\ensuremath{\textrm{QCSP}}}
\newcommand{\CSP}[0]{\ensuremath{\textrm{CSP}}}
\newcommand{\restrict}{\ensuremath{\upharpoonright}}
\newcommand{\reactivelycomposable}{\ensuremath{\trianglelefteq}}
\title{From complexity to algebra and back: digraph classes, collapsibility and the PGP}
\titlerunning{From complexity to algebra and back}
\author{Catarina Carvalho\inst{1}
  \and
  Florent R. Madelaine\inst{2}\thanks{\textsl{thanks ANR grant ALCOCLAN}}
  \and
  Barnaby D. Martin\inst{3}\thanks{\textsl{thanks EPSRC grant EP/L005654/1}}
}
\institute{
  School of Physics, Astronomy and Mathematics
  University of Hertfordshire
  \email{c.carvalho2@herts.ac.uk}
  \and 
  GREYC, CNRS UMR 6072
  Universit\'e de Caen Basse Normandie
  \email{florent.madelaine@unicaen.fr}
  \and
  School of Science and Technology
  Middlesex University, London
  \email{B.Martin@mdx.ac.uk}
}
\authorrunning{Carvalho, Madelaine and Martin}
\begin{document}
\maketitle

\begin{abstract}
  Inspired by computational complexity results for the quantified
  constraint satisfaction problem, we study the clones of idempotent
  polymorphisms of certain digraph classes. Our first results are two
  algebraic dichotomy, even ``gap'', theorems. Building on and
  extending~\cite{QCSPforests}, we prove that partially reflexive
  paths bequeath a set of idempotent polymorphisms whose associated
  clone algebra has: either the polynomially generated powers property
  (PGP); or the exponentially generated powers property
  (EGP). 
  Similarly, we build on \cite{ICALP2014} to prove that semicomplete
  digraphs have the same property.

  These gap theorems are further motivated by new evidence that PGP
  could be the algebraic explanation that a QCSP is in NP even for
  unbounded alternation. 
  Along the way we effect also a study of a concrete form of PGP known as \emph{collapsibility},
  tying together the algebraic and structural threads from
  \cite{hubie-sicomp}, and show that collapsibility is equivalent to its $\Pi_2$-restriction. 
  We also give a decision procedure for $k$-collapsibility from a
  singleton source of a finite structure (a form of collapsibility
  which covers all known examples of PGP for finite structures). 

  Finally, we present a new QCSP trichotomy result, for partially
  reflexive paths with constants. Without constants it is known these
  QCSPs are either in NL or Pspace-complete \cite{QCSPforests}, but we
  prove that with constants they attain the three complexities NL,
  NP-complete and Pspace-complete.

\end{abstract}%

\section{Introduction}

A great literature of work exists from the past twenty years on
applications of universal algebra in the computational complexity of
\emph{constraint satisfaction problems} (CSPs) and a number of
celebrated results have been obtained through this method. 
Each
CSP is parameterised by a finite
structure $\mathcal{B}$ and asks whether an input
sentence $\varphi$ holds on $\mathcal{B}$, where $\varphi$ is a
primitive positive sentence, that is where only $\exists$ and $\land$ may be
used. For almost every class of model checking problem induced by the presence
or absence of first-order quantifiers and connectors, we can give a complexity
classification~\cite{DBLP:journals/corr/abs-1210-6893}: the two
outstanding classes are CSPs and its popular extension \textsl{quantified
  CSPs} (QCSPs) for positive Horn sentences -- where $\forall$ is also present -- which is used in
Artificial Intelligence to model non-monotone reasoning or
uncertainty. 

The outstanding conjecture in the area is that all finite-domain
CSPs are either in P or are NP-complete, something surprising given
these CSPs appear to form a large microcosm of NP, and NP itself is
unlikely to have this dichotomy property. This Feder-Vardi
conjecture \cite{FederVardi}, given more concretely in the algebraic
language in \cite{JBK}, remains unsettled, but is now known for
large classes of structures. 

The very useful role of algebra in unlocking the computational
complexities of QCSP has also been widely documented (see
\cite{rendezvous,Meditations}). Manuel Bodirsky has described the CSP
as a \emph{K\"onigsproblem} (king among problems) because it is an
important computational problem living at the interface of logic,
combinatorics and algebra. The QCSP is a somewhat less important
problem, with weaker links outside of the logical, where it is
formulated. In particular, its combinatorics are unwieldy -- for
example a totally satisfactory notion of a core remains elusive~\cite{DBLP:conf/cp/MadelaineM12} --
and its algebra is complicated by the fact that the class of
surjective polymorphisms is not closed under composition. This
perhaps explains why the complexity of QCSPs is classified for rather modest classes
of structures, for which only three complexities are observed P, NP-complete
and Pspace-complete.
 
In the case in which only \emph{idempotent polymorphisms} are
considered --
corresponding relationally to all constants being definable in
$\mathcal{B}$ -- some
better behaviour is restored and it is mostly in this arena that we shall
place ourselves.
What seems to be a unifying explanation for a
complexity in NP is that it suffices to check an instance $\varphi$
with $m$ universal variables for a small fraction (polynomial in $m$
and $\mathcal{B}$) of all possible choices for these $m$ universal
variables. This property can be viewed as a special form of quantifier
relativisation in the sense that it suffices to check an instance against
restricted Skolem functions. This fits in well with the classification for
model checking for other fragments of FO  where relativisation also
characterises the complexity~\cite{DBLP:journals/corr/abs-1210-6893}.

In Hubie Chen's \cite{AU-Chen-PGP}, a new traverse between algebra and
QCSP was discovered. Chen's previous work in QCSP tractability largely
involved the special notion of \emph{collapsibility}
\cite{hubie-sicomp}, but in \cite{AU-Chen-PGP} this was extended to a
version of the \emph{polynomially generated powers} (PGP)
property. This latter ties in with a rich literature of dichotomy
(``gap'') theorems on growth rate of generating sets of direct powers of algebras. The PGP properly generalises collapsibility and reveals a link to universal algebra that we explore in this paper and we might argue makes QCSP at least a \emph{F\"urstenproblem}  (prince among problems).

The initial algebraic phenomenon of our study is the growth rate of
generating sets for direct powers of an algebra. That is, for an
algebra $\mathbb{A}$ we associate a function
$f_\mathbb{A}:\mathbb{N}\rightarrow\mathbb{N}$, giving the cardinality
of the minimal generating sets of the sequence $\mathbb{A},
\mathbb{A}^2, \mathbb{A}^3, \ldots$ as $f(1), f(2), f(3), \ldots$,
respectively. We may say $\mathbb{A}$ has the \emph{$g$-generating property}
($g$-GP for short) if $f(m) \leq g(m)$ for all $m$. The question then
arises as to the growth rate of $f$ and specifically regarding the
behaviours constant, logarithmic, linear, polynomial and
exponential. Wiegold proved in \cite{WiegoldSemigroups} that if
$\mathbb{A}$ is a finite semigroup then $f_{\mathbb{A}}$ is either
linear or exponential, with the former prevailing precisely when
$\mathbb{A}$ is a monoid. This dichotomy classification may be seen as
a gap theorem because no growth rates intermediate between linear and
exponential may occur. We say $\mathbb{A}$  enjoys the
\emph{polynomially generated powers} property 
(PGP) if there exists a polynomial $p$ so that $f_{\mathbb{A}}=O(p)$
and  the \emph{exponentially generated powers} property (EGP) if there
exists a constant $b$ so that $f_{\mathbb{A}}=\Omega(g)$ where
$g(i)=b^i$.

The PGP implies that the bounded alternation QCSP is in NP
rather than the corresponding level of the polynomial hierarchy one
expects in general, provided that generators may be generated
effectively, \emph{effective PGP} in Chen's parlance.
This should be clear for $\Pi_2$-sentences
(quantifier prefix of the form $\forall^\star \exists^\star$) as it
suffices to solve one CSP per generator, and by induction this holds
for bounded alternation.
Moreover, for all known examples it also holds for unbounded
alternation. In particular, for known examples of finite structures, this drop is witnessed by an operation which characterises a type of collapsibility (from the so-called \emph{singleton source}), which we shall call
a \emph{Hubie operation}. When this is present as a polymorphism, it
implies a drop to NP also in the unbounded case as it may be composed
in a more involved fashion suitable for working with Skolem functions,
what Chen terms \emph{reactive composition}.

Hubie Chen proved the first PGP-EGP gap theorem for polymorphism
clones in  \cite{AU-Chen-PGP}. Namely, let id-Pol$(\mathcal{B})$\footnote{We will view this as at once a set of polymorphisms on domain $B$ and an algebra of operations over that domain.} be
the clone of idempotent polymorphisms of a $3$-element structure
$\mathcal{B}$ such that id-Pol$(\mathcal{B})$ does not contain a G-set
as a factor\footnotemark{}\footnotetext{This is a technical
  assumption that we will not define. When there is a G-set as a
  factor we know the corresponding QCSP is NP-hard \cite{JBK}.}. Then either
id-Pol$(\mathcal{B})$ has PGP or it has EGP. Indeed, this result
extended the previous observation of Chen that when
id-Pol$(\mathcal{B})$ is the clone of idempotent polymorphisms of a
$2$-element structure $\mathcal{B}$, then either id-Pol$(\mathcal{B})$
has  PGP or it has EGP. Now, \emph{$k$-$\Pi_2$-collapsibility} (whose naming will be explained in the sequel) can be seen
as a special form of the PGP in which the generating set for each
$\mathbb{A}^m$ may be taken to be the set of $m$-tuples  which contain
the repetition of a single element from a so-called source set at least $m-k$
times, the other at most $k$ positions being arbitrary. $k$-collapsibility can be seen similarly but manifests slightly differently through the already alluded to \emph{reactive composition}
of this set of $m$-tuples. In the $2$-element case, the
PGP manifests in the special form of $1$-collapsibility,  but already
in the $3$-element case there are algebras with the PGP that are not
$k$-collapsible for any $k$, though no such example is known for a
finite structure (\textsl{i.e.} with finitely many relations). 

When a structure $\mathcal{H}$ expanded by all constants is so that
QCSP$(\mathcal{H})$ is Pspace-complete, then (under the
complexity-theoretic assumption that NP is different from Pspace) we can assume that
id-Pol$(\mathcal{H})$ does not have \emph{effective
  PGP} \cite{Meditations}. 
Naturally, these are the places to look to prove
EGP results.  The QCSP complexity classification for $3$-element
structures is still open, even in the idempotent case, but this paper
builds upon Chen's \cite{AU-Chen-PGP} motivated by the extant complexity
classifications for the QCSP for partially reflexive trees in \cite{QCSPforests} and
semicomplete digraphs in \cite{ICALP2014}. Thus, the complexity results lead the algebra, in contrast to the typical modus operandi.

\section*{Principal contributions}
\label{sec:our-contributions}

\subsection*{Complexity to algebra: new PGP-EGP gaps.}
For partially reflexive paths we recall the notion of being
quasi-loop-connected from~\cite{QCSPforests}, and prove the following
algebraic gap.
\begin{theorem}
  Let $\mathcal{H}$ be a partially reflexive path. If $\mathcal{H}$ is
  quasi-loop-connected, then  $\mathrm{id\mbox{-}Pol}(\mathcal{H})$ has the
  PGP. Otherwise,  $\mathrm{id\mbox{-}Pol}(\mathcal{H})$ has the EGP. 
  \label{thm:prp-hauptsatz}
\end{theorem}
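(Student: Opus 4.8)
The plan is to split into the two cases dictated by quasi-loop-connectedness and treat them separately, using the structure theory already developed for partially reflexive paths in~\cite{QCSPforests}.

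\textbf{The PGP direction.} Suppose $\mathcal{H}$ is a quasi-loop-connected partially reflexive path. First I would recall from~\cite{QCSPforests} the combinatorial structure of such a path: there is a (possibly empty) loop-connected segment and the unlooped ends behave in a controlled way. The aim is to exhibit, for each $m$, a generating set for $\mathrm{id\mbox{-}Pol}(\mathcal{H})^m$ of polynomial size; concretely I expect to show that $\mathrm{id\mbox{-}Pol}(\mathcal{H})$ is $k$-collapsible (indeed $k$-$\Pi_2$-collapsible) for some fixed $k$ depending only on the length of $\mathcal{H}$. To do this one locates a source element (a looped vertex near the ``centre'' of the path) and builds a Hubie operation: a polymorphism which, given an $m$-tuple in which the source value is repeated at least $m-k$ times, reconstructs an arbitrary element of the power. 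The key ingredient is that quasi-loop-connectedness supplies enough near-unanimity-like or majority-like behaviour on the looped part, together with ``retraction-to-the-centre'' maps on the unlooped ends, to amalgamate the finitely many deviating coordinates. Once $k$-collapsibility is in hand, the PGP is immediate, since the $k$-collapsible generating set has size $O(m^k \cdot |H|^k)$, which is polynomial in $m$.

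\textbf{The EGP direction.} Suppose $\mathcal{H}$ is not quasi-loop-connected. Here I would argue that the idempotent polymorphisms are too weak to generate the powers with polynomially many tuples, and in fact force exponential growth. The strategy is to find a small substructure or quotient of $\mathcal{H}^2$ (or a suitable ``gadget'' built inside $\mathcal{H}$) whose behaviour under all idempotent polymorphisms is effectively that of a structure already known to have EGP --- for a reflexive path this is typically the phenomenon where two disjoint looped intervals are separated by an unlooped vertex, so that no polymorphism can ``interpolate'' across the gap. Concretely, one identifies in $\mathcal{H}^m$ a family of $2^{\Omega(m)}$ tuples, each of which is indispensable: removing any one of them from a candidate generating set leaves something it does not generate, because every polymorphism applied to the remaining tuples is constrained (coordinatewise) to stay on one side of the gap in each coordinate. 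This yields $f_{\mathrm{id\mbox{-}Pol}(\mathcal{H})}(m) = \Omega(b^m)$ for a suitable $b>1$, which is exactly EGP.

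\textbf{Main obstacle.} The routine parts are the PGP bookkeeping and the final counting argument for EGP. The genuinely delicate step is the \emph{construction of the Hubie operation} in the PGP case: one must verify that a single idempotent polymorphism simultaneously (i) behaves correctly as a projection off the source on the unlooped ends, (ii) survives composition on the looped middle, and (iii) is compatible with the edge relation of $\mathcal{H}$ everywhere, including at the transition vertices between looped and unlooped parts --- it is precisely here that the \emph{quasi}-loop-connected hypothesis (as opposed to plain loop-connectedness) must be used in full, and where the analysis of~\cite{QCSPforests} has to be leveraged most carefully. Dually, in the EGP case the subtle point is to certify that no clever idempotent polymorphism can bridge the gap; this requires a careful case analysis of the possible images of the ``gap vertex'' under polymorphisms, which is where I expect most of the technical work of that direction to lie.
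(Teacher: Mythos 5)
Your high-level decomposition matches the paper's (PGP for quasi-loop-connected via a small generating set, EGP otherwise via an exponential family of tuples in $\{p,q\}^m$), but both halves contain gaps that would sink the argument as proposed. In the PGP direction, your plan hinges on locating a \emph{looped} vertex near the centre as the source of a Hubie operation, with ``majority-like behaviour on the looped part''. This fails already for irreflexive paths, which are quasi-loop-connected (form $0^a\alpha$ with $\alpha$ all zeros) yet contain no looped vertex at all; and a majority polymorphism exists only in the strictly loop-connected case (Lemma~\ref{lem:prp-1}). For the genuinely quasi-loop-connected case the paper instead anchors the source at the irreflexive \emph{end} of the path: it builds, for each $y$, a binary idempotent polymorphism $f_y$ with $f_y(1,x)=x$ and $f_y(n,1)=y$ (Lemma~\ref{lem:bin-pol}, via an explicit matrix construction) and then generates $\mathbb{A}^m$ inductively from $n+1$ tuples (Lemma~\ref{lem:generation}). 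Moreover there is a parity obstruction --- on an irreflexive path of odd length no idempotent $f$ can have $f(n,1)=2$ --- which forces a two-element source $\{1,2\}$ and $2n+2$ generators in the remaining cases (Lemma~\ref{lem:generation2}). Your singleton-looped-centre source cannot be repaired into this.

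In the EGP direction, the claim that each of $2^{\Omega(m)}$ tuples is ``indispensable'' because polymorphisms ``stay on one side of the gap in each coordinate'' is too naive: a candidate generating set $\Gamma$ need not contain any tuple of $\{p,q\}^m$ (it may contain only ``friends'' of such tuples), so one must show that omitting some $\tau\in\{p,q\}^m$ together with all its cousins already prevents generation. The paper's mechanism for this is (a) Lemma~\ref{lem:cati}, that all idempotent polymorphisms of $\mathcal{P}_{10\alpha 01}$ are projections (a delicate distance/parity induction on arity), which makes the relation $R_\Gamma=\{p,q\}^m\setminus\{\tau\}$ pp-definable over the induced subpath between the extremal loops; and (b) the Galois correspondence $\mathrm{Inv}(\mathrm{sPol}(\mathcal{B}))=\langle\mathcal{B}\rangle_{\mathrm{pH}}$ in the form of Corollary~\ref{cor:ppfuniondisguised}, applied to a pH-formula that walks every generator to a friend in $R_\Gamma$ but is falsified by the endpoint word derived from $\tau$. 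Your sketch identifies neither the projections-only lemma nor the Galois-correspondence step, and without them ``no clever idempotent polymorphism can bridge the gap'' remains an unsubstantiated assertion rather than a proof.
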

\noindent Along the way, we
also characterise precisely which partially reflexive paths have only essentially unary polymorphisms.

Building upon and refining \cite{ICALP2014}, we derive a second gap for
semicomplete digraphs.
\begin{theorem}
  \label{thm:sc-hauptsatz}
  Let $\mathcal{H}$ be a semicomplete digraph. If $\mathcal{H}$ has at
  most one cycle  or both a source and a sink, then
  $\mathrm{id\mbox{-}Pol}(\mathcal{H})$ has the PGP. Otherwise, $\mathrm{id\mbox{-}Pol}(\mathcal{H})$
  has the EGP. 
\end{theorem}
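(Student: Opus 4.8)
The plan is to follow the structural trichotomy established for $\mathrm{QCSP}$ on semicomplete digraphs in~\cite{ICALP2014}, but to re-prove each case at the level of generating sets of powers of $\mathrm{id\mbox{-}Pol}(\mathcal{H})$ rather than at the level of complexity. Throughout I would write $\mathcal{H}$ as a linear sum $S_1\to\cdots\to S_r$ of strongly connected semicomplete digraphs along its condensation, observe that $\mathcal{H}$ has a source exactly when $S_1$ is trivial and a sink exactly when $S_r$ is trivial, and note that, by semicompleteness, testing whether an operation is a polymorphism reduces to a finite case analysis of how the arc relation meets each product $S_i\times S_j$ --- precisely the combinatorics isolated in~\cite{ICALP2014}.

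For the PGP direction, suppose $\mathcal{H}$ has at most one cycle or both a source and a sink. If $\mathcal{H}$ is a transitive tournament, the operation $\min$ in the tournament order is an idempotent semilattice polymorphism of the arc relation and it has a neutral element (its top), so $\mathrm{id\mbox{-}Pol}(\mathcal{H})$ has linearly generated powers; equivalently the median on this chain is a near-unanimity polymorphism, which already gives collapsibility. For the remaining cases --- a unique non-trivial strong component, or a source together with a sink --- the goal is to extract from the tractability argument of~\cite{ICALP2014} an explicit \emph{Hubie operation} in $\mathrm{id\mbox{-}Pol}(\mathcal{H})$, i.e.\ to establish $k$-collapsibility from a singleton source for a suitable $k=k(|H|)$: the natural candidate projects onto the unique non-source coordinate whenever all but at most $k$ coordinates carry the chosen singleton (the sink in the one-cycle case, either extremal vertex in the source--sink case) and is forced suitably elsewhere, and the polymorphism check is again the finite $S_i\times S_j$ casework. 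One should not expect a semilattice here in general: the three-vertex digraph with arcs $1\to2$, $2\to1$, $1\to3$, $2\to3$ has no commutative idempotent binary polymorphism, yet it does carry a majority operation (majority on any input with a repeated entry, and the value $3$ on the input listing all three vertices), so the witness is genuinely a near-unanimity or Hubie operation.

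For the EGP direction, suppose $\mathcal{H}$ has at least two cycles and lacks a source or lacks a sink; say it lacks a source, so $S_1$ is non-trivial and some later $S_j$ is non-trivial as well. I would produce, for every $m$, a family of $\Omega(b^m)$ tuples of $\mathcal{H}^m$ that are pairwise irredundant for $\mathrm{id\mbox{-}Pol}(\mathcal{H})$, by placing in each coordinate one of $b\ge2$ values drawn from the two non-trivial components that are ``locally indistinguishable'': using short directed paths through those components as gadgets one defines, by primitive positive formulas on that $b$-element set, enough of $\mathcal{H}$ to force every polymorphism to act there as an essentially unary operation, and since closure under essentially unary operations enlarges a set only by a bounded multiplicative factor, no such family can be shrunk, which gives $f_{\mathrm{id\mbox{-}Pol}(\mathcal{H})}(m)=\Omega(b^m)$. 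Cleanly, this amounts to exhibiting a congruence of $\mathrm{id\mbox{-}Pol}(\mathcal{H})$ with at least two classes whose quotient algebra has only essentially unary term operations, together with the remark that a surjective homomorphism cannot increase the size of a minimal generating set of a power. The hypothesis ``no source or no sink'' is used exactly to place a non-trivial strong component at an extreme of the condensation, so that the two cycle-gadgets can be wired together without interference from the rest of $\mathcal{H}$.

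The main obstacle I anticipate is this last EGP construction: one must arrange the gadgets so that \emph{every} non-essentially-unary polymorphism is killed on the distinguished $b$-element set, not merely the low-arity ones, and do so uniformly over all isomorphism types of semicomplete digraph in this case, checking in particular that no near-unanimity- or semilattice-type operation supported on a substructure survives to defeat the exponential lower bound. A lesser difficulty on the PGP side is pinning down the collapsibility degree $k$ and presenting the various witnesses (Hubie, near-unanimity, semilattice) in a single uniform statement; I expect the advertised refinement of~\cite{ICALP2014} to consist largely in making these polymorphisms explicit.
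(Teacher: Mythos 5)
Your PGP half is in the right spirit and close to what the paper does: apart from your semilattice/median witnesses for transitive tournaments (which the paper subsumes under the source-and-sink case via a different ternary operation), the paper also proceeds by exhibiting an explicit Hubie polymorphism and invoking Chen's lemma (Lemma~\ref{lem:ChensLemma}). In Proposition~\ref{prop:sc-1} the witness is the dual discriminator on the unique $\mathcal{DC}_3$ or $\mathcal{K}_2$, extended by ``return the deepest sink''; its Hubie source is any vertex of the cycle, \emph{not} the sink as you suggest (fixing a coordinate to the terminal sink destroys surjectivity of that operation). In Proposition~\ref{prop:sc-2} it is the operation sending any permutation of $(x,s,t)$ to $x$ and acting as first projection otherwise. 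You leave your own candidate operation unspecified, but this half is completable along your lines.

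The EGP half has a genuine gap, in two places. First, your structural premise is false: ``more than one cycle and no source'' does not yield two non-trivial strong components --- take any strongly connected tournament on at least four vertices and append a sink; all cycles then live in the single non-trivial component $S_1$. The paper instead massages the digraph (via the operator $\mathcal{S}$ and the $\mathcal{G}^{+}/\mathcal{G}^{++}$ lemma) into one with the \emph{Novi Sad property}, producing vertices $p,q$ joined by a double edge such that every vertex sends an edge to $p$ or to $q$. Second, and more seriously, your proposed mechanism --- a congruence with an essentially unary quotient, plus the fact that surjections do not increase generating numbers --- is not what is available here. What one actually obtains is a pp-definable \emph{subset} $U$ (the smooth part) on which all polymorphisms are projections; that is a subalgebra, not a quotient, and a generating set for $A^m$ of size $<2^m$ need not restrict to a generating set of $U^m$, since tuples outside $U^m$ may generate tuples inside it. Bridging exactly this is the content of Proposition~\ref{prop:sc-EGP}: every word of $G^m$ has a sub-successor in $\{p,q\}^m$; words over $(p^{-}\setminus q^{-})\cup(q^{-}\setminus p^{-})$ have a \emph{unique} successor while all other words have more than one, so a set of size $<2^m$ must omit some $\tau\in\{p,q\}^m$ together with all its predecessors; then $R_\Gamma=\{p,q\}^m\setminus\{\tau\}$ is pp-definable (this is where the projection property on $U$ is used), and the pH-formula $\exists x'_1,\ldots,x'_n\,\bigl(\bigwedge_{i} E(x_i,x'_i)\bigr)\wedge R_\Gamma(x'_1,\ldots,x'_n)$ holds on every element of $\Gamma$ yet its universal closure fails at the word over $\{p',q'\}$ corresponding to $\tau$, contradicting Corollary~\ref{cor:ppfuniondisguised}. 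You correctly flag this as the main obstacle, but the proposal contains no device to overcome it.
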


\newcommand{\BarnyApprovedTitleForOurStuffIntro}[0]{The PGP: collapsibility and beyond}
\newcommand{\BarnyApprovedTitleForOurStuffText}[0]{\BarnyApprovedTitleForOurStuffIntro}
\subsection*{\BarnyApprovedTitleForOurStuffIntro.}
\begin{table*}[bp]
  \centering
  \begin{tabular}[h]{|p{.55\textwidth}cp{.305\textwidth}|}
    \hline
    Polymorphism & Arity & Collapsibility\\
    \hline
    \hline
    Near unanimity (a.k.a. majority when $k=3$)
    & 
    $k$ 
    & 
    $(k-1)$-collapsibility with source $\{x\}$ for any $x$.\\
    \multicolumn{3}{|p{\textwidth}|}{
    satisfies the identities $f(x,y,\ldots,y)= \ldots
      =f(\ldots,y,x,y\ldots) = f(y,\ldots,y,x) =y$}
    \\
    \hline
    Dual discriminator.
    & 
    3 
    & 
    $1$-collapsibility with source $A$. \\
     majority acting as a projection when the 3 arguments are distinct &  & $2$-collapsibility with source $\{x\}$ for any $x$.\\
    \hline
    Mal'tsev  
    & 
    3 
    & $1$-collapsibility with source $\{x\}$ for any $x$.\\
    $m(x,x,y)=m(y,x,x)=y$ &&\\
    \hline
    Hubie operation : remains surjective when any coordinate is fixed
    to be $x$ 
    &
    $k$ 
    & 
    $(k-1)$-collapsibility with source $\{x\}$.\\
    In particular, the case of so-called \emph{semilattice with unit}
    $\{x\}$: a binary idempotent, associative and commutative
    polymorphism $s$ that satisfies $s(x,y) = s(y,x) = y$ for any $y$.
    &
    2
    &
    1-collapsibility with source $\{x\}$.
    \\
    \hline
  \end{tabular}
  \medskip
  \caption{Some polymorphisms that imply collapsibility.%
  }
  \label{tab:Polymorphismscollapsibility}
\end{table*}
We prove that when we have a sufficiently uniform form
of effective PGP, based on the notion of \emph{projective} sequences
of adversaries (an adversary is a set of tuples restricting the tuple
of universal variables), then we also have a drop in complexity to NP even in
the unbounded case. 
For such sequences of adversaries, we can show that they
are generating iff they are generating via reactive composition. 
Our proof relies on and adapts the notion of a canonical
$\Pi_2$-sentence from~\cite{LICS2008}.
The statement of this result, Theorem~\ref{MainResult:InAbstracto}, is
somewhat technical so we state here its concrete application to the
situation of collapsibility.

\vspace{0.2cm}
\noindent \textbf{Corollary~\ref{MainResult:InConcreto:Collapsibility}} (Part of).
  Let $\mathcal{A}$ be a structure, $\emptyset \subsetneq B\subseteq A$ and $p>0$.
  The following are equivalent.
  \begin{romannum}
  \item $\mathcal{A}$ is $p$-collapsible from source $B$.
  \item $\mathcal{A}$ is $\Pi_2$-$p$-collapsible from source $B$.
  \item For every $m$, the structure $\mathcal{A}$ satisfies a
    canonical $\Pi_2$-sentence with $m\cdot |A|$ universal variables.
  \end{romannum}

In the case of a singleton source, which covers all known examples of
collapsibility for finite structures (see also
Table~\ref{tab:Polymorphismscollapsibility} which recalls
the polymorphisms that are known to imply collapsibility), then we can refine this further as follows.

\

\noindent \textbf{Theorem~\ref{theo:singletonSource:StrongCharacterisation}} (Part of). \textbf{($p$-Collapsibility from a singleton source}).
Let $p\geq 1$ and $x$ be a constant in $\mathcal{A}$.
  The following are equivalent:
  \begin{longromannum}
  \item $\mathcal{A}$ is $p$-collapsible from $\{x\}$.
  \item $\mathcal{A}$ is $\Pi_2$-$p$-collapsible from $\{x\}$. 
  \item $\mathcal{A}$ models a \textbf{single} canonical $\Pi_2$-sentence
    which implies that $\mathcal{A}$ admits a Hubie operation as a
    polymorphism. 
  \end{longromannum}

This means that we may decide $p$-collapsibility from a singleton
source (the parameter $p>0$ being part of the input).

\subsection*{Back to complexity.}
As we have argued already, a uniform form of PGP like
$p$-collapsibility might explain when a QCSP is in NP. It is
natural in this context to allow constants in the structure not only
because it makes things well behaved in the algebra, but also because
constants are needed for the natural algorithm which consists in
solving a polynomial number of CSP instances induced by replacing all
but $p$ variables by a constant.
Finally, we apply our earlier results: that collapsibility coincides with its $\Pi_2$-restriction and that partially
reflexive paths that are not quasi-loop-connected remain $\Pi_2$-collapsible
in the idempotent case. This morphs the first dichotomy theorem of
\cite{QCSPforests} (\mbox{cf.}
Theorem~\ref{theo:QCSPforestsDichotomyNoConstants}.) 
to become a new trichotomy theorem. Specifically,
the NL cases in the absence of constants split to become NL and
NP-complete cases in the presence of constants. 
\begin{theorem}
  \label{thm:complexity}
  Let $\mathcal{H}$ be a partially reflexive path expanded with all constants. 
  \begin{romannum}
  \item If $\mathcal{H}$ is loop-connected, then QCSP$(\mathcal{H})$
    is in NL.
    \label{thm:complexity:i}
  \item Else, if $\mathcal{H}$ is quasi-loop-connected, then
    QCSP$(\mathcal{H})$ is NP-complete.
    \label{thm:complexity:ii}
  \item Otherwise,  QCSP$(\mathcal{H})$ is Pspace-complete.
    \label{thm:complexity:iii}
  \end{romannum}
\end{theorem}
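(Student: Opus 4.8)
The plan is to handle the three cases of the trichotomy separately, in each combining an upper bound with a matching hardness statement and leaning on the material already developed in the paper. Write $\mathcal{H}$ for the partially reflexive path expanded by all constants and $\mathcal{H}_0$ for the bare path; the notions of being loop-connected and quasi-loop-connected refer to $\mathcal{H}_0$, and trivially QCSP$(\mathcal{H}_0)$ reduces to QCSP$(\mathcal{H})$ by ignoring the constants. Case~\ref{thm:complexity:iii} is then almost immediate: QCSP$(\mathcal{H})$ is in Pspace for every finite $\mathcal{H}$, and if $\mathcal{H}$ is not quasi-loop-connected it is in particular not loop-connected, so QCSP$(\mathcal{H}_0)$ is Pspace-complete by the dichotomy of \cite{QCSPforests} (Theorem~\ref{theo:QCSPforestsDichotomyNoConstants}) and Pspace-hardness lifts to QCSP$(\mathcal{H})$. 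This is also the regime in which Theorem~\ref{thm:prp-hauptsatz} gives EGP, so no PGP-driven speed-up is to be expected.

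For case~\ref{thm:complexity:i} the idea is that the NL-membership proof of \cite{QCSPforests} for loop-connected paths is insensitive to the addition of constants. When the loops of a partially reflexive path form an interval, the median with respect to the path order is a (majority) polymorphism of $\mathcal{H}_0$ --- a routine check that the median respects every non-edge in this case. This median operation is idempotent, hence it remains a polymorphism of $\mathcal{H}$ and it automatically preserves every singleton; inspecting the argument of \cite{QCSPforests} --- which rests on this majority polymorphism via a local-consistency procedure --- one sees that it still decides QCSP$(\mathcal{H})$, the constant atoms of an instance being handled as cost-free unary constraints. Hence QCSP$(\mathcal{H})$ is in NL.

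Case~\ref{thm:complexity:ii} is the substantive one. For the NP upper bound, PGP (Theorem~\ref{thm:prp-hauptsatz}) does not suffice on its own, since it only controls each fixed level of quantifier alternation; instead we use that a quasi-loop-connected partially reflexive path, once expanded by all constants, is $\Pi_2$-$p$-collapsible (from a suitable source) for some $p$ --- this is the refinement of the \cite{QCSPforests} analysis to the idempotent setting recorded earlier in the paper. By Corollary~\ref{MainResult:InConcreto:Collapsibility} this upgrades to genuine $p$-collapsibility (and, in the singleton-source case, to the presence of a Hubie operation as a polymorphism, Theorem~\ref{theo:singletonSource:StrongCharacterisation}), whereupon Theorem~\ref{MainResult:InAbstracto}, through reactive composition, places QCSP$(\mathcal{H})$ in NP even for unbounded alternation. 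For the matching lower bound one gives a polynomial reduction from a canonical NP-complete problem --- say monotone $1$-in-$3$-SAT --- into QCSP$(\mathcal{H})$: the failure of loop-connectedness provides a loopless vertex lying strictly between looped portions of the path, and, anchoring this configuration with the available constants, a $\forall\exists$ block can be forced to act as a selection gadget; chaining such gadgets encodes the NP-complete instance. This hardness is genuinely a constants effect: it has no counterpart in the constant-free classification of \cite{QCSPforests}, where these paths are not NP-complete.

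The step I expect to be the main obstacle is the NP-hardness in case~\ref{thm:complexity:ii}. It surfaces only once constants are present, so nothing can be lifted verbatim from the (constant-free) Pspace-hardness gadgets of \cite{QCSPforests}; one has to build a new selection gadget, anchored by constants, and then argue that composing many copies of it across the nested $\forall\exists$ blocks of the target instance captures satisfiability of the NP-complete source exactly --- in particular that the universal player gains nothing from the alternation beyond the polynomially many ``mostly-constant'' evaluations allowed by the PGP. A secondary technical burden is the one relied on for the upper bound: verifying that quasi-loop-connectedness still yields $\Pi_2$-collapsibility after passage to the all-constants structure, with a uniform collapsibility parameter.
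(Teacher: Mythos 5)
Your treatment of cases~\ref{thm:complexity:i} and~\ref{thm:complexity:iii} matches the paper's: Pspace-hardness lifts from the constant-free dichotomy of \cite{QCSPforests}, and NL membership in the loop-connected case survives the addition of constants because the majority polymorphism is idempotent. Your NP upper bound for case~\ref{thm:complexity:ii} is also essentially the paper's route: the generation lemmas (Lemmas~\ref{lem:generation} and~\ref{lem:generation2}), proved in the idempotent setting, establish item~\ref{concreto:algebraiccollapsibility:pi2} of Corollary~\ref{MainResult:InConcreto:Collapsibility}, which upgrades to full collapsibility and hence, via Corollary~\ref{cor:EffectiveProjectivePGPImpliesQCSPInNP}, to NP membership for unbounded alternation. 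The ``secondary technical burden'' you flag is exactly what those lemmas discharge, so that worry is already resolved in the paper.

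The genuine gap is the NP-hardness in case~\ref{thm:complexity:ii}, which you yourself identify as the main obstacle and leave unexecuted. Moreover, the gadget you sketch is aimed in the wrong direction: you propose to force a $\forall\exists$ block to act as a selection gadget and to chain such blocks, but NP-hardness cannot and need not come from quantifier alternation here. It already lives entirely in the existential fragment: since all constants are present, QCSP$(\mathcal{H})$ contains CSP$(\mathcal{H})$ with all constants, i.e.\ the retraction problem for $\mathcal{H}$, and by the known classification of retraction for pseudoforests \cite{Pseudoforests} this is NP-complete precisely when the loops of the path do not induce a connected component --- which is the case throughout regime~\ref{thm:complexity:ii}. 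That is the paper's entire hardness argument: a one-line citation, no universal variables, no new gadget. Any attempt to extract hardness from the $\forall\exists$ structure would in fact be fighting against the very collapsibility you use for the upper bound, which guarantees that the universal player gains nothing beyond polynomially many mostly-constant evaluations.
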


Due to space restriction, many proofs have been omitted and can be
found in the appendix.
\section{Preliminaries}
Throughout we consider only finite relational structures possibly with
some constants. On first reading, the reader might prefer to assume that
  all constants are present, for the sake of simplicity; though we can not make this
  assumption in general as adding all constants may increase the complexity
  (compare Theorem~\ref{thm:complexity} with Theorem~\ref{theo:QCSPforestsDichotomyNoConstants}).
We denote by $\sigma$ our base signature and hereafter
unless otherwise specified, a structure will be a
$\sigma$-structure. We shall denote by $A$ the domain of a structure
$\mathcal{A}$. The \emph{canonical query}\footnotemark{}
\footnotetext{We actually consider the quantifier-free part of the
  \emph{canonical query}. We depart from the
  usual definition where an existential sentence is used, as we will often need a different prefix of
  quantification.} of the structure
$\mathcal{A}$ is the quantifier-free first-order sentence that has one
variable $x_a$ for each element $a$ in $A$ and a conjunction of all
the positive facts of $\mathcal{A}$: \textsl{e.g.} 
$R(a_1,a_2,\ldots,a_r)$ holds in $\mathcal{A}$ for some $r$-ary symbol
in $\sigma$ iff this conjunction contains the conjunct
$R(x_{a_1},x_{a_2},\ldots,x_{a_r})$. Conversely, given a conjunction
of positive atoms $\varphi$, we denote by $\mathcal{D}_\varphi$ its \emph{canonical
  database}, that is the structure with domain the variables of
$\varphi$ and whose tuples are precisely those that are atoms of
$\varphi$.
%
Let $\mathcal{A}$ and $\mathcal{B}$ be structures. A
\emph{homomorphism} $h$ from $\mathcal{A}$ to $\mathcal{B}$ is a map
from $A$ to $B$ such that for every relational symbol $R$ of arity $r$
and every $r$-tuple $(a_1,a_2,\ldots,a_r)$ of elements of $A$ such
that $R(a_1,a_2,\ldots,a_r)$ holds in $\mathcal{A}$ we have that
$R(h(a_1),h(a_2),\ldots,h(a_r))$ holds in $\mathcal{B}$. 
The \emph{product} $\mathcal{A}\otimes \mathcal{B}$ is the structure
with domain $A\times B$ such that for every relational symbol $R$ of
arity $r$ and every $r$-tuples $(a_1,a_2,\ldots,a_r)$ of  elements of
$A$ and $(b_1,b_2,\ldots,b_r)$ of elements of $B$, we have that
$R\bigl((a_1,b_1),(a_2,b_2),\ldots,(a_r,b_r)\bigr)$ holds in
$\mathcal{A}\otimes\mathcal{B}$ iff both 
$R(a_1,a_2,\ldots,a_r)$ holds in $\mathcal{A}$ and
$R(b_1,b_2,\ldots,b_r)$ holds in $\mathcal{B}$. A constant symbol $c$
is interpreted in $\mathcal{A}\otimes\mathcal{B}$ as the element
$(a,b)$  where $a$ and $b$ are the interpretation of $c$ in
$\mathcal{A}$ and $\mathcal{B}$, respectively.  
We write $\mathcal{A}^k$ for the product of $k$ copies of
$\mathcal{A}$.
A \emph{$k$-ary polymorphism} of $\mathcal{A}$ is a homomorphism $f$
from $\mathcal{A}^k$ to $\mathcal{A}$. We say that $f$ is
\emph{idempotent} if for any $x$ in $A$, $f(x,x,\ldots,x)=x$ holds. 
Equivalently, $f$ is a polymorphism of an extension of $\mathcal{A}$
with constants symbols naming the elements of $\mathcal{A}$.
Let $\mathrm{id\mbox{-}Pol}(\mathcal{A})$
(resp. $\mathrm{sPol}(\mathcal{A})$) denote the set of idempotent
(resp. surjective) polymorphisms of $\mathcal{A}$. 
%
A \emph{majority} operation is a ternary operation $f$ that satisfies the identities $f(x,x,y)=f(x,y,x)=f(y,x,x)=f(x,x,x)=x$. The \emph{dual discriminator} ($dd$) is the particular majority that satisfies $dd(x,y,z)=x$ when $x,y,z$ are distinct.
%
A \emph{Hubie operation} is a
surjective $k$-ary operation $f$, on a set $A \ni x$, such that
$f(x,x,\ldots,x)=x$ and
$f(x,A,\ldots,A)=f(A,x,\ldots,A)=\ldots=f(A,A,\ldots,x)=A$. That is,
the restriction of the operation from fixing $x$ in each coordinate
position remains surjective. When we need to specify $x$, we speak of
a \emph{Hubie operation with source $x$}.
%
A \emph{positive Horn sentence} (pH-sentence for short) is a sentence
of first-order logic with equality using both quantifiers $\exists$ and $\forall$
but only the logical connective $\land$. We will only  consider pH
sentences in \emph{prenex form}, that is with all quantifiers in
front. In the absence of the universal quantifier, we speak of a
\emph{primitive positive sentence} (pp-sentence for short). 
A $\Pi_2$-pH sentence is a pH-sentence with quantifier prefix of the
form $\forall^\star\exists^\star$, that is a block of universal
variables followed by a block of existential variables. 
Let $\mathcal{A}$ be a finite relational structure (possibly with
constants).
The \emph{quantified constraint satisfaction problem} with structure $\mathcal{A}$, denoted $\QCSP(\mathcal{A})$, is the
model-checking problem for pH-sentences over $\mathcal{A}$. That  is,
it takes as input a pH-sentence $\varphi$ and asks whether
$\mathcal{A}$ models $\varphi$. When $\mathcal{A}$ is a structure
with constants naming its elements, we may write
$\QCSP_c(\mathcal{A})$ to stress that all constants are present. 
Similarly, let $\CSP(\mathcal{A})$ denote
the \emph{constraint satisfaction problem} with structure
$\mathcal{A}$ defined as above but with pp-sentences.
We will denote by $\langle \mathcal{A} \rangle_{\mathrm{pH}}$ the
class of relations that are interpretable in $\mathcal{A}$ via some pH-sentence.

Reading the introduction, one could be forgiven for thinking
collapsibility is at once a logical property of structures and a
property of algebras. Indeed, Chen \cite{hubie-sicomp} defines a form
of collapsibility for each and shows that the algebraic form implies
the logical one (a result reworded here as Theorem~\ref{thm:hubieReactiveComposition}).
One purpose of this paper is to tie these two definitions together and prove the converse. For formal purposes we will define collapsibility only in the
logical sense. Let  $\mathcal{A}$ be a structure, $B\subseteq A$ and
$p \geq 0$. The structure $\mathcal{A}$ is \emph{$p$-collapsible with source $B$} when for all
$m\geq 1$, for all pH-sentences $\phi$ with $m$ universal quantifiers, we have that $\mathcal{A} \models \phi$ iff $\mathcal{A} \models \psi$, for \textbf{all} sentences $\psi$ obtained by instantiating all but $p$ universal variables of $\phi$ by some single element $x \in B$.
We assume here that $\mathcal{A}$ has all constants from
  the source set $B$ and will delay to
  \S~\ref{sec:games-adversaries-reactivecomposition} for a more
  general definition where this assumption is not necessary.
$\mathcal{A}$ is \emph{collapsible with source $B$} if it is
$p$-collapsible with source $B$ for some $p$. We define similarly the analogous
notions for the $\Pi_2$-fragment. 


\section{New PGP-EGP gaps}
\label{sec:pgp-versus-egp}
Let $[n]:=\{1,\ldots,n\}$. A digraph $\mathcal{G}$ has vertex set $G$, of cardinality $|G|$, and edge set $E(\mathcal{G})$. Similarly, an algebra $\mathbb{A}$ has domain $A$. For a digraph $\mathcal{H}$, the distance between two $m$-tuples $\overline{s}=(s^1,\ldots,s^m)$ and $\overline{t}=(t^1,\ldots,t^m) \in H^m$ is the minimal $r$ so that there are $m$-tuples  $\overline{z}_1=(z_1^1,\ldots,z_1^m),\ldots,\overline{z}_{r-1}=(z_{r-1}^1,\ldots,z_{r-1}^m) \in H^m$ such that, for each $i \in [m], j \in [r-2]$, we have $E(s^i,z_1^i)$, $E(z_j^i,z_{j+1}^i)$ and $E(z_{r-1}^i,t^i)$.

\subsection{Partially reflexive paths}
\label{sec:prpaths}
Henceforth we consider partially reflexive paths, \mbox{i.e.} paths potentially with some loops (we will frequently drop the preface partially reflexive). As we are interested in idempotent polymorphisms these paths come with constants naming each of their vertices.
For a sequence $\beta \in \{0,1\}^*$, of length $|\beta|$, let $\mathcal{P}_\beta$ be the undirected path on $|\beta|$ vertices such that the $i^{th}$ vertex has a loop iff the $i^{th}$ entry of $\beta$ is $1$ (we may say that the path $\mathcal{P}$ is \emph{of the form} $\beta$). A path $\mathcal{H}$ is \emph{quasi-loop-connected} if it is of either of the forms 
\begin{romannum}
\item $0^a 1^b \alpha$, for $b>0$ and some $\alpha$ with $|\alpha|=a$,
  or
  \label{def:qlc:i}
\item $0^a \alpha$, for some  $\alpha$ with $|\alpha|\in \{a,a-1\}$.
  \label{def:qlc:ii}
\end{romannum}
Where a path satisfies both \ref{def:qlc:i} and \ref{def:qlc:ii}, we use formulation \ref{def:qlc:i} preferentially. A path whose self-loops induce a connected component is further said to be \emph{loop-connected}.
We will usually envisage the domain of a path with $n$
vertices to be $[n]$, where the vertices appear in the natural order
(and a good behaviour brought by the absence of self-loops of the quasi-loop connected case is exhibited
in the lower numbers). 
The \emph{centre} of a path is either the middle vertex, if there is
an odd number of vertices, or between the two middle vertices,
otherwise.  
The main result of this section was stated as Theorem~\ref{thm:prp-hauptsatz}.
\begin{proof}[Proof of Theorem~\ref{thm:prp-hauptsatz}]
The PGP cases follow from Lemmas~\ref{lem:prp-1}, \ref{lem:generation}
and \ref{lem:generation2}. The EGP cases follow from Proposition~\ref{prop:pathsEGP}.
\end{proof}

\subsubsection{Partially reflexive paths with the PGP}
\label{sec:prPAthsWithPGP}

\

\noindent{}The loop-connected case is well understood.
\begin{lemma}
\label{lem:prp-1}
Let $\mathcal{H}$ be a partially reflexive  path that is loop-connected. Then $\mathrm{id\mbox{-}Pol}(\mathcal{H})$ has the PGP.
\end{lemma}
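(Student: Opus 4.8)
The plan is to deduce the PGP from collapsibility. Up to the obvious symmetry we may assume $\mathcal{H}=\mathcal{P}_{0^a1^b0^c}$ with $b\ge 1$ (this is what loop-connectedness amounts to for a path) and $c\le a$; write $\mathcal{R}$ for the reflexive subpath induced by the $b$ looped vertices and fix a vertex $x$ of $\mathcal{R}$ as a prospective singleton source. I would first dispose of the base case $a=c=0$, i.e.\ $\mathcal{H}$ a reflexive path: here the median with respect to the natural linear order is a majority polymorphism, since it is $1$-Lipschitz in the $\ell_\infty$ metric (being a composition of binary $\min$ and $\max$, each of which is) and so preserves the ``distance $\le 1$'' edge relation; by Table~\ref{tab:Polymorphismscollapsibility} $\mathcal{H}$ is then $2$-collapsible from $\{x\}$, which is already a special form of the PGP.

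For a general loop-connected path the median fails once an irreflexive arm has length $\ge 2$ (configurations such as $E(i-1,i)$, $E(i,i-1)$, $E(i+1,i)$ drive the output onto an irreflexive vertex demanding an absent self-loop), so I would instead invoke the structural analysis of \cite{QCSPforests}: for loop-connected partially reflexive paths the membership of the $\mathrm{QCSP}$ in $\mathrm{NL}$ is obtained there by a collapsing of the universal variables, showing that a positive Horn instance with $m$ universal variables holds on $\mathcal{H}$ iff it holds on all instantiations leaving at most a bounded number $p=p(\mathcal{H})$ of them free and setting the rest to the vertex $x$ of the loop-part. This is exactly $\Pi_2$-$p$-collapsibility from the singleton source $\{x\}$; hence the set of $m$-tuples that agree with $x$ outside of at most $p$ coordinates generates $\mathrm{id\mbox{-}Pol}(\mathcal{H})^m$, and since there are only $O(m^p)$ such tuples this yields the PGP. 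Equivalently, one could quote Corollary~\ref{MainResult:InConcreto:Collapsibility} / Theorem~\ref{theo:singletonSource:StrongCharacterisation} to upgrade this to full $p$-collapsibility from $\{x\}$ (a Hubie operation of arity $p+1$ with source $x$) and read off the PGP from there.

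The main obstacle is the faithful transfer of the collapsing bound of \cite{QCSPforests} into the present framework: that argument is phrased in the constant-free setting, so one must (i) check it survives the addition of all constants --- which it does, since the vertices of $\mathcal{R}$ already act as absorbing targets for the irreflexive arms, so constants only restrict the universal, not the existential, behaviour --- and (ii) match conventions so that the resulting collapse is $\Pi_2$-collapsibility from a singleton source in the sense of \S~\ref{sec:games-adversaries-reactivecomposition}, at which point counting bounded-support tuples is immediate. Everything else --- the base case, and the implication ``$\Pi_2$-collapsibility from a source $\Rightarrow$ PGP'' --- is routine.
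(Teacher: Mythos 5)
Your overall architecture (majority/Hubie polymorphism $\Rightarrow$ collapsibility from a singleton source $\Rightarrow$ PGP) is the right one and matches the paper, and your base case for reflexive paths via the median is fine. The gap is in how you handle the general loop-connected case. You correctly observe that the plain median ceases to be a polymorphism once an irreflexive arm has length $\geq 2$, but you then abandon the majority route entirely and instead assert that \cite{QCSPforests} establishes NL membership ``by a collapsing of the universal variables'', i.e.\ that a pH-instance holds iff all its instantiations fixing all but $p$ universal variables to $x$ hold. No such collapsing statement is what \cite{QCSPforests} proves for loop-connected paths: there (Lemma~3 of that paper, reproduced in the appendix here) NL membership comes precisely from the existence of a majority polymorphism, and any collapsibility of these structures is \emph{derived} from that polymorphism via Chen's Lemma~\ref{lem:ChensLemma}. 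So your citation imports, essentially unproved, the very $\Pi_2$-$p$-collapsibility you are trying to establish; as written the argument is circular (or at best rests on a result that does not exist in the form you recall it).

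The fix is exactly where you gave up too early: although the median fails on irreflexive arms, a parity-corrected majority still exists. Writing $L$ and $R$ for the irreflexive components to the left and right of the looped block, one uses $\mathrm{Feder}(x,y,z)$ (median if $x,y,z$ share a parity, otherwise the maximum of the repeated parity) on triples lying entirely in $L$ or entirely in $R$, and the plain median otherwise; this is a majority, hence idempotent, polymorphism of $\mathcal{H}$, hence a Hubie polymorphism with arbitrary source, and Lemma~\ref{lem:ChensLemma} together with Table~\ref{tab:Polymorphismscollapsibility} and Corollary~\ref{MainResult:InConcreto:Collapsibility} then yields $2$-collapsibility from a singleton source and thus the PGP for $\mathrm{id\mbox{-}Pol}(\mathcal{H})$. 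Your item (i) about surviving the addition of constants is a non-issue once this route is taken, since a majority operation is idempotent by definition.
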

\begin{proof}
  $\mathcal{H}$ admits a majority polymorphism (see Lemma 3 of
  \cite{QCSPforests}). This is a Hubie polymorphism of $\mathcal{G}$ (where the single element can be chosen arbitrarily), whereupon the result follows from \cite{hubie-sicomp} (see our forthcoming Lemma~\ref{lem:ChensLemma} together with Corollary~\ref{MainResult:InConcreto:Collapsibility}).
\end{proof}
The quasi-loop connected case is more technical. Due to space
restriction, we will only present in full half of this case, which
will suffice to illustrate the proof principle. First, we are able to
exhibit specific binary idempotent polymorphisms.
\begin{lemma}
\label{lem:bin-pol}
Let $\mathcal{P}_{0^a 1^b \alpha}$, with $b>0$, be a quasi-loop-connected path on vertices $[n]$. For each $y \in [n]$ there is a binary idempotent polymorphism $f_y$ of $\mathcal{P}_{0^a 1^b \alpha}$ so that $f_y(1,x)=x$ (for all $x$) and $f_y(n,1)=y$.
\end{lemma}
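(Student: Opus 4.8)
The plan is to construct the binary polymorphism $f_y$ explicitly and verify it is an edge-preserving map from $\mathcal{P}_{0^a1^b\alpha}^2$ to $\mathcal{P}_{0^a1^b\alpha}$. First I would record the structural data: with $n = a + b + |\alpha| = 2a+b$, the vertices $[n]$ carry the undirected path edges $\{i,i+1\}$ plus loops exactly on the vertices in positions $a+1,\dots,a+b$ (the $1^b$ block) and on those positions $j\in\{a+b+1,\dots,n\}$ where $\alpha$ has a $1$. The constraint $f_y(1,x)=x$ forces $f_y(1,\cdot)$ to be the identity, which is consistent with edge-preservation on the first-coordinate slice since $\{1\}\times\mathcal{P}$ is isomorphic to $\mathcal{P}$; the constraint $f_y(n,1)=y$ pins down one more value. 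The natural guess is to make $f_y$ depend (in the second coordinate) essentially only on whether the first coordinate is ``small'' (near $1$) or ``large'' (near $n$), interpolating between the identity and a constant-like or shift-like behaviour as the first coordinate walks from $1$ to $n$.

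The key steps, in order: (1) define $f_y(i,j)$ by a piecewise formula — for instance, roughly $f_y(i,j) = j$ when $i$ is small, and as $i$ increases toward $n$ the output is progressively ``pushed'' toward $y$, using the loop at a vertex of the $1^b$ block as the place where one is allowed to stall while moving; (2) check idempotency $f_y(i,i)=i$; (3) check the two boundary identities $f_y(1,x)=x$ and $f_y(n,1)=y$; (4) verify edge-preservation: for every edge $(i,j)\to(i',j')$ of $\mathcal{P}^2$ (meaning $\{i,i'\}$ and $\{j,j'\}$ are edges of $\mathcal{P}$, loops included), show $\{f_y(i,j),f_y(i',j')\}$ is an edge of $\mathcal{P}$. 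Because $\mathcal{P}$ is a path, two vertices are adjacent iff they differ by at most $1$ (with the differ-by-$0$ case requiring a loop), so step (4) reduces to bounding $|f_y(i,j)-f_y(i',j')|$ by $1$ and, in the equality case, checking the common value has a loop. I would organise this by cases on which piece of the piecewise definition $(i,j)$ and $(i',j')$ fall into, the delicate cases being those straddling a boundary of the definition and those where the output lands on a non-looped vertex yet the two inputs map to the same vertex.

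The main obstacle I expect is step (4) in the transition region: one must design the interpolation so that as the first coordinate increases by $1$ the output changes by at most $1$, and crucially so that whenever the map is ``stalled'' (output unchanged while an input coordinate moved across a loop) the stalling vertex genuinely carries a loop — this is exactly where the hypotheses $b>0$ and the quasi-loop-connected shape $0^a1^b\alpha$ with $|\alpha|=a$ are needed, since they guarantee a looped vertex sits in the right place (around position $a+1$) to absorb the slack, and they control the structure of $\alpha$ on the far side. A clean way to manage this is to route the output through a single designated looped vertex $\ell\in\{a+1,\dots,a+b\}$: on the ``first coordinate small'' region output $j$; for intermediate first coordinate, output $\ell$ (which is legal precisely because $\ell$ has a loop, so constant behaviour preserves edges there); then for ``first coordinate large'' move from $\ell$ toward $y$. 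One then only has to check continuity at the two seams, which is a short monotone-step argument. I would also separately confirm the formula respects all the constant symbols — but since idempotency already gives $f_y(c,c)=c$ for each named constant $c$, that is automatic.

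Finally, I would note that the companion quasi-loop-connected form (case \ref{def:qlc:ii} of the definition, the $0^a\alpha$ shape) is handled by an analogous but separate construction, and that the present lemma feeds into the generation lemmas (Lemmas~\ref{lem:generation} and~\ref{lem:generation2}) by supplying enough binary operations to generate the required small generating sets of the powers $\mathcal{H}^m$.
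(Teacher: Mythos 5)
Your overall strategy --- write down an explicit binary operation as an $n\times n$ matrix and verify edge-preservation locally, using a loop in the $1^b$ block to absorb slack --- is indeed the shape of the paper's argument. But the concrete construction you sketch does not work, and the part you wave at is exactly where all the difficulty lives. First, the ``route through a single looped vertex $\ell$'' device is inconsistent with idempotency: on the intermediate band you set $f_y(i,j)=\ell$ for all $j$, but idempotency forces $f_y(i,i)=i$ for \emph{every} $i$, including every $i$ in that band; likewise in row $n$ any scheme that ``moves from $\ell$ toward $y$'' must still return $f_y(n,n)=n$, so the rows cannot be (nearly) constant. Second, the seam between the ``output $j$'' band and the ``output $\ell$'' band destroys edge-preservation: if $f_y(s,\cdot)$ is the identity and $f_y(s+1,\cdot)\equiv\ell$, then the edge $(s,j)\sim(s+1,j')$ of $\mathcal{P}^2$ (which exists whenever $\{j,j'\}\in E(\mathcal{P})$) forces $\{j,\ell\}$ to be an edge for every non-isolated $j$, which fails on any path with more than three vertices. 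More generally, any definition that is piecewise in the first coordinate alone cannot meet both boundary conditions, because row $n$ must simultaneously satisfy $f_y(n,1)=y$ and $f_y(n,n)=n$.

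The genuine obstacle, which your sketch does not engage with, is the parity rigidity on the irreflexive prefix $0^a$: since polymorphisms cannot increase distance and odd/even vertices of an irreflexive segment sit at odd distance in $\mathcal{P}^2$, the values $f_y(\lambda,\mu)$ for $\lambda,\mu$ below the first loop $p$ are forced up to a parity correction (this is the same phenomenon the paper uses right after Lemma~\ref{lem:generation} to show the lemma \emph{fails} when $b=0$). The paper's proof is correspondingly two-dimensional: it takes the second projection on all columns $\geq q$ (the centre), peels off nested L-shapes of constant values $q,q-1,\dots,p$ to reach the loop block, fills the residual top-left block $X'_{tl}$ with explicit parity-respecting arithmetic diagonals (e.g.\ $(\lambda,\mu)\mapsto\lambda$ or $\lambda+1$ according to the parity of $\lambda-\mu+1$), and anchors the bottom-left block by placing $y$ at $(n,1)$ and propagating along a diagonal into $\{p,\dots,q,\dots,r\}$, with a mirrored parity fill for the leftover corner. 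None of this is recoverable from a one-dimensional interpolation through a single stalling loop, so the proposal has a real gap at its core step.
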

Next, we exhibit specific linear generating set for the powers.
\begin{lemma}
  \label{lem:generation}
  Let $\mathcal{P}_{0^a 1^b \alpha}$, for $b>0$, be a quasi-loop-connected path on vertices $[n]$. Let $\mathbb{A}$ be the algebra specified by $\mathrm{id\mbox{-}Pol}(\mathcal{P}_{0^a 1^b \alpha})$. For each $m$, $\mathbb{A}^m$ is generated from the $n+1$ $m$-tuples $(1,1,\ldots,1),$ $(n,1,\ldots,1), (1,n,\ldots,1), \ldots, (1,1,\ldots,n)$.
\end{lemma}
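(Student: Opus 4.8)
The plan is to build an arbitrary target $m$-tuple $\overline{t}=(t^1,\ldots,t^m)\in[n]^m$ one coordinate at a time, using the binary idempotent polymorphisms $f_y$ supplied by Lemma~\ref{lem:bin-pol}. Abbreviate $\overline{1}=(1,\ldots,1)$, and for $i\in[m]$ let $e_i$ be the $m$-tuple with $n$ in coordinate $i$ and $1$ in every other coordinate; these are precisely the tuples listed in the statement (there are $m+1$ of them, so the generating set is linear in $m$ as promised). The two features of $f_y$ that drive the argument are $f_y(1,x)=x$ for every $x\in[n]$ and $f_y(n,1)=y$: the first says that on any coordinate where the first argument carries a $1$ the operation behaves as the second projection, hence preserves whatever has already been placed there; the second lets us install the value $y$ in any coordinate where the first argument carries $n$ and the second carries $1$.

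Concretely, I would prove by induction on $k\in\{0,1,\ldots,m\}$ that for all $t^1,\ldots,t^k\in[n]$ the tuple $(t^1,\ldots,t^k,1,\ldots,1)$ lies in the subalgebra $\langle S\rangle$ of $\mathbb{A}^m$ generated by $S=\{\overline{1}\}\cup\{e_i:i\in[m]\}$. For $k=0$ this is just $\overline{1}\in S$. For the inductive step, assume $(t^1,\ldots,t^k,1,\ldots,1)\in\langle S\rangle$ and fix any $t^{k+1}\in[n]$; apply the binary polymorphism $f_{t^{k+1}}\in\mathbb{A}$ coordinatewise to the pair $\bigl(e_{k+1},\,(t^1,\ldots,t^k,1,\ldots,1)\bigr)$, both entries of which lie in $\langle S\rangle$. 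In coordinate $i\le k$ the first argument is $1$, giving $f_{t^{k+1}}(1,t^i)=t^i$; in coordinate $k+1$ we get $f_{t^{k+1}}(n,1)=t^{k+1}$; in coordinate $i>k+1$ we get $f_{t^{k+1}}(1,1)=1$. Hence $(t^1,\ldots,t^{k+1},1,\ldots,1)\in\langle S\rangle$, completing the induction. Taking $k=m$ yields $\overline{t}\in\langle S\rangle$ for every $\overline{t}\in[n]^m$, i.e.\ $S$ generates $\mathbb{A}^m$.

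The argument is short precisely because the substantive work sits in Lemma~\ref{lem:bin-pol}: once the $f_y$ are available, the present lemma is just their packaging into a linear-size generating set. The only point that demands care is the bookkeeping — verifying at each step that $f_{t^{k+1}}$ is applied only to tuples already known to be in $\langle S\rangle$ (namely the fixed generator $e_{k+1}$ and the inductively produced partial tuple), and using the identity $f_y(1,x)=x$ in full generality (for every $x\in[n]$, including $x=n$) so that coordinates fixed in earlier rounds are genuinely untouched in later ones.
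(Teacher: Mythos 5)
Your proof is correct and follows essentially the same route as the paper's: both arguments induct on the number of coordinates already set, applying $f_{t^{k+1}}$ to the generator with $n$ in position $k+1$ and the inductively built partial tuple, using $f_y(1,x)=x$ to preserve earlier coordinates and $f_y(n,1)=y$ to install the new one. (Your parenthetical count of $m+1$ generators is the accurate one; the ``$n+1$'' in the statement is evidently a slip.)
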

\begin{proof}
We will make use of the polymorphisms $f_y$ guaranteed to exist by Lemma~\ref{lem:bin-pol}. Firstly, from $(n,1,\ldots,1)$ and $(1,1,\ldots,1)$ we can, for each $y$, use $f_y$ to generate $(y,1,\ldots,1)$. And we can similarly build all co-ordinate permutations of this. We now have the base case in an inductive proof, where our inductive hypothesis will be that for all $k$ we can build the tuple which has entries $y_1,\ldots,y_k$ with the remaining entries being $1$. The result for $k=m$ implies the lemma, so it remains only to test the inductive step where we will assume $y_1,\ldots,y_k,y_{k+1}$ are the first $k+1$ entries of a tuple continued by $1,\ldots,1$ (of course we can build the rest through co-ordinate permutation). From $(1,\ldots,1,n,1,\ldots,1)$ and $(y_1,\ldots,y_k,1,\ldots,1)$ (where $n$ is in the $k+1$st position) we can use $f_{y_{k+1}}$ to build  $(y_1,\ldots,y_k,y_{k+1},1,\ldots,1)$. This proves the claim.
\end{proof}

Lemma~\ref{lem:bin-pol} fails for the other type of
quasi-loop-connected paths, essentially when $b=0$. This is easily
seen to be the case when we take an irreflexive path on an odd number
$n$ of vertices (for an example on paths with an even number of
vertices $\geq 4$, take an irreflexive path leading to a single looped
vertex at the end). Then no idempotent polymorphism $f$ may have
$f(n,1)=2$ for parity reasons, since odd and even vertices must be at
odd distance in the square of the graph. In fact,
Lemma~\ref{lem:bin-pol} does hold for quite a few of the remaining
cases (\mbox{e.g.} for $\mathcal{P}_{0^a \alpha}$ when $|\alpha|=a$
and the first entry of $\alpha$ is $1$), but the proof requires an
alternative construction. This alternative construction and a proof
in the spirit of that of  Lemma~\ref{lem:generation} yields the
following result which deals at once with all the outstanding
cases.
\begin{lemma}
\label{lem:generation2}
Let $\mathcal{P}_{0^a \alpha}$, for $|\alpha|\in \{a,a-1\}$, be a quasi-loop-connected path on vertices $[n]$  (that is not of the form $\mathcal{P}_{0^a 1^b \alpha}$ with $|\alpha|=a$). Let $\mathbb{A}$ be the algebra specified by $\mathrm{id\mbox{-}Pol}(\mathcal{P}_{0^a \alpha})$. For each $m$, $\mathbb{A}^m$ is generated from the $2n+2$ $m$-tuples $(1,1,\ldots,1),$ $(2,2,\ldots,2),(n,1,\ldots,1),$ $(1,n,\ldots,1),$ $\ldots, (1,1,\ldots,n)$,$(n,2,\ldots,2), (2,n,\ldots,2), \ldots, (2,2,\ldots,n)$.
\end{lemma}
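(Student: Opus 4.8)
Following the scheme foreshadowed in the paragraph before the statement, the plan is to produce a supply of low-arity idempotent polymorphisms playing the role of the $f_y$ of Lemma~\ref{lem:bin-pol}, and then run a generation induction in the spirit of Lemma~\ref{lem:generation}. The new obstacle, compared with the $0^a1^b\alpha$ case with $b>0$, is a genuine parity phenomenon: the path $\mathcal{P}_{0^a\alpha}$ treated here need not carry a loop near its centre, so $\mathcal{P}_{0^a\alpha}^{\,2}$ can fail to be connected (the fully irreflexive sub-case being the extreme one), and then no idempotent polymorphism can send $(n,1,\ldots,1)$ onto an arbitrary coordinate value — the attainable values are confined to one residue class of the bipartition. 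Introducing the second source $2$ (an irreflexive vertex adjacent to $1$) together with the spikes $(2,\ldots,2,n,2,\ldots,2)$ is precisely what repairs this: coordinate values of one residue class will be imported from the all-$1$ tuple and those of the other class from the all-$2$ tuple, exactly accounting for the extra generators in the statement.

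The technical heart, and the step I expect to be the main obstacle, is the analogue of Lemma~\ref{lem:bin-pol}: for each target value $y\in[n]$ one must exhibit a binary idempotent polymorphism of $\mathcal{P}_{0^a\alpha}$ that acts as a projection on a whole coordinate while forcing one chosen coordinate to $y$, drawing the neutral coordinates from source $1$ or source $2$ according to the residue class of $y$. I would build these by explicit piecewise formulas on the vertex set $[n]$: on the loopless initial block $\{1,\ldots,a\}$ the operation transports the non-projected argument in a controlled, essentially monotone way, while the vertices carrying the loops of $\alpha$ — or, when $\alpha$ has no loop, the far end of the path — are used to absorb the parity slack. One then checks the homomorphism condition $\mathcal{P}_{0^a\alpha}^{\,2}\to\mathcal{P}_{0^a\alpha}$ by a finite case analysis on the positions of the two arguments, splitting according to whether $|\alpha|=a$ or $|\alpha|=a-1$ (equivalently, $n$ even or odd). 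Getting these formulas right, and making the families for sources $1$ and $2$ mutually compatible so that the generation step below runs without clashes, is where the real work lies; the rest is bookkeeping.

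Granting the polymorphisms, the generation argument copies that of Lemma~\ref{lem:generation}, with the bookkeeping doubled. From $(1,\ldots,1)$ and $(n,1,\ldots,1)$ one generates, via the source-$1$ polymorphisms, all single-coordinate deviations to values of the first residue class together with their coordinate permutations, and symmetrically from $(2,\ldots,2)$ and $(n,2,\ldots,2)$ for the second class; this is the base case. An induction on the number of already-placed entries then assembles an arbitrary $\overline{t}=(t^1,\ldots,t^m)$ one coordinate at a time: at each step one applies the polymorphism carrying $(n,1)$ (respectively $(n,2)$) to the value being placed, to the partially built tuple together with the spike having $n$ in the active coordinate and $1$'s (respectively $2$'s) elsewhere, using that on such a spike the polymorphism acts as the projection which leaves the already-placed coordinates untouched; as in Lemma~\ref{lem:generation} the remaining coordinate patterns are then reached by permutation. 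Carried up to $m$ placed entries this shows every member of $H^m$, hence all of $\mathbb{A}^m$, is generated, which is the claim. Finally, the quasi-loop-connected paths for which Lemma~\ref{lem:bin-pol} already applies are handled by Lemma~\ref{lem:generation}, and the loop-connected ones (in particular $\mathcal{P}_{0^a\alpha}$ with $\alpha$ a block of $1$'s) by Lemma~\ref{lem:prp-1}; the construction above is engineered to cover in one stroke all remaining quasi-loop-connected paths of the form $\mathcal{P}_{0^a\alpha}$, as needed for the proof of Theorem~\ref{thm:prp-hauptsatz}.
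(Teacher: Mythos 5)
Your architecture is exactly the paper's: its proof of Lemma~\ref{lem:generation2} is literally the induction of Lemma~\ref{lem:generation} rerun with a new binary-polymorphism lemma (Lemma~\ref{lem:bin-pol2}) substituted for Lemma~\ref{lem:bin-pol}, and your diagnosis of the parity obstruction, of the role of the second source $2$ (an irreflexive neighbour of $1$, hence of the opposite residue class), and of the doubled bookkeeping in the generation induction is the right one. One harmless difference of normalisation: the paper always anchors the neutral row at $1$, i.e.\ $f_y(1,x)=x$ for all $x$, and instead lets the \emph{activating} entry vary, requiring only that $f_y(n,1)=y$ or $f_y(n,2)=y$; you propose anchoring the neutral row at $1$ or $2$ according to the residue class of $y$. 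Either convention supports the induction once the generators $(2,\ldots,2)$ and the $2$-spikes are available.

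The genuine gap is the step you yourself flag as ``where the real work lies'': the binary idempotent polymorphisms are never actually constructed, and they are the entire technical content of the lemma. Your sketch (``essentially monotone transport on the loopless block, with the loops of $\alpha$ or the far end absorbing the parity slack'') is too vague to check, and it is not how the paper proceeds. The paper's Lemma~\ref{lem:bin-pol2} builds $f_y$ by a symmetrisation of the matrix construction of Lemma~\ref{lem:bin-pol}: the columns $n/2+1,\ldots,n$ are filled as constant columns $n/2+1,\ldots,n$; the top-left block $X_{tl}$ on the first $n/2$ columns is filled exactly as the block $X'_{tl}$ of Lemma~\ref{lem:bin-pol}; and the bottom-left block is taken to be the mirror image of (the bottom part of) $X_{tl}$ in a horizontal axis placed just after row $n/2+(z-1)/2$ for $y\in\{z,z+1\}$ with $z$ odd --- it is the position of this axis that selects which value lands at $(n,1)$ or $(n,2)$ (see Figure~\ref{fig:second-2}). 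The verification that the reflected block is still a polymorphism is precisely where the earlier observation, that $X'_{tl}$ ``satisfies polymorphism on the horizontals where it is not necessary,'' is cashed in. Until you supply some such explicit construction and check the adjacency conditions, what you have is a correct plan rather than a proof.
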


We remark that if we were not in the idempotent situation (\mbox{i.e.}
without constants in the structure) then the lemmas could have been
proved from observations about the so-called Q-core
\cite{DBLP:conf/cp/MadelaineM12} via the main result of
\cite{LICS2008} (see Application~\ref{app:FairmontHotelTrick}).


\medskip

\subsubsection{Partially reflexive paths with the EGP}

\

By induction on the arity, we prove the following.
\begin{lemma}
\label{lem:cati}
Let $\alpha$ be any sequence of zeros and ones. All idempotent polymorphisms of $\mathcal{P}_{10\alpha01}$ are projections.
\end{lemma}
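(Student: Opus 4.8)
The plan is to induct on the arity $k$ of an idempotent polymorphism $f$ of $\mathcal{P}_{10\alpha01}$, which we regard as living on vertex set $[n]$ with loops exactly in positions $1,n$ and no loop in positions $2,n-1$. For $k=1$ idempotency already forces $f$ to be the identity, so we may assume that every idempotent polymorphism of $\mathcal{P}_{10\alpha01}$ of arity $<k$ is a projection. The first ingredient is a reduction: for any two coordinates $p\neq q$, identifying them yields the $(k-1)$-ary operation $f'(x_1,\ldots,x_{k-1})=f(\ldots,x_p,\ldots,x_p,\ldots)$, which is again an idempotent polymorphism and hence, by the inductive hypothesis, a projection onto some coordinate. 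Thus $f$ collapses to a projection under every identification of a pair of its coordinates; what remains is to upgrade this to the statement that $f$ itself is a projection.

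For the second, graph-theoretic ingredient I would anchor the argument at the two looped endpoints. Since the only neighbours of vertex $1$ are $1$ and $2$, every tuple in $\{1,2\}^k$ is adjacent in $\mathcal{P}_{10\alpha01}^{\,k}$ to $(1,\ldots,1)$, whose image is $1$; hence $f$ maps $\{1,2\}^k$ into $\{1,2\}$, and symmetrically $f$ maps $\{n-1,n\}^k$ into $\{n-1,n\}$. More generally I would track, for each pair of adjacent vertices $\{v,v+1\}$ with its particular loop pattern, how $f$ behaves on $\{v,v+1\}^k$, pushing the constraint coming from the endpoints step by step towards the centre of the path. The crucial feature is the two irreflexive vertices $2$ and $n-1$: an edge incident to an irreflexive vertex imposes a genuine parity constraint on walks in $\mathcal{P}_{10\alpha01}^{\,k}$, and this is exactly what rules out the ``near-unanimity–like'' ways in which $f$ could fail to be a projection while still collapsing to a projection under every coordinate identification. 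Concretely, if on some pair of coordinates $f$ behaved like a majority, one exhibits two tuples $\overline{s},\overline{t}$ adjacent in $\mathcal{P}_{10\alpha01}^{\,k}$ with $f(\overline{s})=f(\overline{t})=2$ (or $=n-1$), contradicting the absence of a loop there.

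Combining the two: the identification reduction leaves only the patterns in which a single coordinate $i$ is ``always chosen'', and one then checks, again by an adjacency argument propagating inward from the loops at $1$ and $n$ and using that $f$ already agrees with $\pi_i$ on all tuples obtained by one identification, that $f=\pi_i$ everywhere. The main obstacle is this middle step: showing that the irreflexivity of vertices $2$ and $n-1$ suffices to exclude \emph{every} non-projection pattern that is consistent with all coordinate identifications being projections. Note also that the genuine base of the induction is $k=2$, where the inductive hypothesis gives no information; that case must be handled directly, by a short case analysis of $f$ on the boundary squares $\{1,2\}^2$ and $\{n-1,n\}^2$ and on the ``bridge'' tuples joining them, which already forces $f\in\{\pi_1,\pi_2\}$.
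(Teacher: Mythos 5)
Your overall strategy -- induct on the arity, anchor the argument at the two looped endpoints, and exploit the irreflexivity of the vertices adjacent to them -- is the right one and matches the spirit of the paper's proof. But as written the proposal has two genuine gaps, one of which you flag yourself. First, the binary base case is not a ``short case analysis of $f$ on the boundary squares $\{1,2\}^2$ and $\{n-1,n\}^2$ and on the bridge tuples''. Knowing $f$ on those two squares says nothing directly about $f(3,7)$, say. The paper devotes a separate lemma (Lemma~\ref{generalbinary}) to this case, and its engine is the sandwich inequality $\min\{i,j\}\leq f(i,j)\leq \max\{i,j\}$, obtained from the fact that polymorphisms do not increase distances, applied towards \emph{both} looped endpoints; one then proves by a double induction (on the gap $|i-j|$ and on the position) that $f$ restricted to each pair $\{x,x+k\}$ is the same projection, where the irreflexivity of the second and penultimate vertices is used to kill a chain of would-be loops that could otherwise propagate down the path. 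None of this machinery appears in your sketch, and without the sandwich inequality the ``propagation inward from the loops'' has nothing to propagate.

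Second, and more seriously, the inductive step is exactly the point you defer as ``the main obstacle'', and the one non-projection pattern you do discuss (majority-like behaviour on a pair of coordinates) is not the dangerous one. An idempotent operation all of whose identification minors are projections need not look like a near-unanimity operation: for arity $\geq 4$ the generic obstruction is a \emph{semiprojection}, i.e.\ an operation agreeing with a fixed projection on every tuple with a repeated entry but differing on some rainbow tuple, and for arity $3$ there are further patterns (minority, Pixley). Your parity argument against a loop at $2$ or $n-1$ does not touch these. The paper closes this gap not by classifying such patterns but by computing $f(x_1,\ldots,x_k)$ directly on arbitrary tuples: it splits into cases according to whether $x_1$ is the leftmost element, the rightmost, or neither, and in each case uses the distance-non-increase property to compare $f(x_1,\ldots,x_k)$ with values $f$ takes on nearby tuples that \emph{do} have a repeated entry (obtained by sliding coordinates towards a looped endpoint and parking them there), where the inductive hypothesis pins the value down; a further inner induction with an adjacency/parity argument then forces $f(x_1,\ldots,x_k)=x_1$. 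You would need to supply an argument of this kind -- in particular one that excludes proper semiprojections on rainbow tuples -- before the proposal becomes a proof.
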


This will suffice to derive EGP for all non-quasi loop connected
graphs as we will be able to pinpoint a suitable copy of
$\mathcal{P}_{10\alpha01}$ in all such graphs.
%
%
But first we need to appeal to another ingredient, namely
the well-known Galois correspondence
$\mathrm{Inv}(\mathrm{sPol}(\mathcal{B})) = \langle \mathcal{B}
\rangle_{\mathrm{pH}}$ holding for finite structures $\mathcal{B}$
\cite{BBCJK}, which can be used to derive the following.

\begin{corollary}\label{cor:ppfuniondisguised}
Suppose $\mathbb{A}=\mathrm{id}\mbox{-}\mathrm{Pol}(\mathcal{B})$, for some finite structure $\mathcal{B}$, and $\Gamma$ is a generating set for $\mathbb{A}^m$. Let $\phi(v_1,\ldots,v_m)$ be a formula from $\langle \mathcal{B} \rangle_{\mathrm{pH}}$. If $\mathcal{B} \models \phi(x_1,\ldots,x_m)$ for all $(x_1,\ldots,x_m) \in \Gamma$, then  $\mathcal{B} \models \forall v_1,\ldots,v_m \ \phi(v_1,\ldots,v_m)$.
\end{corollary}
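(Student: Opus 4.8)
The plan is to exploit the Galois correspondence $\mathrm{Inv}(\mathrm{sPol}(\mathcal{B})) = \langle \mathcal{B} \rangle_{\mathrm{pH}}$ recalled just above the statement, applied in the right arity. First I would observe that since $\phi(v_1,\ldots,v_m)$ lies in $\langle \mathcal{B}\rangle_{\mathrm{pH}}$, the $m$-ary relation $R_\phi := \{(x_1,\ldots,x_m) \in B^m : \mathcal{B}\models\phi(x_1,\ldots,x_m)\}$ is invariant under every surjective polymorphism of $\mathcal{B}$, and in particular under every idempotent polymorphism (idempotent polymorphisms are surjective). So $R_\phi$ is preserved by the algebra $\mathbb{A} = \mathrm{id}\mbox{-}\mathrm{Pol}(\mathcal{B})$; equivalently, $R_\phi$ is a subalgebra of $\mathbb{A}^m$.

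The key step is then the following. By hypothesis, every tuple of the generating set $\Gamma$ for $\mathbb{A}^m$ belongs to $R_\phi$. Since $R_\phi$ is a subalgebra of $\mathbb{A}^m$ and contains $\Gamma$, it contains the subalgebra generated by $\Gamma$, which is all of $\mathbb{A}^m$ by the definition of a generating set. Hence $R_\phi = B^m$, i.e.\ $\mathcal{B}\models\phi(x_1,\ldots,x_m)$ for \emph{every} $(x_1,\ldots,x_m)\in B^m$, which is exactly $\mathcal{B}\models\forall v_1,\ldots,v_m\ \phi(v_1,\ldots,v_m)$.

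The only real subtlety — and the point I would be careful to spell out — is the direction of the Galois correspondence being used: we need that $\langle\mathcal{B}\rangle_{\mathrm{pH}}\subseteq \mathrm{Inv}(\mathrm{sPol}(\mathcal{B}))$, i.e.\ that pH-definable relations are preserved by surjective polymorphisms. This is the ``easy'' inclusion (a pH-formula is built from atoms using $\land$, $\exists$ and $\forall$, and one checks closure under a surjective polymorphism connective-by-connective, the $\forall$ case being where surjectivity is used), so no deep input is needed beyond quoting \cite{BBCJK}; the converse inclusion, which is the hard part of that Galois theory, is not required here. A second minor point to note is that idempotent polymorphisms are automatically surjective (they fix every singleton), so $\mathbb{A}\subseteq\mathrm{sPol}(\mathcal{B})$ and the preservation transfers to $\mathbb{A}$. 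With those two observations in place the argument is a one-line subalgebra-generation argument as above.
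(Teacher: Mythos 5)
Your proof is correct and takes essentially the same route as the paper: the paper derives the corollary from Lemma~\ref{lem:method} (the backward inclusion $\langle \mathcal{B} \rangle_{\mathrm{pH}} \subseteq \mathrm{Inv}(\mathrm{sPol}(\mathcal{B}))$, stated pointwise as ``a surjective polymorphism mapping tuples satisfying $\phi$ yields a tuple satisfying $\phi$'') combined with the definition of a generating set, which is exactly your ``$R_\phi$ is a subuniverse of $\mathbb{A}^m$ containing $\Gamma$, hence all of $B^m$'' argument in pointwise form. The two subtleties you flag --- that only the easy direction of the Galois correspondence is needed, and that idempotent operations are automatically surjective so the invariance transfers to $\mathbb{A}$ --- are precisely the points the paper relies on.
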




We are now ready to conclude our proof of the PGP/EGP gap for
\mbox{p.r.} paths and establish EGP for the remaining cases.
\begin{proposition}
  \label{prop:pathsEGP}
  Let $\mathcal{G}$ be a \mbox{p.r.} path that is not quasi-loop
  connected. Then $\mathrm{id\mbox{-}Pol}(\mathcal{G})$ has the EGP.
\end{proposition}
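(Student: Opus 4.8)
The plan is to reduce the statement to Lemma~\ref{lem:cati}. The easy case is when $\mathcal{G}$ is \emph{itself} of the form $\mathcal{P}_{10\alpha01}$ (or one of the short exceptional patterns one must admit alongside it, such as $\mathcal{P}_{101}$, whose idempotent polymorphisms are also only projections): then by Lemma~\ref{lem:cati} $\mathrm{id\mbox{-}Pol}(\mathcal{G})$ is the clone of projections on $G$, so, writing $\mathbb{A}=\mathrm{id\mbox{-}Pol}(\mathcal{G})$, every operation of $\mathbb{A}^m$ acts on a tuple of $m$-tuples coordinatewise as a single projection onto one of its arguments. Hence the subalgebra of $\mathbb{A}^m$ generated by $\Gamma$ is just $\Gamma$ itself, so $\mathbb{A}^m$ is generated only by $G^m$, giving $f_\mathbb{A}(m)=|G|^m$ and the EGP.

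For a general $\mathcal{G}$ that is not quasi-loop-connected I would first prove a structural lemma: $\mathcal{G}$ contains, as the induced subgraph on a set $R$ of \emph{consecutive} vertices, a copy of $\mathcal{P}_{10\alpha01}$ for some $\alpha$ (or a short pattern as above), and, once all constants are present, $R$ is pp-definable in $\mathcal{G}$. For the first part one unravels the two parametrised forms in the definition of quasi-loop-connectedness: if neither applies, a short analysis shows $\mathcal{G}$ must exhibit somewhere along its length the window $1\,0\,\alpha\,0\,1$, i.e.\ a self-loop, a nonempty run of non-loops, and a further self-loop, as a block of consecutive vertices (the short patterns occur when such a run shrinks to one vertex); a case analysis pins down its position. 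Pp-definability of $R$ is where the \emph{looped endpoints} $c,d$ of this window do real work: a self-loop at $c$ lets walks stall, so ``there is a walk of length $d-c$ from $c$ to $v$'' defines exactly the ball $\{v:\mathrm{dist}_\mathcal{G}(c,v)\le d-c\}$, and $R=\{c,\dots,d\}$ is the intersection of this ball with the analogous ball about $d$; both are pp-expressible using the constants $c,d$.

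Given the pp-definable $R$, the transfer proceeds by standard machinery. Since $R$ is a subuniverse of $\mathbb{A}$, every idempotent polymorphism of $\mathcal{G}$ restricts to an idempotent polymorphism of $\mathcal{G}\restrict_R\cong\mathcal{P}_{10\alpha01}$, hence by Lemma~\ref{lem:cati} to a projection on $R$. Consequently every subset of $R$, and of each power $R^m$, is invariant under $\mathrm{id\mbox{-}Pol}(\mathcal{G}\restrict_R)$, so by the finite-structure Galois correspondence it is pp-definable in $\mathcal{G}\restrict_R$ with constants, and, relativising by the pp-formula defining $R$, every subset of $R^m$ becomes a relation in $\langle\mathcal{G}\rangle_{\mathrm{pH}}$. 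In effect $\mathcal{G}$ pp-interprets the projection structure on $R$, whose $m$th power needs $|R|^m$ generators; feeding a carefully chosen exponentially large family of these relations into Corollary~\ref{cor:ppfuniondisguised} (the care being needed to account for tuples of $\Gamma$ lying outside $R^m$) then forces any generating set of $\mathbb{A}^m$ to have exponential size, which is the mechanism by which EGP propagates upward along the interpretation; since $|R|\ge 3$ this yields the EGP for $\mathrm{id\mbox{-}Pol}(\mathcal{G})$.

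I expect the main obstacle to be the structural lemma: one must verify, pattern by pattern, that failure of quasi-loop-connectedness really does expose a $1\,0\,\alpha\,0\,1$ window as a consecutive induced subpath (the short cases being slightly delicate), and that this window is pp-definable — the latter being a genuine constraint, which is precisely why the observation that the window has looped endpoints (turning segments into intersections of balls) is doing real work. The counting via Corollary~\ref{cor:ppfuniondisguised} and the appeal to Lemma~\ref{lem:cati} are comparatively mechanical once the window has been located.
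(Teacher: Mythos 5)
Your high-level plan --- locate a projection-only window via Lemma~\ref{lem:cati}, pp-define it, and push exponential lower bounds up through Corollary~\ref{cor:ppfuniondisguised} --- matches the skeleton of the paper's proof, and your easy case is fine. But the step you label ``comparatively mechanical'', namely accounting for tuples of $\Gamma$ lying outside $R^m$, is precisely where the actual content of the paper's proof lives, and your proposal contains no mechanism for it. The difficulty is this: a generating set $\Gamma$ for $\mathbb{A}^m$ need not meet $R^m$ at all, and a tuple $t\in R^m$ may arise as $f(t_1,\dots,t_k)$ with every $t_i$ outside $R^m$; pp-definability of $R$ only gives $f(R,\dots,R)\subseteq R$, not that preimages of $R^m$ lie in $(R^m)^k$. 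So ``the restriction to $R^m$ needs $|R|^m$ generators'' does not transfer to a lower bound on $|\Gamma|$, and a relation supported on $R^m$ cannot be fed into Corollary~\ref{cor:ppfuniondisguised} unless you can argue it holds on \emph{every} element of $\Gamma$, including those far from $R$. This is a genuine obstruction, not bookkeeping.

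The paper's device for crossing this gap is the friend/cousin machinery. With $p,q$ the extreme loops and $\mu=\max\{p,\,n-q,\,\lfloor(q-p-1)/2\rfloor\}$, every word of $G^m$ can walk coordinatewise in at most $\mu$ steps onto a word of $\{p,q\}^m$ (a ``friend''), and the key counting fact $(\dagger)$ is that words outside $((P\setminus Q)\cup(Q\setminus P))^m$ have \emph{more than one} friend, while words inside have a unique ``cousin''. Hence any $\Gamma$ with $|\Gamma|<2^m$ must omit some $\tau\in\{p,q\}^m$ together with all its cousins. One then takes $R_\Gamma=\{p,q\}^m\setminus\{\tau\}$ (pp-definable by your Lemma~\ref{lem:cati} argument) and builds the pH-formula that existentially quantifies a $\mu$-step walk from each $x_i$ to a looped vertex constrained to lie in $R_\Gamma$: this formula is true on every tuple of $\Gamma$ wherever it sits in $G^m$ (walk to a friend in $R_\Gamma$), yet the universal closure fails on the word in $\{1,n\}^m$ derived from $\tau$. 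That is the contradiction with Corollary~\ref{cor:ppfuniondisguised}, and it yields the base-$2$ exponential bound directly (your appeal to $|R|\ge 3$ is not what drives the count). To complete your proof you would need to invent essentially this walk-and-pigeonhole argument; as written, the proposal stops just short of the idea that makes the reduction work.
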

\begin{proof}
Number the vertices of $\mathcal{G}$ left-to-right over $[n]$ and let $p$ be the leftmost loop and let $q$ be the rightmost loop. Since $\mathcal{G}$ is not quasi-loop connected, $p$ will be to the left of the centre and $q$ will be to the right of centre. Let $\mu$ be $\max \{p,n-q,\lfloor \frac{q-p-1}{2} \rfloor\}$. Let $P$ and $Q$ be the sets of vertices at distance $\leq \mu$ from $p$ and $q$, respectively.

A word $\tau \in ((P\setminus Q)\cup(Q\setminus P))^m$ is a \emph{cousin} of a word $\sigma \in \{p,q\}^m$ if $\tau$ can be obtained by some local substitutions of $p \mapsto x \in (P\setminus Q)$ and $q \mapsto y \in (Q\setminus P)$.
A word $\tau \in G^m$ is a \emph{friend} of a word $\sigma \in \{p,q\}^m$ if $\tau$ can be obtained by some local substitutions of $p \mapsto 1,\ldots,p,\ldots,p+\mu$ and $q \mapsto q-\mu,\ldots,q,\ldots,n$.  The relations friend and cousin are symmetric. If $\max\{p,n-q\} > q-p-1$ then a situation can arise in which all words $\{p,q\}^m$ are friends of each other (this will not be a problem). However, it is not hard to see that every word in $G^m$ has a friend in $\{p,q\}^m$ and one can walk to this friend pointwise in at most $\mu$ steps. Further, 
\[
  (\dagger)
  \left\{\hspace*{-.5cm}
  \begin{array}{l}
    \begin{minipage}[l]{.9\columnwidth}
      \begin{compactitem}
      \item[.] each word in $((P\setminus Q)\cup(Q\setminus P))^m$ has a unique cousin in $\{p,q\}^m$; and,
      \item[.] every word in $G^m\setminus ((P\setminus Q)\cup(Q\setminus P))^m$ has more than one friend in $\{p,q\}^m$.
      \end{compactitem}
    \end{minipage}
  \end{array}
\right.
\]
Note that it is possible that $G^m\setminus ((P\setminus Q)\cup(Q\setminus P))^m$ is empty. So let $m$ be given and suppose there exists a generating set $\Gamma$ for $G^m$ of size $< 2^m$. It follows from $(\dagger)$ that, for some $\tau \in \{p,q\}^m$, $\Gamma$ omits $\tau$ and all of $\tau$'s cousins (though it may contain some of $\tau$'s non-cousin friends). 
We will prove that $\Gamma$ does not generate $G^m$, by assuming
otherwise and reaching a contradiction using Corollary~\ref{cor:ppfuniondisguised}. Let $R_\Gamma$ be the subset of  $\{p,q\}^m$ induced by $\{p,q\}^m\setminus \{\tau\}$. Note 
\[ (*) \ \ \mbox{that every element $\sigma \in \Gamma$ has a friend in $R_\Gamma$.} \]
Note also that $R_\Gamma$ is pp-definable since $\{p,\ldots,q\}$ is pp-definable and all polymorphisms of the induced sub-structure given by$\{p,\ldots,q\}$ are projections (this was Lemma~\ref{lem:cati}). 

Consider the pH-formula  $\phi(x_1,\ldots,x_n):=$
\begin{multline*}
  \exists x^1_1,\ldots,x^{\mu-1}_1 ,\ldots
  \ldots ,\exists x^1_n,\ldots,x^{\mu-1}_n 
  R_\Gamma(x'_1,\ldots,x'_n)   \wedge \\
  \Bigl(\bigwedge_{i \in [n]} E(x_i,x^1_i) \wedge E(x^1_i,x^2_i)
    \wedge \ldots \\
    \wedge E(x^{\mu-2}_i,x^{\mu-1}_i) \wedge
    E(x^{\mu-1}_i,x^{\mu-1}_i) \Bigr).\label{eq:1}
\end{multline*}

The sentence $\forall x_1,\ldots,x_n \ \phi(x_1,\ldots,x_n)$ is false and can be witnessed as false by taking $\overline{x}$ to be that word in $\{1,n\}^m$ derived from $\tau$ by substituting $p \mapsto 1$ and $q \mapsto n$. However, consider now that $\phi(y_1,\ldots,y_n)$ is true for all $(y_1,\ldots,y_n) \in \Gamma$, precisely because of property $(*)$, \mbox{i.e.} when $(x_1,\ldots,x_n)$ is evaluated as $\sigma$, choose $(x^{\mu-1}_1,\ldots,x^{\mu-1}_n)$ to be evaluated as $\sigma$'s friend in $R_\Gamma$.
\end{proof}

\subsection{Semicomplete digraphs}

Recall that a digraph $\mathcal{G}$ is \emph{semicomplete} if it is
irreflexive and for each $x \neq y \in G$ we have either $E(x,y)$ or
$E(y,x)$, or both. We will often abuse of the substantive and speak of
semicompletes rather than semicomplete graphs. If we always have
precisely one of $E(x,y)$ or $E(y,x)$, then the digraph is
additionally a \emph{tournament}. In a digraph, a \emph{source}
(resp., \emph{sink}) is a vertex  of in-degree (resp., out-degree)
zero. A digraph is  \emph{smooth} if it has neither a source nor a
sink. For a digraph  $\mathcal{G}$ we define $\mathcal{G}^+$ to be
$\mathcal{G}$ augmented with a  new sink to which all other vertices have a directed edge.
%
%
Let $y^{-}$ be the set $\{x \in G : E(x,y)\in \mathcal{G}\}$ and
$y^{+}$ be the set $\{x \in G : E(y,x)\in \mathcal{G}\}$.
%
In the sequel we use the notation $x^{i'}_j$ to indicate the prime of $x^i_j$ (i.e., the prime does not modify just the $i$).

The main result of this section is the gap theorem stated as Theorem~\ref{thm:sc-hauptsatz}.
\begin{proof}[Proof of Theorem~\ref{thm:sc-hauptsatz}]
  The PGP cases follow from Propositions~\ref{prop:sc-1} and
  \ref{prop:sc-2}. The EGP cases follow from
  Corollary~\ref{cor:sc-EGP}.
\end{proof}

\subsubsection{Semicomplete graphs with the PGP}

\


\begin{proposition}
  \label{prop:sc-1}
  Let $\mathcal{G}$ be a semicomplete graph with exactly one cycle and
  either a source or a sink, or none, then $\mathrm{id\mbox{-}Pol}(\mathcal{G})$ has the PGP.
\end{proposition}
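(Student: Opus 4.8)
\emph{Structure of $\mathcal{G}$.} I would begin by unwinding what ``exactly one cycle'' forces. The strong components of a semicomplete digraph are themselves semicomplete and strongly connected, and the condensation is a transitive tournament, so the components are linearly ordered with all edges running from an earlier to a later component. A strongly connected semicomplete digraph on two vertices is a single $2$-cycle (a digon), and one on $\geq 3$ vertices is easily seen to contain at least two distinct cycles unless it is the directed triangle. Hence ``exactly one cycle'' means: all strong components are trivial except one, which is either a digon or a directed triangle. Equivalently, $\mathcal{G}$ is a linear order $v_1<\dots<v_n$ with one vertex blown up into such a cycle $\mathcal{C}$, a set $T$ of vertices dominating $\mathcal{C}$ from above and a set $B$ dominated by $\mathcal{C}$; $\mathcal{G}$ has a source iff $T\neq\emptyset$ and a sink iff $B\neq\emptyset$. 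Note also that reversing all edges preserves semicompleteness, the number of cycles, and (since $f$ preserves $E$ iff it preserves $E^{\mathrm{op}}$) the algebra $\mathrm{id\mbox{-}Pol}(\mathcal{G})$.

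\emph{Reduction.} If $\mathcal{G}$ has both a source and a sink (that is, $\mathcal{C}$ sits strictly inside the linear order) we are done by Proposition~\ref{prop:sc-2}. Otherwise, after possibly reversing all edges we may assume $B=\emptyset$: $\mathcal{C}$ is at the bottom and $T$ (possibly empty) sits above it. If $T=\emptyset$ then $\mathcal{G}$ is the bare cycle: a digon carries a self-dual majority (the dual discriminator) and the directed triangle carries the affine Mal'tsev operation $x-y+z\bmod 3$, so by Table~\ref{tab:Polymorphismscollapsibility}, Lemma~\ref{lem:ChensLemma} and Corollary~\ref{MainResult:InConcreto:Collapsibility} the PGP follows. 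For $T\neq\emptyset$ I would follow the blueprint of \S\ref{sec:prPAthsWithPGP}: fix a well-chosen distinguished vertex $d$ of $\mathcal{G}$ (a source vertex, or a suitable vertex of $\mathcal{C}$) and prove, for each $y\in G$, the existence of a binary idempotent polymorphism $f_y$ of $\mathcal{G}$ with $f_y(d,x)=x$ for all $x$ and $f_y(w_y,d)=y$ for some vertex $w_y$ — the analogue of Lemma~\ref{lem:bin-pol}. Granting this, the inductive argument of Lemma~\ref{lem:generation} applies verbatim: from $(d,\dots,d)$ and the $m$ tuples obtained by placing a $w_y$ in a single coordinate one first builds every $(y,d,\dots,d)$ and its coordinate permutations, and then every tuple all of whose entries past the $k$-th are $d$; taking $k=m$ shows $\mathbb{A}^m$ is generated by linearly many tuples, giving the PGP.

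\emph{Main obstacle.} The hard part is exactly the analogue of Lemma~\ref{lem:bin-pol}, i.e.\ producing the $f_y$, and the difficulty is the interface between $\mathcal{C}$ and the tail $T$. The reversed edge(s) inside $\mathcal{C}$ obstruct the obvious guesses: for a digon $\{k_1,k_2\}$ the two $E$-compatible columns $(k_2,k_1)^\top$ and $(k_1,k_2)^\top$ force any polymorphism to keep the two outputs distinct, which already kills ``take the median in the linear order'' (and any commutative binary polymorphism), since such an operation would demand the forbidden loop $E(k_1,k_1)$. So $f_y$ must be engineered never to collapse $\mathcal{C}$ to a single point while still having $d$ as a one-sided identity; the digon case and the directed-triangle case need different choices of $d$ and of the action of $f_y$ on ``mixed'' inputs (the triangle being easier, as one can splice in its Mal'tsev operation away from $d$). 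I expect a small number of residual configurations — essentially those in which no vertex conveniently acts as a one-sided identity — will not be covered by a single binary polymorphism and will have to be handled by a slightly larger (still linear) generating family of companion tuples, exactly in the spirit of Lemma~\ref{lem:generation2}.

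Once the construction of the $f_y$ (and the handful of alternative constructions for the exceptional configurations) is carried out, combined with the structural reduction above and with Proposition~\ref{prop:sc-2} for the ``cycle strictly inside'' case, the PGP half of Theorem~\ref{thm:sc-hauptsatz} follows as stated; the whole argument refines the complexity analysis of \cite{ICALP2014}, from which the relevant generating sets can be read off.
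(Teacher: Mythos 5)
Your structural analysis (unique non-trivial strong component equal to a digon or a directed triangle, linearly ordered condensation, closure of the hypotheses under edge reversal) is correct, and your treatment of the bare-cycle case agrees with the paper, which also handles $\mathcal{DC}_3$ and $\mathcal{K}_2$ via the dual discriminator. But the heart of the proposition is the case where the cycle has vertices added on one side, and there your proof stops at exactly the point where the work is: you announce an analogue of Lemma~\ref{lem:bin-pol} (binary idempotent $f_y$ with $f_y(d,x)=x$ and $f_y(w_y,d)=y$) without constructing it, and you concede that ``residual configurations'' may need some other, unspecified device. That is a genuine gap, and it is not merely a gap of exposition: the binary route fails already on the smallest instance. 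Take the digon $k_1\leftrightarrow k_2$ with a single added source $s$ (equivalently, after reversal, a single added sink). If $f$ is a binary idempotent polymorphism with $f(s,\cdot)=\mathrm{id}$, then the edge pairs $(s,k_1)\to(k_1,k_2)$ and $(s,k_2)\to(k_2,k_1)$ force $f(k_1,k_2)=k_2$ and $f(k_2,k_1)=k_1$, and then $(k_i,s)\to(k_j,k_i)$ forces $f(k_i,s)=s$; so $f$ is the second projection and $f(\cdot,s)$ never reaches $k_1$ or $k_2$. Choosing $d$ to be a cycle vertex instead does not help: with $f(k_1,\cdot)=\mathrm{id}$ one checks $f(k_2,k_1)=k_1$ and $f(s,k_1)\in\{s,k_1\}$ are forced, so $f(\cdot,k_1)$ never reaches $k_2$. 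Hence no choice of $d$ supports the Lemma~\ref{lem:generation}-style induction, and the ``slightly larger generating family'' you gesture at is precisely the missing proof.

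The paper avoids binary operations altogether in this case. Writing $\mathcal{G}$ as the cycle ($\mathcal{DC}_3$ or $\mathcal{K}_2$) with sinks $t_1,\ldots,t_k$ added iteratively (the source case being the edge-reversed dual), it defines a single ternary operation $f$ that acts as the dual discriminator on the cycle and returns the $t_i$ of highest index whenever some argument lies in $\{t_1,\ldots,t_k\}$. One verifies directly that $f$ is an idempotent polymorphism and that it is a Hubie operation with source any cycle vertex $z$, i.e.\ $f(z,G,G)=f(G,z,G)=f(G,G,z)=G$; Chen's Lemma~\ref{lem:ChensLemma} (with Corollary~\ref{MainResult:InConcreto:Collapsibility}) then gives $2$-collapsibility and hence the PGP, with a quadratic rather than linear generating set. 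If you want to complete your write-up along your own lines you would need to either reproduce such a higher-arity Hubie polymorphism or exhibit an explicit polynomial generating family for each $m$; the binary construction as stated cannot be the vehicle.
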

\begin{proof}
  If $\mathcal{G}$ has neither source nor sink, then it is either the
  directed $3$-cycle $\mathcal{DC}_3$ or $\mathcal{K}_2$. Let $\mathbb{A}:=$id-Pol$(\mathcal{DC}_3)$ or id-Pol$(\mathcal{K}_2)$. Both of
  these have the dual discriminator for a polymorphism which
  witnesses, for each $a$ in the domain, that 
  $\mathbb{A}^m$ can be generated from tuples, for all $x \in A$, of the form $(a,a,\ldots,a)$, $(x,a,\ldots,a)$, $(a,x,\ldots,a)$, \ldots,  $(a,a,\ldots,x)$ (this latter appears in \cite{hubie-sicomp}). 

  Let us suppose $\mathcal{G}$ has a sink but no source (the alternative being a symmetric proof). Then $\mathcal{G}$ was built from $\mathcal{DC}_3$ or $\mathcal{K}_2$ by the iterative addition of sinks $t_1,\ldots,t_k$, where $t_k$ is the sink of $\mathcal{G}$. Define $f(x,y,z)$ to be the ternary operation on $\mathcal{G}$ that acts as dual discriminator in the subgraph $\mathcal{DC}_3$ or $\mathcal{K}_2$ and returns the element $t_i$ with the highest index $i$ whenever the triple $(x,y,z)$ contains an element from $\{t_1,\ldots,t_k\}$.
  It is straightforward to verify that $f$ is a polymorphism of $\mathcal{G}$. Further, it is a Hubie polymorphism as is witnessed by any element $z$ in the subgraph $\mathcal{DC}_3$ or $\mathcal{K}_2$; that is $f(z,G,G)=f(G,z,G)=f(G,G,z)= G.$
 The result follows from \cite{hubie-sicomp} (that we will quote as Lemma~\ref{lem:ChensLemma}).
\end{proof}
\begin{proposition}\label{prop:sc-2}
  Let $\mathcal{G}$ be a semicomplete graph with both a source and a sink, then $\mathrm{id\mbox{-}Pol}(\mathcal{G})$ has the PGP.
\end{proposition}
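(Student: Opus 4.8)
The plan is to produce an idempotent \emph{ternary} Hubie polymorphism of $\mathcal{G}$ whose source is the (necessarily unique) source vertex, and then invoke Lemma~\ref{lem:ChensLemma}, exactly as in the proof of Proposition~\ref{prop:sc-1}. First I would record the structural facts forced by semicompleteness and irreflexivity: $\mathcal{G}$ has a unique source $s$ and a unique sink $t$ with $s\neq t$ (assuming $|G|\geq 2$), $E(s,v)$ for every $v\neq s$, $E(v,t)$ for every $v\neq t$, and in particular $E(s,t)$. Crucially, no edge of $\mathcal{G}$ has $t$ as its tail and none has $s$ as its head; hence in $\mathcal{G}^3$ a triple with a coordinate equal to $t$ has no out-edge, a triple with a coordinate equal to $s$ has no in-edge, and so any triple containing \emph{both} an $s$ and a $t$ is isolated in $\mathcal{G}^3$.

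Next I would define the candidate $f\colon G^3\to G$ by cases on $\bar a=(a_1,a_2,a_3)$, writing $M:=G\setminus\{s,t\}$: if $\bar a$ contains $s$ but not $t$, put $f(\bar a)=s$; if $\bar a$ contains $t$ but not $s$, put $f(\bar a)=t$; if $a_1,a_2,a_3\in M$, put $f(\bar a)=a_1$; and if $\bar a$ contains both an $s$ and a $t$, define $f(\bar a)$ by hand so that fixing any one coordinate to $s$ leaves $f$ surjective onto $G$ --- for instance $f(s,v,t)=f(v,s,t)=f(v,t,s)=v$ for $v\in M$, $f(s,t,t)=f(t,s,t)=f(t,t,s)=t$, and arbitrary values on the remaining such triples. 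This is legitimate precisely because those triples are isolated in $\mathcal{G}^3$, so no polymorphism constraint touches them; idempotency is immediate, and the Hubie surjectivity condition with source $\{s\}$ holds by construction (the value $s$ is already produced by $f(s,s,s)=s$).

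The one genuine verification is that $f$ is a polymorphism. Given an edge of $\mathcal{G}^3$, its tail triple $\bar u$ contains no $t$ and its head triple $\bar v$ contains no $s$, so $\bar u$ lies in the first or third case ($f(\bar u)\in\{s\}\cup M$) and $\bar v$ in the second or third ($f(\bar v)\in\{t\}\cup M$); a four-way case split then yields $E(f(\bar u),f(\bar v))$: when both are in the ``middle'' case it is the given edge on the first coordinates; the mixed cases follow from $E(m,t)$ and $E(s,m)$ for $m\in M$; and the last case from $E(s,t)$. Thus $f$ is an idempotent Hubie polymorphism of $\mathcal{G}$ with source $\{s\}$, so $\mathcal{G}$ (equivalently, its expansion by all constants) is $2$-collapsible from $\{s\}$, and Lemma~\ref{lem:ChensLemma} gives that $\mathrm{id\mbox{-}Pol}(\mathcal{G})$ has the PGP.

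The place I expect to spend the most thought is not this routine check but getting the shape of the argument right: a \emph{binary} Hubie operation with source $s$ cannot exist here, because the image of $b\mapsto f(s,b)$ is contained in $\{s,f(s,t)\}$, so one is forced up to arity three and forced to exploit precisely the isolated triples of $\mathcal{G}^3$ (those containing both $s$ and $t$) as the sole locus where $f(s,\cdot,\cdot)$ can take a value other than $s$. Recognising that --- that source-plus-sink lets one collapse every tuple onto the edge $s\to t$ while keeping a free boundary large enough for surjectivity --- is the whole idea; I would also double-check the degenerate cases $M=\varnothing$ (then $\mathcal{G}$ is a single directed edge) and $|G|=1$ separately, though they are trivially covered.
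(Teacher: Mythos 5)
Your proof is correct and follows essentially the same route as the paper: a ternary idempotent operation built around the observation that every triple of $\mathcal{G}^3$ containing both the source $s$ and the sink $t$ is isolated, followed by an appeal to Chen's Lemma (Lemma~\ref{lem:ChensLemma}). The operations themselves differ: the paper sets $f(\{\{x,s,t\}\})=x$ for $x\notin\{s,t\}$ and takes the first projection everywhere else, whereas you send ``contains $s$ but not $t$'' to $s$, ``contains $t$ but not $s$'' to $t$, and assign the isolated triples by hand; your version is in fact the more careful one, since the paper's operation, read literally, never outputs $t$ on a triple whose first coordinate is $s$ (every such triple is either of the form $\{\{x,s,t\}\}$, giving a value in $G\setminus\{s,t\}$, or yields $s$ by projection), so your explicit assignments $f(s,t,t)=f(t,s,t)=f(t,t,s)=t$ are exactly what is needed for the surjectivity half of the Hubie condition. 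One genuine error, confined to an aside that your proof does not depend on: the claim that a binary Hubie operation with source $s$ cannot exist because the image of $b\mapsto f(s,b)$ is contained in $\{s,f(s,t)\}$ is false. On the transitive tournament with vertices $s,v,t$ and edges $s\to v$, $v\to t$, $s\to t$, the operation $f(a,b)=\max(a,b)$ in the order $s<v<t$ (a semilattice with unit $s$, cf.\ Table~\ref{tab:Polymorphismscollapsibility}) is an idempotent binary polymorphism that is surjective when either coordinate is fixed to $s$; nothing in the polymorphism constraints confines $f(s,v)$ to $\{s,f(s,t)\}$, since $(s,v)$ has no in-neighbours in $\mathcal{G}^2$ and its out-neighbours impose only conditions that $f(s,v)=v$ satisfies. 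So arity three is a convenience here, not a necessity.
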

\begin{proof}
We will give a Hubie polymorphism of $\mathcal{G}$ whereupon the result follows from \cite{hubie-sicomp} (that we will quote as Lemma~\ref{lem:ChensLemma}).

Let $x,y,z$ be elements of $\mathcal{G}$ distinct from
$s$ and $t$ which are the source and sink, respectively, of
$\mathcal{G}$.
Define the ternary operation $f$ so that $f(\{\{x,s,t\}\})=x$ (we use multiset
notation to indicate any coordinate permutation) extended as a
projection on its first coordinate otherwise (e.g. $f(s,t,s)=f(s,t,t)=s$ and $f(x,y,z)=x$).
It is easy to see this is a polymorphism, once one notes that in
$\mathcal{G}^3$ all vertices of the form $\{\{x,s,t\}\}$ are
isolated. Furthermore, $f$ is a Hubie operation in both the single
elements $s$ and $t$.
\end{proof}

We will shortly need to talk about variables that are indexed individually
over two dimensions and use overbar to denote columns (top index vary)
and underbar to denote rows (bottom index vary).
Suppose id-Pol($\mathcal{A}$) has the $f(m)$-GP. Then we are saying, for each $m \in \mathbb{N}$, that there exist $k=f(m)$ tuples $\overline{x}_1=(x^1_1, x^2_1, \ldots, x^m_1)$, \ldots,  $\overline{x}_k=(x^1_k, x^2_k, \cdots, x^m_k)$ so that, for each $\underline{y}=(y^1, y^2, \ldots, y^m)$ there is a $k$-ary polymorphism $f_{\underline{y}}$ of $\mathcal{A}$ so that 
\[ \underline{y}=(y^1, y^2, \ldots, y^m)=(f_{\underline{y}}(x^1_1,\ldots,x^1_k),\ldots,f_{\underline{y}}(x^m_1,\ldots,x^m_k)).\]
This can be presented by the following picture for $f:=f_{\underline{y}}$,
\[
\begin{array}{cccc}
f & f & \cdots & f \\
\frown & \frown & \cdots & \frown \\
x^1_1 & x^2_1 & \cdots & x^m_1 \\
x^1_2 & x^2_2 & \cdots & x^m_2 \\
\vdots & \vdots & & \vdots \\
x^1_k & x^2_k & \cdots & x^m_k \\
\smile & \smile & \cdots & \smile \\
\| & \| & & \| \\
y^1 & y^2 & \cdots & y^m \\
\end{array}
\] 
which indicates that $f$ is a homomorphism from $(\mathcal{A}^k;\underline{x}^1,\ldots,\underline{x}^m)$ to  $(\mathcal{A};y^1,\ldots,y^m)$. It follows of course that all pp-formulas that are true on  $(\mathcal{A}^k;\underline{x}^1,\ldots,\underline{x}^m)$ are also true on  $(\mathcal{A};y^1,\ldots,y^m)$.

The following well-known model-theoretic lemma is in some sense trivial for finite structures.
\begin{lemma}
\label{lem:basic}
Let $\mathcal{A}$ and $\mathcal{B}$ be finite structures. If all pp-sentences that are true $(\mathcal{A};a_1,\ldots,a_m)$ are true on $(\mathcal{B};b_1,\ldots,b_m)$, then there is a homomorphism $f$ from $\mathcal{A}$ to $\mathcal{B}$ so that $f(a_j)=(b_j)$ for each $j \in [m]$.
\end{lemma}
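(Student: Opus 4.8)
The plan is to exhibit a \emph{single} pp-sentence that does all the work: the canonical query of $\mathcal{A}$, with the elements $a_1,\dots,a_m$ named by the constants. Since $\mathcal{A}$ is finite, enumerate $A=\{a_1,\dots,a_m,a_{m+1},\dots,a_N\}$; this is the only place finiteness of $\mathcal{A}$ is used. Let $\chi(x_{a_1},\dots,x_{a_N})$ be the canonical query of $\mathcal{A}$, i.e.\ the conjunction of all atoms $R(x_{a_{i_1}},\dots,x_{a_{i_r}})$ with $R(a_{i_1},\dots,a_{i_r})$ holding in $\mathcal{A}$; as $\sigma$ and $A$ are finite, $\chi$ is a finite conjunction.

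Next I would form, over $\sigma$ expanded by the constants $c_1,\dots,c_m$ naming $a_1,\dots,a_m$, the pp-sentence
\[
  \psi \ := \ \exists x_{a_{m+1}}\cdots\exists x_{a_N}\ \chi\bigl(c_1,\dots,c_m,x_{a_{m+1}},\dots,x_{a_N}\bigr),
\]
obtained from $\chi$ by substituting $c_j$ for $x_{a_j}$ ($j\le m$) and existentially quantifying the remaining variables. By construction $(\mathcal{A};a_1,\dots,a_m)\models\psi$, witnessed by $x_{a_i}\mapsto a_i$ for $i>m$ (equivalently: the canonical database $\mathcal{D}_\chi$ maps homomorphically to $\mathcal{A}$ via $x_{a_i}\mapsto a_i$, respecting the naming of $a_1,\dots,a_m$). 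Hence, by the hypothesis applied to this one pp-sentence, $(\mathcal{B};b_1,\dots,b_m)\models\psi$.

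Fix a satisfying assignment: it provides $b_{m+1},\dots,b_N\in B$ for $x_{a_{m+1}},\dots,x_{a_N}$ such that every conjunct of $\chi(b_1,\dots,b_N)$ holds in $\mathcal{B}$. Define $f\colon A\to B$ by $f(a_i)=b_i$ for all $i\in[N]$. For any symbol $R$ and any tuple with $R(a_{i_1},\dots,a_{i_r})$ in $\mathcal{A}$, the conjunct $R(x_{a_{i_1}},\dots,x_{a_{i_r}})$ occurs in $\chi$, so $R(f(a_{i_1}),\dots,f(a_{i_r}))$ holds in $\mathcal{B}$; thus $f$ is a homomorphism, and $f(a_j)=b_j$ for each $j\in[m]$, as required.

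There is essentially no obstacle here: the argument is pure bookkeeping, and the only hypothesis doing real work is finiteness of $\mathcal{A}$, which lets the canonical query be packaged as one pp-sentence. (For infinite $\mathcal{A}$ one would instead chain the finite-substructure version through compactness of first-order logic together with finiteness — or $\omega$-saturation — of $\mathcal{B}$, which is exactly the sense in which the statement is ``trivial for finite structures''.) Note also that the proof uses only the single sentence $\psi$, so the hypothesis could be weakened accordingly.
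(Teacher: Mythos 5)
Your argument is correct and is the standard canonical-query (Chandra--Merlin-style) proof; the paper itself offers no proof of this lemma, dismissing it as ``well-known'' and ``in some sense trivial for finite structures'', and your single-sentence $\psi$ built from the canonical query with constants $c_1,\dots,c_m$ is surely the argument the authors have in mind. The only bookkeeping detail you elide is that the $a_j$ need not be pairwise distinct (as indeed happens where the lemma is applied), but since pp-sentences in this paper include equality, the hypothesis already forces $b_j=b_{j'}$ whenever $a_j=a_{j'}$, so $f$ is well defined and the proof goes through.
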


\subsubsection{Semicompletes with more than one cycle but without
  sources}
It is known from \cite{ICALP2014} that a smooth semicomplete digraph $\mathcal{H}$  with more than one cycle has only essentially unary polymorphisms, since these are also cores we can immediately say in this case that id-Pol$(\mathcal{H})$ has the EGP. What remains is to classify semicompletes with more than one cycle but without sources, and semicompletes with more than one cycle but without sinks. These situations are symmetric so we will address directly only the former. We begin with some simple results.
\begin{lemma}
Let $\mathcal{G}$ be a digraph. Let $\mathrm{id\mbox{-}Pol}(\mathcal{G}^{++})$ have the $f(m)$-GP, for some $f(m)$. Then  $\mathrm{id\mbox{-}Pol}(\mathcal{G}^{+})$ has the $f(m)$-GP.
\end{lemma}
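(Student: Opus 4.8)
The plan is to relate the generating sets of $\mathrm{id\mbox{-}Pol}(\mathcal{G}^{++})$ and $\mathrm{id\mbox{-}Pol}(\mathcal{G}^{+})$ by observing that $\mathcal{G}^{+}$ sits inside $\mathcal{G}^{++}$ as an induced substructure whose vertex set is \emph{pp-definable} (indeed, $\mathcal{G}^{+}$ consists of all vertices of $\mathcal{G}^{++}$ other than the unique new sink $t$, and $t$ is singled out, e.g., as the vertex with no out-edges, or more simply $G^{+}$ itself is pp-definable as the set of $x$ with $E(x,t)$, once we note that such definitions are available because all constants are present in the idempotent setting). Since the source $B$-style arguments here are idempotent, every element is named by a constant, so the domain $G^{+}$ is certainly pp-definable in $\mathcal{G}^{++}$. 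The key structural fact is then that every polymorphism of $\mathcal{G}^{++}$ restricts to a polymorphism of $\mathcal{G}^{+}$ (because it preserves the pp-definable subset $G^{+}$ and all relations of $\mathcal{G}^{+}$ are restrictions of those of $\mathcal{G}^{++}$), and conversely, I will need to check that enough polymorphisms of $\mathcal{G}^{+}$ extend to $\mathcal{G}^{++}$ — specifically, that a polymorphism of $\mathcal{G}^{+}$ extended by sending any tuple containing $t$ to $t$ is a polymorphism of $\mathcal{G}^{++}$, which holds since in $\mathcal{G}^{++}$ the new sink $t$ is reached from every other vertex and has only the in-edges, so $t$ behaves absorbingly.

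Concretely, first I would fix $m$ and take the witnessing generating set $\overline{x}_1,\ldots,\overline{x}_k \in (G^{++})^m$ for $\mathbb{A}^{m}$ where $\mathbb{A} := \mathrm{id\mbox{-}Pol}(\mathcal{G}^{++})$, with $k = f(m)$. I claim these same $k$ tuples, but now regarded as tuples over $G^{+}$ — which requires first arguing we may assume none of them uses the coordinate value $t$. This is where I would use the absorbing behaviour of $t$: any generating tuple with a $t$ in some coordinate can be replaced without loss, or else one argues directly that restricting to $(G^{+})^m$ still generates. The cleaner route is: given $\underline{y} \in (G^{+})^m$, use that $\mathbb{A}^m$ is generated by $\overline{x}_1,\ldots,\overline{x}_k$ to get a polymorphism $f_{\underline y}$ of $\mathcal{G}^{++}$ with $f_{\underline y}(x^j_1,\ldots,x^j_k) = y^j$ for all $j$; now restrict $f_{\underline y}$ to $G^{+}$. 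Since $\underline y$ avoids $t$ and $f_{\underline y}$ is idempotent, the relevant evaluations land in $G^{+}$ provided the $\overline{x}_i$ do; so the subtlety reduces to producing a generating set of $\mathbb{A}^m$ all of whose tuples avoid $t$, which follows because the subset $(G^{+})^m \subseteq (G^{++})^m$ is pp-definable hence a subalgebra, and it suffices to restrict the generating set to those tuples lying in this subalgebra together with a reindexing — or to observe $(G^{+})^m$ is itself generated inside $\mathbb{A}^m$ and push generators down.

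The main obstacle I anticipate is precisely this last point: making rigorous that a generating set for $\mathbb{A}^m = (\mathrm{id\mbox{-}Pol}(\mathcal{G}^{++}))^m$ can be intersected with the subalgebra on $(G^{+})^m$ to yield a generating set for $(\mathrm{id\mbox{-}Pol}(\mathcal{G}^{+}))^m$ of no larger size. The resolution is the standard fact that if $S$ is a subalgebra of a power that is itself a power of a reduct/subalgebra, and the ambient power is generated by a set $\Gamma$, then $S$ is generated by $\{$term operations applied to $\Gamma\} \cap S$; but to keep the \emph{size} at $f(m)$ one wants the cleaner statement that the polymorphisms of $\mathcal{G}^{+}$ are exactly the restrictions of the polymorphisms of $\mathcal{G}^{++}$ (the forward direction from pp-definability of $G^{+}$, the backward direction from absorptiveness of the new sink $t$), whence any generating set of $\mathbb{A}^m$, projected coordinatewise via the retraction $\mathcal{G}^{++} \to \mathcal{G}^{+}$ collapsing $t$ to any fixed vertex, supplies $k = f(m)$ tuples over $G^{+}$ that generate. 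One then verifies this retraction is a homomorphism (it is, again by absorptiveness of $t$ in $\mathcal{G}^{++}$), so it carries generating sets to generating sets, giving the $f(m)$-GP for $\mathrm{id\mbox{-}Pol}(\mathcal{G}^{+})$.
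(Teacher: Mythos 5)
Your overall strategy (reuse the $f(m)$ generators of $(\mathrm{id\mbox{-}Pol}(\mathcal{G}^{++}))^m$ after moving them into $(G^{+})^m$) is the right one, and several of your observations are sound: with all constants present $G^{+}$ is pp-definable in $\mathcal{G}^{++}$, so every idempotent polymorphism of $\mathcal{G}^{++}$ restricts to one of $\mathcal{G}^{+}$. But the step you rely on to actually move the generators fails. The map $r:G^{++}\to G^{+}$ that collapses the new sink $t$ to a fixed vertex $c$ and fixes everything else is \emph{not} a homomorphism: in $\mathcal{G}^{++}$ every vertex of $G^{+}$, including $c$ itself, has an edge to $t$, so $r$ would have to send the edge $E(c,t)$ to the loop $E(c,c)$, which does not exist in these irreflexive digraphs. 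Even the natural choice $c=t'$ (the sink of $\mathcal{G}^{+}$) fails on the single edge $E(t',t)$. So ``the retraction carries generating sets to generating sets'' has no content. Your fallback --- intersect the generating set of $\mathbb{A}^m$ with the subalgebra $(G^{+})^m$ --- is also unsound as stated: a tuple of $(G^{+})^m$ may be reachable only by applying an operation to generators that contain $t$, so discarding those generators can destroy generation; closure of the subalgebra under the operations says nothing about which generators are needed to reach its elements.

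The paper's proof repairs exactly this point. It does substitute $t\mapsto t'$ in the generating tuples, but since that substitution is not induced by any homomorphism it cannot simply transport the witnessing polymorphism $f_{\underline{y}}$. Instead it argues via pp-formulas and Lemma~\ref{lem:basic}: a pp-formula true on $\mathcal{G}^{+}$ with the primed parameters is true on $\mathcal{G}^{++}$ with the unprimed ones (the only atoms mentioning $t'$ as a distinguished element are in-edges $E(u,t')$, and these lift to $E(u,t)$); one then applies $f_{\underline{y}}$ inside $\mathcal{G}^{++}$ and uses idempotency together with pp-definability of $G^{+}$ to pull the existential witnesses back into $\mathcal{G}^{+}$, whence Lemma~\ref{lem:basic} yields the required polymorphism of $\mathcal{G}^{+}$. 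To complete your proof you need this one-directional transfer argument (or an equivalent) in place of the retraction.
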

\begin{proof}
Let $t$ be the sink in $\mathcal{G}^{++}$ and let $t'$ be the sink in $\mathcal{G}^{+}\subseteq \mathcal{G}^{++}$. Let $m$ be given and set $k=f(m)$. Let $\overline{x}_1=(x^1_1, x^2_1, \ldots, x^m_1)$, \ldots,  $\overline{x}_k=(x^1_k, x^2_k, \cdots, x^m_k)$ be a set of generators for id-Pol$(\mathcal{G}^{++})$. Set $\overline{x}'_1$, \ldots, $\overline{x}'_k$ to be the tuples obtained from $\overline{x}_1$, \ldots, $\overline{x}_k$ by substituting $t$ by $t'$ and leaving everything else unchanged. We claim that $\overline{x}'_1$, \ldots, $\overline{x}'_k$ is a set of generators for id-Pol$(\mathcal{G}^{+})$. To prove this then, let $\underline{y}=(y^1, y^2, \ldots, y^m) \in {(G^+)}^m$ be given. We need to prove there is $f' \in \mathrm{id}\mbox{-}\mathrm{Pol}(\mathcal{G}^+)$ so that we have the following.
\[
\begin{array}{cccc}
f' & f' & \cdots & f' \\
\frown & \frown & \cdots & \frown \\
x^{1'}_1 & x^{2'}_1 & \cdots & x^{m'}_1 \\
x^{1'}_2 & x^{2'}_2 & \cdots & x^{m'}_2 \\
\vdots & \vdots & & \vdots \\
x^{1'}_k & x^{2'}_k & \cdots & x^{m'}_k \\
\smile & \smile & \cdots & \smile \\
\| & \| & & \| \\
y^1 & y^2 & \cdots & y^m \\
\end{array}
\] 
Let $1,\ldots,n,n+1,n+2$ enumerate the elements of $\mathcal{G}^{++}$ with $\mathcal{G}^+$ being induced on the subset $\{1,\ldots,n,n+1\}$. For $i \in [n+2]$, let $i^k$ denote the $k$-tuple of $i$s. 

By Lemma~\ref{lem:basic}, it is sufficient to show that all pp-formulas that are true on  $((\mathcal{G}^+)^k;1^k,\ldots,(n+1)^k, \underline{x}^{1'},\ldots,\underline{x}^{m'})$ are also true on  $(\mathcal{G}^+;1,\ldots,n+1,y^1,\ldots,y^m)$. 

Let $\phi=\exists \ \overline{w} \phi(\overline{w},\overline{v})$ be a
pp-formula that is true on
$((\mathcal{G}^+)^k;1^k,\ldots,(n+1)^k,\underline{x}^{1'},\ldots,\underline{x}^{m'})$,
that is, for each $j \in [k]$, it is true on
$(\mathcal{G}^+;1,\ldots,n+1,x^{1'}_j,\ldots,x^{m'}_j)$. Let
$\overline{w}_j$ be the witnesses for the existential variables of $\phi$ on this latter structure. Since for all $x \in G^{++}$ we have $E(x,t')$ implies $E(x,t)$, we deduce that $\phi$ is also true on $(\mathcal{G}^{++};1,\ldots,n+1,x^1_j,\ldots,x^m_j)$, using the same witnesses $\overline{w}_0$. Now it follows from $f_{\underline{y}}$ that $\phi$ is true on $(\mathcal{G}^{++};1,\ldots,n+1,y^1,\ldots,y^m)$, by mapping the tuples $\overline{w}_0,\ldots,\overline{w}_k$ under $f_{\underline{y}}$ to obtain the witness for $\overline{w}$ in $(\mathcal{G}^{++};1,\ldots,n+1,y^1,\ldots,y^m)$. But, the idempotent $f_{\underline{y}}$ preserves the set $\{1,\ldots,n,n+1\}$, which is pp-definable in $\mathcal{G}^{++}$, so this shows that the same witnesses show $\phi$ is also true on $(\mathcal{G}^{+};1,\ldots,n+1,x^1_j,\ldots,x^m_j)$. The result follows.
\end{proof}
\begin{corollary}
Let $\mathcal{G}$ be a digraph. If $\mathrm{id\mbox{-}Pol}(\mathcal{G}^{+})$ has the EGP then so does $\mathrm{id\mbox{-}Pol}(\mathcal{G}^{++})$.
\end{corollary}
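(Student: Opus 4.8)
The plan is to derive this Corollary as the contrapositive of the immediately preceding Lemma, with no new machinery. First I would instantiate that Lemma at the \emph{exact} minimal generating-set function of $\mathcal{G}^{++}$: set $f(m) := f_{\mathbb{A}}(m)$ where $\mathbb{A} = \mathrm{id\mbox{-}Pol}(\mathcal{G}^{++})$. Since $\mathcal{G}^{++}$ is finite, $f_{\mathbb{A}}(m)$ is well defined and finite for every $m$, and by definition $\mathrm{id\mbox{-}Pol}(\mathcal{G}^{++})$ has the $f(m)$-GP. The Lemma then yields that $\mathrm{id\mbox{-}Pol}(\mathcal{G}^{+})$ also has the $f(m)$-GP, i.e.
\[ f_{\mathrm{id\mbox{-}Pol}(\mathcal{G}^{+})}(m) \;\le\; f_{\mathrm{id\mbox{-}Pol}(\mathcal{G}^{++})}(m) \qquad \text{for every } m. \]

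Next I would simply unwind the definition of EGP. Assuming $\mathrm{id\mbox{-}Pol}(\mathcal{G}^{+})$ has the EGP, there is a constant $b>1$ with $f_{\mathrm{id\mbox{-}Pol}(\mathcal{G}^{+})}(m) = \Omega(b^m)$. Combining this with the displayed inequality gives $f_{\mathrm{id\mbox{-}Pol}(\mathcal{G}^{++})}(m) \ge f_{\mathrm{id\mbox{-}Pol}(\mathcal{G}^{+})}(m) = \Omega(b^m)$, so the same witness $b$ certifies that $\mathrm{id\mbox{-}Pol}(\mathcal{G}^{++})$ has the EGP. This completes the argument.

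There is essentially no obstacle: all the real work (transporting pp-witnesses along the pp-definable embedding $\mathcal{G}^{+}\subseteq \mathcal{G}^{++}$ and using idempotence to stay inside the relevant pp-definable subset) has already been done in the preceding Lemma. The only two points deserving a sentence of care are that instantiating the Lemma at $f=f_{\mathrm{id\mbox{-}Pol}(\mathcal{G}^{++})}$ is legitimate precisely because finiteness of $\mathcal{G}^{++}$ makes that function finite-valued, and that EGP, being a pointwise $\Omega(b^m)$ lower bound on the generating-set function, is automatically inherited by any pointwise-larger function — which is exactly the monotonicity relationship the Lemma supplies between the two functions.
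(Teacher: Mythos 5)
Your proposal is correct and is precisely the argument the paper intends: the corollary is stated without proof as the immediate contrapositive of the preceding lemma, and your instantiation at $f=f_{\mathrm{id\mbox{-}Pol}(\mathcal{G}^{++})}$ together with the observation that an $\Omega(b^m)$ lower bound transfers to any pointwise-larger function is exactly that contrapositive spelled out. No gap.
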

Let $\mathcal{G}$ be a semicomplete digraph with more than one cycle and no source. We say $\mathcal{G}$ has the \emph{Novi Sad property} if there exist vertices $p,q \in G$ so that
\begin{itemize}
\item for all $v \in  G$ there is the edge $E(v,p)$ or $E(v,q)$.
\end{itemize}
Note that the Novi Sad property implies a double edge between $p$ and
$q$, hence this fails on all tournaments. Importantly for our uses, on 
irreflexive graphs this property implies that (picking $p':=q$ and $q':=p$):
\begin{itemize}
\item exists $p' \in  G$ so that $E(p',p)$ but not $E(p',q)$,
\item exists $q' \in  G$ so that $E(q',q)$ but not $E(q',p)$.
\end{itemize}
The Novi Sad property does not feature in \cite{ICALP2014}.

\noindent \textbf{Specific results imported from \cite{ICALP2014}}. We now need to borrow some definitions and results from \cite{ICALP2014}. In that paper the authors usually refer to Pol instead of id-Pol, but the the objects are always cores expanded by constants, so the two coincide.
\begin{definition}[Definition~6 in \cite{ICALP2014}]\label{def:1}
Let $\str{G}$ be a directed graph. We define the relation $\preceq_{\str G}$ on $V$ by $x\preceq_{\str G} y$ iff $x^-\subseteq y^-$.
\end{definition}
\begin{proposition}[Proposition~9 in \cite{ICALP2014}]
\label{order}
Assume that $\str G$ is semicomplete. Then $\preceq_{\str G}$ is a partial order, $\preceq_{\str G}$ has the largest element $t$ iff $t$ is a sink, and dually for least elements and sources.
\end{proposition}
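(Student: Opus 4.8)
The plan is to handle the three assertions in turn, the only real input being that a semicomplete digraph is irreflexive and that for distinct $x,y$ at least one of $E(x,y)$, $E(y,x)$ holds. Throughout, recall the notation: $x^-$ is the in-neighbourhood of $x$ and $x^+$ its out-neighbourhood, and the relation is $x \preceq_{\str G} y$ iff $x^- \subseteq y^-$.

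First I would observe that $\preceq_{\str G}$ is automatically a preorder, since it is nothing but the inclusion order restricted to the family $\{x^- : x \in G\}$; reflexivity and transitivity need no hypothesis on $\str G$. So the only thing to verify for it to be a partial order is antisymmetry, and this is the single place where the tournament-like part of "semicomplete" is genuinely used. Suppose $x^- = y^-$ with $x \neq y$. By semicompleteness we may assume, after relabelling, that $E(x,y)$ holds, i.e. $x \in y^- = x^-$; but then $E(x,x)$, contradicting irreflexivity. Hence $x = y$.

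Next, the largest-element/sink equivalence. I would first record the elementary observation that, since $\str G$ is irreflexive, $t^- \subseteq G \setminus \{t\}$ always, and that $t$ being a sink (i.e. $t^+ = \emptyset$) is, by semicompleteness, equivalent to $t^- = G \setminus \{t\}$. For the backward implication: if $t$ is a sink then for every $x$ we have $t \notin x^-$ (otherwise $E(t,x)$ would contradict $t^+ = \emptyset$), so $x^- \subseteq G \setminus \{t\} = t^-$, i.e. $x \preceq_{\str G} t$, so $t$ is the largest element. For the forward implication: if $t$ is the largest element then $x^- \subseteq t^-$ for all $x$; were $t$ not a sink there would be $z$ with $E(t,z)$, i.e. $t \in z^- \subseteq t^-$, giving $E(t,t)$, impossible. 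Since a poset has at most one largest element and a semicomplete digraph at most one sink, the vertex-level biconditional also yields the existential form.

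Finally I would treat the least-element/source equivalence by the symmetric argument (or directly): $s$ is least iff $s^- \subseteq x^-$ for every $x$; if some $z \in s^-$, then taking $x = z$ forces $z \in z^-$, i.e. $E(z,z)$, contradicting irreflexivity, so $s^- = \emptyset$, i.e. $s$ is a source; conversely $s^- = \emptyset$ makes $s \preceq_{\str G} x$ trivially for all $x$. There is no serious obstacle here: the proposition is elementary, and the only point requiring care is to invoke irreflexivity at each contradiction and to use the tournament property exactly once, in antisymmetry, which is therefore the sole step where the full strength of semicompleteness is needed.
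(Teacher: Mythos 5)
Your proof is correct and complete; this proposition is imported verbatim from \cite{ICALP2014} and the present paper gives no proof of it, so there is nothing to contrast with, but your argument is the standard (essentially the only) one: inclusion of in-neighbourhoods gives a preorder for free, antisymmetry is where semicompleteness plus irreflexivity enter, and the sink/source characterisations follow from the observation that $t$ is a sink iff $t^- = G\setminus\{t\}$. All steps check out, including the uniqueness remarks needed to pass from the vertex-level biconditional to the existential form of the statement.
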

\begin{definition}[Definition~7 in \cite{ICALP2014}]\label{def:2}
Let ${\str G}$ be a digraph. We define the partition of the vertex set $V$ into $V_{min}^{\str G}$, $V_{max}^{\str G}$, $V_{both}^{\str G}$ and $V_{none}^{\str G}$ so that all vertices in $V_{max}^{\str G}$ are maximal, but not minimal, in the order $\preceq_\str g$, all vertices in $V_{min}^\str g$ are minimal, but not maximal, in the order $\preceq_\str g$, all vertices in $V_{both}^\str g$ are both minimal and maximal in the order $\preceq_\str g$, while vertices in $V_{none}^\str g$ are neither minimal nor maximal in the order $\preceq_\str g$. When the digraph $\str g$ is understood, we will omit the superscript $^\str g$.
\end{definition}
\begin{definition}[Definition~8 in \cite{ICALP2014}]\label{sofg}
Let $\str g$ be a digraph. We define the irreflexive digraph $\str S(\str g)$ by:
\begin{enumerate}
\item For all $x,y\in V_{max}\cup V_{both}$, $(x,y), (y,x) \in E(\str S(\str g))$,
\item For all $x,y\in V_{min}$, $(x,y), (y,x) \in E(\str S(\str g))$,
\item For all $x,y\in V_{none}$, $(x,y) \in E(\str S(\str g))$ iff $(x,y) \in E(\str g)$.
\item For all $x\in V_{min}$ and $y\in V_{none}\cup V_{max}$, $(x,y) \in E(\str S(\str g))$, but not $(y,x) \in E(\str S(\str g))$,
\item For all $x\in V_{none}$ and $y\in V_{max}$, $(x,y) \in E(\str S(\str g))$, but not $(y,x) \in E(\str S(\str g))$,
\item For all $x\in V_{both}$ and $y\in V_{none}\cup V_{min}$, $(x,y) \in E(\str S(\str g))$, but not $(y,x) \in E(\str S(\str g))$.
\end{enumerate}
\end{definition}
\begin{proposition}[Proposition~10 in \cite{ICALP2014}]\label{sofsofg}
$V_{min}^{\str s(\str g)}=V_{min}^\str g$, $V_{max}^{\str s(\str g)}=V_{max}^\str g$, $V_{both}^{\str s(\str g)}=V_{both}^\str g$ and $V_{none}^{\str s(\str g)}=V_{none}^\str g$. Consequently, $\str s(\str s(\str g))=\str s(\str g)$.
\end{proposition}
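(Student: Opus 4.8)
The plan is to compute, for each vertex $v$, the set of in-neighbours of $v$ in $\str s(\str g)$ directly from the six clauses of Definition~\ref{sofg}, and then to recover the partition of $\str s(\str g)$ from these sets. Write $M=V_{min}^{\str g}$, $X=V_{max}^{\str g}$, $B=V_{both}^{\str g}$, $N=V_{none}^{\str g}$, let $v^-$ be the in-neighbourhood of $v$ in $\str g$ (as in Definition~\ref{def:1}), and let $I(v)$ denote the in-neighbourhood of $v$ in $\str s(\str g)$. A clause-by-clause inspection then yields
\[
  I(v)=\begin{cases}
    V\setminus\{v\} & \text{if } v\in X,\\
    (M\cup B)\setminus\{v\} & \text{if } v\in M,\\
    (X\cup B)\setminus\{v\} & \text{if } v\in B,\\
    M\cup B\cup(v^-\cap N) & \text{if } v\in N;
  \end{cases}
\]
for instance a vertex of $X$ receives an edge from all of $X\cup B$ (clause~1), from all of $M$ (clause~4) and from all of $N$ (clause~5), hence from every other vertex, while a vertex of $B$ receives edges exactly from $(X\cup B)\setminus\{v\}$ because clause~6 only sends edges out of $B$.

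Next I would read the partial order $\preceq_{\str s(\str g)}$ off these formulas. First the degenerate case $X=\varnothing$: in a finite poset every vertex lies below a maximal one, which, since $V_{max}=\varnothing$, also lies in $V_{both}$ and is therefore minimal, so it equals the vertex itself; thus $V=B$, $\str s(\str g)$ is the complete irreflexive digraph with double edges on $V$ (clause~1 only), every vertex is both minimal and maximal, and $V_{both}^{\str s(\str g)}=V=B$ with the other three classes empty, as required. So assume $X\neq\varnothing$; the mirror argument gives $M\neq\varnothing$, and every $v\in N$ has a strictly larger element in $X$ and a strictly smaller one in $M$. Since $\str s(\str g)$ is irreflexive, $I(w)\subseteq V\setminus\{w\}$ for every $w$; comparing the four families of in-neighbourhoods for containment one checks that the $\subseteq$-maximal in-neighbourhoods are exactly the $I(v)$ with $v\in X\cup B$, whereas for $v\in M\cup N$ the set $I(v)$ is strictly contained in $I(w)$ for a suitable $w\in X$. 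Hence the maximal elements of $\preceq_{\str s(\str g)}$ are precisely $X\cup B$, and dually the minimal elements are precisely $M\cup B$. Intersecting these two sets and taking complements, and using that $X,M,B$ are pairwise disjoint, gives $V_{max}^{\str s(\str g)}=X$, $V_{min}^{\str s(\str g)}=M$, $V_{both}^{\str s(\str g)}=B$ and $V_{none}^{\str s(\str g)}=N$, which is the first assertion.

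The ``consequently'' then follows quickly: $\str s(\str s(\str g))$ is obtained from $\str s(\str g)$ by applying the six clauses of Definition~\ref{sofg} with the classes $V_\bullet^{\str s(\str g)}=V_\bullet^{\str g}$. Every unordered pair of distinct vertices is governed by exactly one clause; clauses~1, 2, 4, 5 and~6 prescribe a fixed set of edges and non-edges depending only on the partition, and by their own definition these clauses already make $\str s(\str g)$ contain exactly those edges on the pairs they govern, while clause~3 copies, inside $V_{none}=N$, the edges of $\str s(\str g)$, which changes nothing. Hence $\str s(\str s(\str g))$ and $\str s(\str g)$ have the same (irreflexive) edge set, so they coincide.

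The step I expect to be the main obstacle is the containment bookkeeping in the second paragraph, in particular ruling out spurious relations such as a vertex of $M$, or of $N$, being strictly below a vertex of $B$ in $\preceq_{\str s(\str g)}$. The raw in-neighbourhood formulas seem to permit such relations in corner cases, and these are exactly the configurations excluded by the degenerate-case analysis (the equivalences $X=\varnothing\iff M=\varnothing\iff V=B$, and the fact that $N\neq\varnothing$ forces both $X$ and $M$ nonempty). So one must pin down which of the four classes can be empty before the containment checks and invoke this fact throughout.
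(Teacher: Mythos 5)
Your verification is correct. Note first that the paper does not actually prove this statement: it is imported verbatim as Proposition~10 of \cite{ICALP2014}, so there is no in-paper proof to compare against, and your argument has to stand on its own --- which it does. The in-neighbourhood table you derive from the six clauses is right ($I(v)=V\setminus\{v\}$ for $v\in V_{max}$, $(V_{min}\cup V_{both})\setminus\{v\}$ for $v\in V_{min}$, $(V_{max}\cup V_{both})\setminus\{v\}$ for $v\in V_{both}$, and $V_{min}\cup V_{both}\cup(v^-\cap V_{none})$ for $v\in V_{none}$), the degenerate-case analysis $V_{max}^{\str g}=\varnothing\iff V_{min}^{\str g}=\varnothing\iff V=V_{both}^{\str g}$ is exactly the fact one needs before doing the containment bookkeeping, and the final ``consequently'' step (clauses 1, 2, 4, 5, 6 depend only on the partition, clause 3 copies the $V_{none}$-internal edges, and every unordered pair of distinct vertices is governed by exactly one clause) is sound. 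The one place where the write-up is looser than it should be is the word ``dually'' for the minimal elements: the order $\preceq$ is defined purely via in-neighbourhoods, and since semicomplete digraphs may have double edges there is no literal in/out duality to invoke; what is actually needed is a second explicit pass over the same four formulas checking that no $I(w)$ is strictly contained in $I(v)$ for $v\in V_{min}^{\str g}\cup V_{both}^{\str g}$, while for $v\in V_{max}^{\str g}\cup V_{none}^{\str g}$ some $w\in V_{min}^{\str g}$ gives $I(w)\subsetneq I(v)$. That check does go through (e.g.\ for $v\in V_{both}^{\str g}$, any candidate $I(w)\subsetneq(V_{max}^{\str g}\cup V_{both}^{\str g})\setminus\{v\}$ would force $V_{max}^{\str g}\subseteq\{w\}\cup V_{min}^{\str g}\cup V_{both}^{\str g}$ or would contain $v$, both impossible in the non-degenerate case), so this is a presentational gap rather than a mathematical one.
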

\begin{corollary}[Corollary~6 in \cite{ICALP2014}]\label{sofgreduction}
Let $\str g$ be a smooth semicomplete digraph which is not a cycle. Then $\mathrm{id\mbox{-}Pol}(\str g^+)\subseteq \mathrm{id\mbox{-}Pol}(\str s(\str g)^+)$.
\end{corollary}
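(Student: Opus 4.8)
\noindent The plan is to translate the statement to the relational side. By the Galois correspondence recalled above, $\mathrm{Inv}(\mathrm{sPol}(\mathcal{B}))=\langle\mathcal{B}\rangle_{\mathrm{pH}}$, in its idempotent incarnation (idempotent polymorphism clones correspond to the relations pp-definable with constants), the inclusion $\mathrm{id\mbox{-}Pol}(\str g^+)\subseteq\mathrm{id\mbox{-}Pol}(\str s(\str g)^+)$ is equivalent to the single assertion that the edge relation $E(\str s(\str g)^+)$ is pp-definable with constants inside $\str g^+$: the two structures have the same domain (the vertices of $\str g$ together with one new sink) and the same constants, so $E$ is the only relation to account for. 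First I would peel off the new sink: in $\str g^+$ the old vertex set $G$ is pp-defined by $\exists z\, E(u,z)$ and the new sink $t$ is a named constant, so the $t$-incident edges of $\str s(\str g)^+$, namely $\{(u,t):u\in G\}$, are at once pp-definable, and what remains is to pp-define $E(\str s(\str g))$ on $G$, now with $t$ and all the constants available as auxiliary parameters.

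The heart of the argument is to pp-define the quasi-order $\preceq_{\str g}$, and hence the partition into $V_{min},V_{max},V_{both},V_{none}$, inside $\str g^+$. Here I would use that, $\str g$ being smooth, $t$ is the unique sink of $\str g^+$ and thus the top element of $\preceq_{\str g^+}$ by Proposition~\ref{order}; the trick is to capture ``$x^-\subseteq y^-$'' by a gadget that places an auxiliary vertex in a universal-like position relative to $x$ and $y$ and pushes it, along edges, towards $t$. Smoothness (every vertex of $\str g$ has an in-neighbour) and the hypothesis that $\str g$ is not a cycle are exactly what keep the $\preceq_{\str g}$-type partition nontrivial and exclude the degenerate case (all vertices in $V_{both}$, e.g.\ a directed cycle) in which such a gadget would say nothing. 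Once $\preceq_{\str g}$ is pp-definable with constants, so are the four type classes, and then I would run through the six cases of Definition~\ref{sofg}: inside $V_{none}$ the relation is literally $E(\str g)$, while between distinct type classes every $\str s(\str g)$-edge is dictated purely by the ordered pair of types and so is expressible from the type classes.

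The main obstacle is twofold. First, $\preceq_{\str g}$ must be genuinely pp-defined rather than merely shown to be preserved by the polymorphisms under consideration, which would be circular; building the right gadget and checking it (with Proposition~\ref{order} controlling the boundary positions and Proposition~\ref{sofsofg}, $\str s(\str s(\str g))=\str s(\str g)$, serving as a consistency test) is the delicate point. Second, $E(\str s(\str g))$ is a \emph{union} of the six relations of Definition~\ref{sofg}, and union is not a pp-operation, so the pieces cannot simply be glued; instead I would exhibit one pp-formula, writing ``$(u,v)\in E(\str s(\str g))$'' as a conjunction that certifies $(u,v)$ is not one of the forbidden pairs (via equalities to the relevant constants together with the type memberships) and that, for every edge $\str s(\str g)$ adds to or reverses from $\str g$, supplies an explicit existential witness built from a common in-neighbour of $u,v$ or from the $\preceq_{\str g}$-structure of the pair. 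With such a formula in hand, every idempotent polymorphism of $\str g^+$ preserves it, which is precisely the asserted inclusion.
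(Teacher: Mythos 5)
This statement is imported verbatim from \cite{ICALP2014} (it is Corollary~6 there); the present paper gives no proof of it, so your proposal must be judged against the source rather than against anything in this text. Your opening reduction is the right frame: since $\str g^+$ and $\str s(\str g)^+$ share a domain, the clone inclusion is equivalent, by the Galois correspondence in its idempotent form, to the edge relation of $\str s(\str g)^+$ being pp-definable with constants in $\str g^+$. But the two places where you locate the difficulty are exactly the places where the proof is missing, and in both cases the sketch as written would not go through.

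First, the relation $x\preceq_{\str g} y$, i.e.\ $x^-\subseteq y^-$, is defined by a universally quantified implication (``for all $z$, $E(z,x)$ implies $E(z,y)$''), and primitive positive logic has neither universal quantification nor implication. You do not construct the ``gadget'' that is supposed to replace this; you only assert that one should exist and that smoothness and non-cyclicity are ``exactly what'' make it work. In a semicomplete digraph one can hope to express $\neg E(z,y)$ positively via $E(y,z)$ together with $z\neq y$, but double edges and the disequality both obstruct this, and handling them is the actual content of the lemmas in \cite{ICALP2014}; without that construction the argument has no foundation. Second, your treatment of the union is not a pp-construction at all: writing ``$(u,v)\in E(\str s(\str g))$'' as a conjunction that ``certifies $(u,v)$ is not one of the forbidden pairs'' requires negation and disequalities to constants, neither of which is available in the fragment $\exists,\wedge,=$. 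A union of relations supported on the different blocks of the type partition (and, already at your first step, the union $E(\str s(\str g))\cup\{(u,t):u\in G\}$) cannot be obtained by pp-defining the pieces separately; one must either exhibit a single pp-formula whose satisfying pairs happen to be that union, or argue directly that every idempotent polymorphism of $\str g^+$ preserves the union as one relation (which is how the source proceeds, via its analysis of how such polymorphisms act on $\preceq_{\str g}$). As it stands, the proposal identifies the correct target but proves neither of the two claims on which it rests.
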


\vspace{.1cm}
\noindent \textbf{Applications of results imported from \cite{ICALP2014}}. 
\begin{theorem}
Let $\mathcal{G}$ be a smooth semicomplete with more than one cycle. There exists a smooth semicomplete with more than one cycle $\mathcal{H}$ so that $\mathrm{id\mbox{-}Pol}(\mathcal{G}^+) \subseteq \mathrm{id\mbox{-}Pol}(\mathcal{H}^+)$ and $\mathcal{H}^+$ has the Novi Sad property.
\end{theorem}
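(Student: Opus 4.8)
The plan is to take $\mathcal{H} := \str s(\mathcal{G})$ and show it is the digraph we want. Since $\mathcal{G}$ has more than one cycle it is not a directed cycle (among semicomplete digraphs only $\mathcal{K}_2$ and $\mathcal{DC}_3$ are directed cycles, and each of these has a single cycle), so Corollary~\ref{sofgreduction} applies and gives $\mathrm{id\mbox{-}Pol}(\mathcal{G}^+) \subseteq \mathrm{id\mbox{-}Pol}(\mathcal{H}^+)$. By Proposition~\ref{sofsofg}, $\mathcal{H}$ is a fixed point of $\str s$ and carries the same partition $V_{min}, V_{max}, V_{both}, V_{none}$ as $\mathcal{G}$; in particular all arcs of $\mathcal{H}$ are precisely those prescribed by the six clauses of Definition~\ref{sofg}, which turns the remaining verifications into bookkeeping. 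A first routine check against Definition~\ref{sofg} shows every unordered pair of vertices of $\mathcal{H}$ carries an arc, so $\mathcal{H}$ is semicomplete.

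The substantive point is that $\mathcal{H}$ is smooth. By Proposition~\ref{order} this amounts to $\mathcal{H}$ having no vertex of out-degree $0$ and none of in-degree $0$. The useful structural fact here, read off from Definition~\ref{def:2}, is that $V_{max}=\emptyset$ forces $\preceq_{\mathcal{G}}$ to be an antichain, i.e.\ $V=V_{both}$, and dually for $V_{min}$: a vertex that is not $\preceq$-maximal sits strictly below some $\preceq$-maximal vertex, and such a vertex cannot lie in $V_{both}$ (whose members are $\preceq$-minimal), so it lies in $V_{max}$. Running through the clauses of Definition~\ref{sofg}, a vertex of $\mathcal{H}$ with empty out-neighbourhood would have to be either the sole element of $V_{max}\cup V_{both}$ — but then it is the greatest element of $\preceq_{\mathcal{G}}$, hence a sink of $\mathcal{G}$, contradicting smoothness — or a vertex of $V_{min}$ or $V_{none}$ in a configuration forcing $V_{max}=\emptyset$, which is impossible since the mere existence of such a vertex forces $V_{max}\neq\emptyset$. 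An entirely analogous analysis of in-neighbourhoods rules out a source. So $\mathcal{H}$ is smooth; being smooth and semicomplete it is strongly connected, and since it contains a double arc (inside $V_{max}\cup V_{both}$, which has at least two elements — a singleton there being once more a sink of $\mathcal{G}$) it is not $\mathcal{DC}_3$, while $|H|=|G|\geq 3$ rules out $\mathcal{K}_2$; hence $\mathcal{H}$ has more than one cycle.

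It remains to locate the Novi Sad pair, and here the partition does the work, using $|V_{max}\cup V_{both}|\geq 2$ as just established. If $V_{max}\neq\emptyset$, pick $q\in V_{max}$ and any $p\in(V_{max}\cup V_{both})\setminus\{q\}$: clause~(1) of Definition~\ref{sofg} supplies the double arc between $p$ and $q$ and an arc from every vertex of $V_{max}\cup V_{both}$ into $\{p,q\}$, while clauses~(4) and~(5) supply an arc from every vertex of $V_{min}\cup V_{none}$ to $q$; thus every vertex has an arc to $p$ or to $q$, so $\mathcal{H}$, and hence $\mathcal{H}^+$, has the Novi Sad property. If instead $V_{max}=\emptyset$ then $V=V_{both}$ and, by clause~(1), $\mathcal{H}$ is the complete digraph on $|H|\geq 3$ vertices with every arc doubled, for which any two distinct vertices witness the Novi Sad property.

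The main obstacle is the smoothness step: one has to enumerate, clause by clause of Definition~\ref{sofg}, every way in which $\str s(\mathcal{G})$ could acquire a source or a sink and discharge each case using Proposition~\ref{order} together with the observation that $V_{max}$ (resp.\ $V_{min}$) is empty exactly when $\preceq_{\mathcal{G}}$ is an antichain. Everything downstream — strong connectivity, the cycle count, and the choice of $p,q$ — is then comparatively routine.
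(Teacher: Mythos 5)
Your proposal takes exactly the paper's route: set $\mathcal{H}:=\str S(\mathcal{G})$, invoke Corollary~\ref{sofgreduction} for the inclusion of the idempotent polymorphism clones, observe $|V_{max}\cup V_{both}|\geq 2$ (a singleton there would be a greatest element of $\preceq_{\mathcal{G}}$, hence a sink, by Proposition~\ref{order}), and pick the Novi Sad pair inside $V_{max}\cup V_{both}$ with at least one member in $V_{max}$ when that set is nonempty. Your labelling of $p$ and $q$ is just the transpose of the paper's, and your case $V_{max}=\emptyset$ (forcing $V=V_{both}$) is the paper's degenerate case verbatim. The paper's own proof is terser and does not spell out that $\str S(\mathcal{G})$ is semicomplete, smooth and has more than one cycle, so your explicit verification of those facts is a welcome elaboration rather than a deviation; the smoothness analysis via Definition~\ref{sofg} and the observation that $V_{min}\neq\emptyset$ or $V_{none}\neq\emptyset$ forces $V_{max}\neq\emptyset$ is correct.

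One intermediate claim is false as stated: a smooth semicomplete digraph need not be strongly connected (take two disjoint copies of $\mathcal{DC}_3$ with every cross arc oriented from the first copy to the second; this has no source, no sink, is semicomplete, and is not strongly connected). You use this only to conclude that $\mathcal{H}$ has more than one cycle, and that conclusion is still easily salvaged: the double arc inside $V_{max}\cup V_{both}$ is one $2$-cycle, and since $\mathcal{G}$ has no source, $\preceq_{\mathcal{G}}$ has no least element, so there are at least two minimal elements; a short case analysis on the clauses of Definition~\ref{sofg} then yields a second, distinct cycle (a $2$-cycle inside $V_{min}$ or inside $V_{both}$, or a $3$-cycle through $V_{min}$, $V_{max}$ and $V_{both}$ via clauses~(4), (1) and~(6)). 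Alternatively, pass to the condensation: in any smooth semicomplete digraph the initial and terminal strong components each have at least two vertices, so either two disjoint cycles exist or the digraph really is strongly connected on $\geq 3$ vertices and carries a Hamiltonian cycle alongside the $2$-cycle. Finally, note that, like the paper, you verify the Novi Sad condition on the vertices of $\mathcal{H}$ and let it pass to $\mathcal{H}^+$; the appended sink has no out-neighbour, so the quantification in the definition has to be read, as the paper implicitly does, over the non-sink vertices.
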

\begin{proof}
Note that  $|V_{both}^\str g\cup V_{max}^\str g|\geq 2$,  so we can apply Corollary \ref{sofgreduction}, choosing $\mathcal{H}=\str s(\str g)$, with $p\neq q$ chosen as follows:
 If $V_{max}^\str g=\emptyset$, this implies that $V_{min}^\str g=V_{none}^\str g=\emptyset$ and $V=V_{both}^\str g$, and $p, q$ can be chosen arbitrarily;
 If $V_{max}^\str g\neq \emptyset$, then  we choose $p\in V_{max}^\str g$ and $q\in V_{max}^\str g\cup V_{both}^\str g$. Then $ (p,q), (q, p)\in E(\str s(\str g))$ (and this graph has no loops),  and there is an edge from all vertices of $\str s(\str g)$, except $p$,  to $p$.
%
%
%
\end{proof}

\vspace{.1cm}
\noindent \textbf{Main EGP result for semicompletes}.

\begin{proposition}
\label{prop:sc-EGP}
Let $\mathcal{G}$ be a semicomplete digraph with more than one cycle, no source, and the Novi Sad property. Then $\mathrm{id\mbox{-}Pol}(\mathcal{G})$ has the EGP.
\end{proposition}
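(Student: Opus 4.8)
The plan is to mirror the proof of Proposition~\ref{prop:pathsEGP}, with the Novi Sad vertices $p,q$ playing the role of the leftmost/rightmost loop and the double edge $E(p,q),E(q,p)$ replacing the two self-loops (irreflexivity of $\mathcal G$ guarantees $p$ and $q$ carry no loop). Since we are in the idempotent setting all constants are available. Call a tuple $\underline w\in G^m$ a \emph{friend} of $\sigma\in\{p,q\}^m$ if $E(w^j,\sigma^j)$ for every $j\in[m]$; the Novi Sad property says precisely that every $\underline w\in G^m$ has at least one friend in $\{p,q\}^m$, while from irreflexivity the swap $\overline{\sigma}$ (obtained by exchanging $p$ and $q$ in every coordinate) is a friend of $\sigma$ and is the \emph{only} $\{p,q\}$-valued friend of $\overline{\sigma}$ (from $q$ the sole out-edge inside $\{p,q\}$ reaches $p$, and symmetrically). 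I aim to show that every generating set $\Gamma$ of $\mathbb A^m$, where $\mathbb A:=\mathrm{id\mbox{-}Pol}(\mathcal G)$, has at least $2^m$ elements; this gives the EGP with base $2$.

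For the counting step, for each $\sigma\in\{p,q\}^m$ let $X_\sigma$ be the set of tuples whose \emph{only} friend in $\{p,q\}^m$ is $\sigma$. The $2^m$ sets $X_\sigma$ are pairwise disjoint, so if $|\Gamma|<2^m$ there is some $\tau\in\{p,q\}^m$ with $\Gamma\cap X_\tau=\emptyset$. Put $R_\Gamma:=\{p,q\}^m\setminus\{\tau\}$. Then every $\bar x\in\Gamma$ has a friend in $R_\Gamma$: if $\bar x$ has at least two friends in $\{p,q\}^m$ then at least one differs from $\tau$, and if it has a single friend $\sigma$ then $\bar x\in X_\sigma$ so $\sigma\neq\tau$. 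Now form the pp-formula $\phi(v_1,\dots,v_m):=\exists v_1'\cdots v_m'\bigl(R_\Gamma(v_1',\dots,v_m')\wedge\bigwedge_{j\in[m]}E(v_j,v_j')\bigr)$. It holds on every $\bar x\in\Gamma$, with a friend of $\bar x$ inside $R_\Gamma$ as the witness tuple. But $\forall v_1\cdots v_m\ \phi$ is false: evaluating at $\overline{\tau}$, any witness $(v_1',\dots,v_m')$ lies in $R_\Gamma\subseteq\{p,q\}^m$ and is a friend of $\overline{\tau}$, hence equals $\tau$, which is excluded from $R_\Gamma$. By Corollary~\ref{cor:ppfuniondisguised} this contradicts $\Gamma$ being a generating set for $\mathbb A^m$ --- \emph{provided} $\phi\in\langle\mathcal G\rangle_{\mathrm{pH}}$, i.e.\ provided $R_\Gamma$ is pp-definable in $\mathcal G$.

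Establishing that pp-definability is the heart of the matter and the step I expect to be hardest; it is the analogue of Lemma~\ref{lem:cati}. The plan is to produce a pp-definable unary set $S$ with $\{p,q\}\subseteq S$, carrying the double edge, such that every idempotent polymorphism of $\mathcal G$ restricts to a projection on $S$ --- equivalently $\mathcal G[S]$ has only projections as polymorphisms --- whence, by the usual Galois correspondence, every subset of $S^m$, and in particular $R_\Gamma$, is pp-definable in $\mathcal G$. One direction is easy and already shows why the double edge matters: no idempotent binary polymorphism of $\mathcal G$ can restrict to a semilattice on $\{p,q\}$, since a commutative $s$ with $s(p,q)=p$, applied to the pair of $\mathcal G^2$-edges between $(p,q)$ and $(q,p)$ furnished by the double edge, would force $E(p,p)$, contradicting irreflexivity, and $s(p,q)=q$ fails symmetrically. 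Excluding near-unanimity and Mal'tsev-type behaviour on $\{p,q\}$, and pinning down $S$, is exactly where the hypotheses ``more than one cycle'' and ``no source'' enter, through the structural results on polymorphisms of semicomplete digraphs imported from \cite{ICALP2014} (the same ingredients that force essentially unary behaviour in the smooth case). With that pp-definability in hand, the contradiction above yields $|\Gamma|\geq 2^m$ for every $m$, so $\mathrm{id\mbox{-}Pol}(\mathcal G)$ has the EGP.
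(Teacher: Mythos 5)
Your proposal is correct and follows essentially the same route as the paper: its proof of Proposition~\ref{prop:sc-EGP} uses the same $\{p,q\}^m$ counting (phrased via predecessors/sub-predecessors rather than your sets $X_\sigma$), the same formula $\exists \bar v'\, R_\Gamma(\bar v')\wedge\bigwedge_{j} E(v_j,v'_j)$, the same falsifying witness (the paper's $p',q'$ may be taken to be $q,p$, i.e.\ your swap $\overline\tau$), and the same appeal to Corollary~\ref{cor:ppfuniondisguised}. For the one step you leave open, the paper takes your set $S$ to be the pp-definable set $U$ of vertices surviving repeated deletion of sinks --- a smooth semicomplete with more than one cycle containing $p$ and $q$ (their double edge guarantees neither is ever deleted) --- on which all polymorphisms are projections by \cite{ICALP2014}.
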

\begin{proof}
Let $p$ and $q$, together with $p'$ and $q'$, be as guaranteed to
exist by the Novi Sad property. 
Let $U$ be the unary relation specifying the domain of the smooth
semicomplete digraph with more than one cycle which is obtained from
$\mathcal{G}$ by removing sinks repeatedly.

A word $\tau \in G^m$ is said to be a \emph{sub-predecessor} of a word
$\sigma \in \{p,q\}^m$ if $\tau$ can be obtained by some local
substitutions of $p \mapsto x\in p^{-}$ and $q \mapsto x \in
q^{-}$. If $\tau$ is a sub-predecessor of $\sigma$ then we may say
$\sigma$ is a \emph{sub-successor} of $\tau$. Note that every word
$\tau \in G^m$ has a sub-successor in  $\sigma \in \{p,q\}^m$, by the Novi Sad property.
%
A word $\tau \in G^m$ is said to be a \emph{predecessor} of a word $\sigma \in \{p,q\}^m$ if $\tau$ can be obtained by some local substitutions of $p \mapsto x \in p^{-}\setminus q^{-}$ and $q \mapsto x \in q^{-}\setminus p^{-}$. If $\tau$ is a predecessor of $\sigma$ then we may say $\sigma$ is a \emph{successor} of $\tau$. Note that predecessor (resp., successor) imply sub-predecessor (resp., sub-successor).
Now,
\[
  (\dagger)
  \left\{\hspace*{-.5cm}
  \begin{array}{l}
    \begin{minipage}[l]{.9\columnwidth}
      \begin{compactitem}
      \item[.] each word in  $((p^{-}\setminus q^{-}) \cup (q^{-}\setminus p^{-}))^m$ has a unique successor in $\{p,q\}^m$; and
      \item[.] every word in $G^m\setminus ((p^{-}\setminus q^{-}) \cup (q^{-}\setminus p^{-}))^m$ has more than one sub-successor in $\{p,q\}^m$.
      \end{compactitem}
    \end{minipage}
  \end{array}
\right.
\]
In analogy to the proof of Proposition~\ref{prop:pathsEGP}, predecessor/ successor play the role of cousin and sub-predecessor/ sub-successor play the role of friend. 

Let $m$ be given and suppose there exists a generating set $\Gamma$ for $G^m$ of size $< 2^m$. It follows from $(\dagger)$ that, for some $\tau \in \{p,q\}^m$, $\Gamma$ omits $\tau$ and all of $\tau$'s predecessors.

We will prove that $\Gamma$ does not generate $G^m$, by assuming otherwise and reaching a contradiction. Let $R_\Gamma$ be the subset $\{p,q\}^m\setminus \{\tau\}$. Note 
\[ (*) \ \ \mbox{that every $\sigma \in \Gamma$ has a sub-successor in $R_\Gamma$.} \]
Note also that $R_\Gamma$ is pp-definable since $U$ is pp-definable and all polymorphisms of the sub-structure induced by $U$ are projections (see \cite{ICALP2014}). 

Consider the pH-formula $\phi(x_1,\ldots,x_n):=$
\[\exists x'_1,\ldots,x'_n \ \left(\bigwedge_{i \in [n]} E(x_i,x'_i) \right) \wedge R_\Gamma(x'_1,\ldots,x'_n).\]
The sentence $\forall x_1,\ldots,x_n \ \phi(x_1,\ldots,x_n)$ is false and can be witnessed as false by taking $\overline{x}$ to be that word in $\{p',q'\}^m$ derived from $\tau$ by substituting $p \mapsto p'$ and $q \mapsto q'$.  However, consider now that $\phi(y_1,\ldots,y_n)$ is true for all $(y_1,\ldots,y_n) \in \Gamma$, precisely because of property $(*)$, \mbox{i.e.} when $(x_1,\ldots,x_n)$ is evaluated as $\sigma$, choose $(x'_1,\ldots,x'_n)$ to be evaluated as $\sigma$'s sub-successor in  $R_\Gamma$.
\end{proof}

\begin{corollary}
\label{cor:sc-EGP}
Let $\mathcal{G}$ be a semicomplete digraph with more than one cycle and either no source or no sink.  Then $\mathrm{id\mbox{-}Pol}(\mathcal{G})$ has the EGP.
\end{corollary}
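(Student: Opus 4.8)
The plan is to reduce the statement to Proposition~\ref{prop:sc-EGP}. First, replacing $\mathcal{G}$ by its converse digraph $\mathcal{G}^{\partial}$ (all edges reversed) does not change $\mathrm{id\mbox{-}Pol}$, interchanges sources and sinks, and preserves both semicompleteness and the collection of directed cycles; so it is enough to treat the case where $\mathcal{G}$ has no source. If moreover $\mathcal{G}$ is smooth, we are in the situation already recalled in the text: a smooth semicomplete digraph with more than one cycle has only essentially unary polymorphisms (via \cite{ICALP2014}), hence, being idempotent, only projections, so every generating set of the $m$-th power of its algebra is the whole of $G^{m}$ and $\mathrm{id\mbox{-}Pol}(\mathcal{G})$ has the EGP. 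So assume henceforth that $\mathcal{G}$ has no source but has at least one sink.

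Next I would perform a sink-stripping reduction. In any semicomplete digraph, every vertex other than a sink $t$ has an edge into $t$, so $\mathcal{G}=(\mathcal{G}\setminus\{t\})^{+}$; and deleting a sink keeps the digraph semicomplete, keeps it source-free (a sink belongs to no set $v^{-}$, so the in-neighbourhoods of the surviving vertices are unaffected), and keeps the property ``more than one cycle'' (a sink lies on no directed cycle). Iterating until no sink remains, we reach a smooth semicomplete digraph $\mathcal{H}$ with more than one cycle such that $\mathcal{G}$ is obtained from $\mathcal{H}$ by $k\geq 1$ successive applications of $(-)^{+}$.

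Now I would apply the embedding theorem proved above to $\mathcal{H}$: it yields a smooth semicomplete digraph $\mathcal{K}$ with more than one cycle such that $\mathrm{id\mbox{-}Pol}(\mathcal{H}^{+})\subseteq\mathrm{id\mbox{-}Pol}(\mathcal{K}^{+})$ and $\mathcal{K}^{+}$ has the Novi Sad property. Since $\mathcal{K}^{+}$ is semicomplete, has more than one cycle, has no source (adjoining a sink creates none), and has the Novi Sad property, Proposition~\ref{prop:sc-EGP} gives that $\mathrm{id\mbox{-}Pol}(\mathcal{K}^{+})$ has the EGP. Because a subclone of polymorphisms on a fixed domain has, for every power, minimal generating sets at least as large as those of any larger clone (every set generating the power in the smaller clone also generates it in the larger one), the EGP is inherited down the inclusion with the same exponential base; hence $\mathrm{id\mbox{-}Pol}(\mathcal{H}^{+})$ has the EGP. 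Finally, applying $k-1$ times the corollary that EGP of $\mathrm{id\mbox{-}Pol}(\mathcal{D}^{+})$ forces EGP of $\mathrm{id\mbox{-}Pol}(\mathcal{D}^{++})$ (with $\mathcal{D}$ ranging through $\mathcal{H}$ and its successive $(-)^{+}$-images) propagates the EGP from $\mathrm{id\mbox{-}Pol}(\mathcal{H}^{+})$ up to $\mathrm{id\mbox{-}Pol}(\mathcal{G})$, as desired.

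The only part requiring genuine care is the second paragraph: checking that repeatedly removing sinks from a source-free semicomplete digraph with more than one cycle stays inside that class, terminates at a smooth digraph, and presents $\mathcal{G}$ literally as an iterated $(-)^{+}$ of such a smooth digraph. Everything afterwards is assembly — one use of the embedding theorem, one of Proposition~\ref{prop:sc-EGP}, the trivial monotonicity of EGP under clone inclusion, and a short induction on $k$ using the $(-)^{+}$-corollary.
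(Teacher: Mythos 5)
Your proof is correct and follows the paper's intended route exactly: the paper's own proof of this corollary is a two-line pointer to the essentially-unary result for smooth semicompletes and to Proposition~\ref{prop:sc-EGP} ``and its symmetric dual'', and your write-up simply makes explicit the assembly that the paper leaves implicit (duality via the converse digraph, sink-stripping to present $\mathcal{G}$ as an iterated $(-)^{+}$ of a smooth semicomplete with more than one cycle, the Novi Sad embedding theorem, monotonicity of EGP under clone inclusion on a fixed domain, and the $(-)^{+}$/$(-)^{++}$ corollary). No changes needed.
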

\begin{proof}
From \cite{ICALP2014} we know that semicomplete digraphs $\mathcal{H}$ with more than one cycle and neither a source nor a sink (smooth) have only essentially unary polymorphisms. It follows of course that id-Pol$(\mathcal{H})$ has the EGP. The result now follows from Proposition~\ref{prop:sc-EGP} (and its symmetric dual).
\end{proof}
\section{\BarnyApprovedTitleForOurStuffText}
\label{sec:from-pgp-complexity}
Throughout this section, we shall be concerned with a relational
structure $\mathcal{A}$ over a finite set $A$ of size $n$. In the few
cases when we will require $\mathcal{A}$ to have specific constants,
we shall state it explicitly. 
\subsection{Games, adversaries and reactive composition}
\label{sec:games-adversaries-reactivecomposition}
%
We recall some terminology due to
Chen~\cite{hubie-sicomp,AU-Chen-PGP}, for his natural adaptation of
the model checking game to the context of pH-sentences. 
We shall not need to explicitly play these games but only to handle
strategies for the existential player. 
An \emph{adversary} $\mathscr{B}$ of length $m\geq 1$ is an $m$-ary
relation over $A$. 
When $\mathscr{B}$ is precisely the set
$B_{1}\times B_{2} \times \ldots \times B_{m}$ for some non-empty
subsets $B_1,B_2,\ldots,B_m$ of $A$, we speak of a \emph{rectangular
  adversary}.
Let $\phi$ have universal variables $x_1,\ldots,x_m$ and quantifier-free
part $\psi$. We write $\mathcal{A}\models \phi_{\restrict\mathscr{B}}$ and say that
\emph{the existential player has a winning strategy in the
$(\mathcal{A},\phi)$-game against adversary $\mathscr{B}$} iff there
exists a set of Skolem functions $\{\sigma_x : \mbox{`$\exists x$'}
\in \phi \}$ such that for any assignment $\pi$ of the universally
quantified variables of $\varphi$ to $A$, where
$\bigl(\pi(x_1),\ldots,\pi(x_m)\bigr) \in \mathscr{B}$, the map $h_\pi$ 
is a homomorphism from $\mathcal{D}_\psi$ (the canonical database) to $\mathcal{A}$, where
$$h_\pi(x):=
\begin{cases}
  \pi(x) & \text{, if $x$ is a universal variable; and,}\\
  \sigma_x(\left.\pi\right|_{Y_x})& \text{, otherwise.}\\
\end{cases}
$$
(Here, $Y_x$ denotes the set of universal variables preceding $x$ and $\left.\pi\right|_{Y_x}$ the restriction of $\pi$ to $Y_x$.)
Clearly, $\mathcal{A} \models \phi$ iff the existential player has a winning strategy in the
$(\mathcal{A},\phi)$-game against the so-called \emph{full
  (rectangular) 
  adversary} $A\times A \times \ldots \times A$ (which
we will denote hereafter by $A^m$).
We say that an adversary $\mathscr{B}$ of length $m$ \emph{dominates}
an adversary $\mathscr{B}'$ of length $m$ when $\mathscr{B}'\subseteq
\mathscr{B}$. Note that $\mathscr{B}'\subseteq  \mathscr{B}$ and
$\mathcal{A}\models \phi_{\restrict\mathscr{B}}$ implies
$\mathcal{A}\models \phi_{\restrict\mathscr{B}'}$. 
We will also
consider sets of adversaries of the same length, denoted
by uppercase greek letters as in $\Omega_m$; and, sequences thereof, which we
denote with bold uppercase greek letters as in
$\mathbf{\Omega}=\bigl(\Omega_m\bigr)_{m \in \integerset}$. 
We will write
$\mathcal{A}\models \phi_{\restrict\Omega_m}$ to denote that
$\mathcal{A}\models  \phi_{\restrict\mathscr{B}}$ holds for every
adversary $\mathscr{B}$ in $\Omega_m$. 
We call \emph{width} of $\Omega_m$ and write $\mathrm{width}(\Omega_m)$ for $\sum_{\mathscr{B}\in
  \Omega_m} |\mathscr{B}|$.
We say that $\mathbf{\Omega}$ is \emph{polynomially bounded} if
there exists a polynomial $p(m)$ such that for every $m\geq 1$,
$\mathrm{width}(\Omega_m) \leq p(m)$. We say that $\mathbf{\Omega}$ is
\emph{effective} if there exists a polynomial $p'(m)$ and an algorithm that outputs
$\Omega_m$ for every $m$ in total time
$p'(\mathrm{width}(\Omega_m))$.
%
%

Let $f$ be a $k$-ary operation of $\mathcal{A}$ and $\mathscr{A},\mathscr{B}_1,\ldots,\mathscr{B}_k$ be adversaries of length $m$.
We say that $\mathscr{A}$ is \emph{reactively composable} from the
adversaries $\mathscr{B}_1,\ldots,\mathscr{B}_k$ via $f$, and we write $\mathscr{A} \reactivelycomposable f(\mathscr{B}_1,\ldots,\mathscr{B}_k)$ iff there exist partial functions $g^j_i:A^i \to A$ for every $i$ in $[m]$ and every $j$ in $[k]$ such that, for every tuple $(a_1,\ldots,a_m)$ in adversary $\mathscr{A}$ the following holds.
\begin{compactitem}
\item for every $j$ in $[k]$, the values 
  $g^j_1(a_1), g^j_2(a_1,a_2),$ $\ldots , g^j_m(a_1,a_2,\ldots,a_m)$ are defined and the tuple $\bigl(g^j_1(a_1), g^j_2(a_1,a_2), \ldots, g^j_m(a_1,a_2,\ldots,a_m)\bigr)$ is in adversary $\mathscr{B}_j$; and,
\item for every $i$ in $[m]$, $a_i =f\bigl(g^1_i(a_1,a_2,\ldots,a_i),$
  $g^2_i(a_1,a_2,\ldots,a_i),\ldots,g^k_i(a_1,a_2,\ldots,a_i))$.
\end{compactitem}
We write $\mathscr{A} \reactivelycomposable
\{\mathscr{B}_1,\ldots,\mathscr{B}_k\}$ if there exists a $k$-ary
operation $f$ such that $\mathscr{A} \reactivelycomposable f(\mathscr{B}_1,\ldots,\mathscr{B}_k)$

\begin{remark}
  We will never show reactive composition by exhibiting a polymorphism $f$ and  partial functions $g^i_j$ that depend on all their arguments. We will always be able to exhibit partial functions that depend only on their last argument. 
\end{remark}
Reactive composition allows to interpolate complete Skolem functions
from partial ones.
\begin{theorem}[{\cite[Theorem 7.6]{AU-Chen-PGP}}]
  \label{thm:hubieReactiveComposition}
  Let $\phi$ be a pH-sentence with $m$ universal variables. Let $\mathscr{A}$ be an adversary and $\Omega_m$ a set of adversaries, both of length $m$.
  
  If $\mathcal{A}\models \phi_{\restrict\Omega_m}$ and  $\mathscr{A} \reactivelycomposable \Omega_m$ then
  $\mathcal{A} \models \phi$.
\end{theorem}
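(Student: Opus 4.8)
The plan is to glue together, coordinate by coordinate, the winning strategies for the existential player that witness $\mathcal{A}\models\phi_{\restrict\mathscr{B}_j}$ for the adversaries $\mathscr{B}_1,\ldots,\mathscr{B}_k$ used to reactively compose $\mathscr{A}$, the glue being the operation $f$ from the reactive composition; one then checks the result is a winning strategy against $\mathscr{A}$, and in the special case $\mathscr{A}=A^m$ this reads $\mathcal{A}\models\phi$. Concretely, I would fix the universal variables $x_1,\ldots,x_m$ of $\phi$ in quantifier order, its quantifier-free part $\psi$ and the canonical database $\mathcal{D}_\psi$; let $f$ be the $k$-ary operation and $g^j_i\colon A^i\to A$ ($i\in[m]$, $j\in[k]$) the partial functions with $\mathscr{A}\reactivelycomposable f(\mathscr{B}_1,\ldots,\mathscr{B}_k)$; and for each $j$ fix Skolem functions $\{\sigma^j_x\}$ witnessing $\mathcal{A}\models\phi_{\restrict\mathscr{B}_j}$.

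Next, given an assignment $\pi$ of $x_1,\ldots,x_m$ with $\overline{\pi}:=(\pi(x_1),\ldots,\pi(x_m))\in\mathscr{A}$, I would set $\tau^j(x_i):=g^j_i(\pi(x_1),\ldots,\pi(x_i))$ for all $j$ and $i$; these are all defined by the definition of reactive composability. The key observation is that $g^j_i$ inspects only the length-$i$ prefix of its input, so for each existential variable $x$ the tuple $\tau^j\restrict_{Y_x}$ depends only on $\pi\restrict_{Y_x}$, where $Y_x$ is the set of universal variables preceding $x$. Hence one may legitimately define, for each existential $x$, a Skolem function $\sigma_x(\pi\restrict_{Y_x}):=f\bigl(\sigma^1_x(\tau^1\restrict_{Y_x}),\ldots,\sigma^k_x(\tau^k\restrict_{Y_x})\bigr)$, and the family $\{\sigma_x\}$ is the candidate winning strategy against $\mathscr{A}$.

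To verify it wins, I would fix $\pi$ with $\overline{\pi}\in\mathscr{A}$ and let $h_\pi\colon\mathcal{D}_\psi\to\mathcal{A}$ be the map it induces ($h_\pi(x_i)=\pi(x_i)$ on universal variables, $h_\pi(x)=\sigma_x(\pi\restrict_{Y_x})$ on existential ones). For each $j$, the first reactive-composability clause gives $(\tau^j(x_1),\ldots,\tau^j(x_m))\in\mathscr{B}_j$, so the map $h^j\colon\mathcal{D}_\psi\to\mathcal{A}$ built from the assignment $\tau^j$ and the strategy $\{\sigma^j_x\}$ is a homomorphism. Then $h_\pi(v)=f\bigl(h^1(v),\ldots,h^k(v)\bigr)$ for every variable $v$ of $\psi$: for universal $v=x_i$ this is precisely the second reactive-composability clause $\pi(x_i)=f\bigl(g^1_i(\ldots),\ldots,g^k_i(\ldots)\bigr)=f(\tau^1(x_i),\ldots,\tau^k(x_i))$, and for existential $v$ it is the definition of $\sigma_v$. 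Since $v\mapsto(h^1(v),\ldots,h^k(v))$ is therefore a homomorphism $\mathcal{D}_\psi\to\mathcal{A}^k$ and $f$ is a polymorphism of $\mathcal{A}$, the composite $h_\pi$ is a homomorphism $\mathcal{D}_\psi\to\mathcal{A}$, as required; so $\{\sigma_x\}$ wins the $(\mathcal{A},\phi)$-game against $\mathscr{A}$, and taking $\mathscr{A}=A^m$ yields $\mathcal{A}\models\phi$.

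The only genuine subtlety --- and the step I expect to be the crux --- is the causality of the $g^j_i$: a naive coordinatewise composition $f(\sigma^1_x,\ldots,\sigma^k_x)$ of the Skolem functions would in general force $\sigma_x$ to look at universal variables quantified after $x$, hence fail to be a valid Skolem function, which is exactly why ordinary composition of surjective operations does not suffice. It is because reactive composition supplies the $g^j_i$ with domain $A^i$ (reading only the length-$i$ prefix) that the composed strategy respects the quantifier prefix. Everything else is a routine diagram chase, and partiality of the $g^j_i$ and $\sigma^j_x$ is harmless since every map is evaluated only on tuples lying in the relevant adversaries.
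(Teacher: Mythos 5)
Your proposal is correct and follows essentially the same route as the paper's own (sketched) proof: going backwards through $f$ via the prefix-reading maps $g^j_i$ to obtain plays in each $\mathscr{B}_j$, invoking the winning strategies there, and going forward through the polymorphism $f$ on the existential witnesses, with the conjunctive-positive quantifier-free part guaranteeing the composite is a homomorphism. Your explicit identification of the causality of the $g^j_i$ (that $\tau^j\restrict_{Y_x}$ depends only on $\pi\restrict_{Y_x}$) as the crux is exactly the point the paper's sketch makes by saying the backward plays are pinpointed \emph{incrementally}.
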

As a concrete example of an interesting sequence of adversaries, consider the adversaries for the notion of
\emph{$p$-collapsibility}, which we introduced in a purely logical fashion in
the introduction. Let $p\geq 0$ be some fixed integer.
For $x$ in $A$, let $\Upsilon_{m,p,x}$ be the set of all 
rectangular 
adversaries of length $m$ with $p$ coordinates that are the set $A$
and all the other that are the fixed singleton $\{x\}$. For
$B\subseteq A$, let $\Upsilon_{m,p,B}$ be the union of $\Upsilon_{m,p,x}$ for all $x$ in $B$.
Let $\mathbf\Upsilon_{p,B}$ be the sequence of adversaries
$\Bigl(\Upsilon_{m,p,B} \Bigr)_{m \in \integerset}$.
We will define a structure $\mathcal{A}$ to be \emph{$p$-collapsible
  from source $B$} iff for every $m$ and for all pH-sentence $\phi$
with $m$ universal variable, $\mathcal{A}\models \phi_{\restrict\Upsilon_{m,p,B}}$ implies   $\mathscr{A} \models \phi$.

\subsection{The $\Pi_2$-case}
\label{sec:pi2}
For a $\Pi_2$-pH sentence, the existential player knows the values of
all universal variables beforehand, and it suffices for her to have a
winning strategy for each instantiation (and perhaps no way to
reconcile them as should be the case for an arbitrary sentence). 
This also means that considering a set of adversaries of same length
is not really relevant in this $\Pi_2$-case as we may as well consider
the union of these adversaries or the set of all their tuples (see
also statement of Corollary~\ref{cor:ppfuniondisguised}).
\begin{lemma}[\textbf{principle of union}]
  \label{lemma:union}
  Let $\Omega_m$ be a set of adversaries of length $m$ and $\phi$ a $\Pi_2$-sentence with $m$ universal variables.
  Let $\mathscr{O}_{\cup\Omega_m}:=\bigcup_{\mathscr{O}\in \Omega_m} \mathscr{O}$
  and $\Omega_{\text{tuples}}:=\{\{t\} | t \in \mathscr{O}_{\cup\Omega_m}\}$. 
  We have the following equivalence.
  $$\mathcal{A}\models \varphi_{\restrict\Omega_m} 
  \quad \iff \quad
  \mathcal{A}\models \varphi_{\restrict\mathscr{O}_{\cup\Omega_m}}
  \quad \iff \quad
  \mathcal{A}\models \varphi_{\restrict\Omega_{\text{tuples}}}$$
\end{lemma}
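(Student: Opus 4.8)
The plan is to prove the two equivalences in Lemma~\ref{lemma:union} by chasing winning strategies through the definitions, using crucially that $\phi$ is $\Pi_2$ so that Skolem functions depend on nothing (the existential player sees all universal variables up front) and hence a winning strategy against an adversary $\mathscr{B}$ is nothing more than, for each tuple $\overline{a}\in\mathscr{B}$, a homomorphism $h_{\overline{a}}$ from $\mathcal{D}_\psi$ to $\mathcal{A}$ extending the assignment $\overline{a}$ of the universal variables. In other words, $\mathcal{A}\models\phi_{\restrict\mathscr{B}}$ is equivalent to: for every $\overline{a}\in\mathscr{B}$, the partial assignment $\overline{a}$ extends to a homomorphism $\mathcal{D}_\psi\to\mathcal{A}$. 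This ``pointwise'' characterisation is the engine of the whole proof.

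First I would record this pointwise characterisation as the working reformulation. Then the first equivalence, $\mathcal{A}\models\varphi_{\restrict\Omega_m}\iff\mathcal{A}\models\varphi_{\restrict\mathscr{O}_{\cup\Omega_m}}$, is immediate: by definition the left side says $\mathcal{A}\models\varphi_{\restrict\mathscr{B}}$ for every $\mathscr{B}\in\Omega_m$, which by the pointwise characterisation says every tuple occurring in some $\mathscr{B}\in\Omega_m$ extends to a homomorphism; but the set of such tuples is exactly $\mathscr{O}_{\cup\Omega_m}$, so this says precisely $\mathcal{A}\models\varphi_{\restrict\mathscr{O}_{\cup\Omega_m}}$. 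For the second equivalence, $\mathcal{A}\models\varphi_{\restrict\mathscr{O}_{\cup\Omega_m}}\iff\mathcal{A}\models\varphi_{\restrict\Omega_{\text{tuples}}}$, note that $\Omega_{\text{tuples}}$ is the set of singleton adversaries $\{t\}$ for $t\in\mathscr{O}_{\cup\Omega_m}$, and $\mathcal{A}\models\varphi_{\restrict\{t\}}$ is by the pointwise characterisation exactly the statement that $t$ extends to a homomorphism; ranging over all $t\in\mathscr{O}_{\cup\Omega_m}$ recovers $\mathcal{A}\models\varphi_{\restrict\mathscr{O}_{\cup\Omega_m}}$.

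Thus all three conditions unwind to the single assertion ``every tuple in $\mathscr{O}_{\cup\Omega_m}$ (equivalently, in $\Omega_{\text{tuples}}$, equivalently, occurring in some member of $\Omega_m$) extends to a homomorphism $\mathcal{D}_\psi\to\mathcal{A}$'', and the lemma follows.

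I do not expect a serious obstacle here; the only point requiring a little care is spelling out why, for a $\Pi_2$-sentence, a winning strategy against an adversary decomposes into independent per-tuple strategies — i.e.\ that the set $Y_x$ of universal variables preceding an existential variable $x$ is \emph{all} of $x_1,\ldots,x_m$, so $\sigma_x$ is a function of the full assignment $\pi$ and no reconciliation across different $\pi$ is needed (this is exactly the informal remark made just before the lemma in the text). Once that is stated cleanly, the rest is a routine unfolding of definitions, so I would keep the write-up short and leave it essentially at the level of the sketch above.
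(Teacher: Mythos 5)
Your proof is correct and follows essentially the same route as the paper: the paper also reduces all three conditions to the per-tuple statement (its Fact~\ref{fac:union}), which holds precisely because for a $\Pi_2$-sentence every Skolem function sees the full universal assignment, so no reconciliation across tuples is needed. Nothing further is required.
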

Let $\mathscr{A}$ be an adversary and $\Omega_m$ a set of adversaries, both of length $m$.
We say that $\Omega_m$ \emph{generates} $\mathscr{A}$ iff for any tuple $t$ in $\mathscr{A}$, there exists a $k$-ary polymorphism $f_t$ of
$\mathcal{A}$ and tuples $t_1,\ldots,t_k$ in $\Omega_{\text{tuples}}$
such that $f_t(t_1,\ldots,t_k)=t$.
We have the following analogue of Theorem~\ref{thm:hubieReactiveComposition}.
\begin{proposition}
  \label{prop:pi2composition}
  Let $\phi$ be a $\Pi_2$-pH-sentence with $m$ universal variables. Let $\mathscr{A}$ be an adversary and $\Omega_m$ a set of adversaries, both of length $m$.
  
  If $\mathcal{A}\models \phi_{\restrict\Omega_m}$ and  $\Omega_m$
  generates $\mathscr{A}$  then
  $\mathcal{A} \models \phi_{\restrict\mathscr{A}}$.
\end{proposition}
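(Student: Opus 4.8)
The plan is to strip both the hypothesis and the conclusion down to tuple-by-tuple statements about homomorphisms into $\mathcal{A}$, and then run the standard argument that a polymorphism applied coordinatewise to a family of homomorphisms is again a homomorphism.

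First I would record the $\Pi_2$ reformulation of the game. Write $\phi = \forall x_1\cdots\forall x_m\,\exists y_1\cdots\exists y_\ell\;\psi$ with $\psi$ quantifier-free. Since every existential variable is preceded by \emph{all} universal variables, a Skolem function takes the whole universal tuple as input, so the existential player has a winning strategy against an adversary $\mathscr{B}$ iff for every $t=(t_1,\ldots,t_m)\in\mathscr{B}$ there is a homomorphism $h_t\colon\mathcal{D}_\psi\to\mathcal{A}$ with $h_t(x_i)=t_i$; the independent per-tuple homomorphisms can be glued into one strategy exactly because the prefix is $\Pi_2$. Feeding this into Lemma~\ref{lemma:union} applied to $\Omega_m$, the hypothesis $\mathcal{A}\models\phi_{\restrict\Omega_m}$ says precisely that every tuple $s\in\mathscr{O}_{\cup\Omega_m}$ extends to a homomorphism $h_s\colon\mathcal{D}_\psi\to\mathcal{A}$ (sending $x_i\mapsto s_i$), while the target $\mathcal{A}\models\phi_{\restrict\mathscr{A}}$ amounts to producing such a homomorphism for every $t\in\mathscr{A}$.

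The core step is then short. Fix $t\in\mathscr{A}$. Since $\Omega_m$ generates $\mathscr{A}$, choose a $k$-ary polymorphism $f_t$ of $\mathcal{A}$ and tuples $t_1,\ldots,t_k\in\mathscr{O}_{\cup\Omega_m}$ with $f_t(t_1,\ldots,t_k)=t$ coordinatewise, and for each $j$ take the homomorphism $h_{t_j}$ provided by the (reformulated) hypothesis. Define $h_t$ on the domain of $\mathcal{D}_\psi$ by $h_t(z):=f_t(h_{t_1}(z),\ldots,h_{t_k}(z))$. Because $f_t$ is a homomorphism $\mathcal{A}^k\to\mathcal{A}$ (hence preserves all relations of $\mathcal{A}$, and its constants), the coordinatewise composition $h_t$ is again a homomorphism $\mathcal{D}_\psi\to\mathcal{A}$, and $h_t(x_i)=f_t((t_1)_i,\ldots,(t_k)_i)=t_i$. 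So $h_t$ witnesses $\mathcal{A}\models\phi_{\restrict\{t\}}$; ranging over $t\in\mathscr{A}$ gives $\mathcal{A}\models\phi_{\restrict\mathscr{A}}$.

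I do not expect a genuine obstacle: this is the $\Pi_2$ shadow of the fact that polymorphisms propagate primitive-positive truth, i.e. the analogue of Theorem~\ref{thm:hubieReactiveComposition} but without having to interpolate partial Skolem functions. The only point needing care is the first paragraph — justifying that for a $\Pi_2$-sentence the set-of-adversaries condition really decomposes into \emph{unconstrained} single-tuple conditions, so that the hypothesis hands us a homomorphism for each tuple of $\mathscr{O}_{\cup\Omega_m}$ with no compatibility requirements between them. That is exactly the feature that makes Lemma~\ref{lemma:union} and the ``the existential player knows the values of all universal variables beforehand'' remark of \S\ref{sec:pi2} apply, and it is what breaks for general pH-sentences, where reactive composition is needed instead.
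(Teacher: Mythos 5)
Your proof is correct, and it reaches the conclusion by a more elementary route than the paper's. The paper does not redo the coordinatewise argument: it observes that ``$\Omega_m$ generates $\mathscr{A}$'' is exactly the statement that, for each $t\in\mathscr{A}$, $\{t\}\reactivelycomposable f_t(t_1,\ldots,t_k)$ with $t_1,\ldots,t_k\in\Omega_{\text{tuples}}$, where the partial functions $g^j_i$ of the definition of reactive composition are forced (each $g^j_i$ must output the $i$th entry of $t_j$, there being no choice when the target adversary is a singleton); it then invokes Theorem~\ref{thm:hubieReactiveComposition} as a black box to get $\mathcal{A}\models\phi_{\restrict\{t\}}$, and finishes, as you do, with the principle of union. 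You instead unfold everything by hand: you use the $\Pi_2$ prefix to reduce both hypothesis and conclusion to unconstrained single-tuple homomorphism-extension statements (making explicit the remark of \S~\ref{sec:pi2} that the Skolem functions see the whole universal tuple, so no compatibility between the per-tuple witnesses is required), and then apply $f_t$ coordinatewise to the witnessing homomorphisms $h_{t_1},\ldots,h_{t_k}$ — the standard ``polymorphisms preserve pp-truth'' argument. Your version is self-contained and isolates exactly where $\Pi_2$ is used, which is where the general case fails; the paper's version is shorter and deliberately routes through reactive composition, since the point of the section is that generation is the $\Pi_2$ shadow of reactive composability, a connection exploited in the unbounded case. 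Both are sound; the only implicit step in yours, that $f_t$ also respects equality atoms and constants in $\mathcal{D}_\psi$, is immediate since every polymorphism preserves the diagonal and (by the definition of the product with constants) the named elements.
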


We will construct a \emph{canonical $\Pi_2$-sentence to assert that an adversary is
  generating.} Let $\mathscr{O}$ be some adversary of length $m$.
Let $\sigma^{(m)}$ be the signature $\sigma$ expanded with a sequence of $m$ constants. For a map $\mu$ from $[m]$ to $A$, we write $\mu\in \mathscr{O}$ as shorthand for $(\mu(1),\mu(2),\ldots,\mu(m))\in \mathscr{O}$.
For some set $\Omega_m$ of adversaries of length $m$, we consider the following $\sigma^{(m)}$-structure: 
$$\bigotimes_{\mathscr{O}\in \Omega_m} \bigotimes_{\mu \in \mathscr{O}} \mathfrak{A}_{\mu}$$
where the $\sigma^{(m)}$-structure $\mathfrak{A}_{\mu}$ denotes the expansion of $\mathcal{A}$ by $m$ constants as given by the map $\mu$.
Let $\phi_{\Omega_m,\mathcal{A}}$ be the $\Pi_2$-pH-sentence\footnotemark{} created
from the canonical query of the $\sigma$-reduct of this
$\sigma^{(m)}$-structure with the $m$ constants $c_{j}$ becoming
variables $w_{j}$, universally quantified outermost, when all
\emph{constants are pairwise distinct}.
\footnotetext{For two structures $\mathcal{A}$ and $\mathcal{B}$, when
  $\Omega_m$ is $A^m$ and $m$ is $|A|^{B}$, $\mathcal{B}$ models this
  canonical sentence iff  $\QCSP(\mathcal{A}) \subseteq \QCSP(\mathcal{B})$~\cite{LICS2008}}
Otherwise, we will say that $\Omega_m$ is \emph{degenerate}, and not
define the canonical sentence.

Note that adversaries such as $\Upsilon_{m,p,B}$ corresponding to $p$-collapsibility 
are not degenerate for $p>0$, and degenerate for $p=0$.
\begin{proposition}
  \label{thm:characteringPi2}
  Let $\Omega_m$ be a set of adversaries of length $m$ that is not degenerate.
  The following are equivalent.
  \begin{romannum}
  \item for any $\Pi_2$-pH sentence $\psi$, $\mathcal{A}\models
    \psi_{\restrict\Omega_m}$ implies $\mathcal{A}\models \psi$.
    \label{pi2abstracto:logical:pi2}
  \item for any $\Pi_2$-pH sentence $\psi$, $\mathcal{A}\models \psi_{\restrict\mathscr{O}_{\cup\Omega}}$ implies $\mathcal{A}\models \psi$.
  \item for any $\Pi_2$-pH sentence $\psi$, $\mathcal{A}\models \psi_{\restrict\Omega_{\text{tuples}}}$ implies $\mathcal{A}\models \psi$.
  \item $\mathcal{A}\models
    \phi_{\mathscr{O}_{\cup\Omega},\mathcal{A}}$
    \label{pi2abstracto:canonical:pi2}
  \item $\mathcal{A}\models \phi_{\Omega_{\text{tuples}},\mathcal{A}}$
  \item $\Omega_m$ generates $A^m$.\label{pi2abstracto:algebraic:pi2}
  \end{romannum}
\end{proposition}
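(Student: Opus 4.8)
\textbf{Proof proposal for Proposition~\ref{thm:characteringPi2}.}

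The plan is to prove the cycle of implications
\ref{pi2abstracto:logical:pi2} $\Rightarrow$ (ii) $\Rightarrow$ (iii) $\Rightarrow$ \ref{pi2abstracto:algebraic:pi2} $\Rightarrow$ (v) $\Rightarrow$ \ref{pi2abstracto:canonical:pi2} $\Rightarrow$ \ref{pi2abstracto:logical:pi2}, invoking the principle of union (Lemma~\ref{lemma:union}) to collapse the three ``logical'' statements and the three ``canonical/algebraic'' ones into essentially two pairs. First I would observe that the equivalences \ref{pi2abstracto:logical:pi2} $\iff$ (ii) $\iff$ (iii) are immediate from Lemma~\ref{lemma:union}, since $\mathcal{A}\models\psi_{\restrict\Omega_m}$, $\mathcal{A}\models\psi_{\restrict\mathscr{O}_{\cup\Omega_m}}$ and $\mathcal{A}\models\psi_{\restrict\Omega_{\mathrm{tuples}}}$ are all the same statement for a $\Pi_2$-sentence $\psi$; and likewise \ref{pi2abstracto:canonical:pi2} $\iff$ (v) because $\mathscr{O}_{\cup\Omega_m}$ and $\Omega_{\mathrm{tuples}}$ have the same set of tuples, hence give rise to the same product structure $\bigotimes_{\mu}\mathfrak{A}_\mu$ up to duplicated factors, which does not affect the canonical $\Pi_2$-sentence. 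So the real content is the equivalence of the three clusters: ``$\Pi_2$-logical closure'', ``$\Omega_m$ generates $A^m$'', and ``$\mathcal{A}\models\phi_{\mathscr{O}_{\cup\Omega_m},\mathcal{A}}$''.

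For (iii) $\Rightarrow$ \ref{pi2abstracto:algebraic:pi2}, i.e.\ that $\Pi_2$-logical closure forces $\Omega_m$ to generate $A^m$: I would unwind what it means for $\Omega_m$ \emph{not} to generate $A^m$. By definition there is a tuple $t\in A^m$ that is not in the subalgebra of $\mathbb{A}^m$ generated by $\Omega_{\mathrm{tuples}}$, where $\mathbb{A}=\mathrm{id\mbox{-}Pol}(\mathcal{A})$ acting on $A$ — but note $\mathcal{A}$ need not have all constants, so strictly ``generate'' here is phrased via existence of a polymorphism, and the subalgebra in question is that generated by the tuples in $\Omega_{\mathrm{tuples}}$ under the (not necessarily idempotent) polymorphisms of $\mathcal{A}$. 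The obstruction tuple $t$ fails to be generated; by Lemma~\ref{lem:basic} (the pp-definability/homomorphism criterion) there is then a pp-formula $\theta(v_1,\ldots,v_m)$ true on $(\mathcal{A}^{k};\underline{u}_1,\ldots,\underline{u}_m)$ for the stacked columns of an enumeration of $\Omega_{\mathrm{tuples}}$ but false at $(\mathcal{A};t^1,\ldots,t^m)$. Then $\psi:=\forall v_1\ldots v_m\ \theta(v_1,\ldots,v_m)$ is a $\Pi_2$-pH sentence witnessing $\mathcal{A}\models\psi_{\restrict\Omega_{\mathrm{tuples}}}$ (each generator tuple satisfies $\theta$, since homomorphic images under polymorphisms preserve pp-formulas — the ``it follows of course'' remark right after the picture in the excerpt) while $\mathcal{A}\not\models\psi$ (witnessed by $t$), contradicting (iii). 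The converse direction \ref{pi2abstracto:algebraic:pi2} $\Rightarrow$ \ref{pi2abstracto:logical:pi2} is exactly Proposition~\ref{prop:pi2composition} applied with $\mathscr{A}=A^m$: if $\Omega_m$ generates $A^m$ and $\mathcal{A}\models\psi_{\restrict\Omega_m}$, then $\mathcal{A}\models\psi_{\restrict A^m}$, which is just $\mathcal{A}\models\psi$.

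Finally I would close the loop through the canonical sentence. For \ref{pi2abstracto:algebraic:pi2} $\Rightarrow$ \ref{pi2abstracto:canonical:pi2}: the sentence $\phi_{\mathscr{O}_{\cup\Omega_m},\mathcal{A}}$ is, by construction, the canonical query of the $\sigma$-reduct of $\bigotimes_{\mu\in\mathscr{O}_{\cup\Omega_m}}\mathfrak{A}_\mu$ with the $m$ distinguished constants turned into outermost universal variables; checking $\mathcal{A}\models\phi_{\mathscr{O}_{\cup\Omega_m},\mathcal{A}}$ means that for every assignment $\overline{w}\mapsto\overline{a}\in A^m$ of the universal block, the existential part is satisfiable, which by the canonical-query/homomorphism correspondence is equivalent to: there is a homomorphism from $\bigotimes_\mu\mathfrak{A}_\mu$ (as a $\sigma^{(m)}$-structure) to $\mathfrak{A}_{\overline{a}}$, i.e.\ a polymorphism $f$ of $\mathcal{A}$ of arity $|\mathscr{O}_{\cup\Omega_m}|$ sending, for each coordinate $j$, the column $(\mu(j))_{\mu}$ to $a_j$ — precisely the statement that $\Omega_m$ (equivalently $\Omega_{\mathrm{tuples}}$) generates the tuple $\overline{a}$. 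Quantifying over all $\overline{a}$ gives ``$\Omega_m$ generates $A^m$'', so this step is actually a biconditional and simultaneously yields \ref{pi2abstracto:canonical:pi2} $\Rightarrow$ \ref{pi2abstracto:algebraic:pi2}. Since we already have \ref{pi2abstracto:algebraic:pi2} $\iff$ \ref{pi2abstracto:logical:pi2}, the cycle is complete. The main obstacle I anticipate is the bookkeeping in this last step: the canonical $\Pi_2$-sentence is built from a large product over \emph{all} $\mu\in\mathscr{O}_{\cup\Omega_m}$, and one must carefully check that (a) the ``pairwise distinct constants'' hypothesis (non-degeneracy) is exactly what is needed for the $m$ constants to survive as genuinely universal variables rather than collapsing, and (b) that a homomorphism out of the full product structure corresponds to a \emph{single} polymorphism witnessing generation of the given tuple, matching the picture-with-$f_{\underline{y}}$ formalism from the preamble to Lemma~\ref{lem:basic}. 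Both (a) and (b) are essentially the content of the analogous construction in \cite{LICS2008}, so I would lean on that; the $\Pi_2$ restriction makes everything go through coordinatewise without any need to reconcile Skolem functions across different universal assignments.
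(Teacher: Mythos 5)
Your proposal is correct, and its skeleton largely coincides with the paper's: both dispose of (i)$\iff$(ii)$\iff$(iii) via the principle of union, of (iv)$\iff$(v) by noting the two canonical sentences coincide, and of (vi)$\implies$(i) via Proposition~\ref{prop:pi2composition}. Where you genuinely diverge is in the ``logical implies algebraic'' leg. The paper goes (iii)$\implies$(v)$\implies$(vi): it observes that the canonical $\Pi_2$-sentence is \emph{automatically} true when restricted to $\Omega_{\text{tuples}}$, because against the singleton adversary $\{t\}$ the projection of the product structure onto the factor $\mathfrak{A}_{\mu_t}$ is the required homomorphism; then (iii) upgrades this to full truth, and instantiating the universal block at $t$ extracts the generating polymorphism $f_t$. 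You instead prove (iii)$\implies$(vi) directly by contraposition: if some $t$ is not generated, Lemma~\ref{lem:basic} yields a pp-formula true on the product at the columns (hence, via the projections, at every generator tuple) but false at $t$, and its universal closure is a $\Pi_2$-pH sentence refuting (iii). This is sound --- it is essentially the Galois-correspondence argument the paper deploys elsewhere (Lemma~\ref{lem:method} / Corollary~\ref{cor:ppfuniondisguised} for the EGP results) --- and it has the side effect of making items (iv)/(v) a detachable equivalence rather than a stepping stone, which you then restore with your (iv)$\iff$(vi) biconditional (whose (iv)$\implies$(vi) half is exactly the paper's (v)$\implies$(vi) step). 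What the paper's route buys is economy: it never needs Lemma~\ref{lem:basic}, and the self-satisfiability-via-projection observation about the canonical sentence is reused verbatim in Lemma~\ref{lem:AdversariesWinnableOnCanonicalSentence} for the unbounded case. What your route buys is a more transparent explanation of \emph{why} failure to generate is logically detectable, without ever mentioning the canonical sentence. Your side remarks on non-degeneracy and on padding an arbitrary-arity generating polymorphism up to arity $|\Omega_{\text{tuples}}|$ by composing with projections are exactly the right points to check, and both go through.
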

\subsection{The unbounded case}
\label{sec:unbounded}
Let $n$ denote the number of elements of the structure
$\mathcal{A}$. Let $\mathscr{B}$ be an adversary from
$\Omega_{n\cdot m}$. We will denote by $\text{Proj}\mathscr{B}$ the set of
adversaries of  length $m$ induced by projecting over some arbitrary
choice of $m$ coordinates, one in each block of size $n$; that is
$1\leq i_1\leq n,n+1\leq i_2\leq 2\cdot n, \ldots, n\cdot (m-1)+ 1\leq
i_m\leq n \cdot m$. 
Of special concern to us are \emph{projective sequences of adversaries}
$\mathbf{\Omega}$ satisfying the following  for every $m\geq 1$,
\begin{equation*}
  \forall \mathscr{B} \in \Omega_{n.m} \,\,
  \exists \mathscr{A} \in \Omega_{m} \,\,
  \bigwedge_{\widetilde{\mathscr{B}} \in \text{Proj}\mathscr{B}} 
  \widetilde{\mathscr{B}} \subseteq \mathscr{A} \,\,\,\,
  \text{($m$-\textbf{projectivity})}
\end{equation*}
As an example, consider the adversaries for collapsibility.
\begin{fact} 
  \label{fact:collapsible:adversary:is:projective}
  Let $B \subseteq A$ and $p\geq 0$.
  The sequence of adversaries $\mathbf\Upsilon_{p,B}$ 
  are projective.
\end{fact}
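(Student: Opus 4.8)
The plan is to unwind the two definitions and, for each adversary in $\Upsilon_{n\cdot m,p,B}$, exhibit a single adversary in $\Upsilon_{m,p,B}$ that dominates all of its projections. First I would fix notation: for $x\in B$ and $S\subseteq[m]$ with $|S|=p$, write $\mathscr{A}_{x,S}$ for the rectangular adversary $\prod_{i\in[m]}B_i$ with $B_i=A$ if $i\in S$ and $B_i=\{x\}$ otherwise, so that $\Upsilon_{m,p,x}=\{\mathscr{A}_{x,S}:|S|=p\}$ and $\Upsilon_{m,p,B}=\bigcup_{x\in B}\Upsilon_{m,p,x}$. The only elementary fact I need about this family is monotonicity: if $S'\subseteq S$ then $\mathscr{A}_{x,S'}\subseteq\mathscr{A}_{x,S}$ (compare coordinatewise, using $\{x\}\subseteq A$); hence $\Upsilon_{m,p,B}$ already dominates every rectangular length-$m$ adversary having at most $p$ coordinates equal to $A$ and every other coordinate equal to a common singleton from $B$. (The case $m\le p$ is immediate, since there $\Upsilon_{m,p,B}$ is understood to contain $A^m$, which dominates everything; so assume $m>p$.)

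Next I would take an arbitrary $\mathscr{B}=\mathscr{A}_{x,T}\in\Upsilon_{n\cdot m,p,B}$ with $T\subseteq[n\cdot m]$, $|T|=p$, and partition $[n\cdot m]$ into the $m$ consecutive blocks of size $n$. Let $K:=\{k\in[m]:\text{the }k\text{-th block meets }T\}$; as the blocks are disjoint, $|K|\le|T|=p$. Any $\widetilde{\mathscr{B}}\in\text{Proj}\,\mathscr{B}$ is obtained by choosing one coordinate $i_k$ in block $k$ for each $k$, and its $k$-th factor equals $A$ exactly when $i_k\in T$, which forces $k\in K$; so $\widetilde{\mathscr{B}}=\mathscr{A}_{x,S_{\widetilde{\mathscr{B}}}}$ for some $S_{\widetilde{\mathscr{B}}}\subseteq K$. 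Now pick any $K'$ with $K\subseteq K'\subseteq[m]$ and $|K'|=p$ (possible since $|K|\le p\le m$), and set $\mathscr{A}:=\mathscr{A}_{x,K'}\in\Upsilon_{m,p,x}\subseteq\Upsilon_{m,p,B}$. By the monotonicity observation, $\widetilde{\mathscr{B}}=\mathscr{A}_{x,S_{\widetilde{\mathscr{B}}}}\subseteq\mathscr{A}_{x,K'}=\mathscr{A}$ for every $\widetilde{\mathscr{B}}\in\text{Proj}\,\mathscr{B}$. This is precisely $m$-projectivity, and it holds for all $m\ge1$, proving $\mathbf\Upsilon_{p,B}$ is projective.

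The argument is pure bookkeeping, so I do not expect a genuine obstacle; the one point I would be most careful about is that the \emph{same} $\mathscr{A}$ must dominate \emph{all} projections of $\mathscr{B}$ at once, even though distinct projections may place their unrestricted ($A$-valued) coordinates in different positions. This is handled by passing from these varying position sets to their common superset $K$, which one can do precisely because $|T|=p$ is a fixed constant independent of $m$: $K$ therefore still has size at most $p$ and can be padded up to the size-$p$ set $K'$ demanded by the definition of $\Upsilon_{m,p,B}$. I would note explicitly that the same constant $x\in B$ carries over from $\mathscr{B}$ to $\mathscr{A}$, so membership in $\Upsilon_{m,p,B}$ is not in doubt.
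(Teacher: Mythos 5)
Your proof is correct and matches the approach the paper takes: the paper offers no formal proof of this Fact, only a worked example, and your construction (place $A$ in exactly those blocks of $[m]$ that meet the $A$-positions of $\mathscr{B}$, pad up to a size-$p$ set, keep the same source element $x$, and use monotonicity of the rectangular adversaries) is precisely what that example illustrates. Your explicit handling of the degenerate case $m\le p$ and your remark that a single $\mathscr{A}$ must dominate all projections simultaneously are careful touches the paper leaves implicit.
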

\begin{example}
  For a concrete illustration consider $A=\{0,1,2\}$ (thus $n=3$).
  We illustrate the fact that $\mathbf\Upsilon_{p=2,B=\{0\}}$ is
  projective for $m=4$ and some adversary $\mathscr{B}\in \Omega_{n\cdot
      m}=\Upsilon_{p=2,B=\{0\},3\cdot 4=12}$. Adversaries are depicted
    vertically with horizontal lines separating the blocks.
  $${\scriptsize \begin{array}{c|cccc|c}
    \mathscr{B}\in \Omega_{n\cdot
      m}&\multicolumn{4}{c|}{\text{Proj}\mathscr{B}}& \mathscr{A} \in
    \Omega_{m}\\
    \hline
    \hline
    A &A          &A          &      &\xcancel{A} &  \\
    0 &\xcancel{0}&\xcancel{0}&\ldots&\xcancel{0} & A\\
    0 &\xcancel{0}&\xcancel{0}&      &0           &  \\
    \hline
    0 &0          &0          &      &\xcancel{0}&  \\
    0 &\xcancel{0}&\xcancel{0}&\ldots&\xcancel{0}& 0\\
    0 &\xcancel{0}&\xcancel{0}&      &0          &  \\
    \hline
    0 &0          &0          &      &\xcancel{0}&  \\
    0 &\xcancel{0}&\xcancel{0}&\ldots&\xcancel{0}& 0\\
    0 &\xcancel{0}&\xcancel{0}&      &0          &  \\
    \hline
    0 &0          &\xcancel{0}&      &\xcancel{0}&  \\
    A &\xcancel{A}&A          &\ldots&\xcancel{A}& A\\
    0 &\xcancel{0}&\xcancel{0}&      &0          &  \\
  \end{array}}$$
The adversary $\mathscr{A}$ dominates any adversary obtained by
projecting the original larger adversary $\mathscr{B}$ by keeping a
single position per block. 
\end{example}
We could actually consider w.l.o.g. sequences of \emph{singleton}
adversaries.
\begin{fact}
  If $\mathbf{\Omega}$ is projective then so is the sequence
  $\bigl(\bigcup_{\mathscr{O}\in\Omega_m} \mathscr{O}\bigr)_{m \in  \integerset}$.
\end{fact}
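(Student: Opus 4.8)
The plan is to unwind the definitions and lean on the single structural fact that coordinate projection of a relation distributes over unions. Write $\mathscr{O}_{\cup\Omega_m}:=\bigcup_{\mathscr{O}\in\Omega_m}\mathscr{O}$ as in Lemma~\ref{lemma:union}, and let $\Omega'_m$ denote the singleton set of adversaries $\{\mathscr{O}_{\cup\Omega_m}\}$; the sequence in the statement is $\mathbf{\Omega}'=(\Omega'_m)_{m\in\integerset}$ (this is the reading that makes "projective sequence of adversaries" applicable, and is in keeping with the remark that one may restrict w.l.o.g. to singleton adversaries). Fix $m\geq 1$. Since $\Omega'_{n\cdot m}$ has exactly one element, establishing $m$-projectivity of $\mathbf{\Omega}'$ amounts to showing that for every choice of coordinates $1\leq i_1\leq n$, $n+1\leq i_2\leq 2n$, $\ldots$, $n(m-1)+1\leq i_m\leq n\cdot m$, the projection of $\mathscr{O}_{\cup\Omega_{n\cdot m}}$ onto $\overline{i}=(i_1,\ldots,i_m)$ is dominated by $\mathscr{O}_{\cup\Omega_m}$.

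First I would record the elementary fact that for any family $\{\mathscr{B}_\lambda\}_\lambda$ of adversaries of length $n\cdot m$ and any such coordinate choice $\overline{i}$, the projection $\pi_{\overline{i}}$ satisfies $\pi_{\overline{i}}\bigl(\bigcup_\lambda\mathscr{B}_\lambda\bigr)=\bigcup_\lambda\pi_{\overline{i}}(\mathscr{B}_\lambda)$ — a one-line check on tuples. Applied to the family $\Omega_{n\cdot m}$ this gives $\pi_{\overline{i}}(\mathscr{O}_{\cup\Omega_{n\cdot m}})=\bigcup_{\mathscr{B}\in\Omega_{n\cdot m}}\pi_{\overline{i}}(\mathscr{B})$, where each $\pi_{\overline{i}}(\mathscr{B})$ is, by construction of $\text{Proj}$, one of the adversaries in $\text{Proj}\mathscr{B}$ (with $\text{Proj}$ taken with respect to the same partition of the $n\cdot m$ coordinates into $m$ blocks of size $n$).

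Next I would invoke $m$-projectivity of $\mathbf{\Omega}$: for each $\mathscr{B}\in\Omega_{n\cdot m}$ there is $\mathscr{A}_{\mathscr{B}}\in\Omega_m$ with $\widetilde{\mathscr{B}}\subseteq\mathscr{A}_{\mathscr{B}}$ for every $\widetilde{\mathscr{B}}\in\text{Proj}\mathscr{B}$; in particular $\pi_{\overline{i}}(\mathscr{B})\subseteq\mathscr{A}_{\mathscr{B}}\subseteq\mathscr{O}_{\cup\Omega_m}$. Taking the union over $\mathscr{B}\in\Omega_{n\cdot m}$ yields $\pi_{\overline{i}}(\mathscr{O}_{\cup\Omega_{n\cdot m}})\subseteq\mathscr{O}_{\cup\Omega_m}$, which is exactly the required domination. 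Since $\overline{i}$ was arbitrary, $\mathbf{\Omega}'$ satisfies $m$-projectivity, and since $m$ was arbitrary, $\mathbf{\Omega}'$ is projective.

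There is no genuine obstacle here: the whole content is the commutation of coordinate projection with union, together with the trivial observation that $\mathscr{O}_{\cup\Omega_m}$ dominates every member of $\Omega_m$. The only point needing a modicum of care is bookkeeping the block structure, i.e.\ checking that the $\text{Proj}$ operator used for $\mathbf{\Omega}'$ and the one used for $\mathbf{\Omega}$ refer to the same splitting of $[n\cdot m]$ into $m$ consecutive blocks of size $n$, so that the projection coordinates $\overline{i}$ valid for one are valid for the other.
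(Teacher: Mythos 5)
Your proof is correct: the paper states this as a \emph{Fact} with no proof at all (neither in the main text nor the appendix), and your argument --- projection distributes over union, each projected piece lands inside the $\mathscr{A}_{\mathscr{B}}$ supplied by projectivity of $\mathbf{\Omega}$, and every such $\mathscr{A}_{\mathscr{B}}$ sits inside $\mathscr{O}_{\cup\Omega_m}$ --- is exactly the routine verification the authors are implicitly relying on. Your care in reading the statement as the sequence of \emph{singleton} sets of adversaries, and in checking that the block structure for $\text{Proj}$ is the same on both sides, is appropriate and does not hide any gap.
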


A \emph{canonical sentence for composability for arbitrary pH-sentences} with $m$ universal variables
may be constructed similarly to the canonical sentence for the $\Pi_2$ case, except that it will have $m.n$ universal variables, which we view as $m$ blocks of $n$ variables, where $n$ is the number of elements of the structure $\mathcal{A}$.
Let $\mathscr{O}$ be some adversary of length $m$.
Let $\sigma^{(n\cdot m)}$ be the signature $\sigma$ expanded with a sequence of $n.m$ constants $c_{1,1},\ldots, c_{n,1}, c_{1,2} \ldots, c_{n,2}, \ldots c_{1,m} \ldots, c_{n,m}$. We say that a map $\mu$ from $[n]\times [m]$ to $A$ is \emph{consistent} with $\mathscr{O}$ iff for every $(i_1,i_2,\ldots,i_m)$ in $[n]^m$, the tuple $(\mu(i_1,1),\mu(i_2,2),\ldots,\mu(i_m,m))$ belongs to the adversary $\mathscr{O}$. We write $A^{[n.m]}_{\restrict \mathscr{O}}$ for the set of such consistent maps.
For some set $\Omega_m$ of adversaries of length $m$, we consider the following $\sigma^{(n.m)}$-structure: 
$$\bigotimes_{\mathscr{O}\in \Omega_m} \bigotimes_{\mu \in A^{[n.m]}_{\restrict \mathscr{O}}} \mathfrak{A}_{\mathscr{O},\mu}$$
where the $\sigma^{(n\cdot m)}$-structure $\mathfrak{A}_{\mathscr{O},\mu}$ denotes the expansion of $\mathcal{A}$ by $n.m$ constants as given by the map $\mu$.
Let $\phi_{n,\Omega_m,\mathcal{A}}$ be the $\Pi_2$-pH-sentence created from the canonical query of the $\sigma$-reduct of this $\sigma^{(n.m)}$ product structure with the $n.m$ constants $c_{ij}$ becoming variables $w_{ij}$, universally quantified outermost.
As for the canonical sentence of the $\Pi_2$-case, this sentence is
not well defined if constants are not pairwise distinct, which occurs precisely
for degenerate adversaries. 
\begin{lemma}
  \label{lem:CanonicalSentenceImpliesReactiveComposability}
  Let $\Omega_m$ be a set of adversaries of length $m$ that is not degenerate. Let $\mathcal{A}$ be a structure of size $n$.
  If $\mathcal{A}$ models $\phi_{n,\Omega_m,\mathcal{A}}$ then the full adversary $A^m$ is reactively composable from $\Omega_m$.
  That is,
  $\mathcal{A} \models \phi_{n,\Omega_m,\mathcal{A}} \quad \implies \quad A^m \reactivelycomposable \Omega_m$
\end{lemma}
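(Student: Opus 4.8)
The plan is to strip the definition of $\phi_{n,\Omega_m,\mathcal{A}}$ down to a bare statement about existence of polymorphisms, and then to hand-build the reactive composition. Write $S:=\{(\mathscr{O},\mu):\mathscr{O}\in\Omega_m,\ \mu\in A^{[n\cdot m]}_{\restrict\mathscr{O}}\}$ and $k:=|S|$, and note $S\neq\emptyset$: for any $\mathscr{O}\in\Omega_m$ and any $t\in\mathscr{O}$, the map sending every $(i,j)$ to $t_j$ is consistent with $\mathscr{O}$. Let $\mathfrak{D}:=\bigotimes_{(\mathscr{O},\mu)\in S}\mathfrak{A}_{\mathscr{O},\mu}$. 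First I would record the three structural facts that drive everything: the $\sigma$-reduct of $\mathfrak{D}$ is literally $\mathcal{A}^k$; the constant $c_{i,j}$ is interpreted in $\mathfrak{D}$ by the $k$-tuple whose $(\mathscr{O},\mu)$-entry equals $\mu(i,j)$; and, since $\Omega_m$ is not degenerate, the $n\cdot m$ elements $c_{i,j}^{\mathfrak{D}}$ are pairwise distinct, so $\phi_{n,\Omega_m,\mathcal{A}}$ is a genuine $\Pi_2$-pH sentence $\forall(w_{i,j})_{i\in[n],j\in[m]}\,\exists\cdots$. Unwinding the canonical-query construction together with the fact that $\mathcal{A}\models\exists\bar x\,\psi(\bar a,\bar x)$ iff there is a homomorphism from the canonical database of $\psi$ to $\mathcal{A}$ extending $\bar a$, this yields: $\mathcal{A}\models\phi_{n,\Omega_m,\mathcal{A}}$ iff for every map $\bar a\colon[n]\times[m]\to A$ there is a $k$-ary polymorphism $f_{\bar a}$ of $\mathcal{A}$ (a homomorphism $\mathcal{A}^k\to\mathcal{A}$) with $f_{\bar a}(c_{i,j}^{\mathfrak{D}})=\bar a(i,j)$ for all $i,j$.

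Next I would fix an enumeration $a^{(1)},\dots,a^{(n)}$ of $A$ and apply this with the \emph{diagonal} assignment $\bar a(i,j):=a^{(i)}$, obtaining one $k$-ary polymorphism $f$ with $f(c_{i,j}^{\mathfrak{D}})=a^{(i)}$ for all $i\in[n],j\in[m]$ --- in particular the value is independent of the block $j$. For each $(\mathscr{O},\mu)\in S$ set $\mathscr{B}_{(\mathscr{O},\mu)}:=\mathscr{O}\in\Omega_m$, and for $i\in[m]$ define $g^{(\mathscr{O},\mu)}_i\colon A^i\to A$ by $g^{(\mathscr{O},\mu)}_i(a_1,\dots,a_i):=\mu(\mathrm{ind}(a_i),i)$, where $\mathrm{ind}(a)$ is the index with $a=a^{(\mathrm{ind}(a))}$; this depends only on the last argument, as anticipated in the Remark. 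I would then verify that these data witness $A^m\reactivelycomposable f(\mathscr{B}_1,\dots,\mathscr{B}_k)$: for any $(a_1,\dots,a_m)\in A^m$ and any column $(\mathscr{O},\mu)$, the tuple $\bigl(\mu(\mathrm{ind}(a_1),1),\dots,\mu(\mathrm{ind}(a_m),m)\bigr)$ lies in $\mathscr{O}$ --- this is precisely the statement that $\mu$ is consistent with $\mathscr{O}$, read at $(i_1,\dots,i_m)=(\mathrm{ind}(a_1),\dots,\mathrm{ind}(a_m))\in[n]^m$; and for each $i\in[m]$, the $k$-tuple of values $g^{(\mathscr{O},\mu)}_i(a_1,\dots,a_i)=\mu(\mathrm{ind}(a_i),i)$, as $(\mathscr{O},\mu)$ ranges over $S$, is exactly $c_{\mathrm{ind}(a_i),i}^{\mathfrak{D}}$, on which $f$ returns $a^{(\mathrm{ind}(a_i))}=a_i$. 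Since every $\mathscr{B}_j$ is in $\Omega_m$, this gives $A^m\reactivelycomposable\Omega_m$, as required.

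I expect the only delicate point to be the first paragraph: getting the translation ``$\mathcal{A}\models\phi_{n,\Omega_m,\mathcal{A}}$ $\Leftrightarrow$ the family of polymorphisms $f_{\bar a}$ exists'' exactly right --- i.e. correctly identifying the $\sigma$-reduct of the nested tensor product with $\mathcal{A}^k$, pinning down the interpretations $c_{i,j}^{\mathfrak{D}}$, and invoking non-degeneracy so that the $n\cdot m$ universal variables are honest-to-goodness distinct variables. Everything downstream is mechanical: the diagonal assignment is forced by the fact that we need a single $f$ that decodes $c_{\cdot,i}^{\mathfrak{D}}$ uniformly across all blocks $i$, the membership condition for reactive composition is nothing but the definition of ``consistent with $\mathscr{O}$'', and the composition identity is nothing but the defining property of $f$.
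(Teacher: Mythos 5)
Your proof is correct and follows essentially the same route as the paper's: instantiate the universal variables of the canonical sentence by the diagonal assignment $w_{i,j}=a^{(i)}$, read off a single $k$-ary polymorphism from the existential witnesses, and let the partial maps $g^{(\mathscr{O},\mu)}_i$ be determined by $\mu$ through the last coordinate only, with consistency of $\mu$ with $\mathscr{O}$ giving the membership condition. Your first paragraph merely spells out more explicitly the translation between satisfaction of the canonical sentence and polymorphisms with prescribed values on the constants, which the paper leaves implicit.
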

\begin{proof}
  We let each block of $n$ universal variables of the canonical
  sentence $\phi_{n,\Omega_m,\mathcal{A}}$ enumerate the elements of $A$.
  That is, given an enumeration $a_1,a_2,\ldots,a_n$ of $A$, we set $w_{i,j}=a_i$ for every $j$ in $[m]$ and every $i$ in $[n]$.  

  The assignment to the existential variables provides us with a
  $k$-ary polymorphism (the sentence being built as the conjunctive
  query of a product of $k$ copies of $\mathcal{A}$) together with the
  desired partial maps. A coordinate $r$ in $[k]$ corresponds to a
  choice of some adversary $\mathscr{O}$ of $\Omega_m$ and some map
  $\mu_r$ from $[n]\times[m]$ to $A$, consistent with this adversary. The partial map $g^r_\ell:A^\ell\to A$ with $\ell$ in $[m]$ (and $r$ in $[k]$) is given by $\mu_r$ as follows:
  $g^r_\ell(a_{i_1},\ldots,a_{i_\ell})$ depends only on the last coordinate $a_{i_\ell}$ and takes value $\mu(i,\ell)$ if $a_{i_\ell}=a_i$.
  By construction of the sentence and the property of consistency of such $\mu_r$ with the adversary $\mathscr{O}$, these partial functions satisfy the properties as given in the definition of reactive composition.
\end{proof}
\begin{lemma} 
  Let $\mathbf{\Omega}$ be a sequence of sets of adversaries that has
  the $m$-projectivity property for some $m\geq 1$ such that
  $\Omega_{n\cdot m}$ is not degenerate. The following holds.
  \label{lem:AdversariesWinnableOnCanonicalSentence}
  \begin{romannum}
  \item 
    $\mathcal{A}\models \psi_{\restrict \Omega_{\mathbf{n.m}}}, \text{ where } \psi={\phi_{n,\Omega_\mathbf{m},\mathcal{A}}}$
  \item If for every $\Pi_2$-sentence $\psi$ with $m.n$ universal variables, it holds that $\mathcal{A}\models \psi_{\restrict \Omega_{\mathbf{m.n}}}$ implies $\mathcal{A}\models \psi$, then
    $\mathcal{A}\models \phi_{n,\Omega_\mathbf{m},\mathcal{A}}$.
  \end{romannum}
\end{lemma}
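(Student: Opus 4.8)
The plan is to establish part~(i) directly; part~(ii) is then immediate, since $\psi:=\phi_{n,\Omega_m,\mathcal{A}}$ is itself a $\Pi_2$-pH-sentence with exactly $n\cdot m$ universal variables, so part~(i) yields $\mathcal{A}\models\psi_{\restrict\Omega_{n\cdot m}}$ and the hypothesis of~(ii), applied to this particular $\psi$, gives $\mathcal{A}\models\psi$. So for part~(i) I would fix an adversary $\mathscr{B}\in\Omega_{n\cdot m}$ and exhibit a winning strategy for the existential player in the $(\mathcal{A},\psi)$-game against $\mathscr{B}$. Because $\psi$ is $\Pi_2$, every existential variable is preceded by all $n\cdot m$ universal variables, so (as noted in the preamble to Lemma~\ref{lemma:union}) it suffices to produce, for each assignment $\pi$ of the universal variables $w_{ij}$ whose image tuple lies in $\mathscr{B}$, a homomorphism $h_\pi$ from the canonical database $\mathcal{D}_\psi$ to $\mathcal{A}$ that agrees with $\pi$ on the $w_{ij}$; the Skolem functions are then recovered by setting $\sigma_x(\pi):=h_\pi(x)$.

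Next I would unwind the construction of $\mathcal{D}_\psi$: it is the $\sigma$-reduct of the product structure $\mathfrak{P}:=\bigotimes_{\mathscr{O}\in\Omega_m}\bigotimes_{\mu\in A^{[n.m]}_{\restrict\mathscr{O}}}\mathfrak{A}_{\mathscr{O},\mu}$, in which the universal variable $w_{ij}$ is the domain element whose $(\mathscr{O},\mu)$-coordinate equals $\mu(i,j)$. For any pair $(\mathscr{O}_0,\mu_0)$ indexing this product, the coordinate projection $\mathrm{pr}_{(\mathscr{O}_0,\mu_0)}\colon\mathfrak{P}\to\mathcal{A}$ is a homomorphism of $\sigma$-structures, and it sends $w_{ij}$ to $\mu_0(i,j)$. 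Identifying the assignment $\pi$ with the map $[n]\times[m]\to A$, $(i,j)\mapsto\pi(w_{ij})$, it therefore suffices to show that $\pi$ is \emph{consistent} (in the sense defining $A^{[n.m]}_{\restrict\mathscr{O}}$) with some adversary $\mathscr{O}_0\in\Omega_m$: then $h_\pi:=\mathrm{pr}_{(\mathscr{O}_0,\pi)}$ is the required homomorphism extending $\pi$. (That $\psi$ is a well-defined pH-sentence at all is ensured by the standing hypothesis that $\Omega_{n\cdot m}$ is not degenerate, cf.\ the discussion preceding the lemma.)

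This is exactly where $m$-projectivity enters: I would apply it to $\mathscr{B}$ to obtain $\mathscr{A}\in\Omega_m$ with $\widetilde{\mathscr{B}}\subseteq\mathscr{A}$ for every $\widetilde{\mathscr{B}}\in\mathrm{Proj}\,\mathscr{B}$, and then take $\mathscr{O}_0:=\mathscr{A}$. For an arbitrary $(i_1,\dots,i_m)\in[n]^m$, the tuple $(\pi(i_1,1),\dots,\pi(i_m,m))$ is obtained from the tuple $\pi\in\mathscr{B}$ by retaining exactly one coordinate, namely coordinate $i_j$, in each of the $m$ consecutive blocks of length $n$; hence it lies in the corresponding member of $\mathrm{Proj}\,\mathscr{B}$, and therefore in $\mathscr{A}$. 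Since this holds for every $(i_1,\dots,i_m)$, the map $\pi$ is consistent with $\mathscr{A}$, i.e.\ $\pi\in A^{[n.m]}_{\restrict\mathscr{A}}$, which completes the argument. I expect the only non-routine point to be precisely this last identification: recognising that $m$-projectivity of $\mathscr{B}$ says exactly that every assignment surviving the adversary $\mathscr{B}$, read as a map $[n]\times[m]\to A$, is consistent with the witnessing adversary $\mathscr{A}$ and hence indexes a factor of $\mathfrak{P}$. Everything else — projections out of products are homomorphisms, and for $\Pi_2$-sentences the per-instantiation homomorphisms assemble into a single strategy via the trivial Skolem functions $\sigma_x(\pi)=h_\pi(x)$ — is bookkeeping.
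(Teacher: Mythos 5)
Your proof is correct and follows essentially the same route as the paper's: part~(ii) is deduced from part~(i) by instantiating the hypothesis at $\psi=\phi_{n,\Omega_m,\mathcal{A}}$, and part~(i) is proved by using $m$-projectivity to find a dominating adversary $\mathscr{O}_0\in\Omega_m$ with which every universal assignment drawn from $\mathscr{B}$ is consistent, so that the coordinate projection onto the factor $\mathfrak{A}_{\mathscr{O}_0,\pi}$ of the product structure supplies the required winning strategy. The only difference is presentational (you phrase the strategy via the Skolem functions $\sigma_x(\pi)=h_\pi(x)$, the paper speaks directly of a homomorphism to $\mathfrak{A}_{\mu_t}$), which changes nothing of substance.
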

\begin{theorem}
  \label{theo:Pi2ToGeneral}
  Let $\mathbf{\Omega}$ be a sequence of sets of adversaries that has
  the $m$-projectivity property for some $m\geq 1$ such that
  $\Omega_{n.m}$ is not degenerate.
  The following chain of implications holds
  $$
  \ref{Pi2ToGeneral:Pi2GoodApproximation}
  \implies
  \ref{Pi2ToGeneral:BigPi2Canonicalsentence}
  \implies 
  \ref{Pi2ToGeneral:ReactiveComposition}
  \implies 
  \ref{Pi2ToGeneral:GoodApproximation}
  $$
  where,
  \begin{romannum}
  \item For every $\Pi_2$-pH-sentence $\psi$ with $m.n$ universal variables, $\mathcal{A}\models \psi_{\restrict \Omega_{m.n}}$ implies $\mathcal{A}\models \psi$.\label{Pi2ToGeneral:Pi2GoodApproximation}
  \item $\mathcal{A}\models \phi_{n,\Omega_m,\mathcal{A}}$.\label{Pi2ToGeneral:BigPi2Canonicalsentence}
  \item $A^m \reactivelycomposable \Omega_m$.\label{Pi2ToGeneral:ReactiveComposition}
  \item For every pH-sentence $\psi$ with $m$ universal variables, $\mathcal{A}\models \psi_{\restrict \Omega_{m}}$ implies $\mathcal{A}\models \psi$.\label{Pi2ToGeneral:GoodApproximation}
  \end{romannum}
\end{theorem}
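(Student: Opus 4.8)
The plan is to read each link of the chain off a result already established in this section: \ref{Pi2ToGeneral:Pi2GoodApproximation}$\implies$\ref{Pi2ToGeneral:BigPi2Canonicalsentence} is essentially Lemma~\ref{lem:AdversariesWinnableOnCanonicalSentence}, \ref{Pi2ToGeneral:BigPi2Canonicalsentence}$\implies$\ref{Pi2ToGeneral:ReactiveComposition} is Lemma~\ref{lem:CanonicalSentenceImpliesReactiveComposability}, and \ref{Pi2ToGeneral:ReactiveComposition}$\implies$\ref{Pi2ToGeneral:GoodApproximation} is Theorem~\ref{thm:hubieReactiveComposition}. So the proof is essentially an exercise in matching hypotheses to conclusions; the genuine content — using $m$-projectivity to win the canonical $\Pi_2$-sentence against $\Omega_{n\cdot m}$, and interpolating complete Skolem functions from the partial maps supplied by reactive composition — lives in those three cited results.

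In more detail, for \ref{Pi2ToGeneral:Pi2GoodApproximation}$\implies$\ref{Pi2ToGeneral:BigPi2Canonicalsentence} I would first note that, since $\Omega_{n\cdot m}$ is not degenerate, the sentence $\phi_{n,\Omega_m,\mathcal{A}}$ is well defined, and by construction it is a $\Pi_2$-pH-sentence in precisely the $m\cdot n$ universal variables $w_{ij}$. Part~(i) of Lemma~\ref{lem:AdversariesWinnableOnCanonicalSentence} — which is where $m$-projectivity is spent — gives $\mathcal{A}\models \bigl(\phi_{n,\Omega_m,\mathcal{A}}\bigr)_{\restrict \Omega_{n\cdot m}}$; since $\phi_{n,\Omega_m,\mathcal{A}}$ is a $\Pi_2$-sentence with $m\cdot n$ universal variables, hypothesis \ref{Pi2ToGeneral:Pi2GoodApproximation} applied to it then yields $\mathcal{A}\models \phi_{n,\Omega_m,\mathcal{A}}$, which is \ref{Pi2ToGeneral:BigPi2Canonicalsentence}. (This is exactly the statement of part~(ii) of Lemma~\ref{lem:AdversariesWinnableOnCanonicalSentence}.)

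For \ref{Pi2ToGeneral:BigPi2Canonicalsentence}$\implies$\ref{Pi2ToGeneral:ReactiveComposition} I would simply invoke Lemma~\ref{lem:CanonicalSentenceImpliesReactiveComposability}: from $\mathcal{A}\models\phi_{n,\Omega_m,\mathcal{A}}$ one extracts $A^m \reactivelycomposable \Omega_m$, the witnessing polymorphism and partial maps coming from the assignment to the existential variables once each block of $n$ universal variables is made to enumerate $A$. Finally, for \ref{Pi2ToGeneral:ReactiveComposition}$\implies$\ref{Pi2ToGeneral:GoodApproximation}, take any pH-sentence $\psi$ with $m$ universal variables such that $\mathcal{A}\models\psi_{\restrict\Omega_m}$; applying Theorem~\ref{thm:hubieReactiveComposition} with the adversary $\mathscr{A}:=A^m$, the two hypotheses $\mathcal{A}\models\psi_{\restrict\Omega_m}$ and $A^m\reactivelycomposable\Omega_m$ deliver $\mathcal{A}\models\psi$. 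The only place care is needed is the minor bookkeeping around non-degeneracy and the arity ($m\cdot n$ rather than $m$) of the canonical sentence, so that hypothesis \ref{Pi2ToGeneral:Pi2GoodApproximation} is applied to a sentence of exactly the shape it is about; none of the steps is an essential obstacle.
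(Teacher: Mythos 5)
Your proposal is correct and follows exactly the paper's own proof: the first implication is part~(ii) of Lemma~\ref{lem:AdversariesWinnableOnCanonicalSentence} (where $m$-projectivity is used), the second is Lemma~\ref{lem:CanonicalSentenceImpliesReactiveComposability}, and the third is Theorem~\ref{thm:hubieReactiveComposition} applied with $\mathscr{A}=A^m$. Your additional bookkeeping about non-degeneracy and the $m\cdot n$ arity of the canonical sentence is accurate and only makes explicit what the paper leaves implicit.
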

\begin{proof}
  The first implication holds by the previous lemma (second item of Lemma~\ref{lem:AdversariesWinnableOnCanonicalSentence}, this is the step where we use projectivity).
  The second implication is Lemma~\ref{lem:CanonicalSentenceImpliesReactiveComposability}. The last implication is Theorem~\ref{thm:hubieReactiveComposition}.
\end{proof}
Thus, in the projective case, when an adversary is good enough in the
$\Pi_2$-case, it is good enough in general. This can be characterised
logically via canonical sentences or ``algebraically'' in terms of
reactive composition or the weaker and more usual composition
property (see \ref{abstracto:algebraic:pi2}  below). 
  \begin{theorem}[\textbf{In abstracto}]\label{MainResult:InAbstracto}
    Let $\mathbf{\Omega}=\bigl(\Omega_m\bigr)_{m \in \integerset}$ be a projective sequence of adversaries,
    none of which are degenerate. 
    The following are equivalent.
  \begin{romannum}
  \item For every $m\geq 1$, for every pH-sentence $\psi$ with $m$
    universal variables, $\mathcal{A}\models \psi_{\restrict
      \Omega_{m}}$ implies $\mathcal{A}\models \psi$. 
    \label{abstracto:logical:general}
  \item For every $m\geq 1$, for every $\Pi_2$-pH-sentence $\psi$ with $m$ universal
    variables, $\mathcal{A}\models \psi_{\restrict \Omega_{m}}$
    implies $\mathcal{A}\models \psi$.
    \label{abstracto:logical:pi2}
  \item For every $m\geq 1$, $\mathcal{A}\models
    \phi_{n,\Omega_m,\mathcal{A}}$.
    \label{abstracto:canonical:general}
  \item For every $m\geq 1$, $\mathcal{A}\models
    \phi_{\mathscr{O}_{\cup\Omega},\mathcal{A}}$.
    \label{abstracto:canonical:pi2}
  \item For every $m\geq 1$, $A^m \reactivelycomposable \Omega_m$.
    \label{abstracto:algebraic:general}
  \item For every $m\geq 1$, $\Omega_m$ generates $A^m$.
    \label{abstracto:algebraic:pi2}
  \end{romannum}
\end{theorem}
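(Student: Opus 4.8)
The plan is to obtain the six-way equivalence by assembling results already in hand, with essentially no new construction. I would split the conditions into a ``$\Pi_2$ cluster'' --- \ref{abstracto:logical:pi2}, \ref{abstracto:canonical:pi2}, \ref{abstracto:algebraic:pi2} --- whose equivalence is delivered uniformly in $m$ by Proposition~\ref{thm:characteringPi2}, and a ``general cluster'' --- \ref{abstracto:logical:general}, \ref{abstracto:canonical:general}, \ref{abstracto:algebraic:general} --- which is welded to the $\Pi_2$ cluster by Theorem~\ref{theo:Pi2ToGeneral}. Before anything, I would note that the standing hypotheses give exactly what the imported statements require at every level: $\mathbf{\Omega}$ projective means the $m$-projectivity property holds for every $m\geq 1$, and ``none of the $\Omega_m$ degenerate'' gives in particular that both $\Omega_m$ and $\Omega_{n\cdot m}$ are non-degenerate for every $m$, so Theorem~\ref{theo:Pi2ToGeneral} and Proposition~\ref{thm:characteringPi2} are applicable for all $m$.

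The core of the argument is the cycle $\ref{abstracto:logical:general}\Rightarrow\ref{abstracto:logical:pi2}\Rightarrow\ref{abstracto:canonical:general}\Rightarrow\ref{abstracto:algebraic:general}\Rightarrow\ref{abstracto:logical:general}$. The first implication is immediate, since a $\Pi_2$-pH-sentence with $m$ universal variables is in particular a pH-sentence with $m$ universal variables. For $\ref{abstracto:logical:pi2}\Rightarrow\ref{abstracto:canonical:general}$, I would fix $m$ and invoke the first implication of Theorem~\ref{theo:Pi2ToGeneral} with that parameter: its hypothesis is precisely \ref{abstracto:logical:pi2} read at $m\cdot n$ universal variables, so it yields $\mathcal{A}\models\phi_{n,\Omega_m,\mathcal{A}}$, and ranging over $m$ gives \ref{abstracto:canonical:general}. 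The remaining two links, $\ref{abstracto:canonical:general}\Rightarrow\ref{abstracto:algebraic:general}$ and $\ref{abstracto:algebraic:general}\Rightarrow\ref{abstracto:logical:general}$, are the other two implications in the chain of Theorem~\ref{theo:Pi2ToGeneral} (the last of which is Theorem~\ref{thm:hubieReactiveComposition} applied with $\mathscr{A}=A^m$), used pointwise in $m$. This establishes the mutual equivalence of \ref{abstracto:logical:general}, \ref{abstracto:logical:pi2}, \ref{abstracto:canonical:general}, \ref{abstracto:algebraic:general}.

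Finally I would fold in \ref{abstracto:canonical:pi2} and \ref{abstracto:algebraic:pi2}: for each fixed $m$, Proposition~\ref{thm:characteringPi2} applied to the non-degenerate $\Omega_m$ identifies ``$\mathcal{A}\models\psi_{\restrict\Omega_m}\Rightarrow\mathcal{A}\models\psi$ for all $\Pi_2$-pH-sentences $\psi$'' with both $\mathcal{A}\models\phi_{\mathscr{O}_{\cup\Omega},\mathcal{A}}$ and ``$\Omega_m$ generates $A^m$'', and quantifying over $m$ gives $\ref{abstracto:logical:pi2}\iff\ref{abstracto:canonical:pi2}\iff\ref{abstracto:algebraic:pi2}$, completing the proof. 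I do not anticipate a genuine obstacle: the conceptual weight sits in the previously established Proposition~\ref{thm:characteringPi2} and Theorem~\ref{theo:Pi2ToGeneral}, and the only points demanding care are the ``$m$ versus $m\cdot n$'' bookkeeping in the single step that feeds Theorem~\ref{theo:Pi2ToGeneral}'s first implication (this is where the full ``for every $m$'' strength of \ref{abstracto:logical:pi2} is spent), and keeping the outer universal quantifier on $m$ in force throughout so that the cyclic chain of implications composes legitimately.
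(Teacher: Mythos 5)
Your proposal is correct and follows essentially the same route as the paper's own proof: Proposition~\ref{thm:characteringPi2} for the $\Pi_2$ cluster \ref{abstracto:logical:pi2}, \ref{abstracto:canonical:pi2}, \ref{abstracto:algebraic:pi2}, the chain of Theorem~\ref{theo:Pi2ToGeneral} (with the parameter universally quantified, so that \ref{abstracto:logical:pi2} at $m\cdot n$ feeds the step producing \ref{abstracto:canonical:general} at $m$) for $\ref{abstracto:logical:pi2}\Rightarrow\ref{abstracto:canonical:general}\Rightarrow\ref{abstracto:algebraic:general}\Rightarrow\ref{abstracto:logical:general}$, and the trivial $\ref{abstracto:logical:general}\Rightarrow\ref{abstracto:logical:pi2}$. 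The bookkeeping point you flag about $m$ versus $m\cdot n$ is exactly the subtlety the paper itself notes, and you handle it correctly.
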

\begin{remark}
  The above equivalences can be read along two dimensions:
  \begin{center}
    \begin{tabular}[h]{lc|c}
      & general        & $\Pi_2$       \\
      \hline
      logical interpolation &
      \ref{abstracto:logical:general}&~\ref{abstracto:logical:pi2}\\ 
      \hline
      canonical sentences &
      \ref{abstracto:canonical:general}&~\ref{abstracto:canonical:pi2}\\ 
      \hline
      algebraic interpolation & 
      \ref{abstracto:algebraic:general}&
      \ref{abstracto:algebraic:pi2}\\ 
    \end{tabular}
  \end{center}
\end{remark}
In~\cite{AU-Chen-PGP}, Chen introduces effective PGP and shows
that it entails a QCSP to CSP reduction, for the bounded alternation
QCSP. For concrete examples, such as collapsibility and switchability, he shows a QCSP to CSP
reduction even in the unbounded case~\cite[Theorem 7.11]{AU-Chen-PGP}. As a second corollary, we can
generalise this last result to effective and ``projective'' PGP, though we
formulate this in terms of sequence of adversaries.
\begin{corollary}\label{cor:EffectiveProjectivePGPImpliesQCSPInNP}
  Let $\mathcal{A}$ be a structure.  Let $\mathbf{\Omega}$ be a
  sequence of non degenerate adversaries that is effective, projective and
  polynomially bounded such that $\Omega_m$ generates $A^m$ for every
  $m\geq 1$.
  
  Let $\mathcal{A}'$ be the structure $\mathcal{A}$, possibly expanded
  with constants, at least one for each element that occurs in $\mathbf{\Omega}$.
  The problem $\QCSP(\mathcal{A})$ reduces in polynomial time to
  $\CSP(\mathcal{A}')$.
  In particular, if $\mathcal{A}$ has all constants, the problem $\QCSP_c(\mathcal{A})$ reduces in polynomial time to
  $\CSP_c(\mathcal{A})$.
\end{corollary}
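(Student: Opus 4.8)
The plan is to turn a QCSP instance $\phi$ into a single CSP instance over $\mathcal{A}'$ by (i) replacing the full adversary $A^m$ by the small generating family $\Omega_m$, and (ii) encoding, for each adversary $\mathscr{B}\in\Omega_m$, the statement ``$\mathcal{A}\models\phi_{\restrict\mathscr{B}}$'' as one pp-sentence.

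First the reduction to adversary games. Since $\mathbf{\Omega}$ is projective and none of its adversaries is degenerate, Theorem~\ref{MainResult:InAbstracto} applies, and our hypothesis is exactly its condition~\ref{abstracto:algebraic:pi2} (equivalently \ref{abstracto:algebraic:general}, i.e.\ $A^m\reactivelycomposable\Omega_m$ for all $m$). Hence condition~\ref{abstracto:logical:general} holds: for every pH-sentence $\phi$ with $m$ universal variables, $\mathcal{A}\models\phi_{\restrict\Omega_m}$ implies $\mathcal{A}\models\phi$. The converse is immediate, because each $\mathscr{B}\in\Omega_m$ satisfies $\mathscr{B}\subseteq A^m$, so $\mathscr{B}$ is dominated by the full adversary and $\mathcal{A}\models\phi$ (which is $\mathcal{A}\models\phi_{\restrict A^m}$) forces $\mathcal{A}\models\phi_{\restrict\mathscr{B}}$. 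Thus $\mathcal{A}\models\phi$ iff $\mathcal{A}\models\phi_{\restrict\mathscr{B}}$ for every $\mathscr{B}\in\Omega_m$. (One may equivalently route this through $A^m\reactivelycomposable\Omega_m$ and Theorem~\ref{thm:hubieReactiveComposition}.)

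Now the algorithm. Given a prenex pH-sentence $\phi$ with universal variables $x_1,\ldots,x_m$ (in quantification order) and quantifier-free part $\psi$; if $m=0$ then $\phi$ is already a pp-sentence and, as $\mathcal{A}'$ is an expansion of $\mathcal{A}$, we are done. Otherwise first compute $\Omega_m$: by effectiveness combined with polynomial boundedness, $\Omega_m$ is produced in time $p'(\mathrm{width}(\Omega_m))$ with $\mathrm{width}(\Omega_m)\le p(m)\le p(|\phi|)$, hence in polynomial time. For each $\mathscr{B}\in\Omega_m$ build a pp-sentence $\chi_{\mathscr{B}}$ over $\mathcal{A}'$ as follows: for every existential variable $y$ of $\phi$ and every restriction $\bar{a}|_{Y_y}$ (with $Y_y$ the universal variables preceding $y$) of a tuple $\bar{a}\in\mathscr{B}$, introduce a fresh variable $y^{\,\bar{a}|_{Y_y}}$ (so two tuples of $\mathscr{B}$ that agree on $Y_y$ share this variable, and otherwise they do not); then for every $\bar{a}=(a_1,\ldots,a_m)\in\mathscr{B}$ and every conjunct $R(z_1,\ldots,z_r)$ of $\psi$, add the constraint $R(z'_1,\ldots,z'_r)$ where $z'_k$ is the constant symbol naming $a_i$ if $z_k=x_i$, and the variable $y^{\,\bar{a}|_{Y_y}}$ if $z_k=y$ is existential; finally existentially quantify all the $y^{\tau}$. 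The constants used are precisely elements occurring in $\mathbf{\Omega}$, hence named in $\mathcal{A}'$. A satisfying assignment of $\chi_{\mathscr{B}}$ is exactly a family of mutually compatible Skolem functions $\sigma_y(\tau):=$ value of $y^{\tau}$ (extended arbitrarily off the prefixes arising in $\mathscr{B}$), so $\mathcal{A}'\models\chi_{\mathscr{B}}$ iff $\mathcal{A}\models\phi_{\restrict\mathscr{B}}$, and $|\chi_{\mathscr{B}}|=O(|\phi|\cdot|\mathscr{B}|)$. Output the conjunction $\chi:=\bigwedge_{\mathscr{B}\in\Omega_m}\chi_{\mathscr{B}}$ taken over pairwise disjoint variable sets (a pp-sentence after pulling quantifiers to the front). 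Then $\mathcal{A}'\models\chi$ iff $\mathcal{A}\models\phi_{\restrict\mathscr{B}}$ for all $\mathscr{B}\in\Omega_m$ iff $\mathcal{A}\models\phi$, and $|\chi|=O(|\phi|\cdot\mathrm{width}(\Omega_m))=O(|\phi|\cdot p(|\phi|))$, so the whole map is computable in polynomial time. For the ``in particular'' clause, when $\mathcal{A}$ already carries all constants we may take $\mathcal{A}'=\mathcal{A}$, turning this into a polynomial-time reduction from $\QCSP_c(\mathcal{A})$ to $\CSP_c(\mathcal{A})$.

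The only delicate point — and the expected main obstacle — is the construction of $\chi_{\mathscr{B}}$: one must share a copy of each existential variable $y$ exactly among the tuples of $\mathscr{B}$ agreeing on $Y_y$ (so that pp-solutions correspond to Skolem functions of the \emph{preceding} universal variables and not to arbitrary functions of all of them), get right the bookkeeping of which universal variables precede each existential one, and verify soundness and completeness directly against the definition of $\mathcal{A}\models\phi_{\restrict\mathscr{B}}$ while keeping the size $O(|\phi|\cdot|\mathscr{B}|)$. Everything else — the equivalence $\mathcal{A}\models\phi \iff \mathcal{A}\models\phi_{\restrict\Omega_m}$, the polynomial-time computability of $\Omega_m$, and the availability of the needed constants in $\mathcal{A}'$ — is a routine assembly of facts already established.
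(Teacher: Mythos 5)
Your proposal is correct and follows essentially the same route as the paper's own proof: invoke Theorem~\ref{MainResult:InAbstracto} to replace the full adversary by the family $\Omega_m$, then encode each adversary game $\mathcal{A}\models\phi_{\restrict\mathscr{B}}$ as a pp-sentence by instantiating universal variables with constants and identifying the copies of each existential variable across tuples of $\mathscr{B}$ that agree on the preceding universal variables (the paper phrases this as ``enforcing equality constraints via renaming of variables,'' following Chen's Lemma~7.12). Your write-up merely supplies more of the bookkeeping (the Skolem-function correspondence, the size bound, the $m=0$ case) that the paper leaves as an outline.
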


\subsection{Studies of Collapsibility}
\label{sec:collapse}
Let $\mathcal{A}$ be a structure, $B\subseteq A$ and $p \geq 0$. Recall the structure
$\mathcal{A}$ is \emph{$p$-collapsible with source $B$} when for all
$m\geq 1$, for all pH-sentences $\phi$ with $m$ universal quantifiers, 
$\mathcal{A} \models \phi$ iff $\mathcal{A}\models \phi_{\restrict \Upsilon_{m,p,B}}$.
Collapsible structures are very important: to the best of our
knowledge, they are in fact the only examples of structures that enjoy a form of polynomial QCSP
to CSP reduction. This is different if one considers structures with infinitely many relations where the more
general notion of \emph{switchability} crops up~\cite{AU-Chen-PGP}.
Our abstract results of the previous section apply to both
switchability and collapsibility but we concentrate here on the latter. 
This  result applies since the underlying sequence of
adversaries are projective (see
Fact~\ref{fact:collapsible:adversary:is:projective}), as long as $p>0$
(non degenerate case).
\begin{corollary}[\textbf{In concreto}]
  \label{MainResult:InConcreto:Collapsibility}  
  Let $\mathcal{A}$ be a structure, $\emptyset \subsetneq B\subseteq A$ and $p>0$.
  The following are equivalent.
  \begin{romannum}
  \item $\mathcal{A}$ is $p$-collapsible from source $B$.
    \label{concreto:logicalcollapsibility:general}
  \item $\mathcal{A}$ is $\Pi_2$-$p$-collapsible from source $B$.
    \label{concreto:logicalcollapsibility:pi2}
  \item For every $m$, the structure $\mathcal{A}$ satisfies the
    canonical $\Pi_2$-sentence with $m\cdot |A|$ universal variables
    $\varphi_{n,\Upsilon_{m,p,B},\mathcal{A}}$.
  \item For every $m$, the structure $\mathcal{A}$ satisfies the
    canonical $\Pi_2$-sentence with $m$ universal variables
    $\varphi_{\mathscr{U},\mathcal{A}}$, where
    $\mathscr{U}=\bigcup_{\mathcal{O}\in \Upsilon_{m,p,B}} \mathcal{O}$.
  \item For every $m$, there exists a polymorphism $f$ of $\mathcal{A}$ witnessing
    that $A^m \reactivelycomposable \Upsilon_{m,p,B}$.
    \label{concreto:algebraiccollapsibility:general}
    \label{concreto:algebraiccollapsibility:iii}
  \item For every $m$, for every tuple $t$ in $A^m$, there is a
    polymorphism $f_t$ of $\mathcal{A}$ of arity $k$ at most $\binom{m}{p}.|B|$ and
    tuples $t_1,t_2,\ldots,t_k$ in $\Upsilon_{m,p,B}$ such that
    $f_t(t_1,t_2,\ldots,t_k)=t$.
    \label{concreto:algebraiccollapsibility:pi2}
  \end{romannum}
\end{corollary}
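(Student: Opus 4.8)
The plan is to obtain this corollary by specialising the \emph{In abstracto} theorem (Theorem~\ref{MainResult:InAbstracto}) to the sequence of adversaries $\mathbf{\Upsilon}_{p,B}=\bigl(\Upsilon_{m,p,B}\bigr)_{m\in\integerset}$ that defines $p$-collapsibility from source $B$. First I would check the two hypotheses of that theorem. Projectivity of $\mathbf{\Upsilon}_{p,B}$ is exactly Fact~\ref{fact:collapsible:adversary:is:projective}. Non-degeneracy of each $\Upsilon_{m,p,B}$ holds precisely because $p>0$: every rectangular adversary occurring in $\Upsilon_{m,p,B}$ has at least one coordinate equal to $A$, so the $m$-tuple of constants used to build the canonical sentence takes pairwise distinct values across the relevant product, as already noted just before Proposition~\ref{thm:characteringPi2}. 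This is the only place the standing assumptions $\emptyset\subsetneq B\subseteq A$ and $p>0$ are used; for $p=0$ the adversary is degenerate and the equivalence genuinely breaks. Hence Theorem~\ref{MainResult:InAbstracto} applies verbatim with $\mathbf{\Omega}=\mathbf{\Upsilon}_{p,B}$ and $n=|A|$.

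It then remains to observe that, after unwinding notation, each of the six clauses of the corollary is the corresponding clause of Theorem~\ref{MainResult:InAbstracto}. Clause~(i) is by definition ``$\mathcal{A}\models\phi$ iff $\mathcal{A}\models\phi_{\restrict\Upsilon_{m,p,B}}$ for all pH-sentences $\phi$'', i.e.\ Theorem~\ref{MainResult:InAbstracto}(i), and clause~(ii) is its restriction to $\Pi_2$-pH-sentences, i.e.\ Theorem~\ref{MainResult:InAbstracto}(ii). Clause~(iii) is Theorem~\ref{MainResult:InAbstracto}(iii) at $\Omega_m=\Upsilon_{m,p,B}$, noting that the canonical sentence $\phi_{n,\Upsilon_{m,p,B},\mathcal{A}}$ has $n\cdot m=|A|\cdot m$ universal variables; clause~(iv) is Theorem~\ref{MainResult:InAbstracto}(iv), the $\Pi_2$ canonical sentence built from $\mathscr{U}=\mathscr{O}_{\cup\Upsilon_{m,p,B}}$; and clause~(v) is Theorem~\ref{MainResult:InAbstracto}(v), $A^m\reactivelycomposable\Upsilon_{m,p,B}$, the existence of a single witnessing polymorphism $f$ of $\mathcal{A}$ being part of the definition of $\reactivelycomposable$. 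Thus clauses~(i)--(v) are pairwise equivalent straight from Theorem~\ref{MainResult:InAbstracto}.

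The one clause needing a short extra argument is~(vi): the bare algebraic clause Theorem~\ref{MainResult:InAbstracto}(vi) only says that $\Upsilon_{m,p,B}$ generates $A^m$, which a priori allows polymorphisms of arity up to $|\mathscr{O}_{\cup\Upsilon_{m,p,B}}|$, whereas the corollary promises arity at most $\binom{m}{p}\cdot|B|=|\Upsilon_{m,p,B}|$ (one rectangular adversary for each choice of the $p$ free coordinates and of the source point $x\in B$; the case $p\geq m$ is vacuous since then $A^m\in\Upsilon_{m,p,B}$). For one direction, clause~(vi) trivially implies that $\Upsilon_{m,p,B}$ generates $A^m$. For the forward direction with the sharp bound I would start from clause~(iii) and apply Lemma~\ref{lem:CanonicalSentenceImpliesReactiveComposability} with $\Omega_m=\Upsilon_{m,p,B}$: it produces a reactive composition $A^m\reactivelycomposable f(\mathscr{B}_1,\ldots,\mathscr{B}_k)$ with $\{\mathscr{B}_1,\ldots,\mathscr{B}_k\}=\Upsilon_{m,p,B}$, and its proof builds partial functions $g^r_\ell:A^\ell\to A$ depending only on their last argument. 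When the partial functions of a reactive composition depend only on their last argument, the composition degenerates to ordinary tuple-by-tuple generation: for each target $t\in A^m$ one simply reads off, from the $g$'s, a single tuple $t_j\in\mathscr{B}_j$ for every $j\in[k]$ with $f(t_1,\ldots,t_k)=t$, which is precisely clause~(vi) with $k=|\Upsilon_{m,p,B}|=\binom{m}{p}\cdot|B|$. I expect this last observation --- trading the $\reactivelycomposable$-witness for a last-coordinate-only one so that the arity is bounded by the \emph{number of adversaries} rather than by the number of tuples they contain --- to be the only genuine point of the proof; everything else is bookkeeping.
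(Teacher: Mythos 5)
Your proposal takes exactly the paper's route: the paper offers no separate proof of this corollary beyond the one-line remark preceding it, namely that Theorem~\ref{MainResult:InAbstracto} applies to $\mathbf\Upsilon_{p,B}$ because this sequence is projective (Fact~\ref{fact:collapsible:adversary:is:projective}) and non-degenerate when $p>0$, and your unwinding of clauses (i)--(v) is the intended bookkeeping. The one place where your argument goes beyond the paper --- the arity bound $\binom{m}{p}\cdot|B|$ in clause (vi) --- is also the one place where your accounting slips: Lemma~\ref{lem:CanonicalSentenceImpliesReactiveComposability} does \emph{not} produce a polymorphism with one coordinate per adversary; its proof allocates one coordinate of $f$ for each pair consisting of an adversary $\mathscr{O}\in\Upsilon_{m,p,B}$ \emph{and} a map $\mu$ consistent with $\mathscr{O}$, so the arity it delivers is $\sum_{\mathscr{O}}|A^{[n\cdot m]}_{\restrict\mathscr{O}}|$, roughly $\binom{m}{p}\,|B|\cdot n^{np}$, far above the claimed bound. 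The clean way to get the stated bound is to read it off clause (v) as written: $A^m \reactivelycomposable \Upsilon_{m,p,B}$ means by definition a single $k$-ary $f$ with $k=|\Upsilon_{m,p,B}|=\binom{m}{p}\cdot|B|$, and then your final observation (evaluate the partial functions $g^j_i$ on a target tuple $t$ to extract one tuple $t_j$ per adversary with $f(t_1,\ldots,t_k)=t$) gives (vi) verbatim. So reroute (vi) through (v) rather than through the lemma, and the argument closes; everything else in your write-up is correct and coincides with the paper.
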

\begin{remark}\label{rem:0collapsibility}
  When $p=0$, we obtain degenerate adversaries and this is due to the fact that if a QCSP is permitted equalities, then $0$-collapsibility can never manifest (think of $\forall x, y \ x=y$).
\end{remark}
In \cite{hubie-sicomp}, Case~\ref{concreto:algebraiccollapsibility:general} of Corollary~\ref{MainResult:InConcreto:Collapsibility} is equivalent to id-Pol$(\mathcal{A})$ being $p$-collapsible (in the algebraic sense). It is proved in \cite{hubie-sicomp} that if id-Pol$(\mathcal{A})$, is $k$-collapsible (in the algebraic sense), then $\mathcal{A}$ is $k$-collapsible. We note that Corollary~\ref{MainResult:InConcreto:Collapsibility} proves the converse, finally tying together the two forms of collapsibility.

A fun application of
Corollary~\ref{MainResult:InConcreto:Collapsibility} is an alternative
proof of Proposition~\ref{prop:sc-2}. It is easy to see that a
semicomplete digraph with both a source and a sink is $1$-collapsible
with any singleton source. This is because any input sentence for
QCSP$(\mathcal{G}$), involving a universal variable $v$ in an edge
relation $E$, is false (evaluate as either the source or the sink,
depending on whether $v$ appears as the second or first entry of $E$,
respectively). The statement of the proposition now follows from
Corollary~\ref{MainResult:InConcreto:Collapsibility},
via \ref{concreto:logicalcollapsibility:general} $\Rightarrow$ \ref{concreto:algebraiccollapsibility:pi2}.

Another application of
Corollary~\ref{MainResult:InConcreto:Collapsibility} is the following (compare with
\S~\ref{sec:prPAthsWithPGP}).  
\begin{application}
\label{app:FairmontHotelTrick}
  A partially reflexive path $\mathcal{A}$ (no constants are present)
  that is quasi-loop connected has the PGP. 
\end{application}

The last two conditions of Corollary~\ref{MainResult:InConcreto:Collapsibility} provide us with
a semi-decidability result: for each $m$, we may look for a particular
polymorphism \ref{concreto:algebraiccollapsibility:general} or several
polymorphisms \ref{concreto:algebraiccollapsibility:pi2}. Instead of a sequence of
polymorphisms, we now strive for a better algebraic
characterisation. We will only be able to do so for the special case
of a singleton source, but this is the only case hitherto found in nature.  

Chen uses the following lemma to show $4$-collapsibility of bipartite
graphs and disconnected graphs~\cite[Examples 1 and
2]{Meditations}. Though, we know via a direct
argument~\cite{CiE2006} that these examples are in
fact $1$-collapsible from a singleton source. 
\begin{lemma}[Chen's lemma {\cite[Lemma
    5.13]{hubie-sicomp}}]
  \label{lem:ChensLemma}
  Let $\mathcal{A}$ be a structure with a constant $x$.
  If there is a $k$-ary polymorphism of $\mathcal{A}$ such that $f$ is surjective when restricted at any position to $\{x\}$, then $\mathcal{A}$ is $(k-1)$-collapsible from source $\{x\}$  (\mbox{i.e.} $\mathcal{A}$ has a $k$-ary Hubie polymorphism).
\end{lemma}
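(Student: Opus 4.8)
The plan is to split the biconditional defining $(k-1)$-collapsibility with source $\{x\}$. The left-to-right direction is free: each rectangular adversary in $\Upsilon_{m,k-1,\{x\}}$ is contained in, hence dominated by, the full adversary $A^m$, so $\mathcal{A}\models\phi$ immediately yields $\mathcal{A}\models\phi_{\restrict\Upsilon_{m,k-1,\{x\}}}$. For the reverse direction I would not argue about sentences but reduce to an algebraic statement. Since $p=k-1\ge 1$, the sequence $\mathbf\Upsilon_{k-1,\{x\}}$ is non-degenerate and projective (Fact~\ref{fact:collapsible:adversary:is:projective}), so by Proposition~\ref{thm:characteringPi2} together with Corollary~\ref{MainResult:InConcreto:Collapsibility} (equivalence of $p$-collapsibility and $\Pi_2$-$p$-collapsibility) it suffices to prove, for every $m\ge 1$, that $\Upsilon_{m,k-1,\{x\}}$ generates the full adversary $A^m$; concretely, that $A^m$ is generated, under the polymorphisms of $\mathcal{A}$, by $S_m:=\{\bar t\in A^m:\lvert\{i:t^i\ne x\}\rvert\le k-1\}$. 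Writing $\langle S_m\rangle$ for the subalgebra so generated, the goal becomes $\langle S_m\rangle=A^m$.

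To establish this I would show $\bar y\in\langle S_m\rangle$ for every $\bar y\in A^m$, by strong induction on the weight $d(\bar y):=\lvert\{i:y^i\ne x\}\rvert$. The base case $d(\bar y)\le k-1$ is immediate, as then $\bar y\in S_m$. For the step let $d:=d(\bar y)\ge k$ and let $I$ be the set of coordinates where $\bar y$ differs from $x$, so $\lvert I\rvert\ge k$; choose a surjection $\tau:I\to[k]$, which exists precisely because $\lvert I\rvert\ge k$. Using the $k$-ary Hubie polymorphism $f$, build tuples $\bar u_1,\dots,\bar u_k\in A^m$ by: setting $u^i_t:=x$ whenever $i\notin I$ or $t=\tau(i)$; and, for each $i\in I$, choosing the remaining $k-1$ values $(u^i_t)_{t\ne\tau(i)}$ so that $f(u^i_1,\dots,u^i_k)=y^i$ — possible exactly because fixing the $\tau(i)$-th argument of $f$ to $x$ keeps $f$ surjective. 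Then the non-$x$ coordinates of $\bar u_t$ lie inside $I\setminus\tau^{-1}(t)$, a set of size at most $d-1$ since $\tau$ is onto, so $\bar u_t\in\langle S_m\rangle$ by the induction hypothesis; and applying $f$ coordinatewise gives $f(\bar u_1,\dots,\bar u_k)=\bar y$, since it returns $f(x,\dots,x)=x=y^i$ off $I$ and $y^i$ on $I$ by construction. Hence $\bar y\in\langle S_m\rangle$, closing the induction.

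I expect the delicate point to be exactly this decomposition of $\bar y$ into tuples of strictly smaller weight. The obvious attempt — place the values of $\bar y$ into one tuple and pad the others with $x$'s — fails because $f(y^i,x,\dots,x)$ need not equal $y^i$ (think of a near-unanimity $f$). The working idea is instead to discard the value $y^i$ entirely in one of the $k$ tuples and rebuild it from scratch in the other $k-1$ via the Hubie surjectivity at the freed slot, with the surjection $\tau$ ensuring that every tuple loses at least one active coordinate; the hypothesis $d\ge k$ in the inductive step is precisely what makes such a $\tau$ exist, dovetailing with the weight threshold $k-1$ in the base case. Once $\Upsilon_{m,k-1,\{x\}}$ is known to generate $A^m$ for all $m$, the return to logical $(k-1)$-collapsibility — and, via the equivalences of Theorem~\ref{MainResult:InAbstracto}, to reactive composability $A^m\reactivelycomposable\Upsilon_{m,k-1,\{x\}}$, the form of composition that governs the unbounded-alternation case — is entirely by the cited results.
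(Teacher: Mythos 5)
Your proposal is correct and follows essentially the same route as the paper: both reduce, via Corollary~\ref{MainResult:InConcreto:Collapsibility}, to showing that $\Upsilon_{m,k-1,x}$ generates $A^m$, and both establish generation by applying $f$ to $k$ tuples in which each row has one designated slot set to $x$ so that Hubie surjectivity recovers the target value. Your induction on the weight $d(\bar y)$ with the surjection $\tau$ is just a more explicit rendering of the paper's level-by-level "shift the singleton and iterate" bootstrapping, and if anything it pins down more carefully why each argument tuple strictly loses an active coordinate.
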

An interesting consequence of last section's formal work is a form of
converse of Chen's Lemma, which allows us to give an algebraic
characterisation of collapsibility from a singleton source.
\begin{proposition}
  \label{prop:singletoncollapse:WeakCharacterisation}
  Let $x$ be a constant in $\mathcal{A}$.
  The following are equivalent:
  \begin{romannum}
  \item $\mathcal{A}$ is collapsible from $\{x\}$. 
  \item $\mathcal{A}$ has a Hubie polymorphism with source $x$.
    \label{item:HubiePolx}
  \end{romannum}
\end{proposition}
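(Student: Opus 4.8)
The plan is to prove the two implications separately, starting with the easy one. For \ref{item:HubiePolx} $\Rightarrow$ (i) I would simply quote Chen's Lemma (Lemma~\ref{lem:ChensLemma}): a $k$-ary Hubie polymorphism of $\mathcal{A}$ with source $x$ makes $\mathcal{A}$ be $(k-1)$-collapsible from $\{x\}$, hence collapsible from $\{x\}$.

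The real content is the converse (i) $\Rightarrow$ \ref{item:HubiePolx}, and the guiding observation is that a Hubie polymorphism with source $x$ is precisely what the reactive-composition machinery spits out once the source is the singleton $\{x\}$ and the adversary length is taken to equal the number of adversaries. So, assuming $\mathcal{A}$ is $p$-collapsible from $\{x\}$, I would first reduce to the case $p \ge 1$: a $p$-collapsible structure is also $p'$-collapsible for every $p' \ge p$, because each adversary of $\Upsilon_{m,p,\{x\}}$ is dominated by one of $\Upsilon_{m,p',\{x\}}$ (cf.\ the domination observations of \S~\ref{sec:games-adversaries-reactivecomposition}); this also side-steps the degenerate case $p=0$, which by Remark~\ref{rem:0collapsibility} never genuinely arises. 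With $p \ge 1$ the sequence $\mathbf\Upsilon_{p,\{x\}}$ is non-degenerate, so I can invoke Corollary~\ref{MainResult:InConcreto:Collapsibility}, specifically its condition \ref{concreto:algebraiccollapsibility:iii} with $m := p+1$. This yields a single polymorphism $f$ of $\mathcal{A}$, of arity $\binom{p+1}{p} = p+1$, witnessing $A^{p+1}\reactivelycomposable \Upsilon_{p+1,p,\{x\}}$; and $\Upsilon_{p+1,p,\{x\}}$ is exactly the set $\{\mathscr{B}_1,\dots,\mathscr{B}_{p+1}\}$ in which $\mathscr{B}_j$ is the rectangular adversary whose $j$-th coordinate is $\{x\}$ and whose other coordinates are all $A$.

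Then I would unwind the definition of reactive composition, using the Remark of \S~\ref{sec:games-adversaries-reactivecomposition} that the witnessing partial functions $g^j_i$ may be taken to depend on their last argument only, so write $g^j_i\colon A\to A$. The requirement that $\bigl(g^j_1(a_1),\dots,g^j_{p+1}(a_{p+1})\bigr)$ lie in $\mathscr{B}_j$ forces $g^j_j\equiv x$. Hence in the reconstruction identity $a_i = f\bigl(g^1_i(a_i),\dots,g^{p+1}_i(a_i)\bigr)$ the $i$-th argument of $f$ is always $x$, and reading this for each $i\in[p+1]$ and each value $b=a_i$ shows that $f$, with its $i$-th argument pinned to $x$, still attains every element of $A$; that is, $f(A^{i-1}\times\{x\}\times A^{p+1-i}) = A$ for every $i$. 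This already forces $f$ to be surjective, and since $\mathcal{A}$ carries the constant $x$ every polymorphism of $\mathcal{A}$ satisfies $f(x,\dots,x)=x$, so $f$ is a Hubie polymorphism with source $x$, as required.

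The only step I expect to be genuinely non-obvious is choosing the instantiation $m=p+1$ and noticing the ``diagonal'' effect inside reactive composition: once the adversary length matches the number of adversaries, the $i$-th coordinate of the $i$-th pinned adversary is forced to $x$, so the reconstruction identity becomes, verbatim, the Hubie surjectivity condition. Everything else --- the reduction to $p\ge 1$, idempotency from the presence of the constant, deducing surjectivity --- is routine bookkeeping.
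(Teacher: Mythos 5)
Your proof is correct and follows essentially the same route as the paper: Chen's Lemma for the easy direction, and for the converse an appeal to the reactive-composition characterisation (Corollary~\ref{MainResult:InConcreto:Collapsibility}, equivalently Theorem~\ref{MainResult:InAbstracto}) at $m=p+1$, whose witnessing polymorphism is the Hubie operation. The only difference is that you explicitly unwind the definition of reactive composition to justify the diagonal pinning $g^j_j\equiv x$ and hence surjectivity under fixing, where the paper simply writes ``Clearly, $f$ satisfies (ii)'' and points to the diagram in the proof of Chen's Lemma.
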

In the proof of the above, for $(i)\Rightarrow (ii) \Rightarrow (i)$, we no longer control the collapsibility parameter as the
arity of our polymorphism is larger than the parameter we start with.
By inspecting more carefully the properties of the polymorphism $f$ we
get as a witness that $\mathcal{A}$ models a canonical sentence, we
may derive in fact $p$-collapsibility by an argument akin to the one
used above in the proof of Chen's Lemma. We obtain this way a nice
concrete result to counterbalance the abstract
Theorem~\ref{MainResult:InAbstracto}.
\begin{theorem}[\textbf{$p$-Collapsibility from a singleton source}]
  \label{theo:singletonSource:StrongCharacterisation}
  Let $x$ be a constant in $\mathcal{A}$ and $p>0$.
  The following are equivalent:
  \begin{romannum}
  \item $\mathcal{A}$ is $p$-collapsible from $\{x\}$.
  \item For every $m\geq 1$, the full adversary $A^m$ is reactively composable from $\Upsilon_{m,p,x}$.
  \item $\mathcal{A}$ is $\Pi_2$-$p$-collapsible from $\{x\}$. 
  \item For every $m\geq 1$, $\Upsilon_{m,p,x}$ generates $A^m$.
  \item $\mathcal{A}$ models $\phi_{n,\Upsilon_{p+1,p,x},\mathcal{A}}$
    (which implies that $\mathcal{A}$ admits a particularly well
    behaved Hubie polymorphism with source $x$ of arity $(p+1)n^p$).
  \end{romannum}
\end{theorem}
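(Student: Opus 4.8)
To prove Theorem~\ref{theo:singletonSource:StrongCharacterisation}, the plan is to get (i)--(iv) essentially for free from Corollary~\ref{MainResult:InConcreto:Collapsibility} and to isolate all the real work in a single new implication, (v)$\Rightarrow$(iv). Specialising Corollary~\ref{MainResult:InConcreto:Collapsibility} to $B=\{x\}$ (so that $\Upsilon_{m,p,\{x\}}=\Upsilon_{m,p,x}$, and, since $p>0$, the sequence $\mathbf{\Upsilon}_{p,x}$ is projective by Fact~\ref{fact:collapsible:adversary:is:projective} and non-degenerate) already shows that $p$-collapsibility from $\{x\}$, $\Pi_2$-$p$-collapsibility from $\{x\}$, and ``$A^m\reactivelycomposable\Upsilon_{m,p,x}$ for all $m$'' coincide; and ``$\Upsilon_{m,p,x}$ generates $A^m$ for all $m$'' is the last item of that corollary (with Proposition~\ref{prop:pi2composition} used to pass back from generation to the logical statement). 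Moreover (i)$\Rightarrow$(v) is immediate: by Corollary~\ref{MainResult:InConcreto:Collapsibility}, $p$-collapsibility from $\{x\}$ gives $\mathcal{A}\models\varphi_{n,\Upsilon_{m,p,x},\mathcal{A}}$ for \emph{every} $m$, in particular for the single value $m=p+1$. So the whole point of the theorem --- and the reason one canonical sentence suffices, hence decidability --- is the implication (v)$\Rightarrow$(iv).

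For that implication I would start by feeding $\mathcal{A}\models\varphi_{n,\Upsilon_{p+1,p,x},\mathcal{A}}$ into Lemma~\ref{lem:CanonicalSentenceImpliesReactiveComposability} (with $\Omega_{p+1}:=\Upsilon_{p+1,p,x}$, which is not degenerate): this produces, read off from a satisfying assignment to the existential variables of the canonical sentence, a polymorphism $f$ of $\mathcal{A}$ witnessing $A^{p+1}\reactivelycomposable\Upsilon_{p+1,p,x}$ by partial maps that depend only on their last argument. After the standard collapse of the $n$ universal variables of each block to a single one, $f$ can be taken of arity $(p+1)n^{p}$, with coordinates indexed by the position $j\in[p+1]$ of the singleton block $\{x\}$ together with a choice $\vec a\in A^{p}$ of values for the $p$ remaining blocks. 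The next task is to record the structural properties of $f$ that the consistency conditions built into the canonical sentence force: idempotence; surjectivity after fixing every coordinate of the form $(j,\cdot)$ to $x$, so that $f$ is a Hubie operation with source $x$; and the finer ``block'' property that, for each target tuple of length $p+1$, the $\vec a$-coordinates of the reactive-composition witness carry exactly the target values. These make precise the ``particularly well behaved'' Hubie polymorphism of the statement.

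The heart of the proof is then the bootstrap from length $p+1$ to arbitrary length $m$. Here I would follow the template of the proof of Chen's Lemma (Lemma~\ref{lem:ChensLemma}): from a Hubie polymorphism of arity $k$ one builds, for each $m$, a single polymorphism of arity $\binom{m}{k-1}$ --- one coordinate per $(k-1)$-subset $S\subseteq[m]$, fed the $\Upsilon_{m,k-1,x}$-tuple that agrees with the target on $S$ and equals $x$ off $S$ --- by iterating the Hubie operation along the incidence pattern of these subsets, which shows that $\Upsilon_{m,k-1,x}$ generates $A^{m}$. Running the same iteration with our $f$, but respecting the block indexing so that the $\vec a$-coordinates are always supplied with target values and never with copies manufactured inside the composition, one should obtain, for every $m$, a polymorphism $F_m$ witnessing that $\Upsilon_{m,p,x}$ (rather than merely $\Upsilon_{m,(p+1)n^{p}-1,x}$) generates $A^{m}$ --- which is (iv). Combined with the first paragraph this closes the cycle and gives (i), (ii), (iii).

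I expect this bootstrap to be the main obstacle. The delicate point is exactly that Chen's Lemma, applied off the shelf, only yields $\bigl((p+1)n^{p}-1\bigr)$-collapsibility, so to pin the parameter at $p$ one must exploit the block structure of $f$: essentially, a careful bookkeeping through the iterated composition showing that at each universal position $i\in[m]$ of the length-$m$ target at most $p$ coordinates ever carry a value $\neq x$ (so the adversary in play keeps exactly $p$ free positions), together with a check that the $x$'s introduced by the $\{x\}$-blocks of $f$ propagate correctly. A more semantic alternative --- exhibiting, for $m>p+1$, a reduction of $\varphi_{n,\Upsilon_{m,p,x},\mathcal{A}}$ to $\varphi_{n,\Upsilon_{p+1,p,x},\mathcal{A}}$ compatible with the outermost universal prefix --- seems to require the same combinatorics. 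Once (v) has been folded into the equivalence, the decidability of $p$-collapsibility from a singleton source is automatic: for fixed $p$ one simply writes down the single finite sentence $\varphi_{n,\Upsilon_{p+1,p,x},\mathcal{A}}$ and model-checks $\mathcal{A}\models\varphi_{n,\Upsilon_{p+1,p,x},\mathcal{A}}$.
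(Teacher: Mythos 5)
Your proposal is correct and follows essentially the same route as the paper: points (i)--(iv) and the ``for every $m$'' version of (v) are read off from Corollary~\ref{MainResult:InConcreto:Collapsibility} (giving (v) by specialising to $m=p+1$), and the only substantive step is (v)$\Rightarrow$(iv), obtained by extracting from the canonical sentence a polymorphism witnessing $A^{p+1}\reactivelycomposable\Upsilon_{p+1,p,x}$ and bootstrapping to arbitrary $m$ as in the proof of Chen's Lemma. The block-structure bookkeeping you flag as the delicate point (so that the parameter stays at $p$ rather than degrading to $(p+1)n^{p}-1$) is exactly what the paper's terse ``arguing as in the proof of Chen's Lemma'' relies on, treating each block of $n^{p}$ coordinates as a single column fed one adversary of $\Upsilon_{m,p,x}$.
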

\begin{corollary}
  Given $p \geq 1$, a structure $\mathcal{A}$ and $x$ a constant in $\mathcal{A}$, we may decide
  whether $\mathcal{A}$ is $p$-collapsible from $\{x\}$.
\end{corollary}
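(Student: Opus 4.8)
The plan is to read the decision procedure straight off Theorem~\ref{theo:singletonSource:StrongCharacterisation}. Its equivalence of items (i) and (v) is exactly what is needed: it replaces the \emph{a priori} infinitary condition defining $p$-collapsibility from $\{x\}$ (``for \emph{every} $m$, for \emph{every} pH-sentence with $m$ universal variables\ldots'') by the assertion that $\mathcal{A}$ models one fixed finite $\Pi_2$-pH-sentence, namely $\phi_{n,\Upsilon_{p+1,p,x},\mathcal{A}}$, where $n=|A|$. So, on input $(\mathcal{A},x,p)$ with $p\geq 1$, the algorithm is: compute $n=|A|$, construct $\phi_{n,\Upsilon_{p+1,p,x},\mathcal{A}}$, decide whether $\mathcal{A}\models\phi_{n,\Upsilon_{p+1,p,x},\mathcal{A}}$, and output ``yes'' precisely when it does.

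First I would check that this sentence is genuinely available to the algorithm. Since $p\geq 1$, the set of adversaries $\Upsilon_{p+1,p,x}$ is non-degenerate (recall $\Upsilon_{m,p,B}$ is non-degenerate exactly when $p>0$), so the canonical sentence $\phi_{n,\Upsilon_{p+1,p,x},\mathcal{A}}$ is well defined; this is the one place the hypothesis $p\geq 1$ is used. It is moreover effectively constructible from the input: by its definition it is the canonical query of the $\sigma$-reduct of the explicit finite product $\bigotimes_{\mathscr{O}\in\Upsilon_{p+1,p,x}}\bigotimes_{\mu}\mathfrak{A}_{\mathscr{O},\mu}$ of constant-expansions of $\mathcal{A}$ (the inner product ranging over the finitely many maps $\mu$ consistent with $\mathscr{O}$), with its $(p+1)\cdot n$ constants turned into outermost universal variables; forming a product of finitely many finite structures and then reading off its canonical query (the conjunction of its positive facts) is routine, so the whole sentence is computable from $\mathcal{A}$, $x$ and $p$.

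Next I would invoke the elementary fact that model-checking a given first-order sentence against a given finite structure is decidable: one ranges over the finitely many assignments of the $(p+1)\cdot n$ outermost universal variables into $A$, and for each such assignment ranges over the finitely many assignments of the (finitely many) existential variables, evaluating the quantifier-free conjunction. Composing this with the equivalence of items (i) and (v) in Theorem~\ref{theo:singletonSource:StrongCharacterisation} yields the corollary.

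There is, candidly, no serious obstacle here: Theorem~\ref{theo:singletonSource:StrongCharacterisation} does all the work, and it is precisely the collapse of the quantification over all $m$ down to the single value $m=p+1$ that makes decidability possible at all --- neither the definition (item (i)) nor the ``generating'' condition (item (iv)), as literally stated, is an obviously decidable property. The only two points needing a word of care are the non-degeneracy of the adversary (hence the hypothesis $p\geq 1$) and the remark that, although the procedure terminates, it is wildly inefficient --- the number of universal assignments alone is $n^{(p+1)n}$ --- so any claim about the \emph{complexity} rather than the mere \emph{decidability} of $p$-collapsibility would demand a genuinely different argument, which is not what is asked for here.
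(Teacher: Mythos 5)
Your proposal is correct and is exactly the argument the paper intends: the corollary is drawn from the equivalence of items (i) and (v) of Theorem~\ref{theo:singletonSource:StrongCharacterisation}, which reduces $p$-collapsibility from $\{x\}$ to model-checking the single, effectively constructible sentence $\phi_{n,\Upsilon_{p+1,p,x},\mathcal{A}}$ on the finite structure $\mathcal{A}$. Your added remarks on non-degeneracy (whence $p\geq 1$) and on decidability versus complexity are accurate but do not change the route.
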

\begin{remark}\label{rem:collapsibilityOnConservativeSourceSet}
  We say that a structure $\mathcal{A}$ is \emph{$B$-conservative} where $B$ is a subset
  of its domain iff for any polymorphism $f$ of $\mathcal{A}$ and
  any $C \subseteq B$, we have $f(C,C,\ldots,C)\subseteq C$. 
  Provided that the structure is conservative on the source set $B$,
  we may prove a similar result for $p$-Collapsibility from a conservative source.
\end{remark}


Expanding on Remark~\ref{rem:0collapsibility}, we note that if we
forbid equalities in the input to a QCSP, then we can observe the
natural case of $0$-collapsibility, to which now we turn. This is not
a significant restriction in a context of complexity, since in all but
trivial cases of a one element domain, one can propagate equality out
through renaming of variables. 

We investigated a similar notion in the context of positive equality
free first-order logic, the syntactic restriction of first-order logic
that consists of sentences using only $\exists, \forall, \land$ and $\lor$.
For this logic, relativisation of quantifiers fully explains the complexity classification of the model
checking problem (a tetrachotomy between Pspace-complete, NP-complete,
Co-NP-complete and Logspace)~\cite{LICS2011}. In
particular, a complexity in NP is characterised algebraically by the
preservation of the structure by a \emph{simple $A$-shop} (to be
defined shortly), which is equivalent to a strong form of
$0$-collapsibility since it applies not only to pH-sentences but also
to sentences of positive equality free first-order logic. 
We will show that this notion corresponds in fact to $0$-collapsibility
from a singleton source. Let us recall first some definitions.

A \emph{shop} on a set $B$, short for surjective hyper-operation, is a
function $f$ from $B$ to its powerset such that $f(x)\neq \emptyset$
for any $x$ in $B$ and for every $y$ in $B$, there exists $x$ in $B$
such that $f(x)\ni y$. An \emph{A-shop}\footnotemark{}
\footnotetext{The A does not stand for the name of the set, it is
  short for \emph{All}.} satisfies further that there is some $x$ such
that $f(x)=B$. A \emph{simple $A$-shop} satisfies further that
$|f(x')|=1$ for every $x'\neq x$.
We say that a shop $f$ is a \emph{she} of the structure $\mathcal{B}$, short for \emph{surjective
  hyper-endomorphism}, iff for any relational symbol $R$ in $\sigma$
of arity $r$, for any elements $a_1,a_2\ldots,a_r$ in $B$, if
$R(a_1,\ldots,a_r)$ holds in $\mathcal{B}$ then $R(b_1,\ldots,b_r)$
holds in $\mathcal{B}$ for any $b_1 \in f(a_1),\ldots,b_r \in f(a_r)$. 
We say that $\mathcal{B}$ \emph{admits a (simple) $A$-she} if there is a (simple) $A$-shop
$f$ that is a she of $\mathcal{B}$.

\begin{theorem}\label{theorem:0collapsibility}
  Let $\mathcal{B}$ be a finite structure.
  The following are equivalent.
  \begin{romannum}
  \item $\mathcal{B}$ is $0$-collapsible from source $\{x\}$ for some
    $x$ in $B$ for equality-free pH-sentences.\label{0collapsibility:singleton:pH}
  \item $\mathcal{B}$ admits a simple $A$-she.\label{0collapsibility:she}
  \item $\mathcal{B}$ is $0$-collapsible from source $\{x\}$ for some $x$ in $B$ for sentences of positive
    equality free first-order logic. \label{0collapsibility:mylogic}
  \end{romannum}
\end{theorem}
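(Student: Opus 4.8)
The plan is to prove the three statements equivalent via the cycle $\ref{0collapsibility:mylogic} \Rightarrow \ref{0collapsibility:singleton:pH} \Rightarrow \ref{0collapsibility:she} \Rightarrow \ref{0collapsibility:mylogic}$. The implication $\ref{0collapsibility:mylogic}\Rightarrow\ref{0collapsibility:singleton:pH}$ is immediate, since equality-free pH-sentences form a syntactic fragment of positive equality-free first-order logic, so $0$-collapsibility from $\{x\}$ for the larger class is in particular $0$-collapsibility for the smaller one. For $\ref{0collapsibility:she}\Rightarrow\ref{0collapsibility:mylogic}$ (which also recasts the soundness half of the characterisation in \cite{LICS2011}), the idea is to transfer strategies: given a simple $A$-she $f$ with $f(x)=B$ and $f(y)=\{g(y)\}$ for $y\neq x$, and a positive equality-free sentence $\Phi$ that the existential player wins when every universal variable is pinned to $x$, I would build a winning strategy in the unrestricted game, uniformly for every quantifier prefix (in particular in the unbounded-alternation case). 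Whenever the universal player plays a value $a$ for a universal variable and resolves a conjunction, feed the \emph{pinned} history --- every universal value overwritten by $x$, all conjunction choices kept verbatim --- to the pinned winning strategy; it returns, for the current existential variable, a witness $w$ together with a disjunct choice; play $g'(w)$ in the real game, where $g'$ extends $g$ by sending $x$ to an arbitrary fixed element of $B$. This is causal, as $g'(w)$ depends only on the translated prefix. After all choices the matrix resolves to a single atom $R(\overline y)$; since each universal coordinate moves from $x$ into $B=f(x)$ and each existential coordinate moves from $w$ into $f(w)$, the she property of $f$ carries the truth of $R(\overline y)$ from the pinned play to the real play.

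The crux is $\ref{0collapsibility:singleton:pH}\Rightarrow\ref{0collapsibility:she}$, and the key idea is to test collapsibility against a $\Sigma_2$- rather than a $\Pi_2$-sentence, so that the existential witness is a \emph{single} hyper-operation independent of the universal play. Assume $\mathcal{B}$ is $0$-collapsible from $\{x\}$ for equality-free pH-sentences. (If $x$ occurs in no tuple of any relation, the hyper-operation fixing every element $\neq x$ and sending $x$ to $B$ is already a simple $A$-she, so assume otherwise.) Introduce a universal variable $w_{R,\overline c,i}$ for every relation symbol $R$, every $\overline c\in R^{\mathcal{B}}$ and every coordinate $i$ with $c_i=x$, together with an existential variable $z_a$ for every $a\in B\setminus\{x\}$, and consider the equality-free pH-sentence
\[
  \phi\ :=\ \exists\,(z_a)_{a\in B\setminus\{x\}}\ \forall\,(w_{R,\overline c,i})\
  \bigwedge_{R,\ \overline c\in R^{\mathcal{B}}} R\bigl(t^{R,\overline c}_1,\ldots,t^{R,\overline c}_r\bigr),
  \qquad t^{R,\overline c}_i:=\begin{cases} z_{c_i} & c_i\neq x,\\ w_{R,\overline c,i} & c_i=x.\end{cases}
\]
Instantiating every universal variable by $x$ turns each $t^{R,\overline c}_i$ into $c_i$, so the resulting sentence holds in $\mathcal{B}$ via $z_a:=a$; hence, by $0$-collapsibility, $\mathcal{B}\models\phi$. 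Fix a witnessing assignment $z^\ast_a$ and set $g(a):=z^\ast_a$ for $a\neq x$, $g^+(x):=B$ and $g^+(a):=\{g(a)\}$ otherwise. Because the $w_{R,\overline c,i}$ are distinct and, for fixed $(R,\overline c)$, only those indexed by that pair occur in the corresponding conjunct, $\mathcal{B}\models\phi$ says exactly that for every $\overline c\in R^{\mathcal{B}}$, every tuple obtained from $\overline c$ by replacing each non-$x$ entry $c_i$ by $g(c_i)$ and each $x$-entry by an arbitrary element of $B$ lies in $R^{\mathcal{B}}$; that is, $g^+$ is a hyper-endomorphism. It is a shop since $g^+(x)=B$ gives surjectivity, and it is a simple $A$-she by construction --- no separate ``shrinking'' step is needed.

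The main obstacle is exactly the choice of quantifier order in $\phi$. With the more natural $\Pi_2$-sentence $\forall\,\overline w\,\exists\,\overline z$ the existential witnesses would depend on the universal assignment, yielding only a ``local'' substitution per play rather than a globally coherent hyper-operation; placing the existential block first --- legitimate because collapsibility ranges over \emph{all} pH-sentences, not merely $\Pi_2$ ones --- is what converts the per-play data into one surjective hyper-endomorphism, and one must be careful to use a \emph{separate} universal variable per occurrence of $x$ so that distinct coordinates of a tuple can be filled independently. A secondary point to get right is why equality must be excluded: by Remark~\ref{rem:0collapsibility}, $0$-collapsibility is never attainable in the presence of equality, so the equality-free restriction in \ref{0collapsibility:singleton:pH} and \ref{0collapsibility:mylogic} is needed for the statement to be non-vacuous. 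Everything else --- verifying the pinned sentence, reading off $g^+$, and checking the shop and she axioms --- is routine.
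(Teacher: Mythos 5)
Your proof is correct, and for the crux --- \ref{0collapsibility:singleton:pH} $\Rightarrow$ \ref{0collapsibility:she} --- it rests on exactly the same key idea as the paper's argument: exploit the fact that $0$-collapsibility is not confined to $\Pi_2$-sentences and test it against a $\Sigma_2$ (i.e.\ $\exists^\star\forall^\star$) equality-free pH-sentence, so that the existential witness is a single global object rather than one substitution per universal play. The differences are presentational. First, you argue directly: your $\phi$ is a canonical sentence whose truth is literally equivalent to the existence of a simple $A$-she with source $x$, and you read the she off the witnesses $z^\ast_a$; the paper instead argues by contraposition, enumerating all simple $A$-shops $\xi$ with $\xi(x)=B$, extracting a falsifying atom $R_\xi$ for each, and showing that the resulting sentence $\exists\overline{y}\,\forall\overline{w}\,\bigwedge_\xi R_\xi$ is true when pinned to $x$ but false in general. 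Your formulation is arguably cleaner (it avoids the paper's slightly delicate bookkeeping of how many variable slots each falsifying atom needs) and fits the ``canonical sentence'' theme running through the rest of the paper. Second, your cycle is decomposed differently: the paper obtains \ref{0collapsibility:she} $\Leftrightarrow$ \ref{0collapsibility:mylogic} wholesale by citing Theorem~8 of the LICS~2009 paper and takes \ref{0collapsibility:she} $\Rightarrow$ \ref{0collapsibility:singleton:pH} as trivial, whereas you prove only the one direction \ref{0collapsibility:she} $\Rightarrow$ \ref{0collapsibility:mylogic} that your cycle needs, but do so self-containedly via the standard strategy-transfer argument through the she --- which is indeed sound and causal here precisely because with $p=0$ there is a single pinned play to reconcile against. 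Both decompositions are valid; yours buys independence from the external reference at the cost of re-proving a known lemma.
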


The above applies to singleton source only, but up to taking a power
of a structure (which satisfies the same QCSP), we may always place
ourselves in this singleton setting for $0$-collapsibility.
\begin{theorem}\label{theorem:0collapsibility:rainbowsource}
  Let $\mathcal{B}$ be a structure. The following are equivalent.
  \begin{romannum}
  \item $\mathcal{B}$ is $0$-collapsible from source $C$
  \item $\mathcal{B}^{|C|}$ is $0$-collapsible from some (any) singleton source $x$
    which is a (rainbow) $|C|$-tuple containing all elements of $C$.
  \end{romannum}
\end{theorem}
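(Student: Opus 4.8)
\noindent\textbf{Proof plan for Theorem~\ref{theorem:0collapsibility:rainbowsource}.}
The plan is to unfold both statements through the game semantics of \S\ref{sec:games-adversaries-reactivecomposition} and to observe that they become, word for word, the same assertion. Write $k=|C|$ and fix $x=(c_1,\dots,c_k)\in B^k$; since $x$ has $k$ coordinates and its entries exhaust the $k$-element set $C$, it is automatically a bijection $[k]\to C$, and any other admissible source $x'$ is a coordinate permutation of this one. Coordinate permutation is an automorphism of $\mathcal{B}^k$ fixing every diagonal constant, hence transports $0$-collapsibility from $\{x\}$ to $\{x'\}$; so ``some'' and ``any'' coincide and I would discharge that clause first.

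First I would record the elementary shape of $0$-collapsibility from a source. Using the constant-free definition, a structure $\mathcal{D}$ is $0$-collapsible from $S$ iff for every $m$ and every pH-sentence $\phi$ with $m$ universal variables, $\mathcal{D}\models\phi_{\restrict\Upsilon_{m,0,S}}$ implies $\mathcal{D}\models\phi$, where $\Upsilon_{m,0,S}=\{\{(s,\dots,s)\}:s\in S\}$. The key point is that a one-tuple adversary $\{(s,\dots,s)\}$ admits exactly one assignment of the universal block, so every Skolem function is evaluated at a single argument and collapses to a constant; hence $\mathcal{D}\models\phi_{\restrict\{(s,\dots,s)\}}$ says precisely that there is a homomorphism $\mathcal{D}_\psi\to\mathcal{D}$ sending every universal variable to $s$, where $\psi$ is the quantifier-free part of $\phi$.

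Next I would invoke the coordinatewise description of homomorphisms into a power: a homomorphism $\mathcal{D}_\psi\to\mathcal{B}^k$ is the same datum as a $k$-tuple of homomorphisms $\mathcal{D}_\psi\to\mathcal{B}$, one per coordinate, and this respects relation atoms, equality atoms, and the diagonal constants. Applied to maps sending every universal variable to $x$, it yields, for every pH-sentence $\phi$ with $m$ universal variables,
\[
\mathcal{B}^k\models\phi_{\restrict\{(x,\dots,x)\}}
\iff
\bigl(\forall i\in[k]\ \ \mathcal{B}\models\phi_{\restrict\{(c_i,\dots,c_i)\}}\bigr)
\iff
\mathcal{B}\models\phi_{\restrict\Upsilon_{m,0,C}},
\]
the last equivalence because $\{c_1,\dots,c_k\}=C$. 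Combining this with the standard fact, already invoked in the surrounding text, that $\mathcal{B}$ and $\mathcal{B}^k$ satisfy the same pH-sentences --- one direction via the diagonal embedding $\mathcal{B}\to\mathcal{B}^k$, the other via any coordinate projection $\mathcal{B}^k\to\mathcal{B}$, both constant-preserving --- we obtain that ``for every $\phi$, $\mathcal{B}^k\models\phi_{\restrict\{(x,\dots,x)\}}$ implies $\mathcal{B}^k\models\phi$'' is literally the same statement as ``for every $\phi$, $\mathcal{B}\models\phi_{\restrict\Upsilon_{m,0,C}}$ implies $\mathcal{B}\models\phi$''. The two implications of the theorem then fall out simultaneously.

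I do not anticipate a genuine obstacle: the mathematical content is in the two observations above. The one place needing care is the bookkeeping --- checking that the family of one-tuple adversaries produced coordinatewise on $\mathcal{B}^k$ is \emph{exactly} $\Upsilon_{m,0,C}$, which is where the rainbow hypothesis on $x$ ($k$ coordinates, surjecting onto $C$) is spent, and checking that equality atoms and constants decompose correctly across the product so that nothing is lost when $\phi$ uses them. The equivalence is, incidentally, insensitive to whether equalities are permitted in $\phi$: when they are and $|B|\ge 2$ both sides are false (by $\forall u\forall v\ u=v$), so the statement holds vacuously there as well.
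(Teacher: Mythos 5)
Your proof is correct and follows essentially the same route as the paper's: decompose satisfaction against the singleton rainbow adversary on $\mathcal{B}^{|C|}$ coordinatewise into satisfaction against the singleton adversaries $(c,\dots,c)$ for $c\in C$ on $\mathcal{B}$, then transfer between $\mathcal{B}$ and $\mathcal{B}^{|C|}$ using the standard fact that a finite structure and its finite powers satisfy the same pH-sentences. One small caveat: that last fact is not obtained ``via the diagonal embedding'' (pH-sentences are not preserved by embeddings, since the universal quantifiers range over the whole larger domain), but rather by preservation of pH-sentences under products in one direction and under the surjective coordinate projection in the other --- the paper simply cites \cite{LICS2008,DBLP:journals/corr/ChenMM13} for it, so this does not affect the validity of your argument.
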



\section{Back to Complexity}
\label{sec:back-complexity}

The trichotomy of Theorem~\ref{thm:complexity} should be seen as a
companion to the following \emph{dichotomy} result.
\begin{theorem}[Theorem~1 of \cite{QCSPforests}]\label{theo:QCSPforestsDichotomyNoConstants}
  Let $\mathcal{H}$ be a p.r. path. 
  \begin{romannum}
  \item If $\mathcal{H}$ is quasi-loop-connected, then
    QCSP$(\mathcal{H})$ is in NL.
    \label{theo:QCSPforestsDichotomyNoConstants:i}
  \item Otherwise,  QCSP$(\mathcal{H})$ is Pspace-complete.
    \label{theo:QCSPforestsDichotomyNoConstants:ii}
  \end{romannum}
\end{theorem}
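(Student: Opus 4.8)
The plan is to route everything through the \emph{Q-core} of $\mathcal{H}=\mathcal{P}_\beta$ --- the quantified analogue of a core, in the sense of~\cite{DBLP:conf/cp/MadelaineM12}. Using the canonical $\Pi_2$-sentence machinery of~\cite{LICS2008}, one gets that $\QCSP(\mathcal{H})$ and the QCSP of the Q-core of $\mathcal{H}$ agree up to logspace many-one reductions, so the whole theorem reduces to two tasks: (a) compute the Q-core of $\mathcal{P}_\beta$; and (b) classify the QCSP of the resulting templates. I expect task~(a), and in particular the verification that the combinatorial boundary for a ``nice'' Q-core is \emph{exactly} quasi-loop-connectedness, to be the main obstacle, since this is where the whole definition of quasi-loop-connectedness is justified.

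For clause~(i), I would show that when $\beta$ is quasi-loop-connected the Q-core of $\mathcal{P}_\beta$ is one of a short, constant-size list of easy templates: a single looped vertex, the irreflexive edge $\mathcal{K}_2$, or a reflexive path. Quasi-loop-connectedness should be precisely the condition under which the irreflexive ``tails'' of $\mathcal{P}_\beta$ fold away on passing to the Q-core, and a useful lever here is Application~\ref{app:FairmontHotelTrick}, which already tells us that such a $\mathcal{P}_\beta$ (no constants) has the PGP, sharply restricting its surjective-polymorphism clone. It then remains to check that each template has $\QCSP$ in NL: trivially for a looped vertex; via quantified affine Boolean satisfiability for $\mathcal{K}_2$; and for a reflexive path via its majority polymorphism together with a direct consistency/reachability argument.

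For clause~(ii), membership in Pspace is the standard recursive/alternating evaluation of a pH-sentence. For hardness, when $\beta$ is not quasi-loop-connected the leftmost loop lies strictly left of the centre and the rightmost loop strictly right of it --- exactly the configuration exploited in Proposition~\ref{prop:pathsEGP} --- so task~(a) should return a structure of the shape $\mathcal{P}_{10\alpha01}$, possibly equipped with some pp-definable relations. I would then invoke the general principle that a structure whose surjective polymorphisms are all essentially unary has Pspace-complete $\QCSP$, transferring hardness back to $\QCSP(\mathcal{H})$ along the Q-core reduction. Lemma~\ref{lem:cati} supplies the idempotent half --- every idempotent polymorphism of $\mathcal{P}_{10\alpha01}$ is a projection --- and the remaining, more delicate, step is to upgrade this to ``every surjective polymorphism is essentially unary'': concretely, to equip $\mathcal{P}_{10\alpha01}$ with a pp-definable relation that kills its retractions onto the looped endpoints without introducing new polymorphisms, and to package the construction as a genuine logspace reduction to $\QCSP(\mathcal{H})$.
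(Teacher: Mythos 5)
This statement is not proved in the paper at all: it is quoted verbatim as Theorem~1 of \cite{QCSPforests} and serves only as a foil for the new Theorem~\ref{thm:complexity}. The only proof content the paper supplies is the remark immediately after the statement, which sketches case~(i) in two steps: a loop-connected path has a majority polymorphism and hence its QCSP is in NL, and a quasi-loop-connected path has the same QCSP as a loop-connected one via surjective homomorphisms from powers (the methodology of \cite{LICS2008}, i.e.\ the Q-core of \cite{DBLP:conf/cp/MadelaineM12}). Your plan for clause~(i) follows exactly this route, so on that half you are aligned with the source. Two corrections, though. First, the target of the Q-core reduction is a loop-connected path, which may retain irreflexive tails on both sides and be arbitrarily large; your claim that it lands in a constant-size list (a looped vertex, $\mathcal{K}_2$, or a reflexive path) is false as stated, although harmless, since every loop-connected path admits a majority polymorphism (the ``Feder'' operation built in the appendix) and that is what delivers NL. Second, Application~\ref{app:FairmontHotelTrick} only yields the PGP, hence collapsibility and NP membership; NL needs the separate fact that CSP with a majority polymorphism is in NL, so the PGP is not really a ``lever'' here.

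For clause~(ii) there are genuine gaps. Nothing in this paper (nor, as far as the text indicates, in \cite{QCSPforests}) asserts that the Q-core of a non-quasi-loop-connected path has the shape $\mathcal{P}_{10\alpha01}$; what Proposition~\ref{prop:pathsEGP} actually uses is the much weaker fact that the pp-definable interval $\{p,\ldots,q\}$ between the outermost loops \emph{induces} such a path. More seriously, the step you yourself flag as delicate is the crux and is left unresolved: Lemma~\ref{lem:cati} controls only \emph{idempotent} polymorphisms, whereas the Pspace-hardness principle you invoke (from \cite{BBCJK}) requires all \emph{surjective} polymorphisms to be essentially unary. Your proposed repair --- adjoining a pp-definable relation that ``kills retractions'' --- cannot work: one can only pp-define relations already preserved by every polymorphism, and only pH-define relations already preserved by every surjective polymorphism, so no such expansion can shrink the surjective polymorphism clone, which is precisely what governs QCSP hardness; moreover a surjective unary polymorphism of a finite structure is an automorphism, so retractions were never the obstacle --- the obstacle is higher-arity surjective operations that are not essentially unary. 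The hardness half of the theorem therefore still rests entirely on the (combinatorial) argument of \cite{QCSPforests}, which your proposal does not reconstruct.
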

\noindent Case~\ref{theo:QCSPforestsDichotomyNoConstants:i} is proved
in 2 steps : a loop connected \mbox{p.r.} path is known to be in NL via a
majority polymorphism and a quasi-loop connected p.r. path is shown to
have the same QCSP via some surjective homomorphisms from powers (via
the methodology from \cite{LICS2008}). This means that we
can build a Hubie polymorphism for a quasi-loop connected p.r. path (see
Application~\ref{app:FairmontHotelTrick}). However, this polymorphism
need not be idempotent and the argument does not extend to p.r. paths
with constants. 

Using results from both of the previous sections we can now give a proof of Theorem~\ref{thm:complexity}.
\begin{proof}[Proof of Theorem~\ref{thm:complexity}]
  For Cases~\ref{thm:complexity:i} and~\ref{thm:complexity:ii}, NP
  membership follows from
  Corollary~\ref{cor:EffectiveProjectivePGPImpliesQCSPInNP} as we
  established suitable forms of PGP in Lemmas~\ref{lem:prp-1},~\ref{lem:generation} and
  \ref{lem:generation2}.
  More specifically, the Ptime membership of
  Case~\ref{thm:complexity:i} is established by the majority
  polymorphism mentioned in the proof of Lemma~\ref{lem:prp-1} (via \cite{hubie-sicomp}).
  As for Case~\ref{thm:complexity:ii}, we note in passing that
  collapsibility follows from Lemmas~\ref{lem:generation} and
  \ref{lem:generation2} which establish
  item~\ref{concreto:algebraiccollapsibility:pi2} of
  Corollary~\ref{MainResult:InConcreto:Collapsibility}. More
  importantly, NP-hardness
  follows from the classification of \cite{Pseudoforests}. 

  For Case~\ref{thm:complexity:iii}, we observe from \cite{QCSPforests} that we are Pspace-hard even without constants. 
\end{proof}
We note that the complexity classification for semicomplete digraphs from \cite{ICALP2014} is unchanged regardless of whether all constants are present (since semicompletes are cores).

\section{Conclusion}
One important application of our abstract investigation of PGP 
yields a nice characterisation in the concrete case of collapsibility,
in particular in the case of a singleton source which we now know can
be equated with preservation under a single polymorphism, namely a
Hubie polymorphism. So far, this is the only known explanation for a
complexity of a QCSP in NP which provokes the following question. 
\begin{question}
  For a structure $\mathcal{A}$, is it the case that
  QCSP($\mathcal{A}$) is in NP iff $\mathcal{A}$ admits a Hubie polymorphism?
\end{question}
In the literature, it is common to study the case of non finite
constraint languages. This means that for an infinite set of relations
over the same finite domain $\Gamma$ we study the uniform problem
QCSP($\Gamma$) which covers all problems QCSP($\mathcal{A}$) where
$\mathcal{A}$ is a structure with relations from $\Gamma$.

Typically $\Gamma$ is taken to be the invariant of some algebra.
There is an example of such a problem QCSP($\Gamma$) with a complexity in NP that is provably
not collapsible but enjoys a property similar to $p$-collapsibility, namely
\emph{$p$-switchability}~\cite{AU-Chen-PGP}, which is a special form of PGP.

For $m\geq 1$ and $\bar{\i}=(i_1,i_2,\ldots,i_{p})$ a strictly
increasing sequence in $[m-1]^{p}$, let $\mathscr{S}_{\bar{\i},p}$ be
the adversary that consists of tuples $t \in A^m$ such that each of
the following sets contain a single element: 
$\{t[j] \in A | 1 \leq j \leq i_1 \}$, $\{t[j] \in A | i_1 +1 \leq j
\leq i_2 \}$, $\ldots$, $\{t[j] \in A | i_{p} +1 \leq j \leq m\}$.
Let $\Sigma_{m,p}$ be the set of all such adversaries
$\mathscr{S}_{\bar{\i},p}$. 
Let $\mathbf{\Sigma}_p$ be the sequence of adversaries
$\bigl(\Sigma_{m,p} \bigr)_{m \in \integerset}$.

We say that a structure $A$ is \emph{$p$-switchable}
iff for every $m$ and for all pH-sentence $\phi$
with $m$ universal variable, $\mathcal{A}\models
\phi_{\restrict\Sigma_{m,p}}$ implies   $\mathscr{A} \models \phi$.

We say that a set of relations $\Gamma$ is \emph{$p$-switchable} iff
every structure $\mathcal{A}$ with relations from $\Gamma$ is $p$-switchable.

Our definition of switchability is not exactly the same as that of
Hubie Chen who uses instead a single adversary $\cup \mathscr{S}_{\bar{\i}}$
for each arity. It is a simple exercise to show that both
sequences of adversaries satisfy the hypotheses of
Theorem~\ref{MainResult:InAbstracto}.
Since the two notions are of course equivalent in
the $\Pi_2$ case via the principle of union
(Lemma~\ref{lemma:union}), they are therefore equivalent in general. 
Thus not only we can equate switchability with
its $\Pi_2$ analogue but we can also give a purely syntactic
definition of switchability as follows. A structure $\mathcal{A}$ is $p$-switchable
iff, for all $m$ and for all pH formula $\varphi$ with $m$ universal
variables $x_1,x_2,\ldots,x_m$ (in this order), $\mathcal{A} \models \varphi$ iff for all
$\bar{\i}=(i_1,i_2,\ldots,i_{p})$ a strictly increasing sequence in
$[m-1]^{p}$, $\mathcal{A}\models \varphi\land\eta_{\bar{\i}}$ where
$\eta_{\bar{\i}}$ is $\bigwedge_{0\leq \ell_1 < \ell_2 \leq
  p}\bigwedge_{i_{\ell_1}\leq j<k \leq i_{\ell_2}} x_j=x_k$. 

However, there are two limitations to our result on
switchability. Firstly, we do not have a crisp candidate for a single
polymorphism or even a sequence of polymorphisms that would endow switchability. Secondly, our findings only
hold for finite structures, where it is unclear that switchability
plays a natural role. This provokes the following question.
\begin{question}
  For every infinite set of relations $\Gamma$, is it the case that $\Gamma$ is switchable iff
  it is $\Pi_2$-switchable?
\end{question}


Going back to collapsibility, regarding the meta-question of deciding
whether a structure is collapsible, one can wonder if the parameter
$p$ of collapsibility depends on the size of the structure
$\mathcal{A}$. In particular, this would provide a positive answer to
the following.
\begin{question}
  Given a structure $\mathcal{A}$, can we decide if it is $p$-collapsible
  for some $p$?
\end{question}

A tantalising question remains.
\begin{question}
  Are there any finite algebras, minimal generating sets for whose
  powers grow sub-exponentially (\mbox{e.g.} $\Theta(2^{\sqrt{i}})$)?
\end{question}
The alternative is that finite algebras exhibit a PGP-EGP gap in general. In a sequence of three papers \emph{Growth rates of algebras}, Kearnes, Kiss and Szenderei explore this question, demonstrating all polynomial growth rates are possible.

Finally, let us return to the \emph{foundation for F\"urstenproblem} and contemplate the complexity of the QCSP. Let $\mathcal{B}$ be a finite structure. At present it is not conjectured where one might seek to prove the boundary between QCSP$(\mathcal{B})$ being in P and QCSP$(\mathcal{B})$ being NP-hard, even in the case where all constants are present. Furthermore, settling this will be at least as hard as settling the similar dichotomy for CSP. However, we would like to specifically echo the conjecture of Chen in \cite{Meditations} (where it appears written in two conjectures).

\vspace{.2cm}
\noindent \textbf{Conjecture.} Let $\mathcal{B}$ be finite and expanded with all constants; then $\QCSP_c(\mathcal{B})$ is in NP iff id-Pol$(\mathcal{B})$ has the PGP. 

\section*{Acknowledgment}
The authors would like to thank the four anonymous reviewers for their patience,
stamina and very useful suggestions, which have been a great help to
prepare the final version of this paper.

\bibliographystyle{IEEEtran}
\bibliography{local}

\newpage

\appendix


\section*{Material omitted from \S~\ref{sec:pgp-versus-egp}.}

\subsection*{Partially reflexive paths (\textsl{c.f.}\ref{sec:prpaths})}
\subsubsection*{Cases with the PGP}

\

In the proof of Lemma~\ref{lem:prp-1}, we refer to the fact that loop-connected
p.r. paths have a majority polymorphism. In the reference, it is not
fully explicit how one builds such a majority operation, and we
highlight it here for the sake of completeness.  

Let $\mathcal{P}$ be a loop-connected path labelled in ascending
natural numerical order. Let $L$ be the irreflexive component left of
the central loops and $R$ be the irreflexive component right of the
loops. If there are no loops let the whole path be in $L$. 

Recall first that the operation median over the elements of
$\mathcal{P}$ returns the argument that is neither minimal, nor
maximal when the arguments are pairwise distinct, and behave as a
majority operation otherwise.

Define
$\text{Feder}(x,y,z):=\text{median}(x,y,z)$, if all ${x,y,z}$ have the
same parity, and $\text{Feder}(x,y,z):=$max of the repeated parity,
otherwise (this operation was communicated to one of the author by email by
Tom{\'a}s Feder, hence its name).

We define
$f(x,y,z):=\text{Feder}(x,y,z)$, if ${x,y,z} \subset L$ or ${x,y,z}
\subset R$, and $f(x,y,z):=\text{median}(x,y,z)$, otherwise.
This operation $f$ is a majority polymorphism and a polymorphism of
$\mathcal{P}$.

\

\noindent \textbf{Lemma~\ref{lem:bin-pol}.}
Let $\mathcal{P}_{0^a 1^b \alpha}$, with $b>0$, be a quasi-loop-connected path on vertices $[n]$. For each $y \in [n]$ there is a binary idempotent polymorphism $f_y$ of $\mathcal{P}_{0^a 1^b \alpha}$ so that $f_y(1,x)=x$ (for all $x$) and $f_y(n,1)=y$.

\

\begin{proof}
Let $y$ be given. Suppose $\mathcal{P}_{0^a 1^b \alpha}$ is of odd length and has centre at position $q$ (the argument for even length is very similar with central vertices $q,q'$). Choose $p$ minimal ($1 \leq p\leq q$) so it is a looped vertex. Let $r$ be so that $r-q=q-p$, \mbox{i.e.} $p$ and $r$ are first and last in the block $1^b$, and we have $1 \leq p\leq q \leq r \leq n$). An idempotent binary polymorphism on domain $[n]$ may be visualised as a matrix $X$ with leading diagonal $1,\ldots,n$. We consider the top-left and bottom-left parts of the matrix $X_{tl}$ and $X_{bl}$, respectively, to include as their farthest right column the central column of the matrix $X$ at position $q$. $X_{tl}$ and $X_{bl}$ will also overlap on the bottom row of the former which is the top row of the latter. Let us consider what constraints a polymorphism must satisfy. Across the whole matrix, diagonal neighbours must be adjacent elements. In $X_{tl}$, in fact, only the diagonals are needed to be considered to satisfy polymorphism. But in $X_{bl}$ (indeed the whole bottom half) there may be some horizontal lines that must satisfy the adjacency condition and in the right half there might be some vertical lines that need to satisfy this adjacency condition too. To see an example of this we direct the reader to $\mathcal{P}_{00001110110}$ in Figure~\ref{fig:third}. We will rebuild $X_{tl}$ and $X_{bl}$ to satisfy all horizontals, even though we do not need them all, and the right half of the matrix will satisfy all potential vertical lines.

When viewed as a matrix, the entire right half (from and including the middle column $q$) will obey $f(u,v)=v$. We now turn our attention to $X_{tl}$. The farthest right column of $X_{tl}$ is already set to $q$, and we will set the entire bottom row to $q$. We now remove these already-set positions and then set the farthest right column and bottom row of the remainder of $X_{tl}$ to $q-1$. We iterate this until we reach and have done this for $p$. Note that this is consistent with idempotency. We have now filled in $X$ other than a $(p-1) \times (p-1)$ matrix in the top-left which we call  $X'_{tl}$ and a  $(q-1) \times (q-1)$ matrix in the bottom-left which we still call  $X_{bl}$. This is depicted in Figure~\ref{fig:first} and satisfies all local conditions for polymorphism. The matrix $X'_{tl}$ must additionally satisfy leading diagonal idempotency and must also satisfy the boundary condition of $p$ against its right-most column and bottom row. The matrix $X_{bl}$ must satisfy position $(n,1)$ being $y$ and the boundary condition of $q$ against its right-most column and top row.

(Construction of $X'_{tl}$.) We explain how to fill in position $(1,i)$ and $(i,1)$ for $i \in [p-1]$ because each diagonal proceeding towards the centre of the matrix will contain an increasing arithmetic sequence with step $1$. Set $(1,i)$ to be $i$ and $(i,1)$ to be $i$ (when $i$ is odd) and $i+1$ (when $i$ is even). A simple calculation now yields the precise specification: if $\lambda < \mu$, set $(\lambda,\mu)$ to $\mu$;  if $\lambda > \mu$, set $(\lambda,\mu)$ to $\lambda$ (if $\lambda-\mu+1$ is odd) and to $\lambda+1$ (if $\lambda-\mu+1$ is even). It is easy to see that this satisfies polymorphism. Indeed, it satisfies polymorphism on the horizontals where it is not necessary (but will become necessary for $X_{bl}$).

(Construction of $X_{bl}$.) The upward diagonal from $y$ at position $(n,1)$ to $q$ is filled $y,y \pm 1 \ldots,q,\ldots,q$. That is, if $y\leq q$ we increase by one until we reach $q$ and then repeat $q$, and if  $y\geq q$ we decrease by one until we reach $q$ and then repeat $q$.

All rows and columns in $X_{bl}$ that contain a vertex $z \in \{p,\ldots,q,\ldots,r\}$ on the upward diagonal from $(n,1)$ are now filled in with $z$. At this point we are left with some $s \times s$ submatrix $X'_{bl}$ of $X_{bl}$ not filled in. $X'_{bl}$ might be empty if  $y \in \{p,\ldots,q,\ldots,r\}$, but if $X'_{bl}$ is not empty then we have the boundary condition of either $p$ or $r$ against its right-most column and top row. We now fill this in in precisely the dual fashion to our filling in of $X'_{tl}$. We will give the argument when the boundary condition is $r$ (the other case of boundary $p$ being very similar). We explain how to fill in position $(n,i)$ and $(i,n)$ for $i \in \{n,\ldots,n-s+1)$ because each diagonal proceeding towards the centre of the matrix will contain a decreasing (increasing if boundary is instead $p$) arithmetic sequence with step $1$. Set $(n,i)$ to be $y-i+1$ and $(i,n)$ to be $i-n+y$ (when $i$ is odd) and $i-1-n+y$ (when $i$ is even). It is not hard to see that this satisfies polymorphism, even on its horizontals.

Two examples, for the graph $\mathcal{P}_{0^41^3\alpha}$ with $|\alpha|=4$,   are given in Figure~\ref{fig:second}. The left-hand example is for ($n=11$ where $p=5, q=6,r=7$ and) $y=10$; and the right-hand example is for  ($n=11$ where $p=5, q=6,r=7$ and) $y=3$.
\end{proof}
\begin{figure}
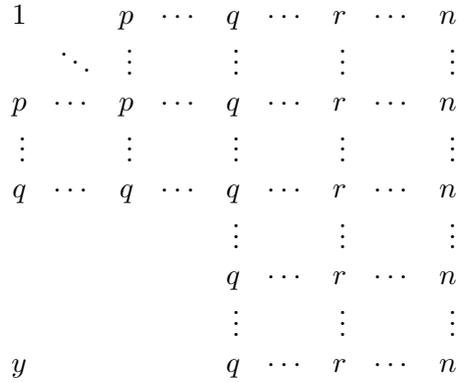

\centering
\[
\begin{array}{rrrrrrrrr}
1 &  & p & \cdots & q & \cdots & r & \cdots & n \\
 & \ddots & \vdots & & \vdots & & \vdots & & \vdots \\
p & \cdots & p & \cdots & q & \cdots & r & \cdots & n \\
\vdots &  & \vdots &  & \vdots  &  & \vdots &  & \vdots \\
q & \cdots & q & \cdots & q  & \cdots & r & \cdots & n \\
 &  &  &  & \vdots  &  & \vdots &  & \vdots \\
 &  &  &  & q & \cdots & r & \cdots & n \\
 &  &  & & \vdots &  & \vdots &  & \vdots \\
y &  &  &  & q & \cdots & r & \cdots & n \\
\end{array}
\]
\caption{First part of the construction for the proof of Lemma~\ref{lem:bin-pol}.}
\label{fig:first}
\end{figure}
\begin{figure}
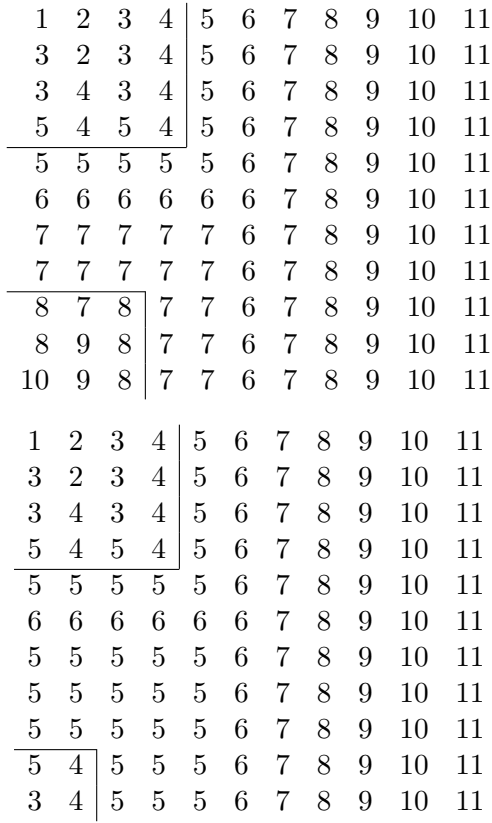

\centering
$
\begin{array}{rrr r rrrrrrr}
1 & 2 & 3 & \multicolumn{1}{r|}{4}& 5 & 6 & 7 & 8 & 9 & 10 & 11 \\
3 & 2 & 3 & \multicolumn{1}{r|}{4}& 5 & 6 & 7 & 8 & 9 & 10 & 11 \\
3 & 4 & 3 & \multicolumn{1}{r|}{4} & 5 & 6 & 7 & 8 & 9 & 10 & 11 \\
5 & 4 & 5 & \multicolumn{1}{r|}{4} & 5 & 6 & 7 & 8 & 9 & 10 & 11 \\
\cline{1-4}
5 & 5 & 5 & 5 & 5 & 6 & 7 & 8 & 9 & 10 & 11 \\
6 & 6 & 6 & 6 & 6 & 6 & 7 & 8 & 9 & 10 & 11 \\
7 & 7 & 7 & 7 & 7 & 6 & 7 & 8 & 9 & 10 & 11 \\
7 & 7 & 7 & 7 & 7 & 6 & 7 & 8 & 9 & 10 & 11 \\
\cline{1-3}
8 & 7 & \multicolumn{1}{r|}{8} & 7 & 7 & 6 & 7 & 8 & 9 & 10 & 11 \\
8 & 9 & \multicolumn{1}{r|}{8} & 7 & 7 & 6 & 7 & 8 & 9 & 10 & 11 \\
10 & 9 & \multicolumn{1}{r|}{8} & 7 & 7 & 6 & 7 & 8 & 9 & 10 & 11 \\
\end{array}
$

\vspace{0.3cm}

$
\begin{array}{rr rr rrrrrrr}
1 & 2 & 3 & \multicolumn{1}{r|}{4} & 5 & 6 & 7 & 8 & 9 & 10 & 11 \\
3 & 2 & 3 & \multicolumn{1}{r|}{4} & 5 & 6 & 7 & 8 & 9 & 10 & 11 \\
3 & 4 & 3 & \multicolumn{1}{r|}{4} & 5 & 6 & 7 & 8 & 9 & 10 & 11 \\
5 & 4 & 5 & \multicolumn{1}{r|}{4} & 5 & 6 & 7 & 8 & 9 & 10 & 11 \\
\cline{1-4} 
5 & 5 & 5 & 5 & 5 & 6 & 7 & 8 & 9 & 10 & 11 \\
6 & 6 & 6 & 6 & 6 & 6 & 7 & 8 & 9 & 10 & 11 \\
5 & 5 & 5 & 5 & 5 & 6 & 7 & 8 & 9 & 10 & 11 \\
5 & 5 & 5 & 5 & 5 & 6 & 7 & 8 & 9 & 10 & 11 \\
5 & 5 & 5 & 5 & 5 & 6 & 7 & 8 & 9 & 10 & 11 \\
\cline{1-2}
5 & \multicolumn{1}{r|}{4} & 5 & 5 & 5 & 6 & 7 & 8 & 9 & 10 & 11 \\
3 & \multicolumn{1}{r|}{4} & 5 & 5 & 5 & 6 & 7 & 8 & 9 & 10 & 11 \\
\end{array}
$
\caption{Two polymorphism of the graph $\mathcal{P}_{0^41^3\alpha}$, with $\alpha$ any string of $0$s and $1$s of length $4$. The lines indicate the boundaries of $X'_{tl}$ and $X'_{bl}$.} 
\label{fig:second}
\end{figure}
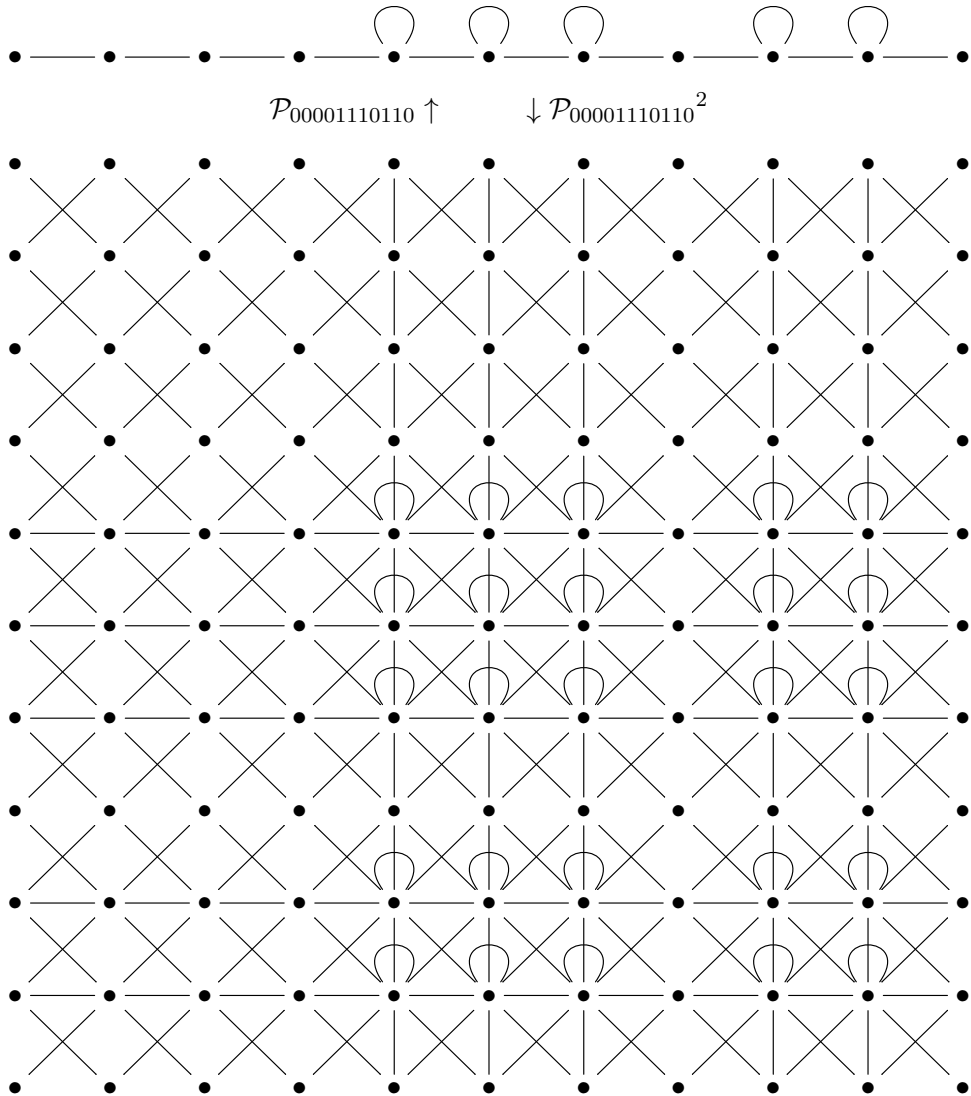
\begin{figure*}
  \[
  \xymatrix{
    \bullet \ar@{-}[r] &\bullet \ar@{-}[r] & \bullet \ar@{-}[r] & \bullet \ar@{-}[r] & \bullet \ar@{-}[r]  \ar@{-}@(ul,ur)[] & \bullet \ar@{-}[r]  \ar@{-}@(ul,ur)[] & \bullet \ar@{-}[r]  \ar@{-}@(ul,ur)[] & \bullet \ar@{-}[r] & \bullet \ar@{-}[r]  \ar@{-}@(ul,ur)[] & \bullet \ar@{-}[r]  \ar@{-}@(ul,ur)[] & \bullet \\
  }
  \]
  \[\mathcal{P}_{00001110110} \uparrow \ \ \ \ \ \ \ \ \ \downarrow {\mathcal{P}_{00001110110}}^2 \]
  \[
  \xymatrix{
    \bullet  \ar@{-}[dr] &\bullet \ar@{-}[dl] \ar@{-}[dr] & \bullet \ar@{-}[dl] \ar@{-}[dr] & \bullet \ar@{-}[dl] \ar@{-}[dr] & \bullet \ar@{-}[dl] \ar@{-}[dr] \ar@{-}[d] & \bullet  \ar@{-}[dl] \ar@{-}[dr]  \ar@{-}[d] & \bullet  \ar@{-}[dl] \ar@{-}[dr]  \ar@{-}[d] & \bullet \ar@{-}[dl] \ar@{-}[dr] & \bullet \ar@{-}[dl] \ar@{-}[dr]  \ar@{-}[d] & \bullet \ar@{-}[dl] \ar@{-}[dr]  \ar@{-}[d] & \bullet \ar@{-}[dl] \\
    \bullet  \ar@{-}[dr] &\bullet \ar@{-}[dl] \ar@{-}[dr] & \bullet \ar@{-}[dl] \ar@{-}[dr] & \bullet \ar@{-}[dl] \ar@{-}[dr] & \bullet \ar@{-}[dl] \ar@{-}[dr] \ar@{-}[d] & \bullet  \ar@{-}[dl] \ar@{-}[dr]  \ar@{-}[d] & \bullet  \ar@{-}[dl] \ar@{-}[dr]  \ar@{-}[d] & \bullet \ar@{-}[dl] \ar@{-}[dr] & \bullet \ar@{-}[dl] \ar@{-}[dr]  \ar@{-}[d] & \bullet \ar@{-}[dl] \ar@{-}[dr]  \ar@{-}[d] & \bullet \ar@{-}[dl] \\
    \bullet  \ar@{-}[dr] &\bullet \ar@{-}[dl] \ar@{-}[dr] & \bullet \ar@{-}[dl] \ar@{-}[dr] & \bullet \ar@{-}[dl] \ar@{-}[dr] & \bullet \ar@{-}[dl] \ar@{-}[dr] \ar@{-}[d] & \bullet  \ar@{-}[dl] \ar@{-}[dr]  \ar@{-}[d] & \bullet  \ar@{-}[dl] \ar@{-}[dr]  \ar@{-}[d] & \bullet \ar@{-}[dl] \ar@{-}[dr] & \bullet \ar@{-}[dl] \ar@{-}[dr]  \ar@{-}[d] & \bullet \ar@{-}[dl] \ar@{-}[dr]  \ar@{-}[d] & \bullet \ar@{-}[dl] \\
    \bullet  \ar@{-}[dr] &\bullet \ar@{-}[dl] \ar@{-}[dr] & \bullet \ar@{-}[dl] \ar@{-}[dr] & \bullet \ar@{-}[dl] \ar@{-}[dr] & \bullet \ar@{-}[dl] \ar@{-}[dr] \ar@{-}[d] & \bullet  \ar@{-}[dl] \ar@{-}[dr]  \ar@{-}[d] & \bullet  \ar@{-}[dl] \ar@{-}[dr]  \ar@{-}[d] & \bullet \ar@{-}[dl] \ar@{-}[dr] & \bullet \ar@{-}[dl] \ar@{-}[dr]  \ar@{-}[d] & \bullet \ar@{-}[dl] \ar@{-}[dr]  \ar@{-}[d] & \bullet \ar@{-}[dl] \\
    \bullet \ar@{-}[r]  \ar@{-}[dr] &\bullet \ar@{-}[r]  \ar@{-}[dl] \ar@{-}[dr] & \bullet \ar@{-}[r]  \ar@{-}[dl] \ar@{-}[dr] & \bullet \ar@{-}[r] \ar@{-}[dl] \ar@{-}[dr] & \bullet \ar@{-}[r] \ar@{-}[dl]  \ar@{-}[d] \ar@{-}[dr] \ar@{-}@(ul,ur)[] & \bullet \ar@{-}[r] \ar@{-}[dl] \ar@{-}[dr]  \ar@{-}[d] \ar@{-}@(ul,ur)[] & \bullet \ar@{-}[r] \ar@{-}[dl] \ar@{-}[dr]  \ar@{-}[d] \ar@{-}@(ul,ur)[] & \bullet \ar@{-}[r] \ar@{-}[dl] \ar@{-}[dr] & \bullet \ar@{-}[r] \ar@{-}[dl] \ar@{-}[dr]  \ar@{-}[d] \ar@{-}@(ul,ur)[] & \bullet \ar@{-}[r] \ar@{-}[dl] \ar@{-}[dr]  \ar@{-}[d] \ar@{-}@(ul,ur)[] & \bullet \ar@{-}[dl] \\
    \bullet \ar@{-}[r]  \ar@{-}[dr] &\bullet \ar@{-}[r]  \ar@{-}[dl] \ar@{-}[dr] & \bullet \ar@{-}[r]  \ar@{-}[dl] \ar@{-}[dr] & \bullet \ar@{-}[r] \ar@{-}[dl] \ar@{-}[dr] & \bullet \ar@{-}[r] \ar@{-}[dl]  \ar@{-}[d] \ar@{-}[dr] \ar@{-}@(ul,ur)[] & \bullet \ar@{-}[r] \ar@{-}[dl] \ar@{-}[dr]  \ar@{-}[d] \ar@{-}@(ul,ur)[] & \bullet \ar@{-}[r] \ar@{-}[dl] \ar@{-}[dr]  \ar@{-}[d] \ar@{-}@(ul,ur)[] & \bullet \ar@{-}[r] \ar@{-}[dl] \ar@{-}[dr] & \bullet \ar@{-}[r] \ar@{-}[dl] \ar@{-}[dr]  \ar@{-}[d] \ar@{-}@(ul,ur)[] & \bullet \ar@{-}[r] \ar@{-}[dl] \ar@{-}[dr]  \ar@{-}[d] \ar@{-}@(ul,ur)[] & \bullet \ar@{-}[dl] \\
    \bullet \ar@{-}[r]  \ar@{-}[dr] &\bullet \ar@{-}[r]  \ar@{-}[dl] \ar@{-}[dr] & \bullet \ar@{-}[r]  \ar@{-}[dl] \ar@{-}[dr] & \bullet \ar@{-}[r] \ar@{-}[dl] \ar@{-}[dr] & \bullet \ar@{-}[r] \ar@{-}[dl]  \ar@{-}[d] \ar@{-}[dr] \ar@{-}@(ul,ur)[] & \bullet \ar@{-}[r] \ar@{-}[dl] \ar@{-}[dr]  \ar@{-}[d] \ar@{-}@(ul,ur)[] & \bullet \ar@{-}[r] \ar@{-}[dl] \ar@{-}[dr]  \ar@{-}[d] \ar@{-}@(ul,ur)[] & \bullet \ar@{-}[r] \ar@{-}[dl] \ar@{-}[dr] & \bullet \ar@{-}[r] \ar@{-}[dl] \ar@{-}[dr]  \ar@{-}[d] \ar@{-}@(ul,ur)[] & \bullet \ar@{-}[r] \ar@{-}[dl] \ar@{-}[dr]  \ar@{-}[d] \ar@{-}@(ul,ur)[] & \bullet \ar@{-}[dl] \\
    \bullet  \ar@{-}[dr] &\bullet \ar@{-}[dl] \ar@{-}[dr] & \bullet \ar@{-}[dl] \ar@{-}[dr] & \bullet \ar@{-}[dl] \ar@{-}[dr] & \bullet \ar@{-}[dl] \ar@{-}[dr] \ar@{-}[d] & \bullet  \ar@{-}[dl] \ar@{-}[dr]  \ar@{-}[d] & \bullet  \ar@{-}[dl] \ar@{-}[dr]  \ar@{-}[d] & \bullet \ar@{-}[dl] \ar@{-}[dr] & \bullet \ar@{-}[dl] \ar@{-}[dr]  \ar@{-}[d] & \bullet \ar@{-}[dl] \ar@{-}[dr]  \ar@{-}[d] & \bullet \ar@{-}[dl] \\
    \bullet \ar@{-}[r]  \ar@{-}[dr] &\bullet \ar@{-}[r]  \ar@{-}[dl] \ar@{-}[dr] & \bullet \ar@{-}[r]  \ar@{-}[dl] \ar@{-}[dr] & \bullet \ar@{-}[r] \ar@{-}[dl] \ar@{-}[dr] & \bullet \ar@{-}[r] \ar@{-}[dl]  \ar@{-}[d] \ar@{-}[dr] \ar@{-}@(ul,ur)[] & \bullet \ar@{-}[r] \ar@{-}[dl] \ar@{-}[dr]  \ar@{-}[d] \ar@{-}@(ul,ur)[] & \bullet \ar@{-}[r] \ar@{-}[dl] \ar@{-}[dr]  \ar@{-}[d] \ar@{-}@(ul,ur)[] & \bullet \ar@{-}[r] \ar@{-}[dl] \ar@{-}[dr] & \bullet \ar@{-}[r] \ar@{-}[dl] \ar@{-}[dr]  \ar@{-}[d] \ar@{-}@(ul,ur)[] & \bullet \ar@{-}[r] \ar@{-}[dl] \ar@{-}[dr]  \ar@{-}[d] \ar@{-}@(ul,ur)[] & \bullet \ar@{-}[dl] \\
    \bullet \ar@{-}[r]  \ar@{-}[dr] &\bullet \ar@{-}[r]  \ar@{-}[dl] \ar@{-}[dr] & \bullet \ar@{-}[r]  \ar@{-}[dl] \ar@{-}[dr] & \bullet \ar@{-}[r] \ar@{-}[dl] \ar@{-}[dr] & \bullet \ar@{-}[r] \ar@{-}[dl]  \ar@{-}[d] \ar@{-}[dr] \ar@{-}@(ul,ur)[] & \bullet \ar@{-}[r] \ar@{-}[dl] \ar@{-}[dr]  \ar@{-}[d] \ar@{-}@(ul,ur)[] & \bullet \ar@{-}[r] \ar@{-}[dl] \ar@{-}[dr]  \ar@{-}[d] \ar@{-}@(ul,ur)[] & \bullet \ar@{-}[r] \ar@{-}[dl] \ar@{-}[dr] & \bullet \ar@{-}[r] \ar@{-}[dl] \ar@{-}[dr]  \ar@{-}[d] \ar@{-}@(ul,ur)[] & \bullet \ar@{-}[r] \ar@{-}[dl] \ar@{-}[dr]  \ar@{-}[d] \ar@{-}@(ul,ur)[] & \bullet \ar@{-}[dl] \\
    \bullet & \bullet & \bullet & \bullet & \bullet & \bullet & \bullet & \bullet & \bullet & \bullet & \bullet \\
  }
  \]
  \caption{Path $\mathcal{P}_{0^41^301^20}$ and its square} 
  \label{fig:third}
\end{figure*}

\
\begin{lemma}
  \label{lem:bin-pol2}
  Let $\mathcal{P}_{0^a\alpha}$ be a quasi-loop-connected path on vertices $[n]$ (that is not of the form $\mathcal{P}_{0^a 1^b \alpha}$ with $|\alpha|=a$). For each $y \in [n]$ there is a binary idempotent polymorphism $f_y$ of $\mathcal{P}_{0^a\alpha}$ so that $f_y(1,x)=x$ (for all $x$) and either $f_y(n,1)=y$ or $f_y(n,2)=y$.
\end{lemma}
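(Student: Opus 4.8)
The plan is to produce, for each $y\in[n]$, an explicit $n\times n$ array $X=(X[i][j])_{i,j\in[n]}$ with $X[i][j]:=f_y(i,j)$ and to verify it defines a homomorphism $\mathcal{P}_{0^a\alpha}^2\to\mathcal{P}_{0^a\alpha}$. I would keep the global skeleton from the proof of Lemma~\ref{lem:bin-pol}: impose that the first row is the identity (this is exactly the requirement $f_y(1,x)=x$), that the leading diagonal is $X[i][i]=i$ (idempotency), and that every column from the centre $q$ rightwards is a constant column, $f_y(u,v)=v$ for $v\ge q$. As in Lemma~\ref{lem:bin-pol}, this last choice automatically discharges all diagonal and all loop-induced (vertical and horizontal) adjacency constraints in the right half, and it is compatible with both the identity first row and the idempotent diagonal.

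The genuinely new ingredient is a parity analysis forced by the fact that the first $a\approx n/2$ vertices of $\mathcal{P}_{0^a\alpha}$ carry no loop, so the left half of $\mathcal{P}_{0^a\alpha}^2$ is bipartite and a homomorphism into a path is rigid there on parities. Unwinding the distance formula in the product square of a path (a walk of length $\ell$ from $c$ to $d$ in an irreflexive path exists iff $\ell\ge|c-d|$ and $\ell\equiv|c-d|\pmod 2$), one identifies which values the corner $(n,1)$, resp.\ $(n,2)$, may legitimately take given the identity first row and the idempotent diagonal: one corner can be set to any value of a fixed parity and the other to any value of the complementary parity, and between them they exhaust $[n]$. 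This is exactly why the statement is phrased with ``either \dots\ or''. So, given $y$, I first choose the target corner: $(n,1)$ if $y$ has the parity admissible there, and $(n,2)$ otherwise; the sub-cases $y\in\{1,2\}$ are in any case immediate, witnessed by the second projection.

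With the target corner fixed, the remaining work is to fill the two residual blocks, just as in Lemma~\ref{lem:bin-pol}. In the top-left block I peel off L-shaped borders (rightmost column and bottom row) with successively decreasing constant values until reaching a residual square carrying only an idempotent diagonal and a constant boundary condition, and fill that square by the explicit zigzag pattern of Lemma~\ref{lem:bin-pol} ($X[\lambda][\mu]=\mu$ for $\lambda<\mu$, and $\lambda$ or $\lambda+1$ according to the parity of $\lambda-\mu+1$ for $\lambda>\mu$), which satisfies the adjacency conditions even along horizontals. For the bottom-left block I run the anti-diagonal carrying the corner value $y$ up towards the centre, pad it so that it becomes the constant $q$ once it enters the central (and, where present, looped) region, fill every row and column crossed by that diagonal with the corresponding value, and finish the residual square by the dual zigzag. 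The pictures of Figures~\ref{fig:first} and~\ref{fig:second} still apply, now with the ``$y$-diagonal'' possibly starting one column to the right.

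The main obstacle I expect is precisely the verification of this bottom-left block in the shifted case, where the corner value sits at $(n,2)$ rather than $(n,1)$: moving the $y$-diagonal over by one column changes the parities along the bottom rows, so I must re-examine which rows and columns of that block are forced to be constant by the loops of $\alpha$ and confirm the zigzag still fits all of them, while treating uniformly the two length regimes $|\alpha|=a$ and $|\alpha|=a-1$ (equivalently $n$ even or odd) and the various possible placements of loops inside $\alpha$. A lighter secondary point is to make the parity computation of the second paragraph fully explicit — identifying the connected components of $\mathcal{P}_{0^a\alpha}^2$ that meet the diagonal and the first row — so as to be sure the two corners between them realise every $y\in[n]$.
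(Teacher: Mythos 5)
Your skeleton (identity first row, idempotent diagonal, constant columns on the right, zigzag top-left block, parity analysis for which corner can carry $y$) matches the paper's setup, and the parity observation is indeed the reason the statement reads ``either $f_y(n,1)=y$ or $f_y(n,2)=y$''. But your plan for the bottom-left block is a transplant of the construction from Lemma~\ref{lem:bin-pol}, and that construction cannot survive in the present setting. In Lemma~\ref{lem:bin-pol} the anti-diagonal from the corner is padded with repeated copies of $q$ and then entire rows and columns of $X_{bl}$ are filled with a constant value $z\in\{p,\ldots,q,\ldots,r\}$; this is legitimate there only because $p,\ldots,r$ is the looped block $1^b$ with $b>0$, so that two adjacent cells both holding $z$ satisfy the edge constraint $E(z,z)$. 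In Lemma~\ref{lem:bin-pol2} the paths are exactly those whose central vertices carry \emph{no} loop (indeed the path may be entirely irreflexive), so a diagonal stabilised at a constant value, or a row/column filled with a constant, violates the polymorphism condition the moment two diagonally or horizontally adjacent cells share an unlooped value. Your hedge ``(and, where present, looped)'' does not rescue this: when no loop is present near the centre there is nothing to anchor the construction to, and the residual-square bookkeeping you describe never gets off the ground.

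The paper's actual proof replaces the bottom-left machinery wholesale by a \emph{mirror reflection}: the columns $n/2+1,\ldots,n$ are made constant, the remaining left strip is split by a horizontal line whose position encodes $y$ (just after row $n/2+(z-1)/2$ for $y\in\{z,z+1\}$, $z$ odd), the part above the line is filled by the zigzag pattern of $X'_{tl}$, and the part below is its mirror image in that line (Figure~\ref{fig:second-2}). Because the zigzag already satisfies the horizontal adjacency conditions and the left columns are unlooped (so vertical conditions are vacuous there), reflecting never creates a forbidden constant run, and the bottom row becomes a copy of an early row of the zigzag, which is what delivers $f_y(n,1)=z$ and $f_y(n,2)=z+1$. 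That reflection idea is the essential new content of this lemma and is missing from your proposal.
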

%

\begin{proof}
Suppose first that $y \in \{1,2\}$. Assume $n$ is even (the argument for the odd case is very similar). Our proof has similarities to that of Lemma~\ref{lem:bin-pol}. We will rebuild $X_{tl}$ roughly as before, but now we rebuild  $X_{bl}$ as a mirror image of  $X_{tl}$. We will set the columns $n/2+1,\ldots,n$ of our matrix to be full columns of $n/2+1,\ldots,n$, respectively. Take the remainder of the matrix, on columns $1,\ldots,n/2$ and split it into two across its central horizontal. Call the top left $X_{tl}$ and the bottom left $X_{bl}$. We will fill in $X_{tl}$ according to the construction of $X'_{tl}$ from Lemma~\ref{lem:bin-pol}. We will now fill $X_{bl}$ to be a mirror image of $X_{tl}$ in the central horizontal. An example of this is depicted in Figure~\ref{fig:second-2}.

Now for general $y \in \{z,z+1\}$ with $z$ odd, we shift the horizontal split between $X_{tl}$ and $X_{bl}$ downwards (making $X_{tl}$ larger). The split will be just after row $n/2+(z-1)/2$, \mbox{i.e.} at $z=n-1$ the matrix $X_{bl}$ is empty. We now build $X_{tl}$ according to the construction of $X'_{tl}$ from Lemma~\ref{lem:bin-pol} and now fill $X_{bl}$ to be a mirror image of the bottom part  of $X_{tl}$ in the horizontal just after row $n/2+(z-1)/2$. An example of this is depicted in Figure~\ref{fig:second-2}.
\end{proof}
\begin{figure}
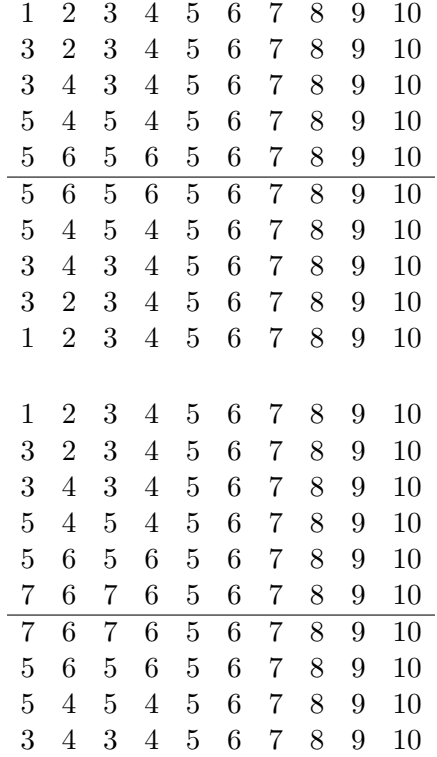

\centering
$
\begin{array}{rrrrrrrrrr}
1 & 2 & 3 & 4 & 5 & 6 & 7 & 8 & 9 & 10 \\
3 & 2 & 3 & 4 & 5 & 6 & 7 & 8 & 9 & 10 \\
3 & 4 & 3 & 4 & 5 & 6 & 7 & 8 & 9 & 10 \\
5 & 4 & 5 & 4 & 5 & 6 & 7 & 8 & 9 & 10 \\
5 & 6 & 5 & 6 & 5 & 6 & 7 & 8 & 9 & 10 \\
\hline
5 & 6 & 5 & 6 & 5 & 6 & 7 & 8 & 9 & 10 \\
5 & 4 & 5 & 4 & 5 & 6 & 7 & 8 & 9 & 10 \\
3 & 4 & 3 & 4 & 5 & 6 & 7 & 8 & 9 & 10 \\
3 & 2 & 3 & 4 & 5 & 6 & 7 & 8 & 9 & 10 \\
1 & 2 & 3 & 4 & 5 & 6 & 7 & 8 & 9 & 10 \\
\end{array}
$

\vspace{0.5cm}

$
\begin{array}{rrrrrrrrrrr}
1 & 2 & 3 & 4 & 5 & 6 & 7 & 8 & 9 & 10 \\
3 & 2 & 3 & 4 & 5 & 6 & 7 & 8 & 9 & 10 \\
3 & 4 & 3 & 4 & 5 & 6 & 7 & 8 & 9 & 10 \\
5 & 4 & 5 & 4 & 5 & 6 & 7 & 8 & 9 & 10 \\
5 & 6 & 5 & 6 & 5 & 6 & 7 & 8 & 9 & 10 \\
7 & 6 & 7 & 6 & 5 & 6 & 7 & 8 & 9 & 10 \\
\hline
7 & 6 & 7 & 6 & 5 & 6 & 7 & 8 & 9 & 10 \\
5 & 6 & 5 & 6 & 5 & 6 & 7 & 8 & 9 & 10 \\
5 & 4 & 5 & 4 & 5 & 6 & 7 & 8 & 9 & 10 \\
3 & 4 & 3 & 4 & 5 & 6 & 7 & 8 & 9 & 10 \\
\end{array}
$
\caption{Examples for the proof of Lemma~\ref{lem:bin-pol2}. The line indicates the axis of symmetry for the mirror image.}
\label{fig:second-2}
\end{figure}

\noindent \textbf{Lemma~\ref{lem:generation2}.}
Let $\mathcal{P}_{0^a \alpha}$, for $|\alpha|\in \{a,a-1\}$, be a quasi-loop-connected path on vertices $[n]$  (that is not of the form $\mathcal{P}_{0^a 1^b \alpha}$ with $|\alpha|=a$). Let $\mathbb{A}$ be the algebra specified by $\mathrm{id\mbox{-}Pol}(\mathcal{P}_{0^a \alpha})$. For each $m$, $\mathbb{A}^m$ is generated from the $2n+2$ $m$-tuples $(1,1,\ldots,1),$ $(2,2,\ldots,2),(n,1,\ldots,1),$ $(1,n,\ldots,1),$ $\ldots, (1,1,\ldots,n)$,$(n,2,\ldots,2), (2,n,\ldots,2), \ldots, (2,2,\ldots,n)$.

\

\begin{proof}
 The proof is as in the Lemma~\ref{lem:generation} but relies upon Lemma~\ref{lem:bin-pol2} in place of Lemma~\ref{lem:bin-pol}.
\end{proof}

\subsubsection*{Cases with the EGP}

\


For a digraph $\mathcal{H}$, the distance, $d_{\mathcal{H}}$,  between two vertices is the number of edges in a shortest path connecting them. By  $\mathcal{H}^n$ we mean  the tensor product  of $\mathcal{H}$ with itself $n$ times. 

We note that polymorphisms do not increase distances in graphs, i.e. if $f$ is an $n$-ary polymorphism of $\mathcal{H}$  and $u, v\in H^n$ then 
$d_{\mathcal{H}^n}(u, v) \le d_{\mathcal{H}}(f(u), f(v))$.

Lemma~\ref{lem:cati} is proved by induction on the arity of the
polymorphisms. We deal first with the base case.
\begin{lemma}
  \label{generalbinary}
  Let $\alpha$ be any sequence of zeros and ones. All idempotent
  binary polymorphisms of $\mathcal{P}_{10\alpha01}$ are projections.
\end{lemma}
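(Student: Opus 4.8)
The plan is to treat this as the base case of the induction on arity announced for Lemma~\ref{lem:cati}, and to extract everything from the asymmetry of the two ends of $\mathcal{P}_{10\alpha01}$: writing $n:=|\alpha|+4$ and labelling the vertices $1,\dots,n$ in order, the terminal vertices $1$ and $n$ carry loops while the penultimate vertices $2$ and $n-1$ do not. Let $f$ be an idempotent binary polymorphism of $\mathcal{P}_{10\alpha01}$; I must show that $f$ is one of the two projections. Throughout I use the two facts recorded above: a polymorphism sends a loop to a loop, so looped vertices of $\mathcal{P}_{10\alpha01}^2$ go to looped vertices of $\mathcal{P}_{10\alpha01}$; and a polymorphism does not increase distances.

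First I would fix the ``boundary data''. Idempotency pins the diagonal, $f(i,i)=i$ for all $i$; since $1$ and $n$ are looped, the corners $(1,1),(1,n),(n,1),(n,n)$ of the square are looped, so $f(1,n)$ and $f(n,1)$ are looped vertices. Because $1$ and $n$ carry loops, each ``extreme row or column'' of $\mathcal{P}_{10\alpha01}^2$, for instance the subdigraph induced on $\{1\}\times[n]$, is isomorphic to $\mathcal{P}_{10\alpha01}$, so $f(1,-),f(-,1),f(n,-),f(-,n)$ are endomorphisms of $\mathcal{P}_{10\alpha01}$ (the first two fixing $1$, the last two fixing $n$). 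The crucial rigidity statement is a pure distance argument: an endomorphism $e$ with $e(1)=1$ and $e(n)=n$ is the identity. Indeed $e(2)\in\{1,2\}$, and $e(2)=1$ would force a walk of length $n-2$ from $1$ to $n$, impossible as $d(1,n)=n-1$; hence $e(2)=2$, so the walk $e(2),\dots,e(n)$ from $2$ to $n$ has length $n-2=d(2,n)$ and is therefore geodesic, giving $e(i)=i$ for all $i$.

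Next I would set up the ``propagation engine''. Suppose $f(1,n)=n$. Then $f(1,-)$ is an endomorphism with $f(1,1)=1$ and $f(1,n)=n$, hence $f(1,-)=\mathrm{id}$ by the rigidity statement: row $1$ equals $\pi_2$. Now induct on $i$: if row $i$ equals $\pi_2$, then since $i{+}1$ is a neighbour of $i$ in the domain, each $f(i{+}1,j)$ must be adjacent in $\mathcal{P}_{10\alpha01}$ to $f(i,j')=j'$ for every neighbour $j'$ of $j$; but for every vertex $j$ the only vertex adjacent to all neighbours of $j$ is $j$ itself --- this uses precisely that $2$ and $n-1$ are loopless, so that the two end vertices are no exception --- whence $f(i{+}1,j)=j$ and row $i{+}1$ equals $\pi_2$ as well. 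Thus $f=\pi_2$. Symmetrically, if $f(1,n)=1$ then $f(-,n)$ is an endomorphism fixing both $1$ and $n$, hence the identity, so column $n$ equals $\pi_1$, and the dual induction gives $f=\pi_1$.

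The remaining, and main, obstacle is to exclude the possibility that $f(1,n)$ --- equivalently, by the reversal isomorphism $\mathcal{P}_{10\alpha01}\cong\mathcal{P}_{10\widetilde\alpha01}$ and by a coordinate swap, also $f(n,1)$ --- is an \emph{interior} looped vertex $\ell$ with $1<\ell<n$. Distance considerations alone cannot do this, because a loop at $\ell$ makes walks out of $\ell$ flexible enough that no distance inequality in $\mathcal{P}_{10\alpha01}^2$ is violated. Instead I would analyse $f(1,-)$ and $f(-,n)$ simultaneously as walks of length $n-1$ --- the first from $1$ to $\ell$, the second from $\ell$ to $n$ --- subject to the constraint that at every interior looped domain position each must visit a looped vertex, and play these off against the cross-adjacency constraints $(i,j)\sim(i{\pm}1,j')$ and $(i,j)\sim(i,j{\pm}1)$ in $\mathcal{P}_{10\alpha01}^2$ (which again exploit that $2$ and $n-1$ act as loopless bottlenecks). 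I expect this to require a secondary induction together with this bookkeeping; organising it cleanly, rather than finding any single clever inequality, is where the real work lies.
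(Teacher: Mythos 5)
Your boundary analysis and propagation engine are sound as far as they go: the rigidity of endomorphisms fixing both ends, and the observation that once one extreme row equals a projection the loopless vertices $2$ and $n-1$ force every subsequent row to agree (since in $\mathcal{P}_{10\alpha01}$ the unique common neighbour of $N(j)$ is $j$ itself), are both correct. But the proof is not complete, and you say so yourself: the case in which $f(1,n)$ (or $f(n,1)$) is an \emph{interior} looped vertex $\ell$ with $1<\ell<n$ is exactly the hard part, and your proposal for it is only a plan (``I expect this to require a secondary induction together with this bookkeeping''), not an argument. Since $\alpha$ is arbitrary, interior loops genuinely occur (e.g.\ $\mathcal{P}_{10101}$), the corner $(1,n)$ of the square is looped, and as you correctly note no single distance inequality from the corners rules out $f(1,n)=\ell$; so the lemma is simply not proved for those $\alpha$. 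This is a genuine gap, and it sits precisely where the real content of the lemma lies.

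It is worth seeing how the paper's proof sidesteps the problem. Rather than reading off $f$ at the far corner $(1,n)$ first, it (a) establishes the sandwich bound $\min\{i,j\}\le f(i,j)\le\max\{i,j\}$ by walking from $(i,j)$ down to the idempotent corner $(0,0)$ and up to $(t,t)$, and then (b) inducts on the gap $k=|i-j|$, starting from consecutive pairs $f(0,1),f(1,0)$ where the sandwich bound leaves only two values and the edge between $(0,1)$ and $(1,0)$ plus the dual constraint at the other end forces a projection. The interior-loop danger you identify resurfaces there as ``Case 2'' of the inductive step ($f(m,0)=m-1$), and is killed by showing that this would force a chain of loops propagating along the path until it collides with the loopless vertex $t-1$ --- i.e.\ the same loopless-bottleneck idea you gesture at, but executed in a setting where the induction hypothesis on smaller gaps supplies the needed anchor values. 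If you want to salvage your architecture, the most economical fix is probably to import exactly this: prove the sandwich bound first, which immediately yields $f(1,n)\in\{1,n\}$ up to nothing at all --- no, it only yields $1\le f(1,n)\le n$ --- so you would still need the consecutive-pair bootstrap; in other words, you would end up reproducing the paper's induction rather than avoiding it.
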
 
\begin{proof}
We label the vertices of $\algP=\mathcal{P}_{10\alpha 01}$ left to right over $0, 1, \ldots, t=|\algP|-1$
and start by showing that any binary polymorphism $f$ of $\algP$ 
must satisfy the following 
$$f(i, j)\le \max \{i, j\} \ \ {\rm and } \ \  f(i, j)\ge \min\{i, j\},$$
with $i, j=0, \ldots, t$ and considering the natural  linear ordering of the labelling of vertices of $\algP$.
Assume, for a contradiction, that there exist $i,j$  such that $f(i, j) =k$ with $k>i, j $. Without loss of generality we assume that $i<j$. There exists a path  of length at most $j$ from $f(i, j)$ to $f(0, 0)=0$, via the vertices $f(i-1, j-1), \ldots, f(0, 1), f(0, 0)$, but clearly $d_{\mathcal{H}}(0, k)=k$, so we get a contradiction. Dually, we can show that we also cannot have $k<i, j$.

We now show that $f_{|\{x, x+k\}}$ is, without loss of generality,  the first projection, by induction on $k\ge 1$.
 
There is an edge, in $\mathcal{P}$, from $f(0,1)$ and from $f(1,0)$ to $f(0,  0)=0$, so $f(0,  1), f(1,0)\in \{1,0\}$. There is also an edge from $f(0, 1)$ to $f(1,0)$, so they cannot both be equal to $1$. 
In a similar way we can check that $f(t, t-1), f(t-1, t)\in  \{t-1, t\}$ and  they cannot both be equal to $t-1$.

We have  $d_{\mathcal{H}}(f(0,1), f(t-1, t), d_{\mathcal{H}}(f(1,0), f(t, t-1)\le t-1$,  since $f(t, t-1)$ and $ f(t-1,t)$ cannot both be equal to $t-1$, this immediately implies 
that $f(1,0)$ and $f(0,1)$ cannot both be equal to $0$. Hence it follows that $f_{|\{1,2\}}$  
must be a projection.  Assume, without loss of generality, that it is the first projection.

To be able to get the correct distances from $f(0,1)$ to $f(t-1,t)$ we must have that $f$ restricted 
to any two consecutive vertices must be the first projection, i.e. $f(x, x+1)=x$ and $f(x+1, x)=x+1$ for all $x=0, \ldots, t-1$.

Now, assume that $f_{|\{x, x+l\}}$ is the first projection, for all $l<m$ and all $x=0, \ldots, t-l$. We show that $f_{|\{x, x+m\}}$ is also the first projection, by induction on $x$.
For the base case $x=0$, we know that  there is an edge from $f(0, m)$ to $f(0, m-1)$ and an edge from $f(m,0)$ to $f(m-1, 0)$.
 By the inductive hypothesis, $f(0, m-1)=0$ and $f(m-1, 0)=m-1$, so we must have $f(0, m)\in \{1,0\}$ and $f(m,0)\in \{m-2,m-1,m\}$.

Also, there is an edge from $f(0,m)$ to $f(1,m-1)$,  by the inductive hypothesis $f(1,m-1)=1$, so we must have $f(0,m)=0$. 
We now just need to consider the case $f(m,0)$.

{\it Case 1:} Suppose that $f(m, 0)=m-2$;   $d_{\mathcal{H}}(f(m, 0), f(t, t-(m+1))\le t-m$, and by the inductive hypothesis  $f(t, t-(m+1))=t$. So $d_{\mathcal{H}}(f(m,0), t)\le t- m$, but $d_{\mathcal{H}}(m-2, t)=t-(m-2)$, so we get a contradiction.

{\it Case 2:} Suppose that $f(m, 0)=m-1$. Since there is an edge from $f(m, 0)$ to $f(m-1, 0)$, and $f(m-1, 0)=m-1$ by the inductive hypothesis, it follows that $m-1$ must be a loop. Now, there is an edge from $f(m,0)$ to $f(m+1,1)$ and from this to $f(m,2)$. Since $f(m,2)=m$, by the inductive hypothesis. We have that $f(m+1,1)\in \{m-1,m\}$. If $f(m+1,1)=m-1$ we get a similar contradiction as in Case 1, so we must have $f(m+1,1)=m$, which also implies that $m$ must be a loop. We now move on to $f(m+2,2)$ and using the same reasoning we get that $m+2$ must be a loop. Carrying on in this way we will eventually reach a contradiction since the vertex $t-1$ does not have a loop; unless $m=t$, in which case $f(t, 0)$ is a loop and we immediately have $f(t, 0)=t$. 

Hence we must have $f(m,0)=m$. This proves the base case.

Assume now that $f_{|\{x, x+m\}}$ is the first projection for all $x<b$. We show that $f_{|\{b, b+m\}}$ is also the first projection.  There are  edges from $f(b, b+m)$ and $f(b+m, b)$ to $f(b-1, b-1+m)$ and $f(b-1+m, b-1)$ respectively. By the inductive hypothesis, $f(b-1, b-1+m)=b-1$ and $f(b-1+m, b-1)=b-1+m$.
So we have $f(b, b+m)\in \{b-2,b-1,b\}$ and $f(b+m, b)\in \{b-2+m, b-1+m, b+m\}$.

There is  an  edge from  $f(b, b+m)$  to $f(b+1, b+m-1)$, and,  by the inductive hypothesis, $f(b+1, b+m-1)=b+1$. So we must have $f(b, b+m)\in \{b, b+1, b+2\}$, it immediately follows that $f(b, b+m)=b$.
Like above, in Cases 1 and 2, we can show that we also must have $f(b+m, b)=b+m$.

This proves the lemma. 
\end{proof}

\

\noindent \textbf{Lemma~\ref{lem:cati}.}
Let $\alpha$ be any sequence of zeros and ones. All idempotent polymorphisms of $\mathcal{P}_{10\alpha01}$ are projections.

\begin{proof}
Let $\mathcal{P}=\mathcal{P}_{10\alpha01}$ and label the vertices of $\mathcal{P}$ over $[t]$ with $ t=|\mathcal{P}|$ left to right. Let $n\ge 2$ be arbitrary and let $f(x_1, \ldots, x_n)$ be any idempotent $n$-ary polymorphism of $\mathcal{P}$. We prove the lemma  by induction on $n$, with base case given by Lemma~\ref{generalbinary}.

Assume now that the lemma holds for any $n<k$, i.e. we have  $f(x_1, \ldots, x_n)=x_1$ for any $x_1, \ldots, x_n$ vertices of $\algP$ and any $n<k$. Let us consider the case when $f$ is a polymorphism of arity $k$. We will show that $f(x_1, \ldots, x_k)$ is also the first projection.

{\it Case 1:} $x_1$ is not the left-most nor the right-most  element of $x_1, \ldots, x_k$.

In this case we know that $d(f(x_1, \ldots, x_k), f(1, y_2, \ldots, y_k))\le x_1-1$, where $y_i=x_i-x_1$  if $x_i>x_1$ and it is $1$ otherwise. Now at least one of the $y_i$s equals to $1$, so at this stage $f(1, y_2, \ldots, y_k)$ matches  a polymorphism of arity smaller than $k$ and we can apply the inductive hypothesis, so that  $f(1, y_2, \ldots, y_k)=1$.   It follows that $d(f(x_1, \ldots, x_k), 1)\le x_1-1$.
In a similar way we obtain that $d(f(x_1, \ldots, x_k), f(t, z_2, \ldots, z_k))\le t -x_1-1$, so that $d(f(x_1, \ldots, x_k), t)\le  t -x_1-1$. It follows that $f(x_1, \ldots, x_k)=x_1$.

{\it Case 2:} $x_1=1$;
Assume, wlog, that $x_2$ is the left-most element  of $x_2, \ldots, x_k$ and is not equal to $x_1$. Then $d(f(1, x_2, \ldots, x_k), f(1, 1, y_3, \ldots, y_k))\le x_2-1$, with $y_i$ defined as above. By the inductive hypothesis $f(1, 1, y_3, \ldots, y_k)=1$, so that $d(f(1, x_2, \ldots, x_k), 1)\le x_2-1$, hence $f(1, x_2, \ldots, x_k)\le x_2$. 
We show that $f(1, x_2, \ldots, x_k)=1$ by induction on $x_2$.
If $x_2=2$ then $f(1, x_2, \ldots, x_k)\in \{1, 2\}$, and
we know that there is an  edge from $f(1, 2, x_3\ldots, x_k)$ to $f(2, 1, x_3-1, \ldots, x_k-1)$. Since there are no loops at $2$ and, by Case 1, $f(2, 1, x_3-1, \ldots, x_k-1)=2$, we must have $f(1, x_2, \ldots, x_k)=1$.

Assume now that the result holds whenever $x_2<z$. Then $f(1, z, x_3, \ldots, x_k)\le z$ and there is an arc from this vertex to $f(1, z-1, x_3-1, \ldots, x_k-1)$,  since $f(1, z-1, x_3-1, \ldots, x_k-1)=1$ by the inductive hypothesis, we must have $f(1, z, x_3, \ldots, x_k)\in \{1, 2\}$. 

Suppose, for a contradiction, that $f(1, z, x_3, \ldots, x_k)=2$. Since there is an arc from this vertex to $f(2, z-1, x_3-1, \ldots, x_k-1)$ we have that $f(2, z-1, x_3-1, \ldots, x_k-1)\in \{1, 3\}$. Now $d(f(2, z-1, x_3-1, \ldots, x_k-1), f(\lceil{z/2}\rceil+1, \lceil{z/2}\rceil, x_3-\lceil{z/2}\rceil,\ldots, x_k-\lceil{z/2}\rceil)\le \lceil{z/2}\rceil -1$. By Case 1, we know that $ f(\lceil{z/2}\rceil+1, \lceil{z/2}\rceil, x_3-\lceil{z/2}\rceil,\ldots, x_k-\lceil{z/2}\rceil)= \lceil{z/2}\rceil+1$. So, we cannot have $f(2, z-1, x_3-1, \ldots, x_k-1)=1$. It follows that $f(2, z-1, x_3-1, \ldots, x_k-1)=3$, and since there is an arc from this vertex to $f(1, z_2, x_3, \ldots, x_k)$ and, by the inductive hypothesis, $f(1, z_2, x_3, \ldots, x_k)=1$, we get a contradiction.

{\it Case 3:} $x_1$ is the left-most element of $x_1, \ldots, x_k$, but is not equal to $1$.

In this case we know that $d(f(1, x_2-x_1, \ldots, x_k-x_1), f(x_1, \ldots, x_k))\le x_1-1$, by Case 2 we know that $f(1, x_2-x_1, \ldots, x_k-x_1)=1$ it then follows that $f(x_1, \ldots, x_k)\le x_1$. Since we have already seen that $f(x_1, \ldots, x_k)\ge x_1$, because $x_1$ is the left-most element, it immediately follows that $f(x_1, \ldots, x_k)=x_1$.
This proves the claim.
\end{proof}

\subsubsection*{EGP Methodology via Galois correspondence}

\

The following is a restatement of the backward inclusion of the well-known Galois correspondence $\mathrm{Inv}(\mathrm{sPol}(\mathcal{B})) = \langle \mathcal{B} \rangle_{\mathrm{pH}}$ holding for finite structures $\mathcal{B}$ \cite{BBCJK}. This direction can be proved by induction on the term-complexity of $\phi \in \langle \mathcal{B} \rangle_{\mathrm{pH}}$.
\begin{lemma}
\label{lem:method}
Let $\mathcal{B}$ be a finite structure and suppose there is a $k$-ary surjective polymorphism of $\mathcal{B}$ that [pointwise] maps the tuples $(x^1_1,\ldots,x^r_1)$, \ldots, $(x^1_k,\ldots,x^r_k)$ to $(y^1,\ldots,y^r)$. Let $\phi$ be an $r$-ary relation from $\langle \mathcal{B} \rangle_{\mathrm{pH}}$. If $\phi$ holds on each of $(x^1_1,\ldots,x^r_1)$, \ldots, $(x^1_k,\ldots,x^r_k)$ in $\mathcal{B}$, then $\phi$ holds on $(y^1,\ldots,y^r)$ in $\mathcal{B}$.
\end{lemma}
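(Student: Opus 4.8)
The plan is to prove a slightly strengthened statement by structural induction on a positive Horn formula defining $\phi$: for \emph{every} arity $r\geq 1$, every $r$-ary relation defined over $\mathcal{B}$ by a pH-formula $\Phi(v_1,\dots,v_r)$, and every $k$-ary surjective polymorphism $f$ of $\mathcal{B}$, if $\Phi$ holds in $\mathcal{B}$ on each of the tuples $(x^1_j,\dots,x^r_j)$, $j\in[k]$, then it holds on the pointwise image $(y^1,\dots,y^r)$, where $y^i := f(x^i_1,\dots,x^i_k)$. Quantifying over all arities at once is what makes the induction go through, since passing under a quantifier increases the number of free variables. Since conjuncts and quantifier-bodies of pH-formulas are again pH-formulas, this induction is well-founded over $\langle\mathcal{B}\rangle_{\mathrm{pH}}$.

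First, the base case. If $\Phi$ is an atom $R(v_{i_1},\dots,v_{i_\ell})$ with $R\in\sigma$, the hypothesis says $(x^{i_1}_j,\dots,x^{i_\ell}_j)\in R^{\mathcal{B}}$ for each $j\in[k]$; since $f$ is a polymorphism, $R^{\mathcal{B}}$ is preserved under $f$, so the image tuple $(y^{i_1},\dots,y^{i_\ell})$ lies in $R^{\mathcal{B}}$. If $\Phi$ is an equality atom $v_i = v_{i'}$, then $x^i_j = x^{i'}_j$ for all $j$ forces $y^i = y^{i'}$ coordinatewise. The conjunction step is immediate: if $\Phi = \Phi_1 \wedge \Phi_2$ and both conjuncts hold on every input tuple, apply the induction hypothesis to $\Phi_1$ and to $\Phi_2$ (viewed, after padding with dummy coordinates, as $r$-ary relations) to see that both hold on $(y^1,\dots,y^r)$, hence so does $\Phi$. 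For the existential step, suppose $\Phi(v_1,\dots,v_r) = \exists v_{r+1}\,\Phi'(v_1,\dots,v_{r+1})$. For each $j$ pick a witness $w_j\in B$ so that $\Phi'$ holds on $(x^1_j,\dots,x^r_j,w_j)$; applying the induction hypothesis to the $(r+1)$-ary relation $\Phi'$ and these $k$ extended tuples, $\Phi'$ holds on their image $(y^1,\dots,y^r,f(w_1,\dots,w_k))$, so $f(w_1,\dots,w_k)$ witnesses $\exists v_{r+1}$ and $\Phi$ holds on $(y^1,\dots,y^r)$.

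The main obstacle, and the only place surjectivity is used, is the universal step: $\Phi(v_1,\dots,v_r) = \forall v_{r+1}\,\Phi'(v_1,\dots,v_{r+1})$. We must establish that $\Phi'$ holds on $(y^1,\dots,y^r,b)$ for \emph{every} $b\in B$. Fix such a $b$; surjectivity of $f$ yields $w_1,\dots,w_k\in B$ with $f(w_1,\dots,w_k)=b$. For each $j$, since $\Phi$ holds on $(x^1_j,\dots,x^r_j)$ and the leading quantifier is universal, $\Phi'$ holds in particular on $(x^1_j,\dots,x^r_j,w_j)$. Now the induction hypothesis applied to $\Phi'$ and these $k$ extended tuples gives that $\Phi'$ holds on their image $(y^1,\dots,y^r,b)$. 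As $b\in B$ was arbitrary, $\Phi$ holds on $(y^1,\dots,y^r)$, closing the induction. This is precisely why $\mathrm{sPol}$, rather than $\mathrm{Pol}$, occurs in the Galois correspondence: a polymorphism that is not surjective still preserves pp-definable relations, but the argument for the $\forall$-quantifier breaks down, since some values $b$ need not be hit by $f$.
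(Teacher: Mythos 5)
Your proof is correct and follows exactly the route the paper indicates: the paper only sketches this lemma as "the backward inclusion of the Galois correspondence $\mathrm{Inv}(\mathrm{sPol}(\mathcal{B})) = \langle \mathcal{B} \rangle_{\mathrm{pH}}$, provable by induction on the term-complexity of $\phi$", and your structural induction (with the arity quantified inside the induction hypothesis, atoms/equality handled by the polymorphism property, and surjectivity invoked precisely at the $\forall$-step) is the standard fleshing-out of that sketch.
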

Together with the definition of a generating set, it can be used to derive Corollary~\ref{cor:ppfuniondisguised}.



\section*{Material omitted from \S~\ref{sec:from-pgp-complexity}.}

\subsection*{Games, adversaries and reactive composition
  (\textsl{c.f.}\ref{sec:games-adversaries-reactivecomposition})}

\noindent \textbf{Theorem~\ref{thm:hubieReactiveComposition}.}
  Let $\phi$ be a pH-sentence with $m$ universal variables. Let $\mathscr{A}$ be an adversary and $\Omega_m$ a set of adversaries, both of length $m$.
  
  If $\mathcal{A}\models \phi_{\restrict\Omega_m}$ and  $\mathscr{A} \reactivelycomposable \Omega_m$ then
  $\mathscr{A} \models \phi$.

\begin{proof}
   We sketch the proof for the sake of completeness.
   Let $\Omega_m:=\{\mathscr{B}_1,\ldots,\mathscr{B}_k\}$ and $f$ and
   $g^i_j$ be as in the definition of reactive composition and
   witnessing that $\mathscr{A} \reactivelycomposable f(\mathscr{B}_1,\ldots,\mathscr{B}_k)$. 
   Assume also that $\mathcal{A}\models \phi_{\restrict\Omega_m}$.
   Given any sequence of play of the universal player according to the adversary $\mathscr{A}$, that is $v_1$ is played as $a_1 \in A_1$, $v_2$ is played as $a_2 \in A_2$, etc., we  "go backwards through $f$" via the maps $g^i_j$ to pinpoint \emph{incrementally} for each $j \in [k]$ a sequence of play $v_1=g^1_j(a_1)$, $v_2=g^2_j(a_1,a_2)$ etc, thus yielding eventually a tuple that belongs to adversary $\mathscr{B}_j$. After each block of universal variables, we lookup the winning strategy for the existential player against each adversary $\mathscr{B}_j$ and "going forward through $f$", that is applying $f$ to the choice of values for an existential variable against each adversary, we obtain a consistent choice for this variable against adversary $\mathscr{A}$ (this is because $f$ is a polymorphism and the quantifier-free part of the sentence $\phi$ is conjunctive positive). Going back and forth we obtain eventually an assignment to the existential variables that is consistent with the universal variables being played as $a_1,a_2,\ldots,a_m$.
\end{proof}
\begin{remark}
  In Chen's work on QCSP, constants are almost always allowed in the
  constraint language. This amounts with our definition to consider a
  relational structure $\mathcal{A}$ with all its elements named by
  constants.
  However, Chen does not necessarily explicitly add constants to the
  constraint language and instead moves rapidly to the algebraic
  setting and considers algebra. There he insists on additional
  technical conditions which preserves constants. For example  in the
  above theorem, he has the additional condition that $f$ is an
  \emph{idempotent} polymorphism.  Whenever we will use one of Chen's
  result, we will generalise it as above by considering arbitrary
  constraint languages and dropping technical conditions such as
  idempotency from the statement.
\end{remark}

\subsection*{The $\Pi_2$-case (\textsl{c.f.}\ref{sec:pi2})}
\noindent \textbf{Lemma~\ref{lemma:union} (principle of union).}
\textsl{Let $\Omega_m$ be a set of adversaries of length $m$ and $\phi$ a $\Pi_2$-sentence with $m$ universal variables.
Let $\mathscr{O}_{\cup\Omega}:=\bigcup_{\mathscr{O}\in \Omega} \mathscr{O}$
and $\Omega_{\text{tuples}}:=\{\{t\} \in \mathscr{O}_{\cup\Omega}\} = \bigcup_{\mathscr{O}\in \Omega}\{\{t\} \in \mathscr{O}\}$. 
We have the following equivalence.
$$\mathcal{A}\models \varphi_{\restrict\Omega_m} 
\quad \iff \quad
\mathcal{A}\models \varphi_{\restrict\mathscr{O}_{\cup\Omega}}
\quad \iff \quad
\mathcal{A}\models \varphi_{\restrict\Omega_{\text{tuples}}}$$
}

The forward implications
$$\mathcal{A}\models \varphi_{\restrict\Omega_m} 
\quad \implies \quad
\mathcal{A}\models \varphi_{\restrict\mathscr{O}_{\cup\Omega}}
\quad \implies \quad
\mathcal{A}\models \varphi_{\restrict\Omega_{\text{tuples}}}$$
of Lemma~\ref{lemma:union} hold clearly for arbitrary pH-sentences. 
The proof is trivial and is a direct consequence of the following
obvious fact.
\begin{fact}
  \label{fac:union}
  Let $\Omega_m$ be a set of adversaries of length $m$ and $\phi$ a $\Pi_2$-sentence with $m$ universal variables.
  $$\mathcal{A}\models \varphi_{\restrict\Omega_m}$$
  $$\quad \Updownarrow \quad$$
  $$\forall \mathscr{O} \in \Omega_m \forall t=(a_1,\ldots,a_m)\in  \mathscr{O} \ \mathcal{A}\models \varphi_{\restrict\{t\}}$$
\end{fact}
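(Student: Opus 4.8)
The plan is to unwind the definition of $\mathcal{A}\models\varphi_{\restrict\mathscr{B}}$ and exploit that, for a $\Pi_2$-sentence, \emph{every} universal variable precedes \emph{every} existential variable; hence a winning strategy for the existential player is nothing more than an independent choice of witnesses, one for each individual universal assignment, with no compatibility constraint linking the different assignments.

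The forward direction is immediate from domination and does not use the $\Pi_2$ shape at all. Indeed, if $\mathcal{A}\models\varphi_{\restrict\Omega_m}$ then $\mathcal{A}\models\varphi_{\restrict\mathscr{O}}$ for every $\mathscr{O}\in\Omega_m$ by definition; and since $\{t\}\subseteq\mathscr{O}$ for each $t\in\mathscr{O}$, the observation recorded earlier — that $\mathscr{B}'\subseteq\mathscr{B}$ together with $\mathcal{A}\models\varphi_{\restrict\mathscr{B}}$ implies $\mathcal{A}\models\varphi_{\restrict\mathscr{B}'}$ — yields $\mathcal{A}\models\varphi_{\restrict\{t\}}$.

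For the backward direction, fix $\mathscr{O}\in\Omega_m$; it suffices to assemble a single family of Skolem functions witnessing $\mathcal{A}\models\varphi_{\restrict\mathscr{O}}$. Write $\varphi=\forall x_1\ldots x_m\,\exists y_1\ldots y_k\,\psi$. For each $t=(a_1,\ldots,a_m)\in\mathscr{O}$, the hypothesis $\mathcal{A}\models\varphi_{\restrict\{t\}}$ provides Skolem functions $\sigma^t_{y_1},\ldots,\sigma^t_{y_k}$; because $\varphi$ is $\Pi_2$, the only value of $\sigma^t_{y_j}$ that plays a role against the one-tuple adversary $\{t\}$ is its value $b^t_{y_j}:=\sigma^t_{y_j}(t)$ at the single universal assignment $t$, and by hypothesis the map sending each $x_i\mapsto a_i$ and each $y_j\mapsto b^t_{y_j}$ is a homomorphism from the canonical database $\mathcal{D}_\psi$ to $\mathcal{A}$. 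Now define $\sigma_{y_j}\colon A^m\to A$ by $\sigma_{y_j}(\bar a):=b^{\bar a}_{y_j}$ when $\bar a\in\mathscr{O}$ and arbitrarily otherwise. For any assignment $\pi$ of $x_1,\ldots,x_m$ with $(\pi(x_1),\ldots,\pi(x_m))=\bar a\in\mathscr{O}$, the induced map $h_\pi$ is exactly the homomorphism attached to $t=\bar a$ above (here we use that, for $\Pi_2$, $Y_{y_j}$ is the full block $\{x_1,\ldots,x_m\}$, so $h_\pi(y_j)=\sigma_{y_j}(\bar a)$). Hence $\mathcal{A}\models\varphi_{\restrict\mathscr{O}}$, and as $\mathscr{O}\in\Omega_m$ was arbitrary, $\mathcal{A}\models\varphi_{\restrict\Omega_m}$.

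The only delicate point — and the reason the analogous equivalence fails for general pH-sentences — is the gluing step just above: for an arbitrary quantifier prefix the Skolem function of a late existential variable must agree across universal assignments sharing a common proper prefix, so the per-tuple strategies $\sigma^t_{y_j}$ cannot in general be amalgamated into one. For $\Pi_2$-sentences there is no such prefix overlap, the amalgamation is entirely free, and the whole argument is routine; this is why the statement is flagged as an ``obvious fact''.
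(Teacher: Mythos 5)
Your proof is correct and is exactly the routine unwinding the paper has in mind: the paper states this Fact without proof, declaring it obvious, and your argument — monotonicity under domination for the forward direction, and free amalgamation of per-tuple Skolem witnesses for the backward direction because every $Y_x$ is the full universal block in a $\Pi_2$-sentence — is the intended justification. Your closing remark on why the gluing fails for general prefixes also matches the paper's own counterexample ($\forall x\,\forall y\,\exists z\,\forall w\ E(x,z)\wedge E(y,z)\wedge E(w,z)$ on $\mathcal{K}_4$) given in the remark following Lemma~\ref{lemma:union}.
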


\begin{remark}[following Lemma~\ref{lemma:union}]
  For a sentence that is not $\Pi_2$, this does not necessarily hold. For example, consider $\forall x \forall y \exists z \forall w \ E(x,z) \wedge E(y,z) \wedge E(w,z)$ on the irreflexive $4$-clique $\mathcal{K}_4$. The sentence is not true, but for all individual tuples $(x_0,y_0,w_0)$, we have $\exists z \ E(x_0,z) \wedge E(y_0,z) \wedge E(w_0,z)$.
\end{remark}

\noindent \textbf{Proposition~\ref{prop:pi2composition}.}
  Let $\phi$ be a $\Pi_2$-pH-sentence with $m$ universal variables. Let $\mathscr{A}$ be an adversary and $\Omega_m$ a set of adversaries, both of length $m$.
  
  If $\mathcal{A}\models \phi_{\restrict\Omega_m}$ and  $\Omega_m$
  generates $\mathscr{A}$  then
  $\mathcal{A} \models \phi_{\restrict\mathscr{A}}$.

\begin{proof}
  The hypothesis that $\Omega_m$ generates $\mathscr{A}$
  can be rephrased as follows : for each tuple $t$ in $\mathscr{A}$,
  $\{t\}\reactivelycomposable f_t(t_1,t_2,\ldots,t_k)$, where
  $t_1,t_2,\ldots,t_k$ belong to $\Omega_{\text{tuples}}$. 
  To see this, it remains to note that the suitable $g^j_i$'s from the
  definition of composition are induced trivially as there is no
  choice: for every $j$ in $[k]$ and every $i$  
  in $[m]$ pick $g^j_i(a_1,a_2,\ldots,a_i)= t_{i,j}$ where $t_{i,j}$ is the ith
  element of $t_j$. 
  So by Theorem~\ref{thm:hubieReactiveComposition}, if
  $\mathcal{A}\models \phi_{\restrict\Omega_{\text{tuples}}}$ then
  $\mathcal{A}\models \phi_{\restrict\{t\}}$.
  As this holds for any tuple $t$ in $\mathscr{A}$, via the principle of union, it
  follows that $\mathcal{A} \models \phi_{\restrict\mathscr{A}}$.
\end{proof}

\noindent \textbf{Proposition~\ref{thm:characteringPi2}.}
  Let $\Omega_m$ be a set of adversaries of length $m$ that is not degenerate.
  The following are equivalent.
  \begin{romannum}
  \item for any $\Pi_2$-pH sentence $\psi$, $\mathcal{A}\models
    \psi_{\restrict\Omega_m}$ implies $\mathcal{A}\models \psi$.
  \item for any $\Pi_2$-pH sentence $\psi$, $\mathcal{A}\models \psi_{\restrict\mathscr{O}_{\cup\Omega}}$ implies $\mathcal{A}\models \psi$.
  \item for any $\Pi_2$-pH sentence $\psi$, $\mathcal{A}\models \psi_{\restrict\Omega_{\text{tuples}}}$ implies $\mathcal{A}\models \psi$.
  \item $\mathcal{A}\models
    \phi_{\mathscr{O}_{\cup\Omega},\mathcal{A}}$
  \item $\mathcal{A}\models \phi_{\Omega_{\text{tuples}},\mathcal{A}}$
  \item $\Omega_m$ generates $A^m$.
  \end{romannum}

\begin{proof}
  The first three items are equivalent by Lemma~\ref{lemma:union} (these implications have the same conclusion and equivalent premises).
  The fourth and fifth items are trivially equivalent since $\phi_{\mathscr{O}_{\cup\Omega},\mathcal{A}}$ and $\phi_{\Omega_{\text{tuples}},\mathcal{A}}$ are the same sentence.

  We show the implication from the third item to the fifth. By
  construction, $\phi_{\Omega_{\text{tuples}},\mathcal{A}}$ is $\Pi_2$
  and it suffices to show that there exists a winning strategy for
  $\exists$  against any adversary $\{t\}$ in
  $\Omega_{\text{tuples}}$. This is true by construction.
  Indeed, note that there exists a winning strategy for $\exists$ in
  the $(\mathcal{A},\phi_{\Omega_{\text{tuples}},\mathcal{A}})$-game
  against adversary  $\{t\}$ iff   there is a homomorphism from the
  $\sigma^{(m)}$-structure $\bigotimes_{t'\in \Omega_{\text{tuples}}} \mathfrak{A}_{\mu_{t'}}$
  to  the $\sigma^{(m)}$-structure $\mathfrak{A}_{\mu_t}$, where
  $\mu_t:[m]\to A$ is the map induced 
  naturally by $t$. The projection is such a homomorphism.
  
  The penultimate item implies the last one: instantiate the universal
  variables of $\phi_{\Omega_{\text{tuples}},\mathcal{A}}$ as given by
  the $m$-tuple $t$ and pick for $f_t$ the homomorphism from the product
  structure witnessing that $\exists$ has a winning strategy. 

  Finally, the last item implies the first one by
  Proposition~\ref{prop:pi2composition}. 
\end{proof}
\subsection*{The unbounded case (\textsl{c.f.}\ref{sec:unbounded} )}

\noindent \textbf{Lemma~\ref{lem:AdversariesWinnableOnCanonicalSentence}.}
  Let $\mathbf{\Omega}$ be a sequence of sets of adversaries that has
  the $m$-projectivity property for some $m\geq 1$ such that
  $\Omega_{n.m}$ is not degenerate. The following holds.

  \begin{romannum}
  \item 
    $\mathcal{A}\models \psi_{\restrict \Omega_{\mathbf{n.m}}}, \text{ where } \psi={\phi_{n,\Omega_\mathbf{m},\mathcal{A}}}$
  \item If for every $\Pi_2$-sentence $\psi$ with $m.n$ universal variables, it holds that $\mathcal{A}\models \psi_{\restrict \Omega_{\mathbf{m.n}}}$ implies $\mathcal{A}\models \psi$, then
    $\mathcal{A}\models \phi_{n,\Omega_\mathbf{m},\mathcal{A}}$.
  \end{romannum}

\begin{proof}
  The second statement is a direct consequence of the first one.
  The proof of the first statement generalises an argument used in the proof of Proposition~\ref{thm:characteringPi2}.
  Consider any adversary $\mathscr{O}$ in $\Omega_{n.m}$. For convenience, we name the positions of this adversary in a similar fashion to the universal variables of the sentence, namely by a pair $(i,j)$ in $[n]\times [m]$. By projectivity, there exists an adversary $\mathscr{O}'$ in $\Omega_{m}$ which dominates any adversary $\tilde{\mathscr{O}}$ in $\text{Proj}\mathscr{O}$ (obtained by projecting over an arbitrary choice of one position in each of the $m$ blocks of size $n$).
  In the product structure underlying the formula $\phi_{n,\Omega_m,\mathcal{A}}$, we consider the following structure:
  $$\bigotimes_{\mu \in A^{[n.m]}_{\upharpoonright \mathscr{O}'}} \mathfrak{A}_{\mathscr{O}',\mu}$$ 

  An instantiation of the universal variables of $\phi_{n,\Omega_m,\mathcal{A}}$ according to some tuple $t$ from the adversary $\mathscr{O}$ corresponds naturally to a map $\mu_t$ from $[n]\times [m]$ to $A$. Observe that our choice of $\mathscr{O}'$ ensures that this map $\mu_t$ is consistent with $\mathscr{O}'$.
  An instantiation of the universal variables by $\mu_t$ induces a $\sigma^{(n.m)}$-structure $\mathfrak{A}_{\mu_t}$ and a winning strategy for $\exists$ amounts to a homomorphism from the product $\sigma^{(n.m)}$-structure underlying the sentence to this $\mathfrak{A}_{\mu_t}$. Since the component $\mathfrak{A}_{\mathscr{O}',\mu_t}$ of this product structure is isomorphic to $\mathfrak{A}_{\mu_t}$, we may take for a homomorphism the corresponding projection. This shows that     $\mathcal{A}\models \psi_{\restrict \Omega_{\mathbf{n.m}}}$ where $\psi={\phi_{n,\Omega_\mathbf{m},\mathcal{A}}}$.
\end{proof}

\

\noindent \textbf{Theorem~\ref{MainResult:InAbstracto}.} (\textbf{In abstracto}.)
    Let $\mathbf{\Omega}$ be a projective sequence of adversaries,
    none of which are degenerate. 
    The following are equivalent.
  \begin{romannum}
  \item For every $m\geq 1$, For every pH-sentence $\psi$ with $m$
    universal variables, $\mathcal{A}\models \psi_{\restrict
      \Omega_{m}}$ implies $\mathcal{A}\models \psi$. 
  \item For every $m\geq 1$, for every $\Pi_2$-pH-sentence $\psi$ with $m$ universal
    variables, $\mathcal{A}\models \psi_{\restrict \Omega_{m}}$
    implies $\mathcal{A}\models \psi$.
  \item For every $m\geq 1$, $\mathcal{A}\models
    \phi_{n,\Omega_m,\mathcal{A}}$.
  \item For every $m\geq 1$, $\mathcal{A}\models
    \phi_{\mathscr{O}_{\cup\Omega},\mathcal{A}}$.
  \item For every $m\geq 1$, $A^m \reactivelycomposable \Omega_m$.
  \item For every $m\geq 1$, $\Omega_m$ generates $A^m$.
  \end{romannum}

\begin{proof}  
  Propositions~\ref{thm:characteringPi2} establishes the equivalence
  between  \ref{abstracto:logical:pi2}, \ref{abstracto:canonical:pi2} and
  \ref{abstracto:algebraic:pi2} for fixed values of $m$ (numbered
  there as \ref{pi2abstracto:logical:pi2}, \ref{pi2abstracto:canonical:pi2} and
  \ref{pi2abstracto:algebraic:pi2}, respectively).
  
  To lift these relatively trivial equivalences to the general case,
  the principle of our current proof no longer preserves the parameter $m$. The chain of
  implications of Theorem~\ref{theo:Pi2ToGeneral} translates here, once
  the parameter is universally quantified, to the chain of implications
  $$
  \ref{abstracto:logical:pi2}
  \implies 
  \ref{abstracto:canonical:general} 
  \implies 
  \ref{abstracto:algebraic:general}
  \implies 
  \ref{abstracto:logical:general}
  $$
  
  The fact that \ref{abstracto:logical:general} implies
  \ref{abstracto:logical:pi2} is trivial\footnotemark{}, which
  concludes the proof. 
  
  \footnotetext{We note in passing and for purely pedagogical reason that 
  the implication~\ref{abstracto:algebraic:general} to
  \ref{abstracto:algebraic:pi2} is also trivial, while the natural implication
  \ref{abstracto:canonical:general} to \ref{abstracto:canonical:pi2}
  will appear as an evidence to the reader once the definition of the canonical
  sentences is digested.}

\end{proof}

\

\noindent \textbf{Corollary~\ref{cor:EffectiveProjectivePGPImpliesQCSPInNP}.}
  Let $\mathcal{A}$ be a structure.  Let $\mathbf{\Omega}$ be a
  sequence of non degenerate adversaries that is effective, projective and
  polynomially bounded such that $\Omega_m$ generates $A^m$ for every
  $m\geq 1$.
  
  Let $\mathcal{A}'$ be the structure $\mathcal{A}$, possibly expanded
  with constants, at least one for each element that occurs in $\mathbf{\Omega}$.
  The problem $\QCSP(\mathcal{A})$ reduces in polynomial time to
  $\CSP(\mathcal{A}')$.
  In particular, if $\mathcal{A}$ has all constants, the problem $\QCSP_c(\mathcal{A})$ reduces in polynomial time to
  $\CSP_c(\mathcal{A})$.

 \begin{proof}
  To check whether a pH-sentence $\phi$ with $m$ universal variables
  holds in $\mathcal{A}$, by Theorem~\ref{MainResult:InAbstracto}, 
  we only need to check that $\mathcal{A}\models \phi_{\restrict
    \mathscr{B}}$ for every $\mathscr{B}$ in $\Omega_m$. 
  The reduction proceeds as in the proof
  of~\cite[Lemma~7.12]{AU-Chen-PGP}, which we outline
  here for completeness. 

  Pretend first that we reduce $\mathcal{A}\models \phi_{\restrict
    \mathscr{B}}$ to a collection of CSP instances, one for each tuple
  $t$ of $\mathscr{B}$, obtained by instantiation of the universal variables with the
  corresponding constants. If $x$ is an existential variable in
  $\phi$, let $x_t$ be the corresponding variable in the CSP instance
  corresponding to $t$. We will in fact enforce equality constraints
  via renaming of variables to
  ensure that we are constructing Skolem functions. For any two tuples
  $t$ and $t'$ in $\mathscr{B}$ that agree on their first $\ell$ coordinates, let
  $Y_\ell$ be the corresponding universal variables of $\phi$. For every
  existential variable $x$ such that $Y_x$ (the universally quantified
  variables of $\phi$ preceding $x$) is contained in $Y_\ell$, we identify
  $x_t$ with $x_{t'}$.
\end{proof}
\subsection*{Studies of Collapsibility (\textsl{c.f.}\ref{sec:collapse})}

\noindent\textbf{$p$-collapsibility for $p>0$}\\

\

\noindent \textbf{Application~\ref{app:FairmontHotelTrick}.}
  A partially reflexive path $\mathcal{A}$ (no constants are present)
  that is quasi-loop connected has the PGP. 

\begin{proof}
  Indeed, a partially reflexive path $\mathcal{A}$ that is quasi-loop
  connected has the same QCSP as a partially reflexive path that is
  loop-connected $\mathcal{B}$ \cite{DBLP:conf/cp/MadelaineM12} since
  for some $r_a>0$ there is a surjective homomorphism $g$ from
  $\mathcal{A}^{r_a}$ to $\mathcal{B}$ and for some $r_b>0$ there is a
  surjective homomorphism $h$ from $\mathcal{B}^{r_b}$ to
  $\mathcal{A}$ (see main result of~\cite{LICS2008}). We also know
  that $\mathcal{B}$ admits a majority polymorphism $m$
  \cite{QCSPforests} and is therefore $2$-collapsible from any
  singleton source (see Table~\ref{tab:Polymorphismscollapsibility})
  and that Theorem~\ref{MainResult:InConcreto:Collapsibility} holds
  for $\mathcal{B}$.  Pick some arbitrary element $a$ in $\mathcal{A}$
  such that there is some $b$ in $\mathcal{B}$ satisfying
  $g(a,a,\ldots,a)=b$. Use $b$ as a source for $\mathcal{B}$.

  We proceed to lift~\ref{concreto:algebraiccollapsibility:pi2} of
  Corollary~\ref{MainResult:InConcreto:Collapsibility} from
  structure $\mathcal{B}$ to $\mathcal{A}$, which we recall here for
  $\mathcal{B}$ : for every $m$, for every tuple $t$ in $B^m$, there
  is a polymorphism $f_t$ of $\mathcal{B}$ of arity $k$ and tuples
  $t_1,t_2,\ldots,t_k$ in $\Upsilon_{m,2,b}$ such that
  $f_t(t_1,t_2,\ldots,t_k)=t$.

  Let $g^{k}$ denote the surjective homomorphism from
  $(\mathcal{A}^{r_a})^k$ to $\mathcal{B}^k$ that applies $g$
  blockwise.  Going back from $t_i$ through $g$, we can find $r_a$
  tuples $t_{i,1},t_{i,2},\ldots,t_{i,r_a}$ all in $\Upsilon_{m,2,a}$
  (adversaries based on the domain of $\mathcal{A}$) such that
  $g(t_{i,1},t_{i,2},\ldots,t_{i,r_a})=t_i$.  Thus, we can generate
  any $\widetilde{t}$ in $\mathcal{B}$ via $f_{\widetilde{t}}\circ(g^k)$ from
  tuples of $\Upsilon_{m,2,a}$.

  Let $\hat{t}$ be now some tuple of $\mathcal{A}$. By surjectivity of
  $h$, let 
  $\widetilde{t_1},\widetilde{t_2},\ldots,\widetilde{t_{r_b}}$
  be tuples of $\mathcal{B}$ such that
  $h(\widetilde{t_1},\widetilde{t_2},\ldots,\widetilde{t_{r_b}})=\hat{t}$. 
  The polymorphism of $\mathcal{A}$ 
  $(f_{\widetilde{t_1}}\circ(g^k),f_{\widetilde{t_2}}\circ(g^k),\ldots,f_{\widetilde{t_{r_b}}}\circ(g^k))$
  shows that $\Upsilon_{m,2,a}$ generates $\hat{t}$.
  This shows that $\mathcal{A}$ is also 2-collapsible from a
  singleton source. 
\end{proof}

\

\noindent \textbf{Lemma~\ref{lem:ChensLemma}.} (Chen's lemma.)
  Let $\mathcal{A}$ be a structure with a constant $x$.
  if there is a $k$-ary polymorphism of $\mathcal{A}$ such that $f$ is surjective when restricted at any position to $\{x\}$, then $\mathcal{A}$ is $k-1$-collapsible from source $\{x\}$  (\mbox{i.e.} $\mathcal{A}$ as a $k$-ary Hubie polymorphism).
\begin{proof}

  We sketch the proof for pedagogical reasons.
  Via Corollary~\ref{MainResult:InConcreto:Collapsibility}, it suffices to show
  that for any $m$, $A^m$ is generated by $\Upsilon_{m,k-1,x}$
  (instead of the notion of reactive composition). 

  Consider adversaries of length $m=k$ for now, that is from $\Upsilon_{k,k-1,x}$. 
  If we apply $f$ to these $k$ adversaries, we generate the full adversary $A^k$. With a picture (adversaries are drawn as columns):
  $$f
  \begin{pmatrix}
    \{x\}  &A      &A      &\ldots & A      \\
    A      &\{x\}  &A      &\ldots & A      \\
    \vdots &       &\ddots &       & \vdots \\
    A      &\ldots & A     & \{x\} & A      \\
    A      &\ldots & A     & A     & \{x\} \\
  \end{pmatrix}
  = 
  \begin{pmatrix}
   A \\
   A \\
   \vdots \\
   A \\
   A \\
  \end{pmatrix}
  =
  A^k
  $$
  Expanding these adversaries uniformly with singletons $\{x\}$ to the
  full length $m$, we may produce an adversary from $\Upsilon_{m,k,x}$. 
  With a picture for \textsl{e.g.} trailing singletons:
  $$f
  \begin{pmatrix}
    \{x\}  &A      &A      &\ldots & A      \\
    A      &\{x\}  &A      &\ldots & A      \\
    \vdots &       &\ddots &       & \vdots \\
    A      &\ldots & A     & \{x\} & A      \\
    A      &\ldots & A     & A     & \{x\} \\
    \{x\}  &\{x\}  &\{x\}  &\ldots & \{x\}  \\
    \vdots &\vdots &\vdots &\vdots &\vdots  \\
    \{x\}  &\{x\}  &\{x\}  &\ldots & \{x\}  \\
  \end{pmatrix}
  = 
  \begin{pmatrix}
    A \\
    A \\
    \vdots \\
    A \\
    A \\
    \{x\} \\
    \vdots \\
    \{x\} \\
  \end{pmatrix}
  $$
  Shifting the first additional row of singletons in the top block, we
  will obtain the family of adversaries from $\Upsilon_{m,k,x}$ with
  a single singleton in the first $k+1$ positions. It should be now
  clear that we may iterate this process to derive $A^m$ eventually
  via some term $f'$ which is a superposition of $f$ and projections
  and is therefore also a polymorphism of $\mathcal{A}$.
\end{proof}

\begin{remark}
  An extended analysis of our proof should convince the careful reader that
  we may in the same fashion prove retroactive composition (the polymorphism's
  action is determined for a row independently of the others). Thus, appealing to
  the previous section is not essential, though it does allow for a
  simpler argument. 
\end{remark}

\noindent \textbf{Proposition~\ref{prop:singletoncollapse:WeakCharacterisation}.}
  Let $x$ be a constant in $\mathcal{A}$.
  The following are equivalent:
  \begin{romannum}
  \item $\mathcal{A}$ is collapsible from $\{x\}$. 
  \item $\mathcal{A}$ has a Hubie polymorphism.
  \end{romannum}

\begin{proof}
  Lemma~\ref{lem:ChensLemma} shows that \ref{item:HubiePolx} implies
  collapsibility. We prove the converse.
  
  Assume $p$-collapsibility. By
  Fact~\ref{fact:collapsible:adversary:is:projective}, we may apply
  Theorem~\ref{MainResult:InAbstracto}. 
  For $m = p+1$, item \ref{abstracto:algebraic:general} of this theorem states that
  there is a polymorphism $f$ witnessing that $A^{p+1} \reactivelycomposable
  \Upsilon_{p+1,p,x}$ (diagrammatically, we may draw a similar picture
  to the one we drew at the beginning of the previous proof). Clearly,
  $f$ satisfies~\ref{item:HubiePolx}.
\end{proof}

\

\noindent \textbf{Theorem~\ref{theo:singletonSource:StrongCharacterisation}.} (\textbf{$p$-Collapsibility from a singleton source}).
  Let $x$ be a constant in $\mathcal{A}$.
  The following are equivalent:
  \begin{longromannum}
  \item $\mathcal{A}$ is $p$-collapsible from $\{x\}$.
  \item For every $m\geq 1$, the full adversary $A^m$ is reactively composable from $\Upsilon_{m,p,x}$.
  \item $\mathcal{A}$ is $\Pi_2$-$p$-collapsible from $\{x\}$. 
  \item For every $m\geq 1$, $\Upsilon_{m,p,x}$ generates $A^m$.
  \item $\mathcal{A}$ models $\phi_{n,\Upsilon_{p+1,p,x},\mathcal{A}}$
    (which implies that $\mathcal{A}$ admits a particularly well
    behaved Hubie polymorphism with source $x$ of arity $(p+1)n^p$).
  \end{longromannum}

\begin{proof}
  Equivalence of the first four points appears in
   Corollary~\ref{MainResult:InConcreto:Collapsibility}, 
  as does the
  equivalence with the statement :
  For every $m \geq 1$, $\mathcal{A}$ models $\phi_{n,\Upsilon_{m,p,x},\mathcal{A}}$.
  So they imply trivially the last point by selecting $m = p+1$.
  
  We show that the last point implies the penultimate one. The proof principle is similar to that of Chen's Lemma.
  As we have argued similarly before, the last point implies the existence of a polymorphism $f$.
  This polymorphism enjoys the following property (each column represents in fact $n^p$ coordinates of $A$):
  $$f
  \begin{pmatrix}
    \begin{array}[h]{c|c|c|c|c}
    \{x\}   &A      &A      &\ldots & A      \\
    A      &\{x\}  &A      &\ldots & A      \\
    \vdots &       &\ddots &       & \vdots \\
    A      &\ldots & A     & \{x\} & A      \\
    A      &\ldots & A     & A     & \{x\} \\
  \end{array}
  \end{pmatrix}
  = 
  \begin{pmatrix}
   A \\
   A \\
   \vdots \\
   A \\
   A \\
  \end{pmatrix}
  =
  A^{p+1}
  $$
  So arguing as in the proof of Chen's Lemma, we may conclude similarly that 
  for all $m$, the full adversary $A^m$ is composable from $\Upsilon_{m,p,x}$.
\end{proof}

\noindent\textbf{$p$-collapsibility for $p>0$ from a conservative source}\\
We expand on Remark~\ref{rem:collapsibilityOnConservativeSourceSet}.

\begin{theorem}[\textbf{$p$-Collapsibility from a conservative source}]
  Let $B$ be a subset of the domain of a structure $\mathcal{A}$. Assume further that $\mathcal{A}$ is $B$-conservative.

  The following are equivalent:
  \begin{longromannum}
  \item $\mathcal{A}$ is $p$-collapsible from $B$.
  \item $\mathcal{A}$ models $\phi_{n,\Upsilon_{p+1,p,B},\mathcal{A}}$
    (which implies that $\mathcal{A}$ admits a polymorphism $f$ of arity
    $|B|(p+1)n^p$ that remains surjective when a position $i$ is fixed to
    a suitable source element $b_i$ in $B$, and that this polymorphism
    witnesses that $A^m$ is reactively composable from $\Upsilon_{m,p,B}$).
  \end{longromannum}
\end{theorem}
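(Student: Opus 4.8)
The plan is to follow the template of Theorem~\ref{theo:singletonSource:StrongCharacterisation} (the singleton-source case), which is itself a refinement of Chen's Lemma (Lemma~\ref{lem:ChensLemma}), and to isolate the single place where $B$-conservativity is needed. Since $p>0$ the sequence $\mathbf\Upsilon_{p,B}$ is projective (Fact~\ref{fact:collapsible:adversary:is:projective}) and none of its adversaries is degenerate, so Corollary~\ref{MainResult:InConcreto:Collapsibility} and Theorem~\ref{MainResult:InAbstracto} are available without further ado. Under this, the direction (i)$\,\Rightarrow\,$(ii) is immediate: by Corollary~\ref{MainResult:InConcreto:Collapsibility}, $p$-collapsibility from $B$ already entails $\mathcal{A}\models\varphi_{n,\Upsilon_{m,p,B},\mathcal{A}}$ for \emph{every} $m$, hence in particular for $m=p+1$; the ``which implies'' parenthetical of (ii) will be produced in the course of the converse argument. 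The real content of the theorem is thus that, for a conservative source, this single finite model-checking test at length $m=p+1$ already forces full $p$-collapsibility.

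For (ii)$\,\Rightarrow\,$(i), I would first feed $\Omega_{p+1}:=\Upsilon_{p+1,p,B}$ into Lemma~\ref{lem:CanonicalSentenceImpliesReactiveComposability}: from $\mathcal{A}\models\varphi_{n,\Upsilon_{p+1,p,B},\mathcal{A}}$ one extracts, as the interpretation of the existential variables, a polymorphism $f$ whose arity is the number of tensor factors of the product structure underlying the canonical sentence (this is where the bound $|B|(p+1)n^{p}$ recorded in (ii) comes from: $|B|(p+1)$ counts the rectangular adversaries of $\Upsilon_{p+1,p,B}$ and the remaining factor counts the relevant maps attached to each of them), together with the routing partial functions witnessing $A^{p+1}\reactivelycomposable\Upsilon_{p+1,p,B}$. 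Reading off this witness exactly as in the ``Hubie picture'' of the proofs of Lemma~\ref{lem:ChensLemma} and Theorem~\ref{theo:singletonSource:StrongCharacterisation} shows that, after fixing any one input coordinate $i$ to a suitable source element $b_i\in B$, the polymorphism $f$ --- or, if needed, a suitable superposition of $f$ with projections --- is still surjective onto $A$; this establishes the generalized Hubie property asserted in (ii).

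The heart of the proof is the bootstrapping from length $p+1$ to arbitrary length. I would prove by induction on $m\ge p+1$ that $A^{m}\reactivelycomposable\Upsilon_{m,p,B}$, using the ``pad with rows of singletons and shift the active block'' manoeuvre of Chen's Lemma, each inductive step yielding a new polymorphism that is a superposition of $f$ with projections (hence still a polymorphism of $\mathcal{A}$). The new difficulty over the singleton case is that distinct rectangular adversaries in $\Upsilon_{m,p,B}$ carry distinct source elements of $B$, so when an active coordinate is padded with a constant value from $B$, or when pieces built over adversaries with different sources are merged by $f$, one must check that the combined tuple still lies inside a single adversary of the required shape. This is exactly where $B$-conservativity is used: since $f$, and therefore every superposition of $f$ with projections, maps $C^{\text{arity}}$ into $C$ for every $C\subseteq B$, any coordinate all of whose generators are drawn from $B$ is sent into $B$ and can be pinned consistently, so the padded/merged rectangular adversary remains in $\Upsilon_{m,p,B}$. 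Once $A^{m}\reactivelycomposable\Upsilon_{m,p,B}$ holds for all $m$, the implication item~\ref{concreto:algebraiccollapsibility:general} $\Rightarrow$ item~\ref{concreto:logicalcollapsibility:general} of Corollary~\ref{MainResult:InConcreto:Collapsibility} delivers $p$-collapsibility from $B$, completing the cycle (and the parenthetical of (ii) has been verified in passing). The step I expect to be the main obstacle is precisely this bookkeeping: making the shift-and-pad construction and the induction on $m$ fully precise while tracking which source element labels each rectangular adversary, and verifying at every stage that $B$-conservativity keeps the shapes inside $\Upsilon_{m,p,B}$; everything else is a routine transcription of the singleton-source proof (and, as in the remark following Lemma~\ref{lem:ChensLemma}, the same argument in fact yields retroactive composition as well).
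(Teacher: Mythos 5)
Your overall architecture coincides with the paper's: both directions lean on Corollary~\ref{MainResult:InConcreto:Collapsibility}, the polymorphism $f$ of arity $|B|(p+1)n^p$ is extracted from the canonical sentence exactly as you describe, and the passage from length $p+1$ to arbitrary $m$ is the pad-and-shift bootstrapping of Chen's Lemma~\ref{lem:ChensLemma}. The one substantive point you have not supplied is hidden in the phrase ``any coordinate all of whose generators are drawn from $B$ is sent into $B$ and \emph{can be pinned consistently}''. Conservativity only tells you that a padded row of source elements is mapped by $f$ to \emph{some} element of $B$; it does not let you choose which one. But to iterate the construction you must manufacture, for the next round, essentially the whole family $\Upsilon_{p+2,p+1,B}$ --- adversaries with $p+1$ free coordinates and the remaining coordinate pinned to $b$, for \emph{every} $b\in B$ and every position --- since the canonical-sentence polymorphism only regenerates the full power when fed one column per source element per position. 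A single application of conservativity yields these adversaries for one uncontrolled $x_1\in B$ only.

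The paper closes this hole with an explicit cycling argument, which is the genuinely new ingredient over the singleton case: having obtained all permutations of $(A^{p+1},x_1)$, it re-enters the picture with occurrences of $x_1$ relaxed to $A$ (legitimate, because those relaxed columns are by then themselves generated) and arranges the new padding row so that conservativity applied to the subset $C=B\setminus\{x_1\}$ forces a \emph{new} element $x_2$ to be realised; iterating with $C=B\setminus\{x_1,x_2\}$, and so on, exhausts $B$ after $|B|$ rounds. So your instinct that conservativity ``for every $C\subseteq B$'' is the crux is correct, but the proof needs it applied to the shrinking subsets $B\setminus\{x_1,\dots,x_j\}$, not merely to $B$ itself; without that refinement the induction on $m$ stalls after producing adversaries pinned to a single source element, and the step you flag as ``bookkeeping'' is in fact where the only new idea of the theorem lives.
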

\begin{proof}
  Just like the case of singleton source, almost all the proof follows directly
  from  Corollary~\ref{MainResult:InConcreto:Collapsibility} and similarly we shall only need
  to prove that the last point  implies the penultimate one via a
  bootstrapping argument. 

  As we have argued similarly before, the last point implies the
  existence of a polymorphism $f$.
  Let $x_1,x_2,\ldots,x_b$ enumerate the elements of the source $B$.
  This polymorphism enjoys the following property (each column represents in fact $n^p$ coordinates of $A$):
  \resizebox{\columnwidth}{!}{$\displaystyle\tiny f
  \begin{pmatrix}
    \begin{array}[h]{c|c|c|c|c|c|c|c|c|c|c|c|c}
    \{x_1\}  &\{x_2\}  & \ldots & \{x_b\} &
    A        &A       &\ldots  & A       &
                       \ldots            & 
    A        &A       &\ldots  & A      \\
    A        &A       &\ldots  & A        &
    \{x_1\}  &\{x_2\}  & \ldots & \{x_b\} &
                       \ldots             & 
    A        &A       &\ldots  & A      \\
    A        &A       &\ldots  & A        &
    A        &A       &\ldots  & A        &
                                    &
    A        &A       &\ldots  & A        \\
    \vdots   &\vdots  &\vdots  & \vdots   &
    \vdots   &\vdots  &\vdots  & \vdots   &
                       \ddots             &
    \vdots   &\vdots  &\vdots  & \vdots   \\
    A        &A       &\ldots  & A        &
    A        &A       &\ldots  & A        &
                       \ldots             &
    \{x_1\}  &\{x_2\}  & \ldots & \{x_b\} \\
  \end{array}
  \end{pmatrix}
  = 
  \begin{pmatrix}
   A \\
   A \\
   A \\
   \vdots \\
   A \\
  \end{pmatrix}
  =
  A^{p}
  $}
  By conservativity, $f(x_1,x_2 \ldots x_b, \ldots,x_1,x_2 \ldots
  x_b)\in B$ and we may assume w.l.o.g. that it is in fact equal to
  $x_1$. 
  So adding this line at the bottom of the above we may obtain the
  tuple $(A^{p},x_1)$ and (similarly for the other permutations of
  $x_1$ within $A^p$) :
  \resizebox{\columnwidth}{!}{$\displaystyle\tiny f
  \begin{pmatrix}
    \begin{array}[h]{c|c|c|c|c|c|c|c|c|c|c|c|c}
    \{x_1\}  &\{x_2\}  & \ldots & \{x_b\} &
    A        &A       &\ldots  & A       &
                       \ldots            & 
    A        &A       &\ldots  & A      \\
    A        &A       &\ldots  & A        &
    \{x_1\}  &\{x_2\}  & \ldots & \{x_b\} &
                       \ldots             & 
    A        &A       &\ldots  & A      \\
    A        &A       &\ldots  & A        &
    A        &A       &\ldots  & A        &
                                    &
    A        &A       &\ldots  & A        \\
    \vdots   &\vdots  &\vdots  & \vdots   &
    \vdots   &\vdots  &\vdots  & \vdots   &
                       \ddots             &
    \vdots   &\vdots  &\vdots  & \vdots   \\
    A        &A       &\ldots  & A        &
    A        &A       &\ldots  & A        &
                       \ldots             &
    \{x_1\}  &\{x_2\}  & \ldots & \{x_b\} \\
    \{x_1\}  &\{x_2\}  & \ldots & \{x_b\} &
    \{x_1\}  &\{x_2\}  & \ldots & \{x_b\} &
                         \ldots           &
    \{x_1\}  &\{x_2\}  & \ldots & \{x_b\} \\
  \end{array}
  \end{pmatrix}
  = 
  \begin{pmatrix}
   A \\
   A \\
   A \\
   \vdots \\
   A \\
   \{x_1\} \\
  \end{pmatrix}
  $}
  Now, we may recopy the above picture replacing in all columns with $x_1$ at least one
  of the two occurrences of $x_1$ by $A$ (we have all permutation of
  tuples of the form $(A^p,x_1)$). In particular, we may chose
  for the last line, the value $x_2$. Assuming w.l.o.g. that the image
  of the last line is $x_2$ (by $B$-conservativity). We obtain that :
  \resizebox{\columnwidth}{!}{$\displaystyle\tiny f
  \begin{pmatrix}
    \begin{array}[h]{c|c|c|c|c|c|c|c|c|c|c|c|c}
    \{x_1\}  &\{x_2\}  & \ldots & \{x_b\} &
    A        &A       &\ldots  & A       &
                       \ldots            & 
    A        &A       &\ldots  & A      \\
    A        &A       &\ldots  & A        &
    \{x_1\}  &\{x_2\}  & \ldots & \{x_b\} &
                       \ldots             & 
    A        &A       &\ldots  & A      \\
    A        &A       &\ldots  & A        &
    A        &A       &\ldots  & A        &
                                    &
    A        &A       &\ldots  & A        \\
    \vdots   &\vdots  &\vdots  & \vdots   &
    \vdots   &\vdots  &\vdots  & \vdots   &
                       \ddots             &
    \vdots   &\vdots  &\vdots  & \vdots   \\
    A        &A       &\ldots  & A        &
    A        &A       &\ldots  & A        &
                       \ldots             &
    \{x_1\}  &\{x_2\}  & \ldots & \{x_b\} \\
    \{x_2\}  &\{x_2\}  & \ldots & \{x_b\} &
    \{x_2\}  &\{x_2\}  & \ldots & \{x_b\} &
                         \ldots           &
    \{x_2\}  &\{x_2\}  & \ldots & \{x_b\} \\
  \end{array}
  \end{pmatrix}
  = 
  \begin{pmatrix}
   A \\
   A \\
   A \\
   \vdots \\
   A \\
   \{x_2\} \\
  \end{pmatrix}$}
  Iterating this trick, replacing this time the last occurrence of
  $x_1$ and $x_2$ (from our original picture) by $x_3$, we will obtain
  a value in $B$ that differs from $x_1$ and $x_2$, say $x_3$
  w.l.o.g. Eventually, we show that $A^{p+2}$ may be generated from 
  $\Upsilon_{o+2,p,B}$.
  Iterating this bootstrapping technique for higher arity, we show that
  for any $m$, the full adversary $A^m$ may be generated from $\Upsilon_{m,p,B}$.
\end{proof}

\begin{corollary}
  Given $p \geq 1$, a structure $\mathcal{A}$ that is $B$-conservative, we may decide
  whether $\mathcal{A}$ is $p$-collapsible from source $B$.
\end{corollary}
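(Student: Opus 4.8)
The plan is to obtain decidability for free from the equivalence established in the preceding theorem on $p$-collapsibility from a conservative source, exactly in the manner the analogous corollary follows Theorem~\ref{theo:singletonSource:StrongCharacterisation}. Since $p\geq 1$, the sequence of adversaries $\mathbf\Upsilon_{p,B}$ is non-degenerate, so the canonical $\Pi_2$-pH-sentence $\phi_{n,\Upsilon_{p+1,p,B},\mathcal{A}}$ is well defined; and because $\mathcal{A}$ is $B$-conservative, that theorem tells us $\mathcal{A}$ is $p$-collapsible from $B$ if and only if $\mathcal{A}\models\phi_{n,\Upsilon_{p+1,p,B},\mathcal{A}}$. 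Thus the whole (a priori infinitary) collapsibility condition — quantified over all $m$ and all pH-sentences with $m$ universal variables — has been reduced to the satisfaction of a \emph{single} first-order sentence on the finite structure $\mathcal{A}$. The decision procedure is then simply: from the input $(\mathcal{A},B,p)$ build $\phi_{n,\Upsilon_{p+1,p,B},\mathcal{A}}$ and test whether $\mathcal{A}$ models it.

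First I would check that the construction of $\phi_{n,\Upsilon_{p+1,p,B},\mathcal{A}}$ is effective. Unwinding the definition: $\Upsilon_{p+1,p,B}$ is the finite set of rectangular adversaries of length $p+1$ obtained by choosing which $p$ of the $p+1$ coordinates equal $A$ and filling the remaining coordinate with a singleton $\{b\}$, $b\in B$; each such adversary has exactly $|A|^{p}$ tuples, hence only finitely many maps $\mu$ are consistent with it. Consequently the product structure $\bigotimes_{\mathscr{O}\in\Upsilon_{p+1,p,B}}\bigotimes_{\mu}\mathfrak{A}_{\mathscr{O},\mu}$ is a finite $\sigma^{((p+1)\cdot n)}$-structure that can be computed explicitly, and promoting its $(p+1)\cdot n$ constants to outermost universally quantified variables in its canonical query produces the required finite $\Pi_2$-pH-sentence. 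Every step here is a finite, routine manipulation of finite objects, so the sentence is computable from $(\mathcal{A},B,p)$.

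For the second step, model-checking a fixed first-order (here $\Pi_2$-pH) sentence against the finite structure $\mathcal{A}$ is decidable: one enumerates all assignments of the universally quantified variables to $A$ and, for each, searches exhaustively for a witnessing assignment of the existentially quantified variables (equivalently, this is a single instance of $\QCSP(\mathcal{A})$). Combining the two steps yields the claimed decision procedure. I do not expect any genuine obstacle: all the mathematical content lies in the preceding theorem, which performed the reduction from an infinite family of conditions to one sentence, and the only points requiring care are the non-degeneracy of the adversaries (guaranteed by $p\geq 1$) and the effectiveness of the canonical-sentence construction noted above. One may also remark that, should one not wish to take $B$-conservativity as a promise, it is itself a decidable property of a finite structure, since it amounts to the pp-definability in $\mathcal{A}$ of each of the finitely many unary relations $C\subseteq B$.
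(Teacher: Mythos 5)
Your proposal is correct and is exactly the argument the paper intends: the corollary is stated without proof precisely because the preceding theorem reduces $p$-collapsibility from a conservative source to the satisfaction of the single, effectively constructible canonical sentence $\phi_{n,\Upsilon_{p+1,p,B},\mathcal{A}}$, and model-checking a first-order sentence on a finite structure is decidable. Your additional checks (non-degeneracy for $p\geq 1$, effectiveness of the construction, decidability of $B$-conservativity itself) are all sound and match the paper's setup.
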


\noindent\textbf{$0$-collapsibility} (\textsl{proofs  were omitted fully from
  paper})

\

\noindent \textbf{Theorem~\ref{theorem:0collapsibility}.}
  Let $\mathcal{B}$ be a finite structure.
  The following are equivalent.
  \begin{romannum}
  \item $\mathcal{B}$ is $0$-collapsible from source $\{x\}$ for some
    $x$ in $B$.
  \item $\mathcal{B}$ admits a simple $A$-she.
  \item $\mathcal{B}$ is $0$-collapsible for sentences of positive
    equality free first-order logic from source $\{x\}$ for some
    $x$ in $B$. 
  \end{romannum}

\begin{proof}
  The last two points are equivalent~\cite[Theorem
  8]{DBLP:conf/lics/MadelaineM09} (this result is stated with
  $A$-she rather than simple $A$-she but clearly, $\mathcal{A}$ has an A-she iff it has a simple A-she).
  The implication \ref{0collapsibility:she} to
  \ref{0collapsibility:singleton:pH} 
  follows trivially. 
  
  We prove the implication \ref{0collapsibility:singleton:pH} to
  \ref{0collapsibility:she} by contraposition.
  Assume that $A=[n]=\{1,\ldots,n\}$ and suppose that $\mathcal{A}$ has no simple A-she. We will prove that
  $\mathcal{A}$ does not admit universal relativisation to $x$
  for pH-sentences. We assume also \textsl{w.l.o.g.} that $x=1$.
  Let $\Xi$ be the set of simple A-shops $\xi$ \mbox{s.t.}
  $\xi(1)=[n]$. Since each $\xi$ is not a she of $\mathcal{A}$, we
  have a quantifier-free formula with $2n-1$ variables $R_\xi$ that
  consists of a single positive atom (not all variables need appear
  explicitly in this atom) such that $\mathcal{A} \models R_\xi(1,\ldots,1,2,\ldots,n)$\footnote{There are $n$ ones.}, but $\mathcal{A} \notmodels R_\xi(\xi^1,\ldots,\xi^n,\xi(2),\ldots,\xi(n))$ for some $\xi^1,\ldots,\xi^n \in [n]=\xi(1)$.

  \newcommand{\Eta}{\mathrm{E}}
  This means that for each $\eta:\{2,\ldots,n\} \rightarrow
  [n]$ there is some $2n-1$-ary ``atom'' $R_\eta$ such that
  $\mathcal{A} \models R_\eta(1,\ldots,1,1,2,\ldots,n)$\footnote{There
    are $n$ ones.}, but $\mathcal{A} \notmodels
  R_\eta(\xi^1,\ldots,\xi^n,\eta(2),\ldots,\eta(n))$ for some
  $\xi^1,\ldots,\xi^n \in [n]$. Let $\Eta=[n]^{[n-1]}$ denotes the set
  of $\eta$s.

  Suppose we had universal relativisation to $1$. Then we know that 
\[ \mathcal{A} \models \bigwedge_{\eta \in \Eta} R_\eta(1,\ldots,1,1,2,\ldots,n), \] that is, 
\[ \mathcal{A} \models \exists y_1,\ldots,y_n \bigwedge_{\eta \in \Eta} R_\eta(1,\ldots,1,y_1,y_2,\ldots,y_n).\]
According to relativisation this means also that 
\[ \mathcal{A} \models \exists y_1,\ldots,y_n \forall x_1, \ldots,x_n \bigwedge_{\eta \in \Eta} R_\eta(x_1,\ldots,x_n,y_1,y_2,\ldots,y_n).\]
But we know 
\[ \mathcal{A} \models \forall y_1,\ldots,y_n \exists x_1, \ldots,x_n \bigvee_{\eta \in \Eta} \neg R_\eta(x_1,\ldots,x_n,y_1,y_2,\ldots,y_n),\]
since the $\eta$s range over all maps $[n]$ to $[n]$. Contradiction.
\end{proof}

\noindent \textbf{Theorem~\ref{theorem:0collapsibility:rainbowsource}.}
  Let $\mathcal{B}$ be a structure. The following are equivalent.
  \begin{romannum}
  \item $\mathcal{B}$ is $0$-collapsible from source $C$
  \item $\mathcal{B}^{|C|}$ is $0$-collapsible from some (any) singleton source $x$
    which is a (rainbow) $|C|$-tuple containing all elements of $C$.
  \end{romannum}

\begin{proof}
  Let $B=\{1,2,\ldots,b\}$.
  \begin{itemize}
  \item (downwards). 
    Let $x$ be $|B|$-tuple containing all elements of $B$, wlog $x=(1,2,\ldots,b)$.
    Let $\varphi$ be a pH sentence.
    Assume that $\mathcal{A}^{|B|}\models \varphi_{\restrict
      (x,x,\ldots,x)}$.
    Equivalently, for any $i$ in $B$, $\mathcal{A}\models \varphi_{\restrict
      (i,i,\ldots,i)}$. Thus, $0$-collapsibility from source $B$
    implies that $\mathcal{A}\models \varphi$. Since $A$ and its power
    satisfy the same
    pH-sentences\cite{LICS2008,DBLP:journals/corr/ChenMM13}
    we may conclude that $\mathcal{A}^{|B|}\models \varphi$.
  \item (upwards). Assume that for any $i$ in $B$, $\mathcal{A}\models \varphi_{\restrict
      (i,i,\ldots,i)}$. Equivalently, $\mathcal{A}^{|B|}\models \varphi_{\restrict
      (x,x,\ldots,x)}$ where $x$ is any $|B|$-tuple containing all
    elements of $B$. By assumption, $\mathcal{A}^{|B|}\models \varphi$
    and we may conclude that $\mathcal{A}\models \varphi$.
  \end{itemize}
\end{proof}



\end{document}